\pgfplotsset{compat=newest}
\newcommand{\FINDATE}{31.08.2014} 
\newcommand{\DNUM}{D2.1}
\newcommand{\DNAME}{Models for energy consumption of data structures and algorithms}
\newcommand{\ema}[1]{\ensuremath{#1}\xspace}
\newcommand{\ie}{\textit{i.e.}\xspace}
\newcommand{\cw}{\ema{\mathit{cw}}}
\newcommand{\pw}{\ema{\mathit{pw}}}
\newcommand{\thr}{\ema{\mathcal{T}}}
\newcommand{\nth}{\ema{n}}
\newcommand{\nfr}{\ema{F}}
\newcommand{\nalg}{\ema{A}}
\newcommand{\ntht}{\ema{N}}
\newcommand{\powi}{\ema{P}}
\newcommand{\pstati}{\ema{\powi_{\mathit{stat}}}}
\newcommand{\pdyni}{\ema{\powi_{\mathit{dyn}}}}
\newcommand{\pacti}{\ema{\powi_{\mathit{active}}}}
\newcommand{\expo}[2]{\ema{#1^{\ifthenelse{\equal{#2}{}}{}{(#2)}}}}
\newcommand{\pow}[1]{\expo{\powi}{#1}}
\newcommand{\pstat}[1]{\expo{\pstati}{#1}}
\newcommand{\pdyn}[1]{\expo{\pdyni}{#1}}
\newcommand{\pact}[1]{\expo{\pacti}{#1}}
\newcommand{\prog}{\ensuremath{prog}\xspace}
\newcommand{\pin}{\ensuremath{\mathit{pinning}}}
\newcommand{\cas}{\textit{Compare-and-Swap}\xspace}
\newcommand{\ghz}[1]{\ema{#1\;\text{GHz}}}
\newcommand{\ps}{parallel section\xspace}
\newcommand{\rl}{retry-loop\xspace}
\newcommand{\itemx}[1]{}
\newcommand{\tcs}{\ema{t_{\text{RL}}}}
\newcommand{\tps}{\ema{t_{\text{PS}}}}
\newcommand{\so}{\ema{\leadsto}}
\newcommand{\facf}{\ema{\lambda}}
\newcommand{\freq}{\ema{f}}
\newcommand{\remind}[1]{} 
\newcommand{\leaveout}[1]{}
\newcommand{\comment}[2]{}{}
\newcommand{\op}[1]{{\textsf{\textit{#1}}}}
\newcommand{\var}[1]{{\textsf{#1}}}
\newcommand{\code}[1]{{\textsf{\tt {#1}}}}
\algnewcommand\EMPTY{\textbf{EMPTY}}
\newtheorem{theorem}{Theorem}[section]
\newtheorem{lemma}[theorem]{Lemma}
\begin{document}

\thispagestyle{empty}

\vspace{-3cm}
\begin{center}
\textbf{SEVENTH FRAMEWORK PROGRAMME}\\
\textbf{THEME ICT-2013.3.4}\\
Advanced Computing, Embedded and Control Systems
\end{center}
\bigskip

\begin{center}
\includegraphics[width=\textwidth]{./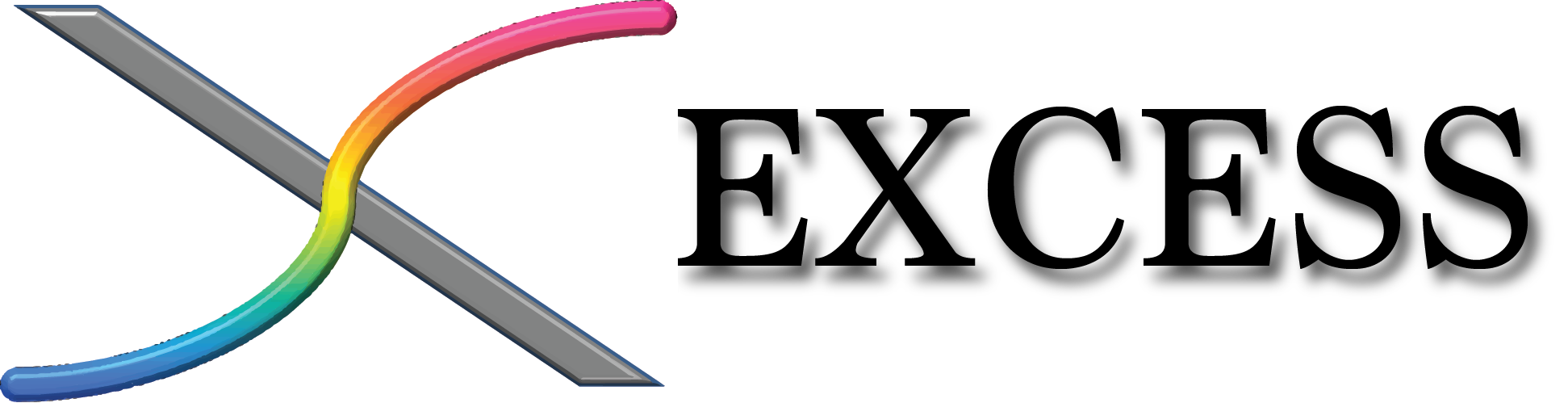}
\end{center}
\bigskip

\begin{center}
Execution Models for Energy-Efficient
Computing Systems\\
Project ID: 611183
\end{center}
\bigskip

\begin{center}
\Large
\textbf{\DNUM} \\
\textbf{\DNAME}
\end{center}
\bigskip

\begin{center}
\large
Phuong Ha, Vi Tran, Ibrahim Umar, \\
Philippas Tsigas, Anders Gidenstam, Paul Renaud-Goud,\\
Ivan Walulya, Aras Atalar 
\end{center}

\vfill

\begin{center}
\includegraphics[width=3cm]{./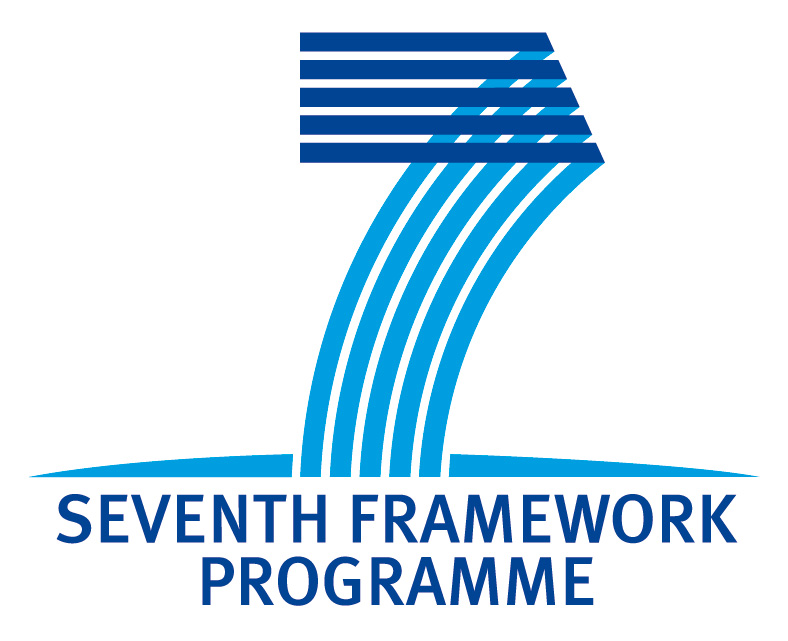}\\
Date of preparation (latest version): \FINDATE \\
Copyright\copyright\ 2013 -- 2016 The EXCESS Consortium \\
\hrulefill \\
The opinions of the authors expressed in this document do not
necessarily reflect the official opinion of EXCESS partners or of
the European Commission.
\end{center}

\newpage

\section*{DOCUMENT INFORMATION}

\vspace{1cm}

\begin{center}
\begin{tabular}{ll}
\textbf{Deliverable Number} & \DNUM \\
\textbf{Deliverable Name} & \DNAME \\
\textbf{Authors}

& Phuong Ha \\
& Vi Tran \\
& Ibrahim Umar \\
& Philippas Tsigas \\
& Anders Gidenstam \\
& Paul Renaud-Goud \\
& Ivan Walulya \\
& Aras Atalar \\

\textbf{Responsible Author} & Phuong Ha\\
& e-mail: \url{phuong.hoai.ha@uit.no} \\
& Phone: +47 776 44032 \\
\textbf{Keywords} & High Performance Computing; \\
& Energy Efficiency \\
\textbf{WP/Task} & WP2/Task 2.1 \\
\textbf{Nature} & R \\
\textbf{Dissemination Level} & PU \\
\textbf{Planned Date} & 01.09.2014 \\
\textbf{Final Version Date} & \FINDATE \\
\textbf{Reviewed by} & Christoph Kessler (LIU) \\
										& David Moloney (Movidius) \\
\textbf{MGT Board Approval} & YES\\
\end{tabular}
\end{center}

\newpage

\section*{DOCUMENT HISTORY}

\vspace{1cm}

\begin{center}
\begin{tabular}{llll}
\textbf{Partner} &
\textbf{Date} &
\textbf{Comment} &
\textbf{Version} \\
UiT (P.\ Ha) & 11.06.2014 & Deliverable skeleton & 0.1 \\
CTH (A. Gidenstam) & 22.07.2014 & Added outline for concurrent data structures & 0.2\\
UiT (V.\ Tran) & 02.08.2014 & Added sections 1.1, 1.3, 2.2, 3.1, 4.2   & 0.3 \\
UiT (I.\ Umar, P.\ Ha) & 05.08.2014 & Revised D2.1 and added sections 1.2.1, 3.4, 4.1.1 & 0.4\\
CTH(A. Gidenstam) & 05.08.2014 & Added sections 1.2, 1.5, 2,1, 2.3, 3.2, 3.3, 3.5 & 0.5 \\
UiT (P.\ Ha) & 11.08.2014 & Added Executive summary and sections 1.6 & 0.6 \\
 						& & Incorporated Chalmers and UiT inputs & \\
CTH, UiT & 29.08.2014 & Integration of internal review comments  & 0.7 \\
CTH, UiT & 30.08.2014 & Final version  & 1.0 \\

\end{tabular}
\end{center}

\newpage

\begin{abstract}


This deliverable reports our early energy models for data structures and algorithms based on both micro-benchmarks and concurrent algorithms. It reports the early results of Task 2.1 on investigating and modeling the trade-off between energy and performance in concurrent data structures and algorithms, which forms the basis for the whole work package 2 (WP2). The work has been conducted on the two main EXCESS platforms: (1) Intel platform with recent Intel multi-core CPUs and (2) Movidius embedded platform.

\end{abstract}

\newpage

\section*{Executive Summary}
Computing technology is currently at the beginning of the disruptive transition from petascale to exascale computing (2010 –- 2020), posing a great challenge on energy efficiency. High performance computing (HPC) in 2020 will be characterized by data-centric workloads that, unlike those in traditional sequential/parallel computing, are comprised of big, divergent, fast and complex data. In order to address energy challenges in HPC, the new data must be organized and accessed in an energy-efficient manner through novel fundamental data structures and algorithms that strive for the energy limit. Moreover, the general application- and technology-trend indicates finer-grained execution (i.e. smaller chunks of work per compute core) and more frequent communication and synchronization between cores and uncore components (e.g. memory) in HPC applications. Therefore, not only concurrent data structures and memory access algorithms but also synchronization is essential to optimize the energy consumption of HPC applications. However, previous concurrent data structures, memory access algorithms and synchronization algorithms were designed without energy consumption in mind. The design of energy-efficient fundamental concurrent data structures and algorithms for inter-process communication in HPC  remains a largely unexplored area and requires significant efforts to be successful.

Work package 2 (WP2) aims to develop interfaces and libraries for energy-efficient inter-process communication and data sharing on the new EXCESS platforms integrating Movidius embedded processors. In order to set the stage for these tasks, WP2 needs to investigate and model the trade-offs between energy consumption and performance of data structures and algorithms for inter-process communication, which is Task 2.1. The energy models are developed in close cooperation with WP1 to ensure that they will be compatible with the energy modeling method of WP1.

The early result of Task 2.1 (PM1 - PM36) on investigating and modeling the trade-off between energy and performance in concurrent data structures and algorithms, as available by project month 12, are summarized in this report. The main contributions are the following:
\begin{itemize} 
\item An improved and extended energy model for the CPU-based platform based on the model presented in EXCESS D1.1~\cite{D1.1}. This model decomposes the power into static, active and dynamic power, while classifying CPU-based platform components into three groups: CPU, main memory and “uncore” (e.g. shared cache, IMC, PCU, HA, etc.) (cf. Sec. \ref{sec:cpu-micro}). The experiment results confirm that static power is constant while active power depends on the frequency, the number of socket and not on the operations. Dynamic power is decomposed into dynamic CPU, dynamic memory and dynamic uncore power. Dynamic CPU power depends on the frequency and the operation type. It also shows almost linear behaviors to the number of threads. Dynamic power of memory and uncore components relate to the locality and bandwidth requirement of the implementations.

\item A new power model for the Movidius Myriad platform that is able to predict power consumption of our micro-benchmarks with $\pm 4\%$ margin of measured power consumption on the real platform (cf. Sec. \ref{sec:movidius_energy}). The new power model confirms the experimental power analysis of concurrent data structures such as concurrent queues: the dynamic power consumption is proportional to the number of SHAVE (Streaming Hybrid Architecture Vector Engine) processors used.

\item A case study on how to choose the most suitable implementations for a multi-variant
  shared data structure in a certain application and context; and the prediction of the
  energy efficiency of different queue implementations through two metrics, namely
  throughput and power (cf. Sec. \ref{sec:concurrent-data-structures-CPU}). The case study
  shows that the energy-efficiency is mainly ruled by the contention on the queue, which
  impacts both throughput and memory power dissipation.


\item Implementation and evaluation of several different concurrent queue designs for the Myriad1 platform using three synchronizations primitives: mutex, message passing over shared variables and SHAVE FIFOs (a set of registers accessed in a FIFO pattern)(cf. Sec. \ref{sec:concurrent-data-structures-Movidius}). The valuations are performed on three metrics: execution time, power consumption and energy per operation. In terms of execution time, the implementation using mutex with two locks  is the fastest and most scalable since it provides maximum concurrency. In terms of power, SHAVE FIFOs communication method is the most energy efficient. In terms of energy per operation, SHAVE FIFO implementation also consume the least energy. 

\item Investigation of the energy consumption and performance of concurrent data structures such as concurrent search trees (cf. Sec. \ref{sec:DeltaTree_description}). Based on our investigation, we have developed new locality-aware concurrent search trees called $\Delta$Trees that are up to 140\% faster and 80\% more energy efficient than the state-of-the-art (cf. Sec. \ref{sec:perval_main} and Sec.  \ref{sec:energyeval}).

\end{itemize}






\newpage

\tableofcontents

\newpage


\section{Introduction}
\subsection{Purpose}
In order to address energy challenges in HPC and embedded computing, data must be organized and accessed in an energy-efficient manner through novel fundamental data structures and algorithms that strive for the energy limit. Due to more frequent communication and synchronization between cores and memory components in HPC and embedded computing, not only concurrent data structures and memory access algorithms but also synchronization is essential to optimize the energy consumption. However, previous concurrent data structures, memory access algorithms and synchronization algorithms were designed without considering energy consumption. Although there are existing studies on the energy utilization of concurrent data structures demonstrating non-intuitive results on energy consumption, the design of energy-efficient fundamental concurrent data structures and algorithms for inter-process communication in HPC and embedded computing is not yet widely explored and becomes an challenging and interesting research direction.

EXCESS aims to investigate the trade-offs between energy consumption and performance of concurrent data structures and algorithms as well as inter-process communication in HPC  and embedded computing. By analyzing the non-intuitive results, EXCESS devises a comprehensive model for energy consumption of concurrent data structures and algorithms for inter-process communication, especially in the presence of component composition. The new energy-efficient technology will be delivered through novel execution models for the energy-efficient computing paradigm, which consist of complete energy-aware software stacks (including energy-aware component models, programming models, libraries/algorithms and runtimes) and configurable energy-aware simulation systems for future energy-efficient architectures.

The goal of Work package 2 (WP2) is to develop interfaces and libraries for inter-process communication and data sharing on EXCESS new platforms integrating Movidius embedded processors, along with investigating and modeling the trade-offs between energy consumption and performance of data structures and algorithms for inter-process communication. WP2 also concerns supporting energy-efficient massive parallelism through scalable concurrent data structures and algorithms that strive for the energy limit, and minimizing inter-component communication through locality- and heterogeneity-aware data structures and algorithms.

The first objective of WP2 (Task 2.1) is to investigate and model the trade-off between energy and performance in concurrent data structures and algorithms. In order to model energy and performance, the analysis is conducted for non-intuitive results and their trade-offs to devise comprehensive models for energy consumption of concurrent data structures and algorithms of inter-process communication. The energy models are developed in close cooperation with WP1, ensuring that they will be compatible with the modeling method of WP1.  

This report summarizes the early results of Task 2.1 on investigating and modeling the consumed energy of concurrent data structures and algorithms. The work of Task 2.1 forms the theoretical basis for the whole work package.

\subsection{Concurrent Data Structures and Algorithms for Inter-process Communication}\label{sec:int-ds}

\remind{[START REWRITE Slightly]}
Concurrent data structures are the data sharing side of parallel programming.
Data structures give the means to the program to
store data but also provide operations to the program to access and manipulate these data. These operations are
implemented through algorithms that have to be efficient. In the sequential setting, data structures are crucially important for the performance of the respective computation. \index{parallel programming}
In the parallel programming setting, their importance becomes even more crucial because of the increased use of data and resource sharing for utilizing parallelism. In parallel programming, computations are split into subtasks in order to introduce parallelization at the control/computation level. To utilize this opportunity of concurrency, subtasks share data and
various resources (dictionaries, buffers, and so forth).
This makes it possible for logically independent programs to share various resources and data structures. A subtask that wants to update a data structure, say add an element into a dictionary, that operation may be logically independent of other subtasks that use  the same dictionary.

\index{mutual exclusion}
Concurrent data structure designers are striving to maintain consistency of data structures while keeping the use of mutual exclusion and expensive synchronization to a minimum, in order to prevent the data structure from becoming a sequential bottleneck. Maintaining consistency
in the presence of many simultaneous updates is a complex task.
Standard implementations of data structures are based on locks
in order to avoid inconsistency of the shared data due to concurrent modifications. In simple terms, a single lock around the whole data structure may create a bottleneck in the program where all of the tasks serialize, resulting in a loss of parallelism because too few data locations are concurrently in use.
\index{deadlock} \index{priority inversion} \index{convoying} \index{lock}
Deadlocks, priority inversion, and convoying are also side-effects of locking.
The risk for deadlocks
makes it hard to compose different blocking data structures since it is not always
possible to know how closed source libraries do their locking.
It is worth noting that in graphics processors (GPUs) locks are not recommended for designing concurrent data structures.
GPUs prior to the NVIDIA Fermi architecture do not have writable caches, so for
those GPUs, repeated checks to see if a lock is available or not require expensive repeated
accesses to the GPU's main memory. While Fermi GPUs do support writable caches, there is no guarantee that the thread scheduler
will be fair, which can make it difficult to write deadlock-free locking code. OpenCL \index{OpenCL}
explicitly disallows locks for these and other reasons.

\index{lock-free}
Lock-free implementations of data structures support concurrent access.
They do not involve mutual exclusion and make sure that all steps
of the supported operations can be executed concurrently.
\index{optimistic synchronization}
Lock-free implementations employ an optimistic conflict control
approach, allowing several processes to access the shared data object at the
same time. They suffer delays only when
there is an actual
conflict between operations that causes
some operations to retry.
This feature allows lock-free algorithms to scale much better when
the number of processes increases.

An implementation of a data structure is called {\em lock-free} if it allows multiple processes/threads to access the data structure concurrently and also guarantees that
at least one operation among those
finishes in a finite number of its own steps regardless of the state of the other operations.
\index{linearizability}
A consistency (safety) requirement for
lock-free data structures is {\em linearizability} \cite{HerW90},
which ensures that each operation
on the data appears to take effect instantaneously during its actual duration and
the effect of all operations are consistent with the
object's
 sequential specification.
Lock-free data structures offer several advantages over their
blocking counterparts, such as being immune to deadlocks,
priority inversion, and convoying, and have been shown to
work well in practice in many different settings \cite{Tsigas02,Sundell02}. They have been included \index{Threading Building Blocks}
in Intel's Threading Building Blocks Framework \cite{TBB09}, \index{NOBLE}
the NOBLE library \cite{Sundell02,Sundell08}, the Java concurrency package  \cite{JavaConc09} and the Microsoft .NET Framework \cite{msconc}. \index{.NET} \index{C++} \index{Java}
They have also been of interest to designers of languages such as C++ \cite{DecPS06}.
\remind{[END REWRITE]}

\remind{Merge with the text above.}

\leaveout{ 
While a parallel computation would ideally consist only of independent
parts, communication of some form between concurrent subcomputations is
often needed. Concurrent shared data structures offer ways to allow
such communication.
} 

The focus here is on concurrent implementations of common abstract
data types, such as queues, stacks, other producer-consumer
collections, dictionaries and priority queues, which can act as a
communication ``glue'' in parallel applications. Moreover, practical
lock-free protocols/algorithms are preferred due to their desirable
qualities in terms of performance and
fault-tolerance~\cite{Tsigas02,Sundell02}. For each of these abstract data
types there exist a considerable number of proposed
protocols/algorithms in the literature, see, e.g., the surveys
in~\cite{HerlihyShavit08,Cederman11-data-structures}.  As each
implementation of an abstract data type has different qualities
depending on how it is actually used, e.g. its contents; the level of
concurrent accesses to it; the mix of read or update operations issued
on it etc., it makes good sense to view them as multi-variant ``components''.
However, a concurrent shared data structure does not match the notion
of a component in the EXCESS component model defined in
EXCESS D1.2~\cite{EXCESS:D1.2} since an EXCESS component is a
computational entity while a concurrent shared data structure is a
data storage and communication entity. Hence, multi-variant concurrent
shared data structures need to enter the framework as something
different from components.
There are a number of ways concurrent shared data structures can be used in the
EXCESS programming model and runtime system:
\begin{itemize}
\item As internal communication medium, ``glue'', inside component
      implementations. The variant selection for the concurrent shared
      data structure can then either be performed inside the component
      implementation or the component implementation itself treated as
      a template and expanded into one actual component implementation
      for each variant of the data structure available. In the latter
      case variant selection for the data structure would reduce to
      component variant selection.
\item As parameters to components. In the EXCESS component
      model parameters to components are used to pass data in and out
      of components. Concurrent shared data structures can be used in
      this role and would support concurrent updates of the data structure
      from the inside and/or outside of the component during its execution.
      This use could be integrated in a similar way to the smart containers
      discussed in EXCESS D1.2~\cite{EXCESS:D1.2}.
\item As part of the implementation of the runtime system itself.
\end{itemize}

To select the most suitable data structure implementation for a given
situation is not an easy problem as many aspects of the subsequent use
of it impacts the time and energy costs for operations on the data
structure.
For example, superior operation throughput at high contention (a
common selling point of new algorithms for concurrent shared data
structures) from which usually (due to system static power) also
follows superior energy efficiency in that state does not necessarily
translate to superior energy efficiency at lower levels of contention
as the empirical case study in Section~\ref{sec:mandelbrot} below
indicates.

In EXCESS D1.1~\cite{D1.1} we determined a number of
energy-affecting factors for System~A, an Intel server system.
Below follows a discussion of
each of these factors in the context of a concurrent shared data structure
used as communication ``glue'' in a parallel computation.

\remind{(All) Filter the list below for sanity}

\vspace*{3ex}
\textbf{General}
\begin{itemize}
\item \emph{Execution time}.
      The time spent executing operations on the data structure
      depends on the number of calls to execute and the duration of
      each operation/call.  The latter is often difficult to predict
      as it may depend on many aspects of the state of the system and
      the data structure, such as the algorithm for the operation
      combined with the current state(/contents) of the data structure
      (cf. time complexity of sequential data structure operations)
      and the interference from other concurrent operations on the
      data structure.

\item \emph{Number of sockets used}.
      Determined by the scheduling of the tasks using the data structure.

\item \emph{Number of active cores}.
      Determined by the scheduling of the tasks using the data structure.

\end{itemize}

\textbf{Functional units}
\begin{itemize}
\item \emph{Instruction types}.
      Depends among other things on the algorithm for the operation
      combined with the current state(/contents) of the data structure
      and the interference from other concurrent operations on the
      data structure as these may activate different code-paths.

\item \emph{Dependency between operations}.
      Depends among other things on the algorithm for the operation
      combined with the current state(/contents) of the data structure
      and the interference from other concurrent operations on the
      data structure as these may activate different code-paths.

\item \emph{Branch prediction}.
      Failed predictions depend among other things on the algorithm
      for the operation combined with the current state(/contents) of
      the data structure and the interference from other concurrent
      operations on the data structure. The last may be particularily
      difficult since values are changed outside the current
      instruction sequence.

\item \emph{Clock frequency}.
      Determined by the system configuration.
\end{itemize}

\textbf{Memory}
\begin{itemize}
\item \emph{Resource contention}.
      Resource contention in the memory hierarchy can be either
      accidental, such as cache line eviction due to the limited size
      of the cache or due to placement policy restrictions forcing
      otherwise independent cache lines to content for a particular
      slot, or deliberate as is often the case in shared data structure
      implementations where code running on different cores tries to
      touch the same cache lines at the same time.

\item \emph{Number of memory requests}.
      Depends among other things on the algorithm for the operation
      combined with the current state(/contents) of the data structure
      and the interference from other concurrent operations on the
      data structure as these may activate different code-paths and force
      retries.

\item \emph{Level of memory request completion}.
      Interference from concurrent operations introduce additional cache line
      invalidations and, hence, cache misses.

\item \emph{Locality of memory references}.
      Interference from concurrent operations introduce additional cache line
      invalidations and coherence traffic.
\end{itemize}

As can be seen above, for concurrent shared data structures several of
these factors are affected by the concurrent operations on/dynamic
state of the data structure. This means that to estimate the time and
energy cost for one operation this state must be known or estimated.
As a first approach we will consider concurrent shared data structures
in a ``steady state'', that is, exposed to an unchanging mix of
operations issued by an unchanging set of tasks at an unchanging
rate. In this case we can then assign average time and energy costs to
operations based on the total throughput and power use. Empirical data
for a selection of such ``steady states'' on a particular system can be
collected with micro-benchmarks.

For the initial work on energy efficiency prediction for concurrent
shared data structures we have picked some commonly used collection
data types as case studies. These are concurrent producer/consumer
collections, such as queues, and concurrent search trees. Most of the
data structure implementations that we use are part of the NOBLE
library~\cite{Sundell02} which is described in Section~\ref{sec:NOBLE}
below.

\comment{AG}{
Energy efficiency for tasks/components v.s. for shared data structures.\\
\\
Energy and performance dependencies for:\\
Call to task/component(=function when viewed from the outside)\\
- Implementation\\
- Parameters in call\\
- Machine state due to other tasks/components\\
\\
Call to operation of concurrent shared data structure\\
- Implementation\\
- Which operation\\
- Parameters in call to operation\\
- State of the data structure\\
--- Internal (in-memory) state (size, contents, etc.)\\
--- Dynamic state (concurrent operations)\\
- Machine state due to other tasks/components\\
\\
Can a call to an operation on a data structure be viewed as a call to a
component?\\
My answer would be no - the information about the state of the data structure
needed is very likely to be more specific than the general machine state.\\
\\
Consequently prediction of energy efficiency becomes harder as the whole state
of the data structure during the duration of the operation needs to be
considered.\\
\\
More observations:\\
- Once an instance of a data structure is in use the variant choice cannot easily be undone (short of stopping all users and moving the contents to the new instance).
}

\leaveout{ 
\subsubsection{Concurrent Queues}
\remind{Brief intro here. See also shm-queues-description.tex and above.}
} 
\leaveout{
\subsubsection{Concurrent Search Trees}
Most of the existing highly-concurrent search trees are not considering the fine-grained
data locality. This includes some of the state-of-the-art non-blocking concurrent search trees 
\cite{Brown:2011:NKS:2183536.2183551, EllenFRB10} and Software Transactional
Memory (STM) search trees
\cite{Afek:2012:CPC:2427873.2427875, BronsonCCO10,
Crain:2012:SBS:2145816.2145837, DiceSS2006}. 
Prominent concurrent search trees which are often included in several benchmark 
distributions such as the concurrent red-black
tree \cite{DiceSS2006} developed by Oracle Labs and the concurrent AVL tree
developed by Stanford \cite{BronsonCCO10} are not optimising data locality as well. 

Concurrent B-trees \cite{BraginskyP12, Comer79, Graefe:2010:SBL:1806907.1806908,
Graefe:2011:MBT:2185841.2185842}
are optimised for a known memory block size $B$ (e.g. page size) to minimise the
number of memory blocks accessed during a search, thereby improving data
locality. 
In reality there are different block sizes at different levels of the memory hierarchy
that can be used in the design of locality-aware data layout for search trees. 
In \cite{KimCSSNKLBD10}, Intel shows that a very fast search trees are possible after optimising  
a data layout based on the register size, SIMD width, cache line size, and page size. 
Existing concurrent B-trees limit its spatial locality optimisation to the memory level
within block size $B$, leaving access to other memory levels unoptimised.

For a concurrent B-tree that is optimised for accessing disks (i.e.
using page size $B$), the cost of searching a key in a block of size $B$ in
memory is $\Theta (\log (B/L))$ cache line transfers, where $L$ is the cache line
size \cite{BrodalFJ02}. This is because each cache line transfers of size $L$ contains only one node
that is usable in a top down traversal path in the search tree sized $B$, 
except for the topmost $\lfloor \log(L+1) \rfloor$ levels. There
is an optimal number cache line transfers of $O(\log_L B)$ for the same scenario that is achievable
by using the van Emde Boas layout.

A van Emde Boas (vEB) tree is an ordered dictionary data type which implements
the idea of a recursive structure of priority
queues \cite{vanEmdeBoas:1975:POF:1382429.1382477}.
The efficient structure of the vEB tree arranges data
in a recursive manner so that related values are placed in contiguous memory
locations (cf. Figure \ref{fig:vEB}). This work has inspired cache oblivious (CO) 
data structures \cite{Prokop99} 
such as CO B-trees \cite{BenderDF05, BenderFFFKN07, BenderFGK05} and CO
binary trees \cite{BrodalFJ02}. These previous research has demonstrated that 
vEB layout is suitable for cache oblivious algorithms as it lowers the upper bound on memory transfer complexity.

\begin{figure}[t]
\centering \scalebox{0.4}{\input{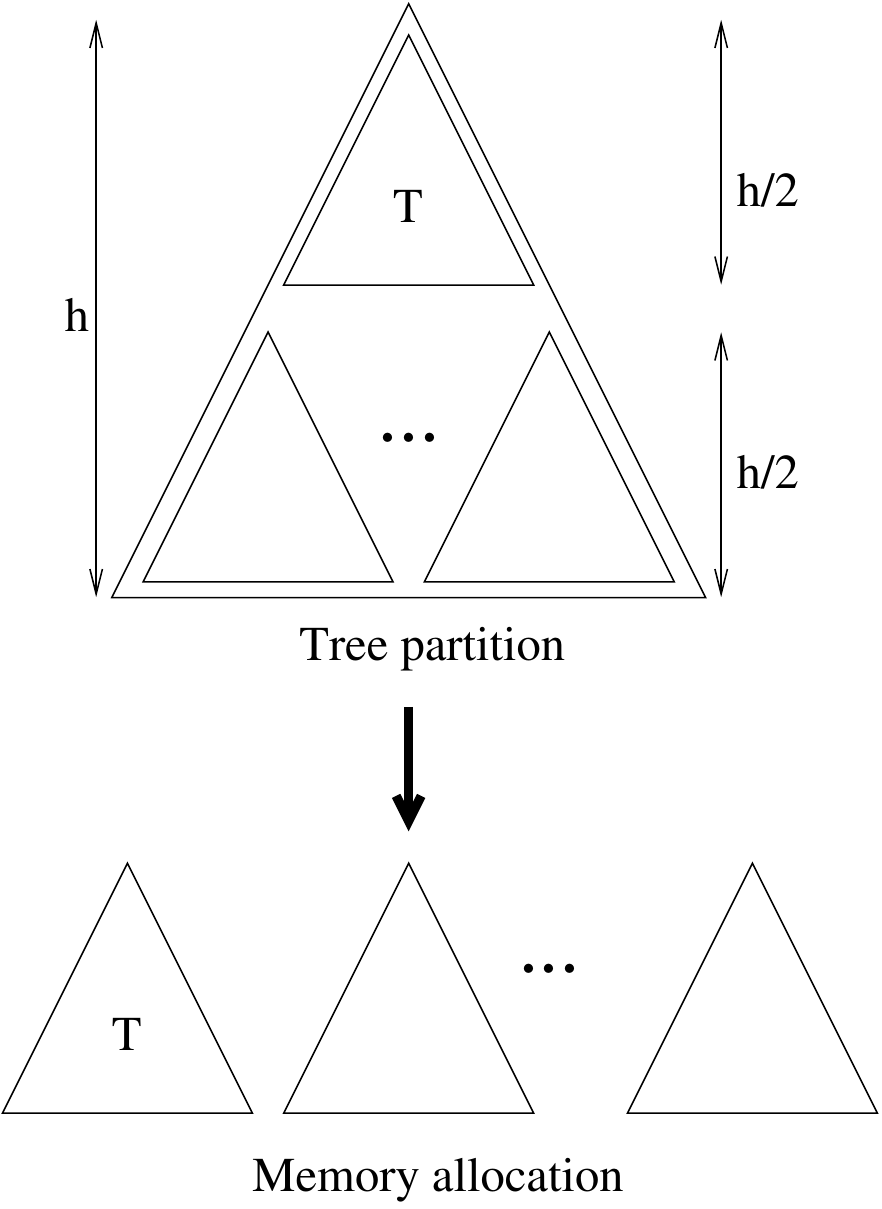_t}}
\caption{Van Emde Boas layout: a tree of height $h$ is recursively split at height $h/2$. 
The top subtree $T$ of height $h/2$ and $m=2^{h/2}$ bottom subtrees $B_1;B_2; \ldots ;B_m$ 
of height $h/2$ are located in contiguous memory locations in the order of 
$T;B_1;B_2;\ldots;B_m$.}\label{fig:vEB}
\end{figure}

While proven to perform well on search, vEB-based trees poorly support concurrent update operations.
Inserting or deleting a node in a tree may result in relocating a large part of
the tree in order to maintain the vEB layout. The work in  \cite{BenderFGK05} has discussed this problem and
a feasible implementation \cite{BraginskyP12} has yet to appear.

We introduce DeltaTree ($\Delta$Tree), a novel locality-aware concurrent search
tree that combines both locality-optimisation techniques from vEB-based trees
and concurrency-optimisation techniques from non-blocking highly-concurrent
search trees.

}

\subsection{Micro-benchmarking}

In general, micro-benchmarking is conducted to discover  targeted properties of a specific system. A \emph{micro-benchmark} in this work context is a small piece of code designed to measure the performance or energy of basic operations of hardware or software systems.

The micro-benchmarks are developed by a loop over a low-level operation (e.g. xor, mul) with N iterations. This description of micro-benchmarks can be represented as EBNF-inspired formal notation (\textit{op}$)^N$ which means the body $op$ of a loop is executed with $N$ iterations. In order to analyze experimental results, micro-benchmarks must work with a fixed size data-set, perform a constant amount of work per iteration and run for a reasonable amount of time. 
In this work package, micro-benchmarks are used to find out the key features and properties of the components that affect energy efficiency of the systems. From the measurement results of micro-benchmarks, the parameters in a proposed analytic energy model are derived.  


Micro-benchmarking is a common method used for performance and energy modeling of a given computer system described at an abstract level.  It can work directly on the a specific component and does not require detailed simulation. Micro-benchmarking can be applied to different systems since its code is portable. By using the micro-benchmarking method, offline predictors of the energy model are built to predict energy and performance and to support energy optimization.

\leaveout{ 
\subsection{Models for Energy Consumption of Concurrent Data Structures and Algorithms}
\remind {Chalmers: please fill  this section}
} 

\subsection{Metrics}\label{sec:metrics}
\label{sec:int-met}
We rely on two original metrics:
\begin{itemize}
\item Throughput: it is a natural metric, used extensively in the
  performance analysis of data structures, and which measures the
  number of operations that has been done on the data structure
  per second.
\item Power.
\end{itemize}

The experiments consist of running a benchmark on the data structure
during a given time. Then we count the number of successful
operations, which gives the throughput. As we work under a constant
execution time, power and energy are equal within a multiplicative
factor that is the execution time.

However, another metric is studied in this deliverable, so that we are
able to evaluate the energy efficiency of different implementations of
the same data structure: the energy per operation. This metric can be
useful in the following case: we are given a workload that needs to be
executed on a given platform, and there is no time requirement. Then,
in terms of energy savings, the implementation that uses the minimum energy
per operation is the best implementation.

In our case, the energy per operation is obtained by simply dividing
the power by the throughput.


Finally, if we are interested in the bi-criteria problem that mixes energy
and performance (\ie where we aim at optimizing both energy and
performance), we can plot the energy per operation according to the
throughput. By doing this we can also trace the Pareto-optimal frontier for
this bi-criteria problem (we eliminate every point such that there exists
another point that is better both in terms of energy per operation and
throughput).

In this deliverable, we model the two original metrics, namely throughput
and power, and derive the other ones from those two.

\subsection{Overview and Contributions}
As a first step to investigate the energy and performance trade-offs in concurrent data
structures, we have developed a new power model for the Movidius Myriad platform that is
able to predict power consumption of our micro-benchmarks with $\pm 4\%$ margin of
measured power consumption on the real platform (cf. Sec. \ref{sec:movidius_energy}). The
new power model confirms the experimental power analysis of concurrent data structures
such as concurrent queues: the dynamic power consumption is proportional to the number of
SHAVE processors used. 

We have performed micro-benchmarks on CPU as well, where we have decomposed power (CPU
power, memory power and uncore power) into several parts, namely static, active and
dynamic part. In this decomposition, we are able to split the dependencies of power
dissipation according to the kind of operation, locality of operands, number of active
cores and sockets.

This micro-benchmark study is a preliminary step towards the modeling of performance and
power dissipation of several concurrent queue implementations. We define parameters that rule the
behavior of the queues, and show how to extrapolate both throughput and power values,
by relying on only a few measurements.

Moreover, we have analyzed the feasibility of porting concurrent data structures onto low
energy embedded platforms (cf. Sec. \ref{sec:queue_modeling}). We have selected a few
synchronization mechanisms from the HPC domain that could easily be replicated on a
Movidius Myriad MPSoC and analyzed the performance when used to implement concurrent FIFO
queues. This work has been continued to investigate the energy consumption of these
synchronization mechanisms and we have then proceeded to determine energy-performance
trade-offs.

\remind{Chalmers: please add the other key contributions here}

On Intel platform, we have investigated the energy consumption and performance of
concurrent data structures such as concurrent search trees
(cf. Sec. \ref{sec:DeltaTree_description}). Based on our investigation, we have developed
new concurrent search trees called $\Delta$Trees that are up to 140\% faster and 80\% more
energy efficient than the state-of-the-art concurrent search trees.

The remaining of the report is organized as follows. Section \ref{sec:descr} describes the
methodology to measure the consumed energy for two EXCESS platforms, namely CPU-based and
Movidius. Section \ref{sec:EnergyModel} presents energy models that can be used to
predict the power consumed on each platform. In section \ref{sec:queue_modeling}, the
performance and energy-efficiency of concurrent queue data structures are investigated on
CPU-based and Movidius platforms. The investigation results are supported by experimental
evaluations. Section \ref{sec:DeltaTree_description} analyzes the performance and
energy-efficiency of concurrent tree data structures and introduces a novel locality-aware
concurrent search tree. The conclusions and future work are provided in
section \ref{sec:Conclusion}.

\section{EXCESS Platforms and Energy Measurement Methodology}
\label{sec:descr}

\comment{AG}{Rehash of D1.1 content? Keep short?}

In this section we summarize the EXCESS systems and the energy
measurement methodology used for
the experiments in this report. A summary is given here as
these systems have previously been described in EXCESS D1.1~\cite{D1.1} and
D5.1~\cite{EXCESS:D5.1}.

\remind{(All) More systems? Have the EXCESS cluster at HLRS been used?}
\begin{itemize}
\item \textbf{System A}: An Intel multicore CPU server (located at Chalmers);
\item \textbf{System B}: Movidius Myriad1 MV153 development board and simulator
       (evaluated at Movidius and UiT).
\end{itemize}

\subsection {System~A: CPU-based platform}
\label{sec:chalmers-system}

\subsubsection{System description}
\label{sec:sysDscrA}

\begin{itemize}
  \item CPU: Intel(R) Xeon(R) CPU E5-2687W v2
    \begin{itemize}
      \item 2 sockets, 8 cores each
      \item Max frequency: 3.4GHz, Min frequency: 1.2GHz, {frequency speedstep by DVFS: 0.1-0.2GHz.} Turbo mode: 4.0GHz.
      \item Hyperthreading (disabled)
      \item L3 cache: 25M, internal write-back unified, L2 cache: 256K, internal write-back unified. L1 cache (data): 32K internal write-back
    \end{itemize}
  \item DRAM: 16GB in 4 4GB DDR3 REG ECC PC3-12800 modules run at
    1600MTransfers/s. Each socket has 4 DDR3 channels, each supporting 2
    modules. In this case 1 channel per socket is used.
  \item Motherboard: {Intel} Workstation W2600CR, BIOS version: 2.000.1201 08/22/2013
  \item Hard drive: Seagate ST10000DM003-9YN162 1TB SATA
\end{itemize}

\subsubsection{Measurement methodology for energy consumption}
\label{sec:SystemA:overview}
\begin{figure}
\begin{center}
\includegraphics[width=0.5\textwidth]{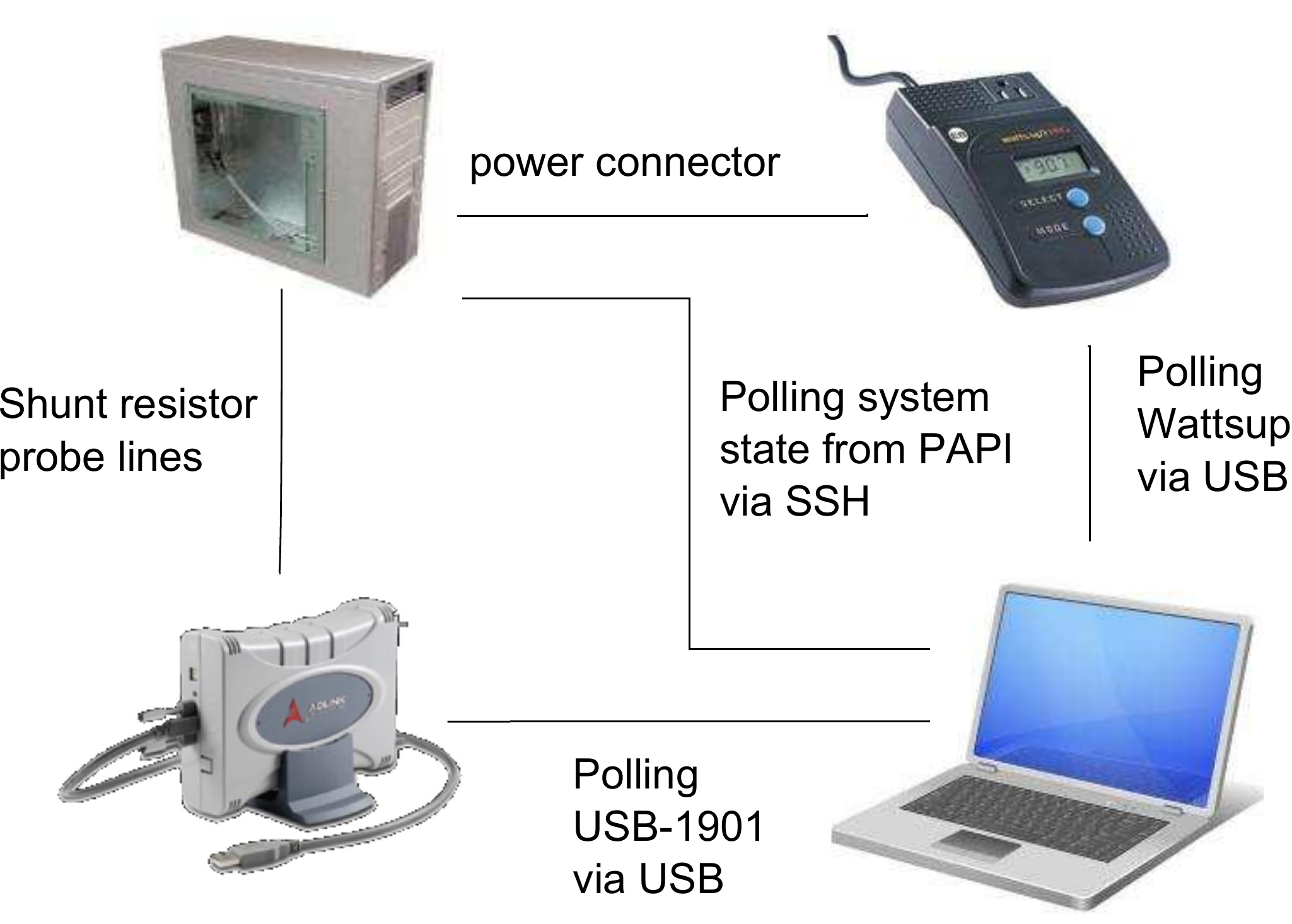}
\end{center}
\caption{Deployment of energy measurement devices for System~A.}
\label{fig:devicesSystemA}
\end{figure}

The energy measurement equipment for System~A at CTH, described in
Section~\ref{sec:sysDscrA}, is shown in Figure~\ref{fig:devicesSystemA}
and outlined below. It has previously been
described in detail in EXCESS D1.1~\cite{D1.1} and
D5.1~\cite{EXCESS:D5.1}.

The system is equipped with external hardware sensors for two levels
of energy monitoring as well as built in energy sensors:
\begin{itemize}
  \item At the system level using an external Watts Up
    .Net~\cite{EED:2003:WattsUp} power meter, which is connected
    between the wall socket and the system.
  \item At the component level using shunt resistors inserted between the
    power supply unit and the various components, such as CPU, DRAM and
    motherboard. The signals from the shunt resistors are captured with an
    Adlink USB-1901~\cite{Adlink:2011:USB1900} data acquisition unit (DAQ)
    using a custom utility.
  \item Intel's RAPL energy counters are also available for the CPU
    and DRAM components. A custom utility based on the PAPI
    library~\cite{BrDoGaHoMu:2000:PAPI,Weaver:2012:MEP:2410139.2410475}
    is used to record these counters and other system state parameters
    of interest.
\end{itemize}

For the work presented in this report the component level hardware
sensors and the RAPL energy counters have mainly been used.

\subsection{System~B: Movidius Embedded Platform (Myriad1)}
\label{sec:myriad1-system}
\subsubsection{Myriad1 Platform Description} 

The Myriad1 platform developed by Movidius contains a total of 8 separate SHAVE (Streaming Hybrid Architecture Vector Engine) processors (see Figure ~\ref{fig:ShaveInstructionUnits}), each existing on solitary power islands. 

The SHAVE processor contains a mix of RISC, DSP, VLIW and GPU features 
and supports the following data types: 
(float) f16/32, (unsigned) u8/16/32, and (int) i8/16/32. The SHAVE architecture uses Very Long Instruction Words (VLIWs) as input.  
The processor is designed to provide a platform that excels in multimedia and video processing.
Each SHAVE has its own Texture Management Unit (TMU).

 
SHAVE also contains wide and deep register files coupled with a 
Variable-Length Long Instruction-Word (VLLIW) for code-size efficiency.  
As shown in Figure~\ref{fig:ShaveInstructionUnits} VLLIW packets control multiple functional units which have SIMD capability for high parallelism and throughput at a functional unit and processor level. 

\textbf{Functional Units of SHAVE}

\begin{itemize}
\item Integer Arithmetic Unit (IAU) – Performs all arithmetic instructions that operate on integer numbers, accesses the IRF.
\item Integer Arithmetic Unit (IAU) – Performs all arithmetic instructions that operate on integer numbers, accesses the IRF.
\item Scalar Arithmetic Unit (SRF) – Performs all Scalar integer/floating point arithmetic and interacts with the SRF or IRF depending on what values are used.
\item Vector Arithmetic Unit (VAU) – Performs all Vector integer/floating point arithmetic and interacts with the VRF.
\item Load Store Unit (LSU) – There are 2 of these (LSU0 \& LSU1) and they perform any memory IO instructions. This means that it interacts with the 128kB CMX memory tile located in the SHAVE.
\item Control Move Unit (CMU) – This unit interacts with all register files, and allows for comparing and moving between the register files.
\item Predicated Execution Unit (PEU) – Performs operations based on condition code registers.
\item Branch Repeat Unit (BRU) – Manages instructions involving any loops, as well as branches.
\item Instruction Decoding Unit (IDC) – This unit takes a SHAVE variable-length instruction as input and decodes it to determine which functional units are being utilised by the inputted instruction.
\item Debug Control Unit (DCU) – Used for monitoring the execution of the program, takes note of interrupts and exceptions.

\end{itemize}

\textbf{Register Files}

\begin{itemize}
\item Integer Register File (IRF) – Register file for storing integers from either the IAU or the SAU. Can hold up to 32 words which are each 32-bits wide.
\item Scalar Register File (SRF) – Register file for storing integers from either the SAU. Can hold up to 32 words which are each 32-bits wide.
\item Vector Register File (VRF) – Register file for storing integers from either the VAU. Can hold up to 32 words which are each 128-bits wide. 
\end{itemize} 


The additional blocks in the diagram are the Instruction DeCode (IDC) and Debug Control Unit (DCU).  
An instruction fetch width of 128-bits and 5-entry instruction pre-fetch buffer  guarantee that at least one instruction is ready taking account of branches.  
Data and instructions reside in a shared Connection MatriX (CMX) memory block which can be configured in 8kB increments to accommodate different  instruction/data mixes depending on the workload.  The CMX also includes address-translation logic to allow VLLIW code to be easily relocated to any core in the system.

In the 65nm System-on-Chip (SoC), eight SHAVE processors are combined  with a software-controlled memory subsystem and caches which can be  configured to allow a large range of workloads to be handled,  providing exceptionally high sustainable on-chip bandwidth to support  data and instruction supply to the 8 processors. 
Data is moved between peripherals, processors and memory via a bank of  software-controlled DMA engines.    
The device supports 8, 16, 32 and some 64-bit integer operations 
as well as fp16 and fp32 arithmetic and is capable of aggregate 1 TOPS/W  maximum 8-bit equivalent operations in a low-cost plastic BGA package with integrated 128Mbit or 512Mbit Mobile DDR2 SDRAM.

As power efficiency is paramount in mobile applications, 
in addition to extensive clock and functional unit gating and support for dynamic clock and voltage scaling for dynamic power reduction, the device contains a total of 11 power-islands: one for each SHAVE, one for the CMX RAM, one for the RISC and peripherals and one always-on domain. 
This allows very fine-grained power control in software with minimal latency to return to normal operating mode, including maintenance of SRAM-state eliminating the need to reboot from external storage.  

\begin{figure}
\begin{center}
\includegraphics[width=0.9\textwidth]{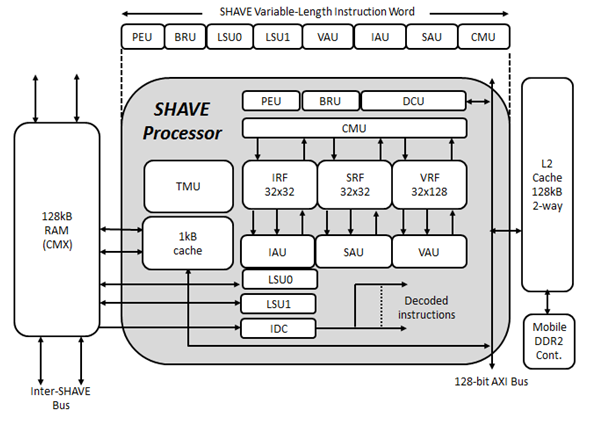}
\end{center}
\caption{SHAVE Instruction Units}
\label{fig:ShaveInstructionUnits}
\end{figure}

\subsubsection{Measurement methodology}

An extensive effort has been made to measure Myriad1 performance in a semi-automated way in order to produce better power estimates.  

The Power Measurements tests were introduced in order to have an insight into the power consumed by Myriad1 in several basic cases in order to be able both to decompose the Myriad1 power consumption into power components and characterize power consumption in such basic operations of Myriad1. The tests set were devised to characterize the power consumed when running SHAVE code without DDR data accesses.

\begin{figure}
\center
\includegraphics[width=0.9\textwidth]{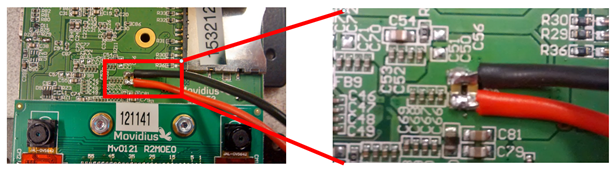}
\caption{Power Supply Modification}
\label{fig:PowerSupplyModification}
\end{figure}

\begin{figure}
\center
\includegraphics[width=0.9\textwidth]{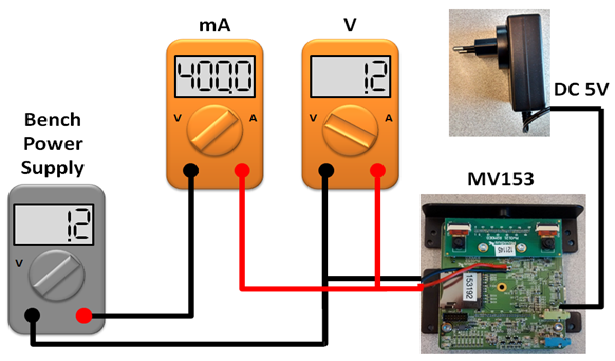}
\caption{Bench setup for MV153 Power Measurement Schematic}
\label{fig:MeasurementSchematic}
\end{figure}

The modifications were made to the MV153 to bypass the on-board voltage regulator which down-regulates the 5V wall PSU to the 1.2V core voltage required by Myriad1 allowing an external bench power-supply to be used in its place as shown in Figure \ref{fig:PowerSupplyModification}.

The schematic for the connection of the Power Supply Unit (PSU), multimeters and MV153 for power measurement are shown in Figure \ref{fig:MeasurementSchematic}. Note the standard DC wall supply is required in addition to the bench PSU in order to supply the other elements of the system.

The bench setup consists of a modified MV153 board, a DC step down converter down-regulating the 5V wall PSU to the 1.2V core voltage  and one HAMEG multimeter measuring all the voltage, current and consumed power values as shown in Figure \ref{fig:MeasurementSetup}.  

\begin{figure}
\begin{center}
\includegraphics[width=0.9\textwidth]{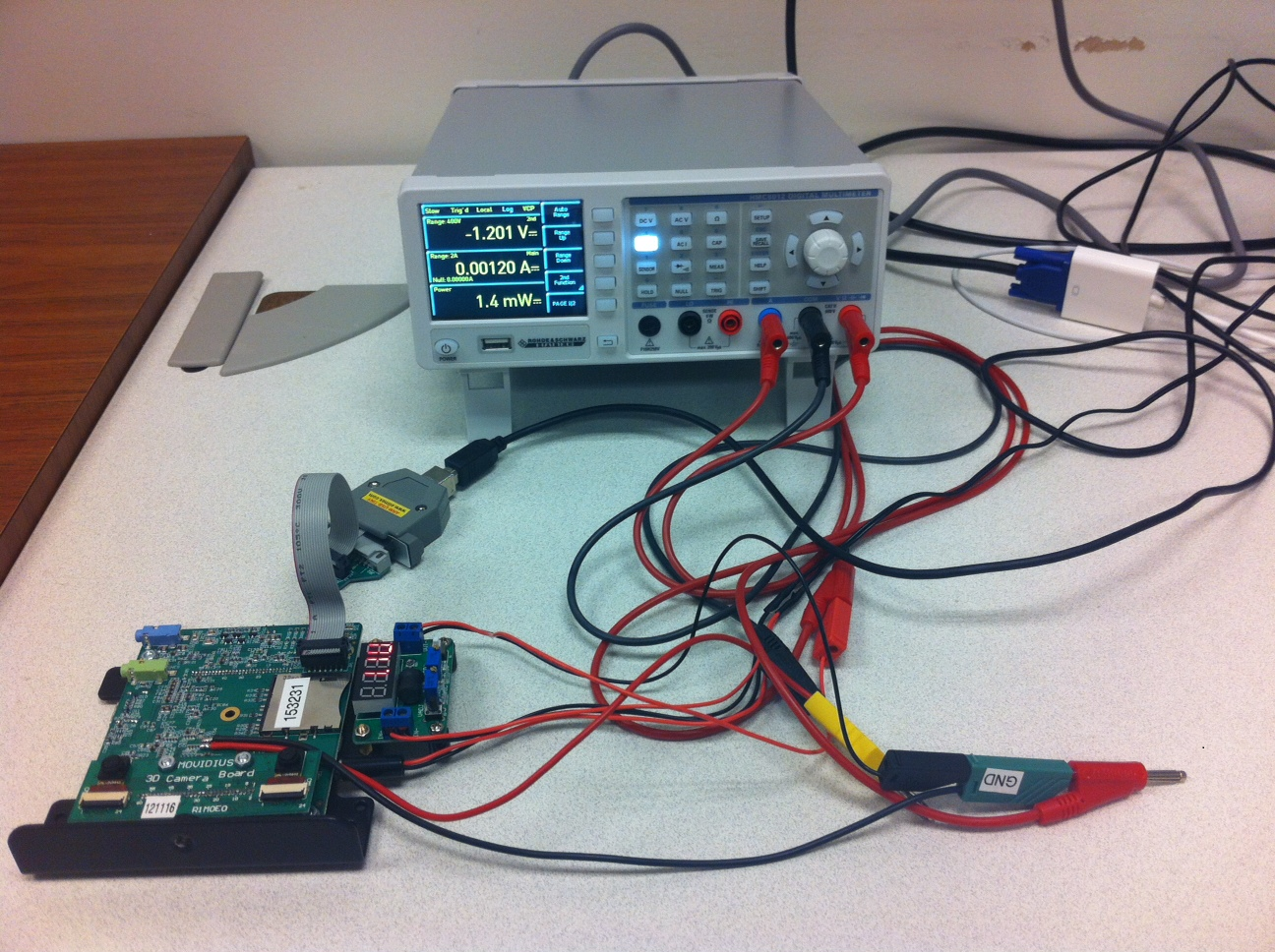}
\end{center}
\caption{Bench setup for MV153 Power Measurement}
\label{fig:MeasurementSetup}
\end{figure}

\subsection {Significant Synchronization Hardware Differences Between the Two Systems}
\label{sec:myriad1-sync}

To synchronize processes efficiently, multi-/many-core systems usually support certain synchronization primitives.
This section discusses the fundamental synchronization primitives, which typically read the value of a {\em single} memory word, modify the value and write the new value back to the word {\em atomically}. Different architectures support different synchronization primitives in hardware.

\subsubsection{Fundamental synchronization primitives}

The definitions of the primitives are described in Figure~\ref{fig:intro_primitives}, where
$x$ is a memory word,
$v, old, new$ are values and $op$ can be operators $add$, $sub$, $or$, $and$ and $xor$. Operations between angle brackets $\langle\rangle$ are executed atomically.
\begin{figure}[tbh]
\hrule
\centering
\small
\begin{tabular}{p{2in}p{1in}}
\begin{tabbing}
{\bf TAS}\=$(x)$ {\em /* test-and-set, init: $x \leftarrow 0$ */}\\
\> $\langle oldx \leftarrow x$; $x \leftarrow 1$; {\bf return} $oldx; \rangle$ \\

\\

{\bf FAO}\=$(x,v)$ {\em /* fetch-and-op */} \\
\> $\langle oldx \leftarrow x$; $x \leftarrow op(x,v)$; {\bf return} $oldx; \rangle$\\

\\

{\bf CAS}\=$( x, old, new)$ {\em /* compare-and-swap */} \\
\> $\langle$ \= {\bf if}$(x=old)$ $\{ x \leftarrow new$; {\bf return}$(true);\}$\\
\>\>{\bf else return}$(false)$; $\rangle$

\end{tabbing}
&
\begin{tabbing}
{\bf LL}\=$(x)$ {\em /* load-linked */}\\
\> $\langle return~the~value~of~x~so~that$\\
\> $it~may~be~subsequently~used$\\
\> $with$ \bf{SC} $\rangle$\\

{\bf SC}$(x,v)$ {\em /* store-conditional */}\\
\> $\langle$ \= {\bf if} $(no~process~has~written~to~x$ \\
\> $since~the~last$ {\bf LL}$(x))$ $\{ x \leftarrow v$; \\
\>\> {\bf return}$(true) \}$;\\
\>\> {\bf else return}$(false)$; $\rangle$
\end{tabbing}
\end{tabular}
\hrule
\caption{Synchronization primitives}\label{fig:intro_primitives}
\end{figure}

\paragraph{Synchronization power}
The primitives are classified according to their synchronization power or \index{consensus number} {\em consensus number}~\cite{Her91}, which is, roughly speaking, the maximum number of processes for which the primitives can be used to solve a {\em consensus problem} in a fault tolerant manner. In the consensus problem, a set of $n$ asynchronous processes, each with a given input, communicate to achieve an agreement on one of the inputs. A primitive with a consensus number $n$ can achieve consensus among $n$ processes even if up to $n-1$ processes stop~\cite{Turek92}.

\index{synchronization primitives!compare-and-swap (CAS)}
 \index{synchronization primitives!load-linked/store-conditional (LL/SC)}
 \index{synchronization primitives!fetch-and-op (FAO)}
  \index{synchronization primitives!test-and-set (TAS)}
According to the consensus classification, read/write registers have consensus number 1, i.e. they cannot tolerate any faulty processes in the consensus setting. \index{compare-and-swap (CAS)} \index{load-linked/store-conditional (LL/SC)} \index{fetch-and-op (FAO)} \index{test-and-set (TAS)}
There are some primitives with consensus number 2 (e.g. {\em test-and-set (TAS)} and {\em fetch-and-op (FAO)}) and some with infinite consensus number (e.g. {\em compare-and-swap (CAS)} and {\em load-linked/store-conditional (LL/SC)}).
It has been proven that a primitive with consensus number $n$ cannot implement a primitive with a higher consensus number in a system of more than $n$ processes \cite{Her91}. For example, the {\em test-and-set} primitive, whose consensus number is two, cannot implement the {\em compare-and-swap} primitive, whose consensus number is unbounded, in a system of more than two processes. Most modern general purpose multiprocessor architectures support  {\em compare-and-swap (CAS)} in hardware. {\em compare-and-swap (CAS)} is also the most popular synchronization primitive for implementing both lock-based and nonblocking concurrent data structures. For many non-blocking data structures a primitive with a consensus number $n$ is needed.

The Myriad platform, as many other embedded platforms, avails {\em test-and-set (TAS)} registers which have consensus number 2 and not {\em compare-and-swap (CAS)}. These  Test-and-Set registers
can be used to create spin locks, which are commonly referred as "mutexes". Spin-locks are used to create busy-waiting synchronization techniques: a thread spins to acquire the lock so as to have access to a shared resource.

The Myriad platform also avails a set of registers that can be used for fast SHAVE arbitration. Each SHAVE has its own copy of these registers and its size is 4x64 bit words. An important characteristic is that they are accessed in a FIFO pattern, so each one of them is called a ``SHAVE's FIFO''. Each SHAVE can push data to the FIFO of any other SHAVE, but can read data only from its own FIFO. A SHAVE writes to the tail of another FIFO and the owner of the FIFO can read from any entry. If a SHAVE attempts to write to a full FIFO, it stalls. Finally, the LEON processor cannot access the FIFOs. SHAVE FIFOs can be utilized to achieve efficient synchronization between the SHAVEs. Also, they provide an easy and fast way for exchanging data directly between the SHAVEs (up to 64 bits per message), without the need to use shared memory buffers.

Analysis of experiments on the Myriad platform shows that the mutex implementation is a fair lock with round-robin arbitration. But most scalable designs for concurrent data structures require \index{compare-and-swap (CAS)}, a hardware primitive that has unbounded consensus number. 
Because of the lack of support of strong synchronization primitives, from the embedded hardware side,
we had to come with new algorithmic designs for the data structures under consideration fitting the capabilities of the embedded systems area.

\section{Energy Models for EXCESS Platforms} \label{sec:EnergyModel}

\subsection{Energy Models for CPU-based Platforms}\label{sec:cpu-micro}
\newcommand{\pcaA}{\ema{A}}       
\newcommand{\pcbA}{\ema{B}}       
\newcommand{\pceA}{\ema{\alpha}}  
\newcommand{\pceoA}{\ema{\alpha_0}}  

\newcommand{\freqA}{\ema{freq}}
\newcommand{\freqB}{\ema{\mathit{freq}}}

\renewcommand{\pin}{\ema{\mathit{pin}}}
\newcommand{\sock}{\ema{\mathit{soc}}}
\newcommand{\numth}{\ema{\mathit{thr}}}

\newcommand{\loc}{\ema{\mathit{loc}}}
\newcommand{\ope}{\ema{\mathit{op}}}
\newcommand{\opreg}{\ema{\mathit{op_{reg}}}}


\subsubsection{General Power Model}
\label{sec.gen-mod-cpu}

The power model that is presented in EXCESS D1.1~\cite{D1.1} decomposes the total power into
static, socket activation and dynamic power, as recalled in
Equation~\ref{eq.power-model-micro}. In this equation, \freq is the clock frequency, \sock
the number of activated sockets on the chip, \ope is the considered operation and \numth is
the number of active cores; the active power is proportional to the number of active
sockets, while the dynamic power is proportional to the number of active cores.

For modeling power consumption of data structures, we need to estimate the dynamic
component which depends on the frequency, number of active cores, locality and amount of
memory requests together with the instruction type as is also mentioned in D1.1~\cite{D1.1}.
Therefore, we improve the model with an additional parameter \loc to represent the locality
of operands for instructions that can transfer data between memory and registers, 
such a move from L1, L2, last level cache, main memory or remote memory. This parameter was
not included in D1.1 because we consider only the total power in which \loc parameter plays
a negligible role. For fine-grained analysis, we include this parameter in our model.

\begin{equation}\left\{\begin{array}{l}
\pow{}(\freq,\ope,\sock,\loc,\numth) = \pstat{}  + \pact{} (\freq,\sock)  + \pdyn{} (\freq,\ope,\loc,\numth)\\
\pact{}(\freq,\sock) = \sock \times \pact{}(\freq)\\
\end{array}\right.
\label{eq.power-model-micro}
\end{equation}

\begin{table} [h!]
\begin{center}
\begin{tabular}{|c|c|c|c|}
\cline{2-4}
\multicolumn{1}{c|}{}&        Static & Active & Dynamic\\\hline
CPU    & \pstat{C} & \pact{C} & \pdyn{C} \\\hline
Memory & \pstat{M} & \pact{M} & \pdyn{M} \\\hline
Uncore & \pstat{U} & \pact{U} & \pdyn{U} \\\hline
\end{tabular}
\end{center}
\caption{Power views\label{fig.pow-dec-micro}}
\end{table}

As another improvement to the previous power model, we
decompose the power into two orthogonal bases, each base having three
dimensions. On the one hand, we define the model basis by separating the power into
static, active and dynamic power, such that the total power is computed by:
\[ \pow{} = \pstat{} + \pact{} + \pdyn{}. \]
On the other hand, the measurement basis corresponds to the components
that actually dissipates the power, \ie CPU, memory and uncore. 
The power dissipation measurement is done through Intel's RAPL energy counters
read via the PAPI
library~\cite{BrDoGaHoMu:2000:PAPI,Weaver:2012:MEP:2410139.2410475}. These counters reflect this discrimination
by outputting the power
consumption along three dimensions:
\begin{itemize}
\item power consumed by CPU, which includes the consumption of the computational cores, and
  the consumption of the first two level of caches;
\item power consumed by the main memory;
\item remaining power, called ``uncore'', which includes the ring interconnect, shared cache, 
integrated memory controller, home agent, power control unit, integrated I/O module, config agent, caching agent
and Intel QPI link interface.
\end{itemize}
Also, total power is obtained by the sum:
\[ \pow{} = \pow{C} + \pow{M} + \pow{U}. \]
This latter additional orthogonal dimension will provide a better
perspective for modeling power consumption of data structures, especially for the dynamic
component. Table~\ref{fig.pow-dec-micro} sums up both dimensions.

In this section, we study each dimension, in each base, so that we are able to express the
power dissipation from any perspective:
\[ \pow{} = \sum_{X \in \{C,M,U\}} \left( \pstat{X} + \pact{X} + \pdyn{X}  \right). \]




\subsubsection{Power Components Derivation}

By definition, only the dynamic component of power is dependent on the type of instruction
or more generally the executing program. In order to obtain dynamic component \pdyn{}, we
first have to determine static \pstat{} and socket activation \pact{} costs.  This was done
in D1.1 but the derivation process was not described in detail. Therefore, we will explain
it briefly in this subsection.

In D1.1, a large variety of instructions were examined with respect to their power and
energy consumption. We have observed a linear relation between the number of threads and
power for instructions that do not lead to data transfer between the memory hierarchy
and registers. For instance, addition operates on two registers and do not lead to data
transfer. Data transfer is done via move instructions before or after addition instruction
if required. So, the locality parameter \loc is only valid for instructions that is dependent
on the locality of data, like variants of the move instruction. These operations are also prone
to variability due to cache and memory states which can also change with the interaction
between threads. Briefly, \pdyn{M} and \pdyn{U} is significant only for the instructions
that lead to data transfer in the memory hierarchy. Also the \loc parameter is only
meaningful for \pdyn{M} and \pdyn{U}. For derivation of \pstat{} and \pact{}, we just use the
instructions that operate on the registers because the \pact{M} and \pact{U} parts can be neglected
for these instructions. We refer to these instructions as \opreg
and utilize them to obtain static and socket activation
costs for each component (CPU, memory, uncore) of the orthogonal decomposition.
A bunch of instructions belonging to \opreg is executed repeatedly for some time
interval with varying number of threads for each frequency.
We formulate the derivation process as, for all $X\in \{C,M,U\}$:

\begin{align*}
\pdyn{}(\freq,\ope,\loc,\numth) =& \,\pdyn{M,U}(\freq,\ope,\loc,\numth) + \pdyn{C}(\freq,\ope,\numth)\\
\pdyn{M,U}(\freq,\opreg,\loc,\numth) =&\, 0\\
\pdyn{C}(\freq,\opreg,\numth) =&\, \numth \times \pdyn{C}(\freq,\opreg)\\
\pdyn{X}(\freq,\opreg) =&\, 
\frac{1}{2} \left( \pow{X}(\freq,\opreg,\sock,\loc,16) - \pow{X}(\freq,\opreg,\sock,\loc,14) \right)\\
\pact{X}(\freq) =&\, \pow{X}(\freq,\opreg,2,\loc,10) -
\pow{X}(\freq,\opreg,1,\loc,8) - \pdyn{X}(\freq,\opreg) \times 2\\
\pstat{X}() =&\, \pow{X}(\freq,\opreg,\sock,\loc,\numth) - 
\sock \times \pact{X}(\freq) - \numth \times \pdyn{X}(\freq,\opreg)
\end{align*}

Using above equations, we verified that $\pstat{X\in \{C,M,U\}}$ is approximately constant according to
instruction type, pinning, number of threads and frequency thus we take the mean of the values of \pstat{} over
the whole space to find \pstat{}. We apply the same approach to find $\pact{X\in \{C,M,U\}}$ which
only depends on frequency, and not on the operation.
Having obtained \pstat{} and \pact{X}, we extract \pdyn{X\in \{C,M,U\}} for ``all''
types of instructions, thread, pinning and frequency setting, by removing the static and
active part from the total power.

\subsubsection{Dynamic CPU Power}
\begin{figure} [h!]
\includegraphics[width=\textwidth]{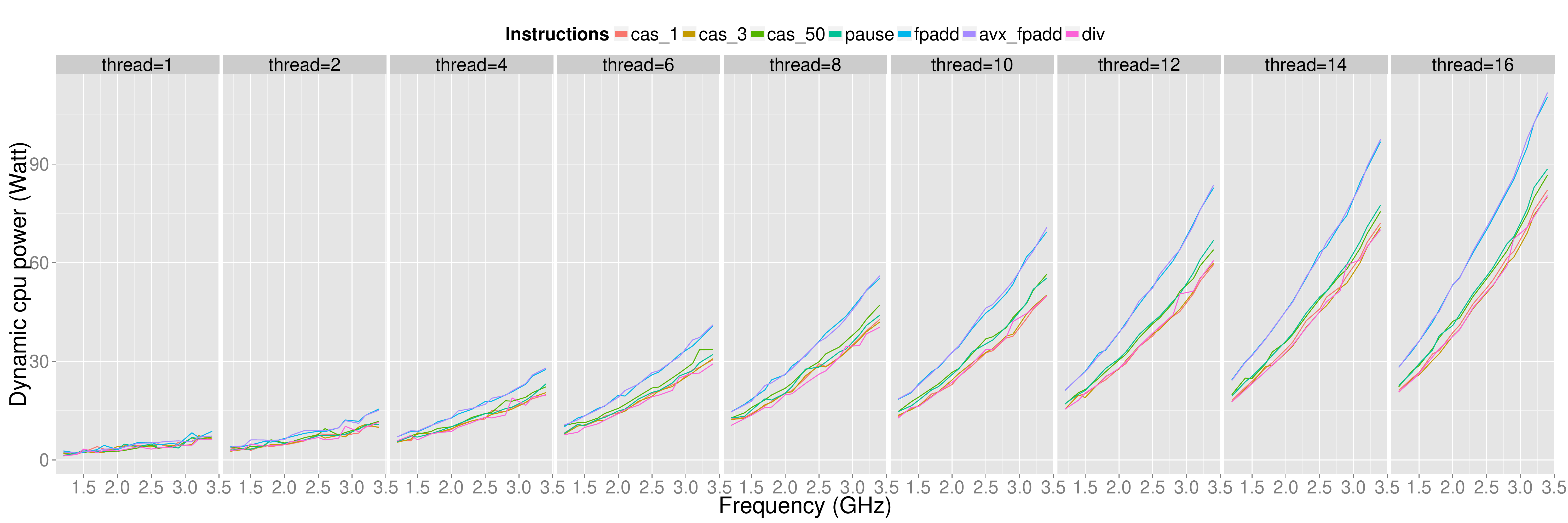}
\caption{Dynamic CPU power for micro-benchmarks\label{fig.cpu_dyn_micro}}
\end{figure}

\newcommand{\alf}{\ema{\alpha}}  
\newcommand{\A}{\ema{A}}       
\newcommand{\B}{\ema{B}}       

Having determined and excluded static and socket components, we obtain the dynamic power
component for each instruction, thread count, pinning and frequency setting. In the
micro-benchmarks of D1.1, a large variety of instructions are surveyed. Among them, we
pick a small set of instructions that can be representative for data structure
implementations, namely \cas, pause, floating point division, addition together with
vector addition.  \cas can be representative for the retry loops and divisions/additions
can be used to represent the parallel work which determines the contention on the data
structures. The decomposition of dynamic power in terms of CPU, memory and uncore
components for these instructions are illustrated in
Figures~\ref{fig.cpu_dyn_micro},~\ref{fig.mem_dyn_micro} and~\ref{fig.uncore_dyn_micro}.

Based on the observation that \pdyn{C} shows almost linear behavior with respect to number
of threads, we model the convex \pdyn{C} as:

\[ \pdyn{C}(\freq,\ope) =  \left( \A \times \freq^{\alf} + \B \right) \]

Each instruction might provide different power behavior as illustrated in Figure~\ref{fig.cpu_dyn_micro},
therefore we find \A, \B, \alf for each instruction separately. \B could be different for each instruction because
of the activation of different functional units, this is also why we included this constant in \pdyn{C}. 
\begin{table} [!bh]
\begin{center}
\begin{tabular}{|c|c|c|c|}
\cline{2-4}
\multicolumn{1}{c|}{}&        \A & \alf & \B \\\hline
cas   & 0.001392 & 1.6415 & 0.0510 \\\hline
fpdiv & 0.001038 & 1.7226 & 0.0585 \\\hline
add & 0.001004 & 1.8148 & 0.0912 \\\hline
avx-add & 0.001130 & 1.7828 & 0.0894 \\\hline
pause & 0.000854 & 1.7920 & 0.0736 \\\hline
\end{tabular}
\end{center}
\caption{Instruction power coefficients\label{fig.pow-coeff-micro}}
\end{table}

\newcommand{\vfreq}[1]{\ema{v^{(\mathit{freq})}_{#1}}}
\newcommand{\vmeas}[1]{\ema{v^{(\mathit{meas})}_{#1}}}
\newcommand{\vest}[1]{\ema{v^{(\mathit{est})}_{#1}}}

To obtain \A, \B and \alf, we proceed in the following way.  We are given an operation
\ope, and we consider the executions of this operation with $16$ threads on $2$
sockets.  Let \vfreq{} be the vector of frequencies where we want to estimate the dynamic
power (we dispose \nfr different frequencies, expressed in $\ghz{10^{-1}}$, such that
$\vfreq{1} = 12$ and $\vfreq{\nfr} = 34$). We note \vmeas{} the vector of dynamic powers
that have been computed from the measurements through the process described above, and
$\vest{}(A,B,\alf)$ the vector of estimated dynamic powers. More especially, for all $i
\in \{ 1,\dots,\nfr \}$:
\begin{align*}
 \vmeas{i} &= \pdyn{C}(\vfreq{i},\ope)\\
 \vest{i}(A,B,\alf) &= \left( A\times \left( \vfreq{i} \right)^\alf + B \right) 
\end{align*}
The Euclidean norm of a vector $v$ is denoted $\lVert v \rVert$.

We solve the following minimization problem, with the help of the Matlab ``fminsearch''
function:


\[ \min_{A,B,\alf} \left\lVert \vmeas{} - \vest{}(A,B,\alf) \right\rVert \]

Table~\ref{fig.pow-coeff-micro} provides the values for power constants and exponent for selected instructions.

\subsubsection{Dynamic Memory and Uncore Power}
\label{sec.mem-pow-cpu}

\begin{figure}[!b]
\includegraphics[width=\textwidth]{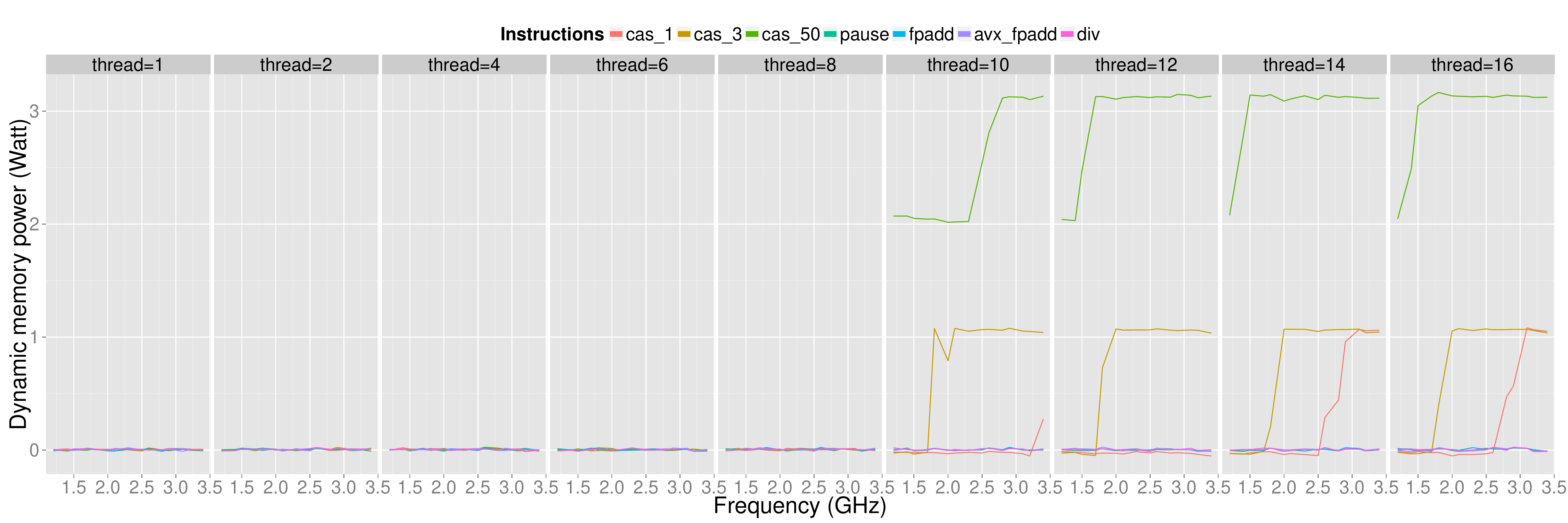}
\caption{Dynamic memory power for micro-benchmarks\label{fig.mem_dyn_micro}}
\end{figure}

\begin{figure}[t!]
\includegraphics[width=\textwidth]{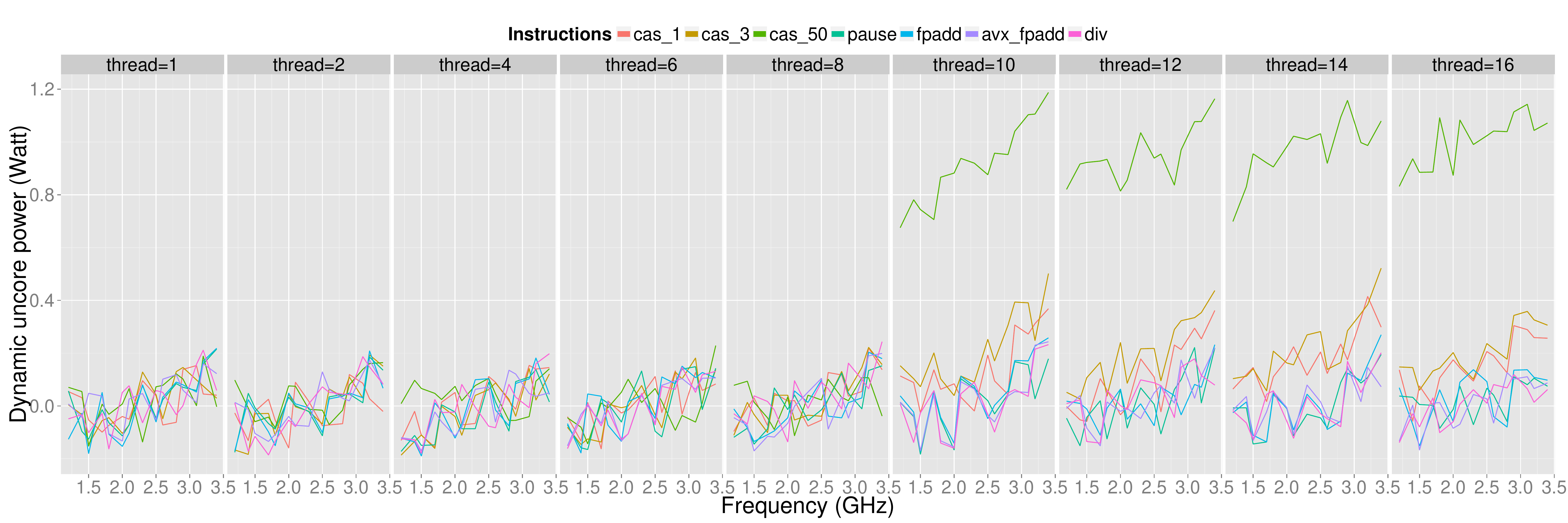}
\caption{Dynamic uncore power for micro-benchmarks\label{fig.uncore_dyn_micro}}
\end{figure}
\begin{figure} [t!]
\includegraphics[width=\textwidth]{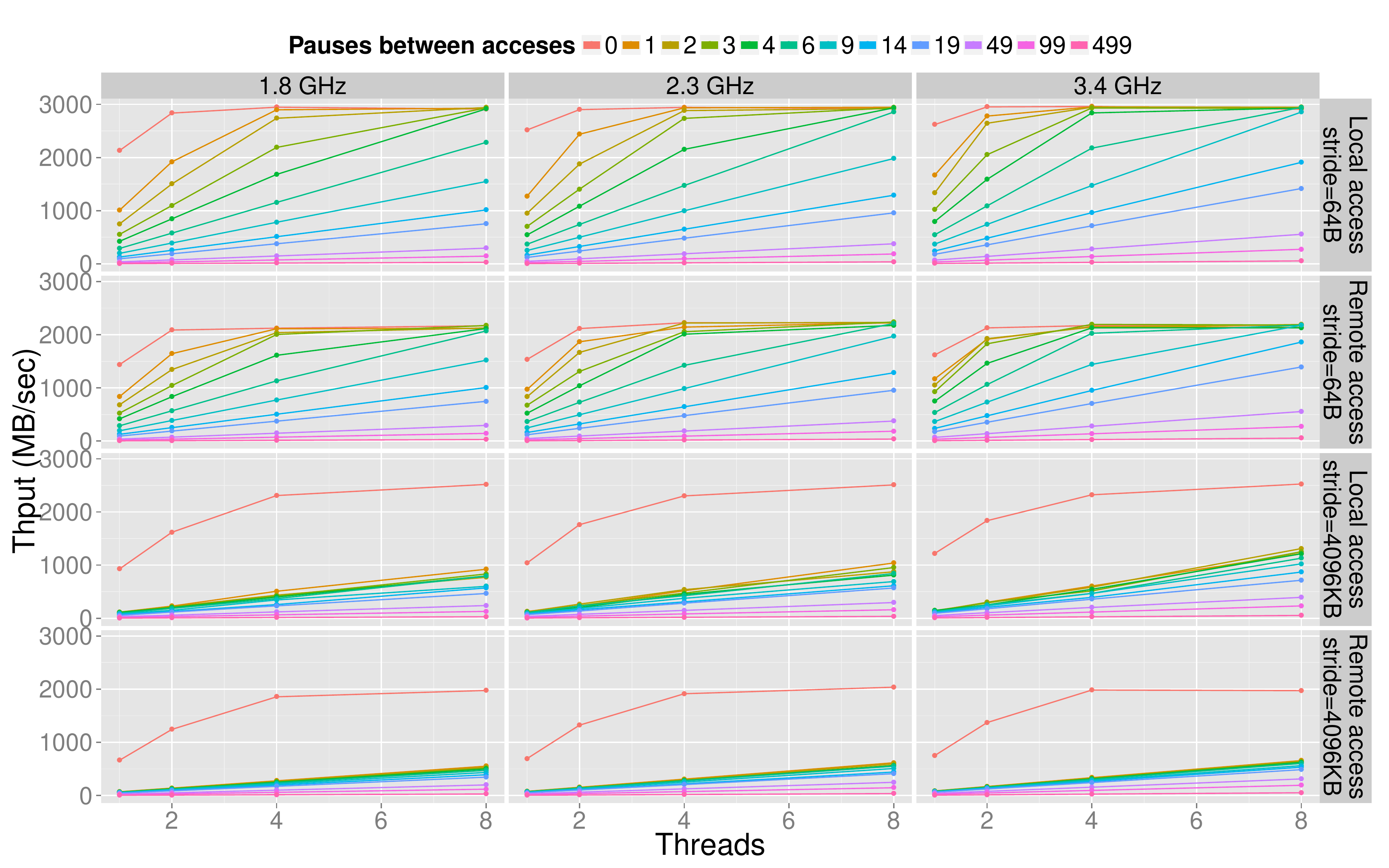}
\caption{Throughput for array traversal benchmark\label{fig.access-thput}}
\end{figure}

In the micro-benchmarks, we observe that many instructions do not lead to an increase in
dynamic memory and uncore power because the operands of the instructions,
except \cas, presumably reside in the core. On the other hand, \cas lead to an
increase in memory and uncore power only when the threads are pinned to different sockets. The
resulting {\it ping-pong} of the updated cache line between sockets is responsible for this effect. As
mentioned in D1.1, \cas micro-benchmarks are indeed prone to unfairness among threads. When \cas
is executed repeatedly by all threads on the same cache line without any work in between
\cas attempts, the thread which gets the ownership of cache line succeeds repeatedly while
others starve. This fact decreases the transfer rate of the cache line between local
caches. Due to this, we introduce 3 different \cas micro-benchmarks looping
on 1, 3 and 50 shared variables that are aligned to different cache lines. By doing so, we
aim at increasing the traffic between cores and sockets together with the amount of memory
accesses. Figures~\ref{fig.mem_dyn_micro} and~\ref{fig.uncore_dyn_micro} provide the
\pdyn{U} and \pdyn{M} values. It can be observed that all parameters including number of
threads, frequency, pinning play a role for \cas. As a remark for the provided figures,
threads are pinned using a dense mapping strategy that leads to inter-socket communication
only after 8 threads.

\begin{figure} [t!]
\includegraphics[width=\textwidth]{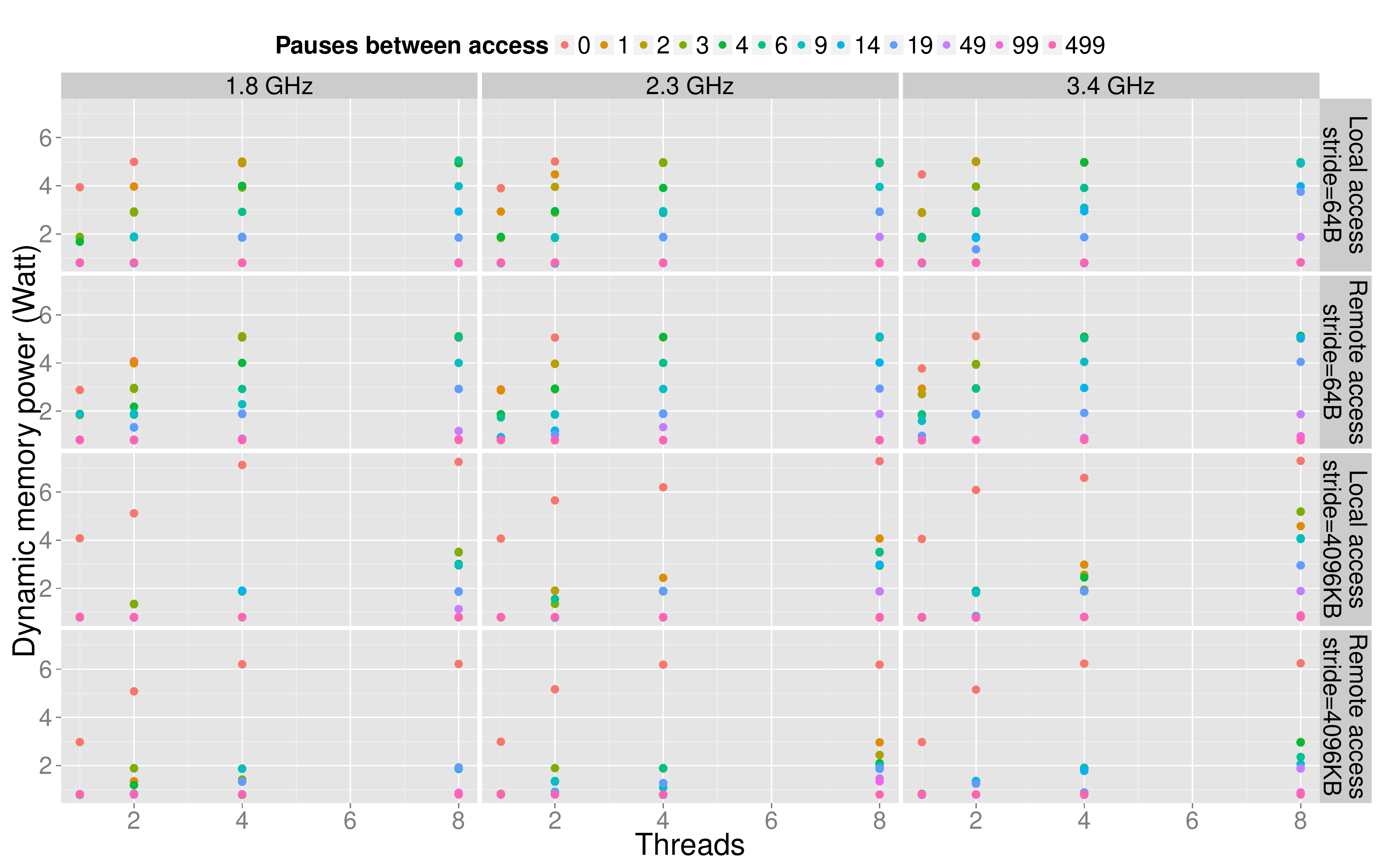}
\caption{Step-like power for array traversal benchmark\label{fig.access-work-mem}}
\end{figure}

\pdyn{U} and \pdyn{M} do not increase when threads are pinned to same socket. In this
case, the intra-socket communication between threads takes place via the ring interconnect
without introducing a memory access. Thus, absence of increase in memory power is
reasonable. However, one might expect an increase in uncore power for these cases because
RAPL uncore power presumably includes LLC and ring interconnect power. 
We do not observe this probably because the main components that attached to
the ring are not used. The increase of uncore and memory power can be observed when threads
are pinned to different sockets, due to remote memory accesses which uses important uncore
components such as the QPI link interface and home agent. An interesting observation
regarding memory power is that it shows a step function behavior. We think that this is
because of the RAPL power capping algorithm which determines a power budget based on
memory bandwidth, as presented in the work of David et al.~\cite{rapl-power-cap}. The RAPL
algorithm specifies a power cap for a time window depending on the memory bandwidth
requirements of previous time intervals and sets the memory in a power state that is
expected to maximize energy efficiency. Based on the amount of memory accesses, it jumps
between states finding a trade-off between bandwidth and power. The finite number of states
leads to the step-like power curves in Figure~\ref{fig.mem_dyn_micro}. Thus, the memory
power seems to be determined by the amount of memory accesses per unit of time which is dependent on
frequency, number of threads and the amount of shared variables for our \cas experiments.

\begin{figure} [t!]
\includegraphics[width=\textwidth]{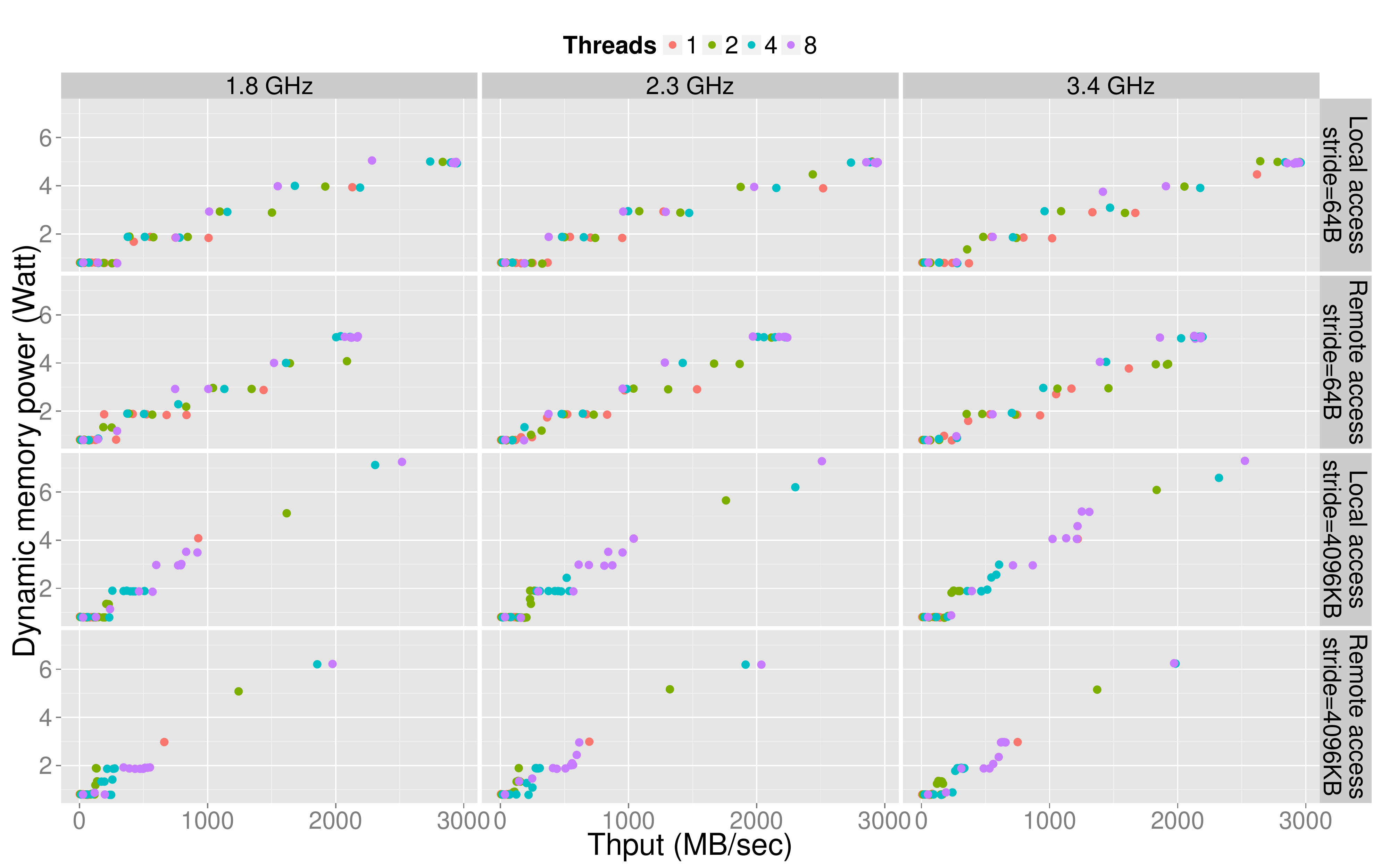}
\caption{Memory power for array traversal benchmark\label{fig.access-mempow}}
\end{figure}

\begin{figure} [h!]
\includegraphics[width=\textwidth]{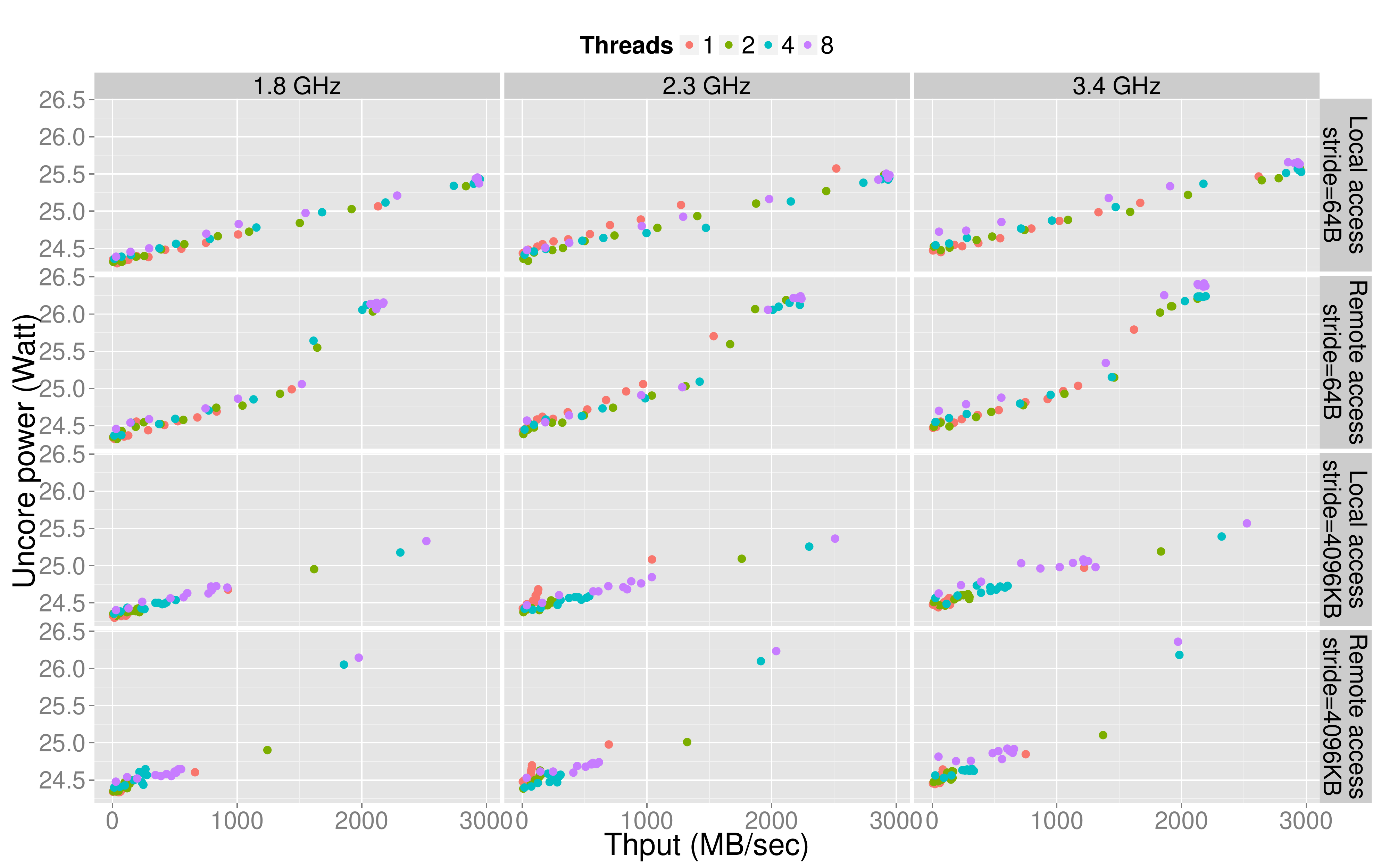}
\caption{Uncore power for array traversal benchmark\label{fig.access-uncorepow}}
\end{figure}

\newcommand{\byt}[1]{\ema{#1\;\text{Bytes}}}
\newcommand{\kbyt}[1]{\ema{#1\;\text{kBytes}}}

To justify this observation, we use a benchmark which stresses the main memory. We
allocate a huge contiguous array and align each element of the array to a separate
cache line. In addition, we force the array to be allocated in the memory module residing
in the first socket. Thus, we can regulate remote and local memory accesses by pinning
strategies. We pin all threads either to first or second socket. We also change the number of
threads, frequency and interleave varying amount of pause operations between array accesses to
change the bandwidth requirements of the benchmark. Threads access independent portions of the
array with a stride. The hardware prefetcher increases the performance remarkably when adjacent
cache lines are accessed while traversing the array and a stride of page size can be used
to disable the hardware prefetcher. We run the same experiment both with a stride of \byt{64},
which is the size of a cache line, and \kbyt{4096} which is the page size, to reveal effect of
prefetching. As provided in Figure~\ref{fig.access-thput}, the system reaches its peak
bandwidth more rapidly when the prefetcher is activated and attains better bandwidth. 
Moreover, the bandwidth difference between completely remote and
local accesses is noticeable. Another point is that frequency does not influence the maximum
achievable bandwidth. This fact means that there is 
opportunity for energy savings with DVFS for memory-bounded regions of applications.
We also observe the step-like power behavior, due to the RAPL algorithm, with this benchmark
in Figure~\ref{fig.access-work-mem}.

In Figure~\ref{fig.access-mempow}, dynamic memory power consumption is shown for the array traversal 
benchmarks. From the analysis of the results, it can be deduced that the memory power is strongly correlated with
the number of bytes accessed per second. There is no clear impact of the number of threads and frequency
to the memory power except their indirect effect on bandwidth. In contrast,
access stride has a direct, though limited, impact on the memory power together with its indirect impact as it increases
the bandwidth. By accessing data with a stride of a cache line, we possibly make use
of the open page mode of DRAM which could be influential in terms of energy efficiency due to
avoidance of bit-line precharge and row access cost. But, we still observe a linear relation between 
throughput and memory power for both strides.
In addition, remote or local accesses do not provide a noticeable difference for memory power. 
On the other hand, it is observed that uncore power depends on the frequency, presumably due to
the traffic on the ring interconnect and components attached to it such as the Home Agent and Integrated
Memory Controller. Furthermore, remote memory accesses increase the uncore power consumption
because they use the QPI link interface which adds an additional cost compared to local memory
accesses as shown in Figure~\ref{fig.access-uncorepow}.
All these observations regarding memory and uncore power will shed light to the analysis
and modeling of data structures in Section~\ref{sec:queue_modeling}. One major source of differences in power
consumption between different implementations is the memory and uncore consumption, which
is related to locality and bandwidth requirements of the implementations.

\subsubsection{Summary of Micro-Benchmarking for Power Modeling on CPU}
\newcommand{\da}{\ema{d}}        
\newcommand{\pca}{\ema{A}}       
\newcommand{\pcb}{\ema{B}}       
\newcommand{\pce}{\ema{\alpha}}  

Figure~\ref{fig.dep-rem} recalls the main achievements of the micro-benchmark study on CPU, 
where \da is the amount of memory accessed per unit of time in the main memory or through
QPI link.

\begin{figure}[h!]
\begin{tikzpicture}
\node (caca2) at (0,0)
{$\begin{array}{|c|c|c|c|}
\cline{2-4}
\multicolumn{1}{c|}{}&  \text{Static} & \text{Active} & \text{Dynamic}\\\hline
\text{CPU   } & \pstat{C}() & \sock \times \pact{C}(\freq) & \nth \times \left( \pca(\ope) \times \freq^{\pce(\ope)} + \pcb(\ope) \right)  \\\hline
\text{Memory} & \pstat{M}() &                          & \da \times \pdyn{M}(\ope,\loc) \\\cline{1-2}\cline{4-4}
\text{Uncore} & \pstat{U}() &                          & \da \times \pdyn{U}(\ope,\loc) \\\cline{1-2}\cline{4-4}
\end{array}$};
\node (caca1) at (0,5) 
{$\begin{array}{|c|c|c|c|}
\cline{2-4}
\multicolumn{1}{c|}{}&  \text{Static} & \text{Active} & \text{Dynamic}\\\hline
\text{CPU}    & \pstat{C}(\freq,\ope,\sock,\loc,\nth) & \pact{C}(\freq,\ope,\sock,\loc,\nth) & \pdyn{C}(\freq,\ope,\sock,\loc,\nth) \\\hline
\text{Memory} & \pstat{M}(\freq,\ope,\sock,\loc,\nth) & \pact{M}(\freq,\ope,\sock,\loc,\nth) & \pdyn{M}(\freq,\ope,\sock,\loc,\nth) \\\hline
\text{Uncore} & \pstat{U}(\freq,\ope,\sock,\loc,\nth) & \pact{U}(\freq,\ope,\sock,\loc,\nth) & \pdyn{U}(\freq,\ope,\sock,\loc,\nth) \\\hline
\end{array}$};
\path (caca1) edge[thick,->,bend left=45] node[right,text width = 2cm] {Dependency removal} (caca2);
\end{tikzpicture}
\caption{Dependency shrinking\label{fig.dep-rem}}
\end{figure}

\subsection{Energy Models for Movidius Embedded Platforms}\label{sec:movidius_energy}

\subsubsection{Description of Microbenchmarks}
Regarding the assembly files used in the test execution process, a fixed number of instructions in the loop was established for all the tests, meaning that each assembly file contains six instructions in the loop that is infinitely repeated. This convention was made in order to keep a continuity and a consistency of tests, by giving an insight in measuring the consumption on different SHAVE units, enabling to make comparisons between SHAVE units.
 
The assembly files used in the testing process  contain code that test the instruction power decode and the instruction fetch. The majority of tests use pseudo-realistic data, by pseudo-realistic data we understand having as many non-zero values as possible and avoiding data value repetition at different offsets. 

Below are the used test cases as micro-benchmarks for Movidius platform.

\begin{itemize}

\item{SauMul, SauXor}:
 SAU operations - with instruction fetch and different data values (XOR-MUL)

\item{IauMul, IauXor}:
 IAU operations - with instruction fetch and different data values (XOR-MUL)

\item{VauMul, VauXor}:
 VAU operations - with instruction fetch and different data values (XOR-MUL)

\item{CmuCpss, CmuCpivr}:
 CMU operations --- with instruction fetch and different data values (CPSS - CPIVR)

\item{SauXorCmuCpss}:
 SAU \& CMU --- with instruction fetch and different data values (SAU.XOR $\|$ CMU.CPSS)

\item{SauXorCmuCpivr}:
 SAU \& CMU --- with instruction fetch and different data values (SAU.XOR $\|$ CMU.CPIVR)

\item{SauXorIauXor, IauXorCmuCpss}:
 SAU \& IAU --- with instruction fetch and different data values (SAU.XOR $\|$ IAU.XOR \& IAU.XOR $\|$ CMU.CPSS)

\item{SauXorVauXor, SauXorVauMul}:
 SAU \& VAU --- with instruction fetch and different data values (SAU.XOR $\|$ VAU.XOR \& SAU.XOR $\|$ VAU.MUL)

\item{SauXorCmuIauXor}:
 SAU \& IAU \& CMU --- with instruction fetch and different data values 
 (SAU.XOR $\|$ CMU.CPIS $\|$ IAU.XOR)
 
\end{itemize}

\subsubsection{Movidius Power Model and Its Sanity Check} 
The experiments are conducted for benchmarks with single unit and multiple units. Each benchmark is tested by running Myriad1 with  1, 2, 4, 6 and 8 SHAVE cores.

From the experiment results, we observe that the power consumption of Movidius Myriad1 platform is ruled by the following model:
\begin{equation} \label{eq:Pfirst}
	P=P^{stat}+\#\{\text{active SHAVE}\} \times (P^{act}+P^{dyn}_{SHAVE})
\end{equation}
The operands in the formula are explained as below.
The static power $P^{stat}$ is the needed power when the Myriad1 processor is on. The $P^{act}$ is the power consumed when a SHAVE core is on. Therefore, this active power is multiplied with the number of used SHAVE core when the benchmark is run with several cores. 

The dynamic power $P^{dyn}_{SHAVE}$ of each SHAVE is the power consumed by all working operation units working on SHAVE. As described in the previous section, each SHAVE core has several components, including LSU, PEU, BRU, IAU, SAU, VAU, CMU, etc. Different arithmetic operation units have different $P^{dyn}$ values. In this power model, we focus on the experiments with the benchmarks performing different arithmetic operations such as IAU, VAU, SAU and CMU.  

The dynamic power $P^{dyn}$ of each SHAVE is the power consumed by all working operation units. As described in the previous section, each SHAVE core has several components, including LSU, PEU, BRU, IAU, SAU, VAU, CMU, etc. Different arithmetic operation units have different $P^{dyn}$ values. In this power model, we focus on the experiments with the benchmarks performing different arithmetic operations such as IAU, VAU, SAU and CMU.  


When  adding one more SHAVE, we can identify the sum of SHAVE $P^{act}$ and $P^{dyn}$ which is the power level difference of the two runs (with one SHAVE core and with two SHAVE corer).  Given the sum of $P^{act}$ and $P^{dyn}$, $P^{stat}$ is derived from the formula. Then, the average value of $P^{stat}$ from all micro-benchmarks experimental results is computed:
\begin{equation} \label{eq:Pstat}
	P^{stat}=62.63 \text{ mW}
\end{equation}
Then for each operation unit, we obtain the two parameters $P^{dyn}_{op}$ and $P^{act}$ by using the actual power consumption of the benchmark for individual units and multiple units. 
\begin{equation} \label{eq:Pcomb}
	\begin{cases}
		P^{dyn}_{(IauXor)} =P^{stat} + \#\{\text{active SHAVE}\} \times (P^{act}+P^{dyn}_{(IauXor)}) \\
		P^{dyn}_{(SauXorIauXor)}= P^{stat} + \#\{\text{active SHAVE}\} \times (P^{act}+P^{dyn}_{(SauXorIauXor)})\\
		P^{dyn}_{(SauXor)}= P^{stat} + \#\{\text{active SHAVE}\} \times (P^{act}+P^{dyn}_{(SauXor)})
	\end{cases}
\end{equation}

Then, the average value of  $P^{act}$ among all operation units is calculated.
\begin{equation} \label{eq:Pact}
	P^{act}=51.4 \text{ mW}
\end{equation}

At this point, we also have the $P^{dyn}_{op}$ of every arithmetic unit $op$. Applying the model to calculate the consumed energy, the results showed the deviation from the measured data, especially the benchmarks running a single arithmetic unit. We attribute this difference to the inter-operation cost when more than one unit of the SHAVE core work in a combination. This inter-operation cost is also mentioned in the model suggested by Movidius in EXCESS deliverable D4.1. The $P^{dyn}$ of SHAVE running multiple units is then computed by the formula below:
\begin{equation} \label{eq:Pdynshave}
	P^{dyn}_{SHAVE}=\sum_i{P^{dyn}_i(op)} + \max_i\{O_i(op)\}
\end{equation}

The combined $P^{dyn}$ is the sum of $P^{dyn}$ of each unit, plus the highest inter-operational cost $O_i$ among the units in the combination. By using the highest inter-operation cost, $P^{dyn}$ after computed is more accurate than using the sum of  inter-operation cost from all units in the combination. E.g. 
$P^{dyn}_{(SauXorIauXor)}) = P^{dyn}_{(SauXor)} + P^{dyn}_{(IauXor)} + \max(O_{SauXor}, O_{IauXor})$.

Given the $P^{stat}$ and $P^{act}$, the $P^{dyn}_{op}$ of an operation unit is computed based on the actual power consumption of the benchmark using a single unit(e.g. SauXor, IauXor, etc.). Then, its inter-operational cost is computed based on the actual power consumption when this unit works in a combination with other units.

The Table~\ref{table:InteroperationalCost} lists the inter-operational cost of each unit when it works in a combination with other units.

\begin{table}[h]
\begin{center}
\begin{tabular}{|l|l|l|}
\hline
\textbf{Operation Unit} & \textbf{$P^{dyn}_{op}$ (mW)} & \textbf{$O_{op}$(mW)} \\ \hline
SauXor                  & 3.05          & 1.15       \\ \hline
SauMul                  & 6.97          & 1.83       \\ \hline
VauXor                  & 17.57         & 13.12      \\ \hline
VauMul                 & 32.78         & 11.62      \\ \hline
IauXor                  & 4.53          & 1.07       \\ \hline
IauMul                  & 3.98          & 4.42       \\ \hline
CmuCpss                 & 1.00          & 4.60       \\ \hline
CmuCpivr                & 6.41          & 5.69       \\ \hline
\end{tabular}
\caption{Inter-operational cost (in mW) of SHAVE operation units}
\label{table:InteroperationalCost}
\end{center}
\end{table}

From Equations~\ref{eq:Pstat},~\ref{eq:Pact} and~\ref{eq:Pdynshave}, the complete model for Movidius Myriad1 is derived as follows:
\begin{equation} \label{eq:Psecond}
	P=P^{stat}+ \#\{\text{active SHAVE}\} \times (P^{act}) + \#\{\text{active SHAVE}\} \times \left( \sum_i{P^{dyn}_i(op)} + \max_i\{O_i(op)\} \right)
\end{equation}

Applying this formula to different combinations of operation units in the SHAVE core, we plot the relative error of this model in the Figure~\ref{fig:ErrorRelative}. The relative error is the difference between the actual power consumption measure by device and the predicted power consumption computed through Equation \ref{eq:Psecond}, then divided by the actual power consumption. 
Under this model, the relative error varies within $\pm 4\%$. This model is not only applicable for a single unit but also the combination of two or three units.

\begin{figure}
\begin{center}
\includegraphics[width=0.9\textwidth]{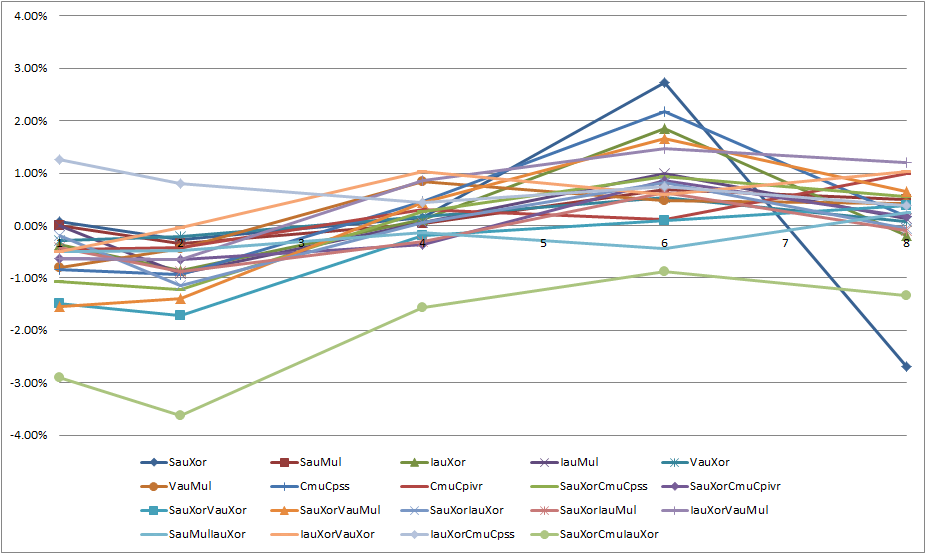}
\end{center}
\caption{Relative error of Myriad1 power prediction}
\label{fig:ErrorRelative}
\end{figure}

\section{Modeling Energy Consumption of Concurrent Queue Implementations} \label{sec:queue_modeling} \label{sec:concurrent-data-structures}

\subsection{Concurrent Queues on CPU-based Platform}
\label{sec:concurrent-data-structures-CPU}
\def\ofs{^{\text{(off)}}} 
\def\ons{^{\text{(on)}}}

\newcommand{\pwi}{\ema{\pw_\infty}}
\newcommand{\tpsi}{\ema{t_{\text{PS},\infty}}}

\newcommand{\thri}{\ema{\thr_\infty}}
\newcommand{\pwlc}{\ema{\pw_{lc}}}
\newcommand{\cwlc}{\ema{\cw_{lc}}}
\newcommand{\thrlc}{\ema{\thr_{lc}}}
\newcommand{\cwlcons}{\ema{\cwlc\ons}}
\newcommand{\thrlcons}{\ema{\thrlc\ons}}
\newcommand{\tcslc}{\ema{t_{\text{RL,LC}}}}
\newcommand{\tcslcons}{\ema{\tcslc\ons}}
\newcommand{\tcslcofs}{\ema{\tcslc\ofs}}
\newcommand{\cwlcofs}{\ema{\cwlc\ofs}}
\newcommand{\thrlcofs}{\ema{\thrlc\ofs}}

\newcommand{\frep}{\ema{\freq_0}}
\newcommand{\frepp}{\ema{\freq_1}}

\newcommand{\tpscr}{\ema{\tps^{(crit)}}}

\newcommand{\nthn}{\ema{\nth_0}}
\newcommand{\nthf}{\ema{\nth_1}}

\remind{(PT) [Re-write a little]
A common approach to parallelizing applications is to divide the problem into separate
threads that act as either producers or consumers. The problem of synchronizing these
threads and streaming of data items between them, can be alleviated by utilizing a shared
collection data structure.}

Concurrent FIFO queues and other producer/consumer
collections are fundamental data structures that are key components in
applications, algorithms, run-time and operating systems.
\index{concurrent data structures!queue}
\index{queue}
The Queue abstract data type is a collection of items in which only the earliest added
item may be accessed. Basic operations are \op{Enqueue} (add to the tail) and \op{Dequeue}
(remove from the head). Dequeue returns the item removed. The data structure is also known
as a ``first-in, first-out'' or FIFO buffer.

\subsubsection{Objective and Process}


As explained in Section~\ref{sec:int-ds}, we aim at predicting the energy efficiency of different
queue implementations through several metrics. We have seen in D1.1~\cite{D1.1} that the
energy efficiency is strongly related to both performance of the considered algorithm and
power dissipation of the architecture. Hence we naturally decompose this blurry notion of
energy-awareness, or energy efficiency, into these two dimensions.

We study the problem by modeling the behavior of the implementations, from both
the performance and the power point of view. In both cases, we run the implementations on some
problem inputs, and measure performance and power dissipation so that we can instantiate
the parameters of the model. Then, once the model is instantiated we are able to predict
the queue implementation's energy efficiency on any problem instance.

On the one hand, the performance-related metric that we use in this deliverable is
throughput, which represents the number of successful operations per unit of time (here,
per second). We consider the dequeuing of an element from the queue as the successful
operation, and the measurement is a simple counter of successful operations. On the other
hand, the power dissipation measurement is done through Intel's RAPL energy counters
read via the PAPI
library~\cite{BrDoGaHoMu:2000:PAPI,Weaver:2012:MEP:2410139.2410475}, 
as explained in Section~\ref{sec.gen-mod-cpu}.

Finally, we combine those two metrics into an energy-efficiency-related one: energy
consumed per successful operation, which is the ratio between power dissipation and
throughput.

\begin{figure}
\framebox{\includegraphics[width=\textwidth]{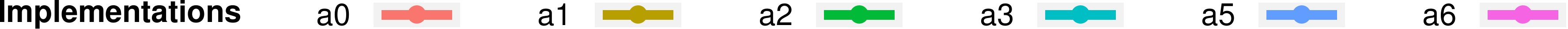}}
\caption{Key legend of the graphs\label{fig.key}}
\end{figure}

We dispose of a framework in which we have implemented the most well-known queue algorithm of
the literature. We consider the following implementations of queues,
which are described in some detail in Section~\ref{sec:shm-queue-algs} below:%
\footnote{The~{\bf a0}, {\bf a1}, etc., designations refer to the
micro-benchmark's naming of the algorithms.
Algorithm~{\bf a4} is a single-producer/single-consumer queue that is
unsuitable for our purposes and has therefore been left out.}
\begin{itemize}
\item {\bf a0}. Lock-free and linearizable queue by Michael and
  Scott~\cite{Michael96},
\item {\bf a1}. Lock-free and linearizable queue by Valois~\cite{Val94},
\item {\bf a2}. Lock-free and linearizable queue by Tsigas and
  Zhang~\cite{TsiZ01b},
\item {\bf a3}. Lock-free and linearizable queue by Gidenstam et
  al.~\cite{Gidenstam10:OPODIS},
\item {\bf a5}. Lock-free and linearizable queue by Hoffman et
  al.~\cite{DBLP:conf/opodis/HoffmanSS07},
\item {\bf a6}. Lock-free and linearizable queue by Moir et
  al.~\cite{MoirNSS:2005:elim-queue}.
\end{itemize}
We use the same legend for all the graphs in this section, except
from Figure~\ref{fig.cas}. It is depicted in Figure~\ref{fig.key}.
The idea here is to use as little knowledge as possible about the algorithms to predict
their behaviors, so that if a new algorithm is implemented its behavior can be
predicted as well, without changing the model that we present in the following sections.

We run a simple benchmark composed of the two functions described in
Figure~\ref{alg.ben}. Half of the threads are assigned to be enqueuers while the remaining
ones are dequeuers. We disable multi-threading and map separate threads into separate
cores, also the number of threads never exceeds the number of cores. In addition, the
mapping is done in the following way: when adding an enqueuer/dequeuer pair, they are
both mapped on the most filled but non-full socket.

The parallel work shall be seen as a processing activity, pre-processing for the enqueuers
before it enqueues an element, and post-processing on an element of the queue for the
dequeuers. This activity is presumed to be computation-intensive, that is why we coded it
in the benchmark as a sequence of floating point divisions.

As explained previously in Section~\ref{sec:int-ds}, the benchmark represents an application that
uses the queue in a steady-state manner; however, the behavior of the queue is likely to
differ from one application to the other, according to the amount of work in
the \ps; also, in the experiments, this amount of work will be part of the variables.

Two more clarifications are necessary. On the one hand, when we speak about
implementations of the queues, we actually refer to the different implementations of
enqueuing and dequeuing functions, along with their memory management schemes.  On the
other hand, the slowest of the two function calls (enqueue and dequeue) is the bottleneck
of performance and hence determines the throughput of the queue. Also when we reason about
the ``\rl'' in the following, we imply ``\rl of the slowest function call''. Notice that
only half of the threads are competing for this operation.


We conclude this introduction by defining some parameters that we use extensively in the
next subsections. We denote by \nth the number of running threads that call the same
operation, and by \freq the clock frequency of the cores (we only consider the case where
all cores share the same clock frequency, even if an asymmetric setting of the frequencies
between the sockets could be of interest). We note \pw the amount of work in the \ps, and
\cw the amount of work in one try of the \rl in the considered implementation.

More generally, an exponent ``(off)'' refers to an inter-socket execution, while ``(on)''
refers to an intra-socket one; a subscript ``lc'' denotes an execution in low-contention
state.

In the following subsections all experiments and their underlying predictions are done on Chalmers'
platform, which is described in Section~\ref{sec:sysDscrA}. We run the implementations
within a set of three frequencies $\left\{ \ghz{1.2}, \ghz{2.3}, \ghz{3.4} \right\}$, for
all possible even total numbers of threads, from 2 to 16, \ie for $\nth \in \{1,\dots,8\}$.

\begin{figure}[h!]
\begin{minipage}{.45\textwidth}
\begin{algorithmic}[1]
\Procedure{Enqueuer}{}
\State Initialization()\;
\While{execution time $<$ t}
	\State Parallel\_Work()\;
	\State Enqueue()\;
\EndWhile
\EndProcedure
\end{algorithmic}
\end{minipage}\hfill%
\begin{minipage}{.45\textwidth}
\begin{algorithmic}[1]
\Procedure{Dequeuer}{}
\State Initialization()\;
\While{execution time $<$ t}
	\State $\mathit{res} \leftarrow \text{Dequeue()}$\;
        \If{$\mathit{res} \neq \textsc{Null}$}
        	\State Parallel\_Work()\;
        \EndIf
\EndWhile
\EndProcedure
\end{algorithmic}
\end{minipage}%
\caption{Queue benchmark\label{alg.ben}}
\end{figure}

\subsubsection{Throughput}

We start this section by underlining some interactions between what we called ``work''
previously, and the actual execution time of those pieces of code according to different
parameters. Then we describe the throughput model under two distinct states that the queue
can experience, and finally, we exhibit our results.

\paragraph{Preliminaries}

First of all, we have seen that the \ps is full of computations, thus the amount of work
in it is actually the number of bunches of $10$ floating point divisions that we operate;
those operations are perfectly scalable, meaning that the time \tps spent in a given \ps
is proportional to $\frac{\pw}{\freq}$.

In order to obtain a stable execution time, which is linear with the number of bunches of
divisions despite compiler and runtime optimizations, we have used CPUID and RDTSCP
assembly primitives. As a consequence, the multiplicative factor that correlates \tps and
$\pw/\freq$ depends on the number of threads that are running on the platform. Finally,
the time spent in a \ps can be computed thanks to:
\begin{equation}
\tps = \frac{\pw}{\facf \times \freq}, \label{eq.tps}
\end{equation}
where \facf depends on the number of threads.

The execution time of dequeue and enqueue operations is more problematic, for three main
reasons. Primo, because of the lock-free nature of the implementations. From a high-level
perspective, those two functions calls are both retry-loops: the thread reads a data, then
works with this version of the data, modifies it, and finally tries to operate a \cas on
it. If the \cas fails, then it reads it again, and re-iterates the process. It exits the
function when the \cas is a success. As the number of retries is unknown, the time spent
in the function call is not straightforwardly computable. This behavior leads us to
distinguish two cases: low-contention case, where we are ensured that no \rl will fail,
and high-contention case, where the threads will generally fail one or several times
before succeeding.  Secundo, in the high-contention case, the threads compete for
accessing a shared data, and they wait for some time before actually being able to access
data. We name this as the \textit{expansion}, as it leads to an increase in the execution
time of one try of the \rl. Tertio, the time before obtaining the data changes, depending
whether the data is located in the same socket. This last pathology is however benign, if
we look at the following experiment.

\newcommand{\accns}{\ema{a}}
\newcommand{\accfs}{\ema{a'}}
\newcommand{\bccfs}{\ema{b'}}

\begin{figure}[h!]
\includegraphics[width=\textwidth]{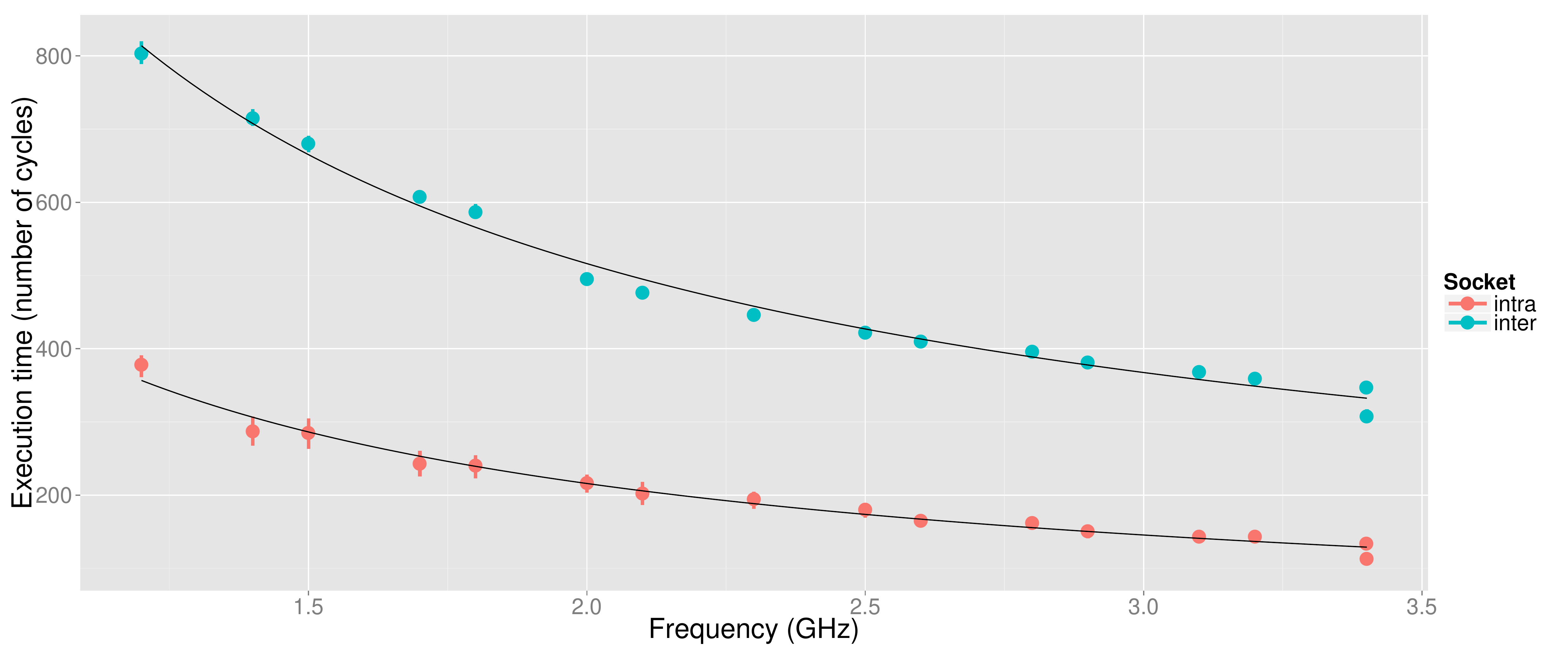}
\caption{Execution time of \cas\label{fig.cas}}
\end{figure}

We envision the approximation where the \rl is a mixed sequence of \cas and other shared
memory accesses, which is supposedly not too far from the reality. We have measured the
execution time of a \cas operation, on the one hand when the data is initially located in
the same socket as the requester, and on the other hand when the data is in the other
socket. In Figure~\ref{fig.cas}, we plot the execution time according to the clock
frequency. On-socket, the cost can be fitted with a function $\freq \mapsto \accns/\freq$,
while the cost of an off-socket access is fitted by $\freq \mapsto \accfs/\freq
+ \bccfs$. In other words, the off-socket access includes a non-scalable component that
the QPI link is responsible for.


As a consequence, in the low-contention case, \ie when we know that the function call
contains only one single try of the \rl and that there is no expansion, if we assume
that \cw is the equivalent of the number of \cas inside the \rl, we have:
\begin{equation}
\left\{ \begin{array}{l}
 \tcs = \cw \times \frac{\accns}{\freq}, \text{   if there are not more than 8 cores, and}\\
 \tcs = \cw \times \left(\bccfs + \frac{\accfs}{\freq} \right), \text{   otherwise.}
\end{array} \right.
\label{eq.tcslc}
\end{equation}



\paragraph{Low Contention}

We study in this section the low-contention case, \ie when (i) the threads does not suffer
from expansion and (ii) a success is obtained with a single try of the \rl. As it appears
on the scheme in Figure~\ref{fig.sch-lc}, we have a cyclic execution, and the length of
the shortest cycle is $\tps + \tcs$. Within each cycle, every thread performs exactly one
successful operation, thus the throughput is easy to compute thanks to:
\begin{equation}
\thr = \frac{\nth}{\tps + \tcs}. \label{eq.thr-lc}
\end{equation}
This model includes two parameters: \facf, which depends only on the number of threads
\nth, and \cw (through Equation~\ref{eq.tcslc}), which depends only on the implementation.
The constants of Equation~\ref{eq.tcslc} can indeed be determined beforehand.

\medskip

We determine \facf by running anyone of the implementations with a very large parallel
work \pwi, at a given frequency \frep, for every number of threads. We note \tpsi the
execution time of the \ps of size \pwi at the frequency \frep, we measure the
throughput \thri and approximate the Equation~\ref{eq.thr-lc} with
$\thri= \frac{\nth}{\tpsi}$, since the execution time of the \ps \tpsi is such that
$\tpsi \gg \tcs$.  Therefore, for each \numth, we can obtain \facf from
Equation~\ref{eq.tps}, with the following equation:
\[ \facf = \frac{\thri}{\nth} \times \frac{\pwi}{\frep}. \]

\medskip

Concerning the amount of work \cw in the \rl, we have observed that the model is very
sensitive to \cw, which is why we consider that the amount of work in the \rl differs from
an intra-socket to an inter-socket execution. We note \cwlcons the former one and \cwlcofs
the latter one.
Those two values are obtained by running each implementation in low-contention state.  In
other words, we pick an amount of parallel work \pwlc, which is big enough so that the
queue is lowly congested. At frequency \frep, we run the implementation once with \nthn
threads such that $2\nthn>8$ (leading to the throughput measurement \thrlcofs), and once
with \nthf threads such that $2\nthf\leq 8$ (leading to the throughput
measurement \thrlcons). Then the system in Equation~\ref{eq.tcslc} implies that:
\[\arraycolsep=1.4pt\def\arraystretch{2.2} \left\{\begin{array}{l}
 \thrlcons =   \dfrac{\nthn}{ \dfrac{\pwlc}{\facf\times\frep} + \cwlcons \times\dfrac{\accns}{\frep} }  \\
 \thrlcofs =  \dfrac{\nthf}{\dfrac{\pwlc}{\facf\times\frep} + \cwlcofs \times \left( \bccfs+\dfrac{\accfs}{\frep} \right)}
\end{array}\right. \quad,\quad\text{hence} \]
\[\arraycolsep=1.4pt\def\arraystretch{2.2} \left\{\begin{array}{l}
 \cwlcons = \dfrac{\nthn \times \frep}{\accns \times \thrlcons} - \dfrac{\pwlc}{\accns \times \facf}\\
 \cwlcofs = \dfrac{1}{\bccfs+\accfs/\frep} \times 
\left( \dfrac{\nthf}{\thrlcofs} - \dfrac{\pwlc}{\facf \times \frep} \right)
\end{array}\right. \]

\bigskip

Finally, given a frequency \freq, a parallel work \pw, and number of threads \nth, the
evaluation of the throughput in low-contention state is done thanks to:
\begin{equation}
\left\{
\arraycolsep=1.4pt\def\arraystretch{2.2} \begin{array}{ll}
\thr = \dfrac{\nth}{\dfrac{\pw}{\facf \freq} + \dfrac{\accns \times \cwlcons}{\freq}} &\quad\text{if~}2\nth<9\\
\thr = \dfrac{\nth}{\dfrac{\pw}{\facf \freq} 
+ \cwlcofs \times \left( \bccfs + \accfs/\freq \right)}&\quad\text{if~}2\nth>8\\
\end{array}
\right. \label{eq.thr-hc}
\end{equation}

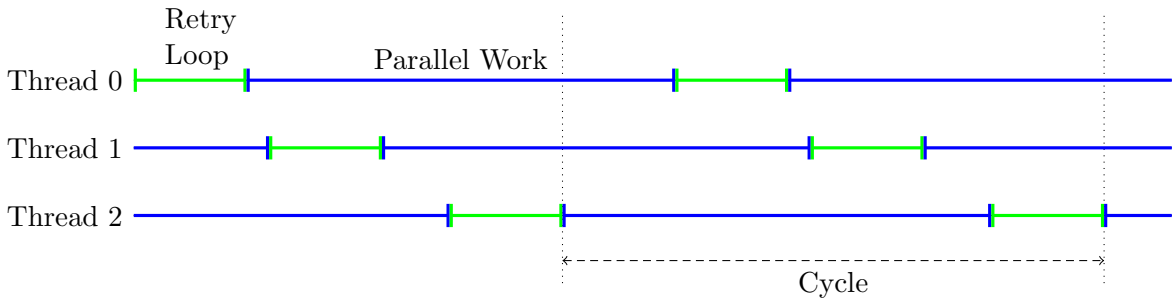
\begin{figure}[h!]
\begin{center}
\begin{tikzpicture} [scale=0.6, font=\small]
\draw[dotted] (9.5,1.5) -- ++ (0,6);
\draw[dotted] (21.5,1.5) -- ++ (0,6);
\node [above, font=\small, text width=2, align=center] at (0.75,6) {Retry Loop};
\node [above, font=\small] at (7.25,6) {Parallel Work};
\draw [<->,densely dashed] (9.5,2)--(21.5,2) node[midway,below,font=\small] {Cycle};

\node [left] at (0,6) {Thread 0 };
\draw[dashed] (0,6) -- ++(23,0);
\draw[very thick,green,|-|] (0,6) -- ++(2.5,0);
\draw[very thick,blue,|-|] (2.5,6) -- ++(9.5,0);
\draw[very thick,green,|-|] (12,6) -- ++(2.5,0);
\draw[very thick,blue,|-] (14.5,6) -- (23,6);

\node [left] at (0,4.5) {Thread 1 };
\draw[dashed] (0,4.5) -- ++(23,0);
\draw[very thick,blue,-|] (0,4.5) -- (3,4.5);
\draw[very thick,green,|-|] (3,4.5) -- ++(2.5,0);
\draw[very thick,blue,|-|] (5.5,4.5) -- ++(9.5,0);
\draw[very thick,green,|-|] (15,4.5) -- ++(2.5,0);
\draw[very thick,blue,|-] (17.5,4.5) -- (23,4.5);

\node [left] at (0,3) {Thread 2 };
\draw[dashed] (0,3) -- ++(23,0);

\draw[very thick,blue,-|] (0,3) -- (7,3);
\draw[very thick,green,|-|] (7,3) -- ++(2.5,0);
\draw[very thick,blue,|-|] (9.5,3) -- ++(9.5,0);
\draw[very thick,green,|-|] (19,3) -- ++(2.5,0);
\draw[very thick,blue,|-] (21.5,3) -- (23,3);

\end{tikzpicture}
\end{center}
\caption{Cyclic execution under low contention\label{fig.sch-lc}}
\end{figure}

\paragraph{High Contention}

In the preliminaries, we have explained why the evaluation of the throughput is complex
when contention is high: because of the expansion that changes the execution time of one
try of the \rl, and because the number of tries before a success in the \rl is variable.

However, in previous studies, we have seen that the throughput is approximately linear with
the expected number of threads that are in the \rl at a given time. In addition, this
expected number is almost proportional to the amount of work in the \ps. As a result, a
good approximation of the throughput, in high-congestion cases, is a function that is
linear with the amount of work in the \ps.

There remains that the way the threads interfere in the chip, hence the relation between
the slope of this straight line and the different parameters, is very dependent on the
architecture. That is why, for each frequency, each number of threads and each
implementation, we interpolate this line by measuring the throughput for two small amounts
of work in the \ps.


\paragraph{Frontier}

We now have to estimate when the queue is highly congested and when it is not. We recall
that, generally speaking, long parallel sections lead to a low-congested queue since
threads are most of the time processing some computations and do not try to access to the
shared data. Reversely, when the \ps is short, the ratio of time that threads spend in the
retry-loop is higher, and gets even higher because of both expansion and retries.

That being said, there exists a simple lower bound on the amount of work in the \ps, such
that there exists an execution where the threads are never failing in their \rl. Let us
note \tcslc the execution time of the \rl in low-contention case (we recall that we are
able to compute this value as we know the amount of work in the \rl), and its relation with
the clock frequency.
We plot in Figure~\ref{fig.sch-crit} an ideal execution with $\nth = 3$ threads and
$\tps=(\nth-1)\times\tcslc$. In this execution, all threads always succeed at their first
try in the \rl. Nevertheless, if we make the \ps shorter, then there is not enough
parallel potential any more, and the threads will start to fail: the queue enters the
high-congested state.

\begin{figure}[h!]
\begin{center}
\begin{tikzpicture} [scale=0.6, font=\small]
\draw[densely dotted] (0,1.5) -- ++ (0,6);
\draw[densely dotted] (2.5,1.5) -- ++ (0,6);
\draw[densely dotted] (5,1.5) -- ++ (0,6);
\draw[densely dotted] (7.5,1.5) -- ++ (0,6);
\draw[densely dotted] (10,1.5) -- ++ (0,6);
\draw[densely dotted] (12.5,1.5) -- ++ (0,6);
\draw[densely dotted] (15,1.5) -- ++ (0,6);

\node [left] at (0,6) {Thread 0 };
\draw[very thick,green,|-|] (0,6) -- ++(2.5,0);
\draw[very thick,blue,|-|] (2.5,6) -- ++(5,0);
\draw[very thick,green,|-|] (7.5,6) -- ++(2.5,0);
\draw[very thick,blue,|-|] (10,6) -- (15,6);
\draw[very thick,green,|-] (15,6) -- (16,6);

\node [left] at (0,4.5) {Thread 1 };
\draw[very thick,blue,-|] (0,4.5) -- (2.5,4.5);
\draw[very thick,green,|-|] (2.5,4.5) -- ++(2.5,0);
\draw[very thick,blue,|-|] (5,4.5) -- ++(5,0);
\draw[very thick,green,|-|] (10,4.5) -- ++(2.5,0);
\draw[very thick,blue,|-] (12.5,4.5) -- (16,4.5);

\node [left] at (0,3) {Thread 2 };
\draw[very thick,blue,|-|] (0,3) -- (5,3);
\draw[very thick,green,|-|] (5,3) -- ++(2.5,0);
\draw[very thick,blue,|-|] (7.5,3) -- ++(5,0);
\draw[very thick,green,|-|] (12.5,3) -- ++(2.5,0);
\draw[very thick,blue,|-] (15,3) -- (16,3);
\end{tikzpicture}
\end{center}
\caption{Critical contention\label{fig.sch-crit}}
\end{figure}
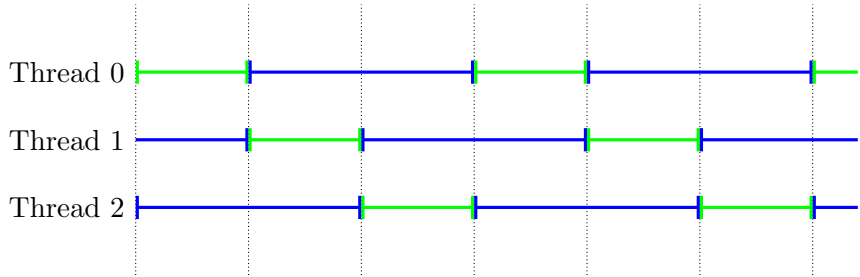

This lower bound ($\tps=(\nth-1)\times\tcslc$) is actually a good approximation for the
critical point where the queue switches its state. Altogether, we evaluate the throughput
in the following way:
\begin{itemize}
\item If the execution is intra-socket, \ie if $2\nth\leq 8$, then
\begin{itemize}
\item if $\tps\geq(\nth-1)\times\tcslcons$, use Equation~\ref{eq.thr-lc}
\item if $\tps<(\nth-1)\times\tcslcons$, use Equation~\ref{eq.thr-hc};
\end{itemize}
\item If the execution is inter-socket, \ie if $2\nth> 8$, then
\begin{itemize}
\item if $\tps\geq(\nth-1)\times\tcslcofs$, use Equation~\ref{eq.thr-lc}
\item if $\tps<(\nth-1)\times\tcslcofs$, use Equation~\ref{eq.thr-hc}.
\end{itemize}
\end{itemize}

\itemx{
\begin{itemize}
\item For queues, two main competitions: one for the head pointer, and the other one for
  the tail pointer. The most strangled one will determine the performance of the
  implementation. If dedicated, $\nth/2$ threads compete for the tail, and the remaining
  $\nth/2$ threads compete for the head.
\item Scheme: the minimum value of the time spent in \ps, for \nth threads
  competing for a single resource without conflicts, is $\tpscr = (\nth-1) \times \tcslcons$
\item We choose this value to move from one state to the other
\end{itemize}
}

\paragraph{Results}
The throughput prediction is plotted in Figure~\ref{fig.thr} (we recall that the key can
be found in Figure~\ref{fig.key}, page \pageref{fig.key}). Points are measurements, while
lines are predictions. We will follow this rule for all comparisons between prediction and
measurement. In the actual execution, the queue goes through a transient state when the
amount of work in \ps is near the critical point, but the prediction is not so far from
it.

\begin{figure}[h!]
\includegraphics[width=\textwidth]{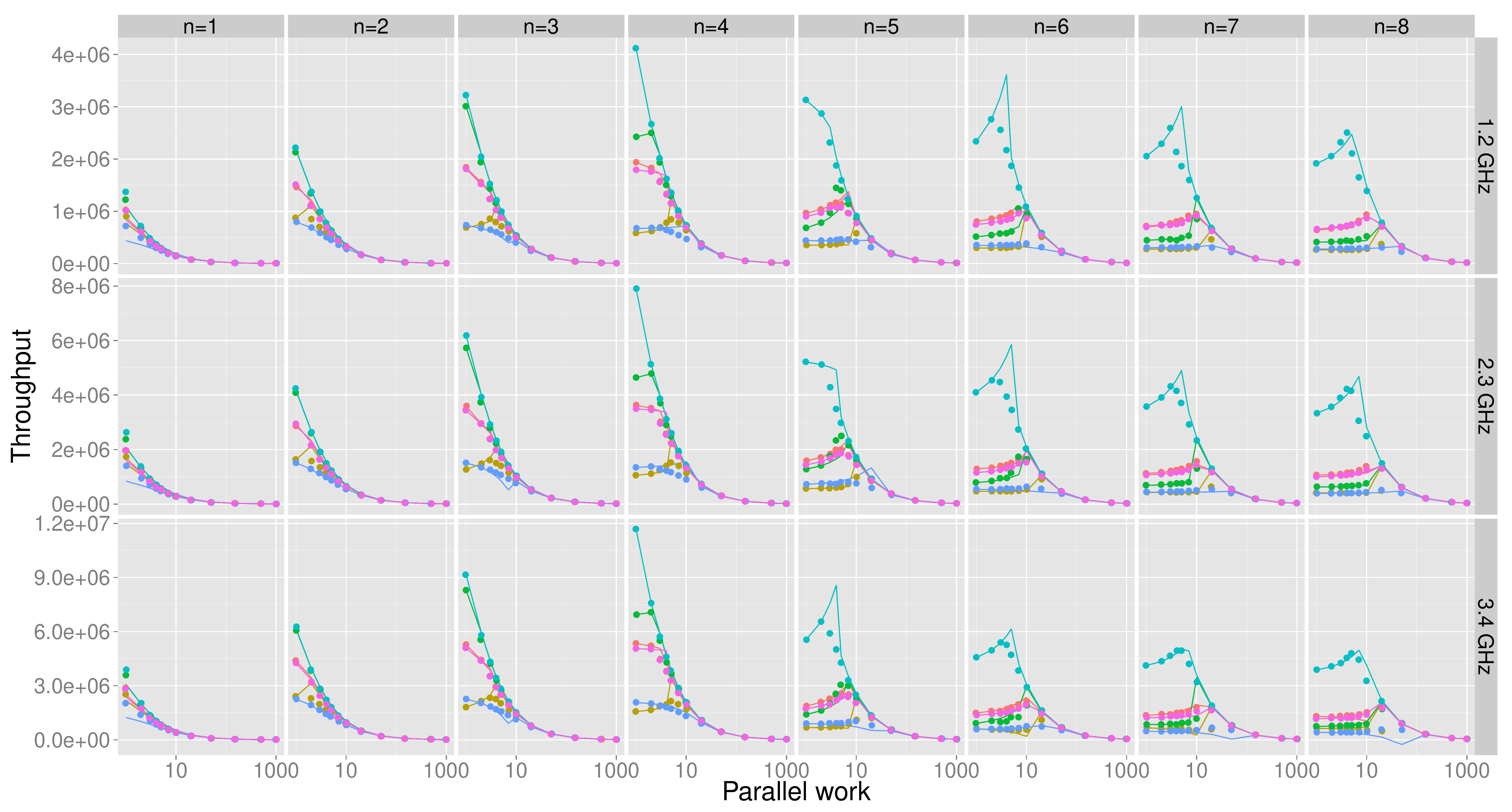}
\caption{Throughput\label{fig.thr}}
\end{figure}

\newcommand{\nm}[1]{\ema{M_{#1}}}

Let us summarize the measurements that are needed to instantiate the model, and note \nm{}
the number of measurements, decomposed into groups \nm{i}. We denote by \ntht, \nfr
and \nalg the cardinality of the sets of all possible numbers of respectively threads,
frequencies and implementations where we want to predict the throughput.
\begin{itemize}
\item \facf is found by running only one implementation, for all number of threads, hence
  $\nm{1} = \ntht$ measurements.
\item For low-contention case, for every implementation, for two different number of threads
we run the benchmark to obtain \cwlcons and \cwlcofs, thus $\nm{2} = 2\nalg$ measurements.
\item For high-contention case, for every implementation, every frequency and every number
  of threads, we launch two runs, leading to $\nm{3} = 2\nalg\nfr\ntht$ measurements.
\end{itemize}

In total, we need $\nm{} = \sum_i \nm{i} = \ntht + 2\nalg + 2\nalg\nfr\ntht$ runs.  In our
case, with three frequencies, six implementations, $8\times2$ threads, and one second
per run, it represents around $300$ seconds.

\itemx{
\begin{itemize}
\item Summary: measurement needed:
\begin{itemize}
\item Multiplicative factor between work over frequency and execution time.
  For one algorithm, one point per number of threads.
\item Low-contention: estimate of the amount of work inside \rl.
  For one frequency, for $8$ and $10$ threads, one point per algorithm.
\item High-contention: estimate of the slope of the throughput.
  Two points per frequency, algorithm, number of threads
\item Total: $\nfr + 2\nalg + 2\nfr\nalg\nth< (2\nth +1)\nalg\nfr$
\item $1$ second per measurement, \so in our case, $9\times6\times3/60 \approx 3\;\text{min}$
\end{itemize}
\item Graphs
\end{itemize}
}

\subsubsection{Power Given Throughput Measurement}

\paragraph{Power Model}


We use here the same power model that we have exposed in Section~\ref{sec:cpu-micro}. We
recall that we decompose the power into two orthogonal basis, each base having three
dimensions. On the one hand, we define the model base by separating the power into static,
active and dynamic power, such that the total power is computed by:
\[ \pow{} = \pstat{} + \pact{} + \pdyn{}. \]
On the other hand, the measurement base corresponds to the hardware that actually
dissipates the power, \ie CPU, memory and uncore. Also, power is obtained by the sum:
\[ \pow{} = \pow{C} + \pow{M} + \pow{U}. \]

In this section, we study each dimension, in each base, so that we are able to express the
power dissipation from any perspective:
\[ \pow{} = \sum_{X \in \{C,M,U\}} \left( \pstat{X} + \pact{X} + \pdyn{X}  \right). \]
Thanks to the micro-benchmarking of Section~\ref{sec:cpu-micro}, we know already all static and active
powers, therefore the whole point of this section will be to determine the dynamic power
of CPU, memory, and uncore.

On the other hand, the power model was only tailored for micro-benchmarking. However, in
this more involved case of power modeling of data structures, we take a single step further
towards a more realistic application: we can see both pairs \ps - enqueue, and dequeue - \ps as
two operations at a higher level, and we keep the steady-state property, which is
important in micro-benchmark philosophy.


\paragraph{CPU Power}
\begin{figure}[h!]
\includegraphics[width=\textwidth]{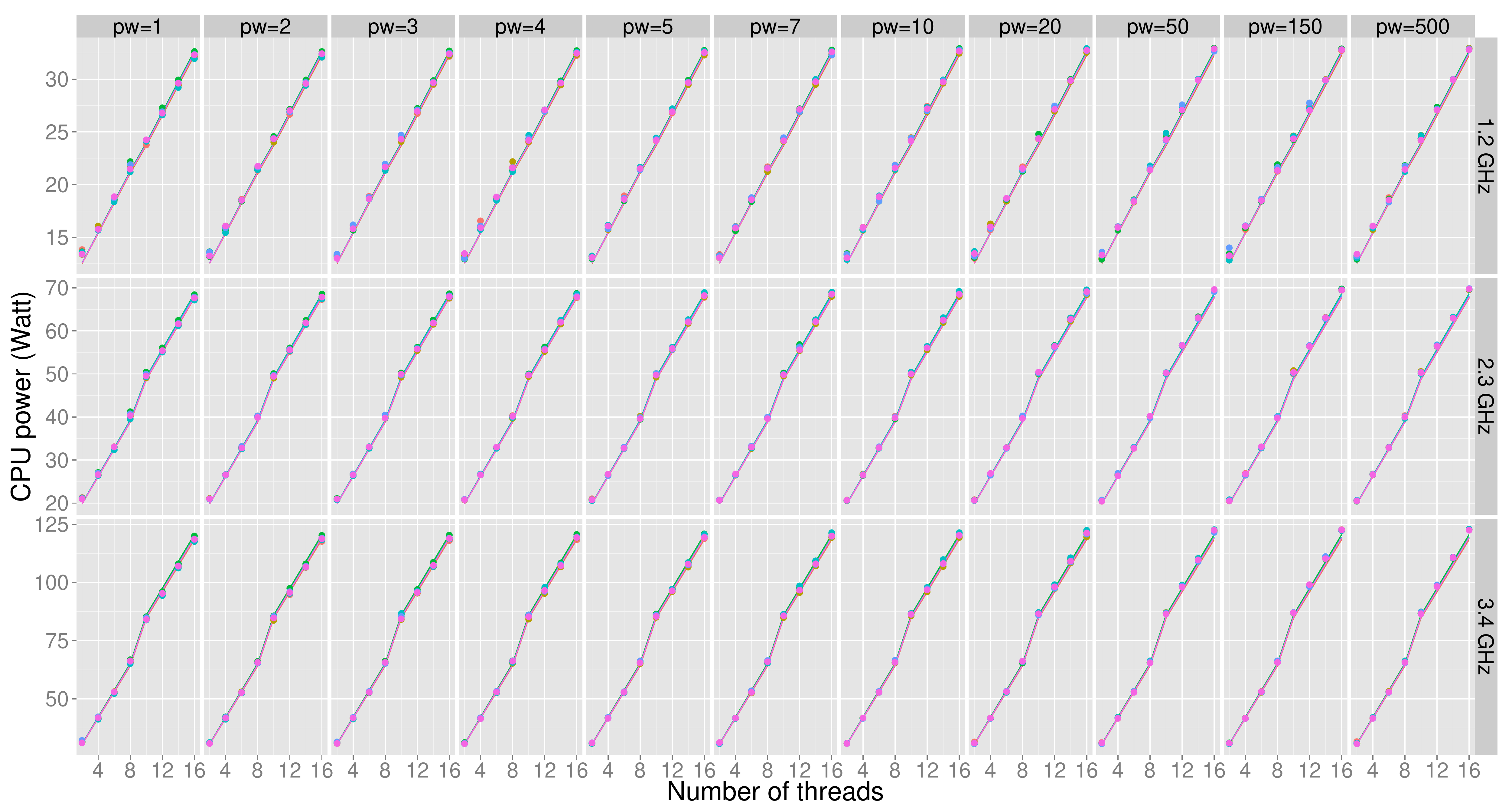}
\caption{CPU power\label{fig.cpu-pow}}
\end{figure}

\newcommand{\pceo}{\ema{\alpha_0}}  

We have seen in D1.1~\cite{D1.1} that in Chalmers' platform, most of the power is dissipated
in the CPU. Additionally, in Section~\ref{sec:cpu-micro}, we have successfully modeled the dynamic power of
CPU for several operations thanks to the generic formula:
\[ \pdyn{C} = \nth \times \left( \pca \times \freq^{\pce} + \pcb \right), \]
where \pca, \pcb and \pce are three numbers that depend only on the operation that is
executed on the CPU.

We rely on these observations to model the dynamic power of CPU for more complex
applications, especially in this deliverable for the queue implementations. We recall that
the \ps is filled with floating point divisions and our assumption such that the \rl can
be viewed approximately as a sequence of \cas has not been checked yet. On the other
side, we remark that both \cas and floating point divisions are modeled with a similar
\pce, which is around $1.7 = \pceo$. As a consequence, we consider now the queue at a higher level
and view it as a single complex operation that we can model through:
\[ \pdyn{C} = \nth \times \left( \pca \times \freq^{\pceo} + \pcb \right), \]
where \pca and \pcb have to be determined. One can notice that we have kept the linearity
according to the number of threads; this is because all threads in the queue
implementation have the same behavior, exactly in the same way as in the micro-benchmark
case.

In order to instantiate these parameters, at two frequencies \frep and \frepp, for a given
work in the \ps and a given number of threads, and for every implementation, we
measure the CPU power and extract the dynamic parts $p_0$ and $p_1$. Then we solve the system:
\[ \left\{ \begin{array}{rl}
p_0 = & \nthn \left( \pcb + \pca \times \frep^{\pceo} \right) \\
p_1 = & \nthn \left( \pcb + \pca \times \frepp^{\pceo} \right) \\
\end{array}\right., \quad\text{which leads to}\]
\[ \left\{ \arraycolsep=1.4pt\def\arraystretch{2.2}\begin{array}{rl}
\pca = & \dfrac{p_1 - p_0}{\frepp^{\pceo} - \frep^{\pceo}}\\
\pcb = & \dfrac{p_0 \times \frepp^{\pceo} - p_1 \times \frep^{\pceo}}{\frepp^{\pceo} - \frep^{\pceo}}
\end{array}\right. .\]


The prediction and measurements are plotted in Figure~\ref{fig.cpu-pow}.

\paragraph{Memory Power}
\begin{figure}[h!]
\includegraphics[width=\textwidth]{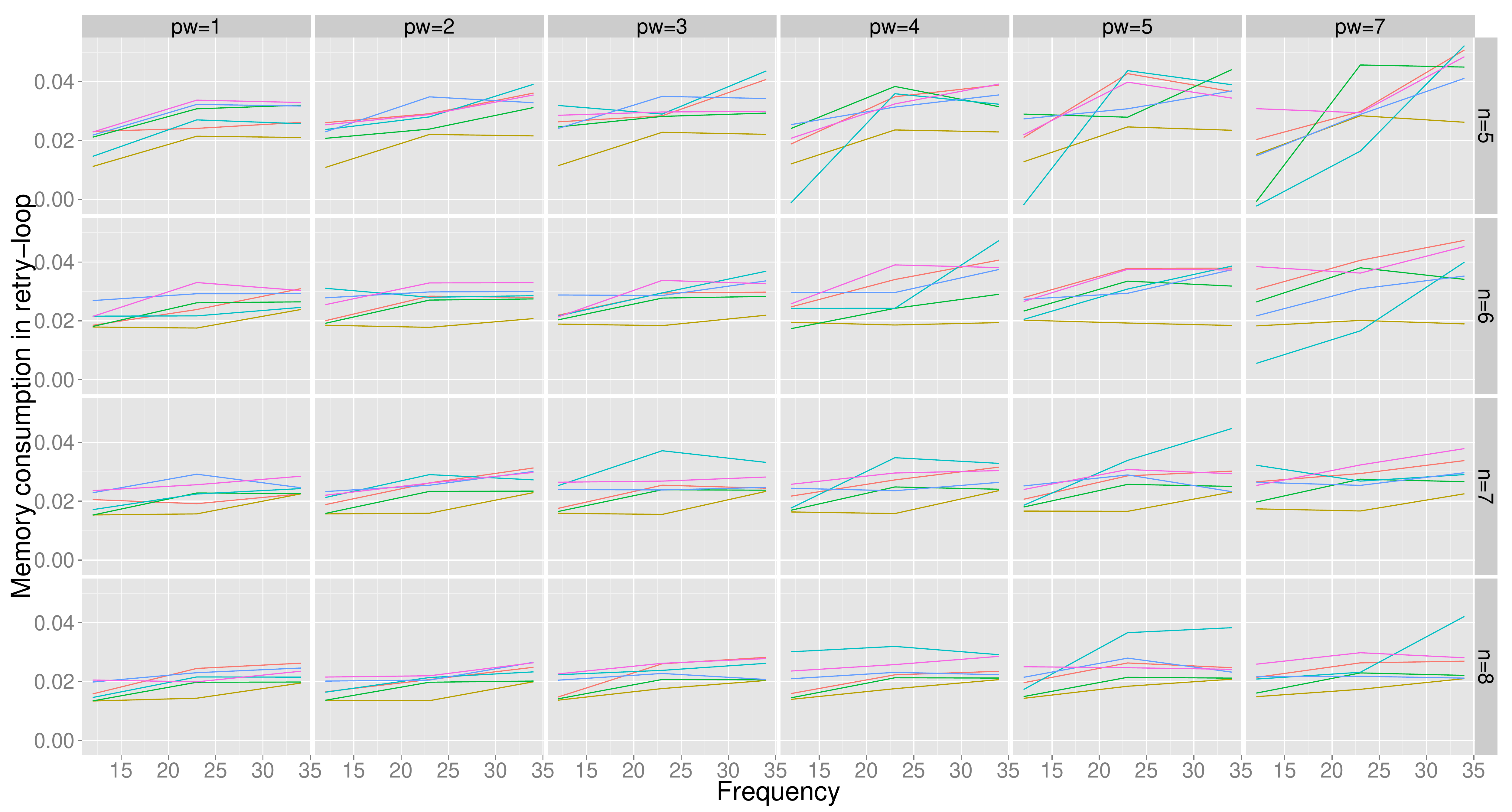}
\caption{Memory consumption\label{fig.mem-con}}
\end{figure}

\begin{figure}[h!]
\includegraphics[width=\textwidth]{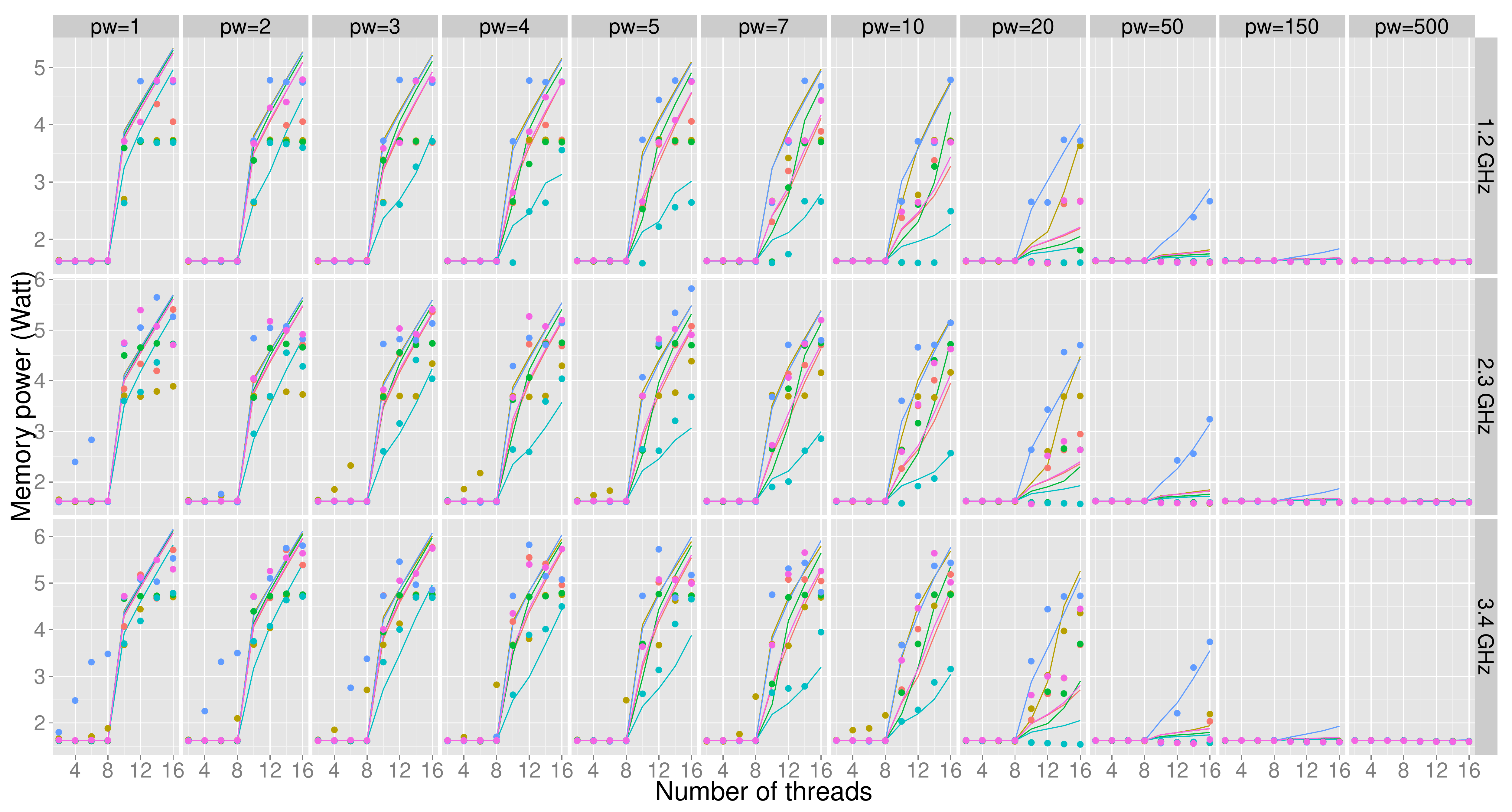}
\caption{Memory power based on measured throughput\label{fig.mem-pow}}
\end{figure}

\newcommand{\rat}{\ema{r}}

\newcommand{\coda}{\ema{\rho}}

As the \rl, which is particular to each implementation, is mainly composed of memory
operations, the main difference between the various implementations in terms of power
happens in the dynamic power of memory.

Generally speaking, we have shown show in Section~\ref{sec:cpu-micro} that the power
dissipation of the memory is due to both accesses to main memory and remote accesses to
memory. Those accesses are characterized by the amount of data \da that is accessed
remotely per second, and dynamic power dissipation is considered as proportional to this
amount. As in Section~\ref{sec.mem-pow-cpu}, we notice that the data structure does not
imply accesses to main memory, hence the power dissipation in memory is only due to remote
accesses, which only appear when the threads are spread across sockets.

As the \ps is full of pure computations, communications can only take place in the \rl. We
make one last assumption: the amount of data that are accessed per second in a \rl depends
on the implementation, but given an implementation, once a thread is in the \rl, it will
always try to access to the same amount of data per second. When the queue is highly
congested, if a thread fails then it will retry and will access the data in the same way
as the previous try; and if there is expansion, then the thread will still try to access
the data for the whole time it is in the \rl.  As a consequence, the amount of data that
are accessed remotely, hence the dynamic power of memory, is strongly related with the
ratio \rat of the time spent in the \rl over the time spent in the \ps. The dynamic power of
memory can be computed by:
\[ \pdyn{M} = \coda \times \nth \times \rat, \]
where \coda is a number depending only on the implementation, and represents the memory
access intensity in the \rl. We have again \nth as a multiplicative factor since all the
threads have the same behavior.

Now we rely on the structure of the benchmark and on the throughput to compute this
ratio. We do not know how many retries are necessary for a thread to successfully exit the
\rl; however, we know that the thread performs exactly one \ps per successful
operation. As the throughput \thr is the number of successful operations per second, the
ratio is found by:
\begin{equation}
1 - \rat = \frac{\thr \times \pw}{\nth \times \facf \times \freq}.
\label{eq.rat}
\end{equation}

We still have indeed that $\thr = \nth / (\tps + \tcs)$. This leads to:
\[ 1- \rat  = \frac{\tps}{\tps + \tcs} = \frac{\tps \times \thr}{\nth},\]
and Equation~\ref{eq.rat} is derived from the expression of the time
spent in the \ps of Equation~\ref{eq.tps}.

In Figure~\ref{fig.mem-con}, we plot:
\[ \frac{\pow{M} - \pstat{M}}{\rat \times \nth},\]
where \pow{M} is the measured power dissipated by the memory and \rat is computed through
Equation~\ref{eq.rat}. Firstly we remark that the ideas in the model are not contradicted
by the graph: everything is almost constant, and the power dissipated by the memory seems
to be ruled indeed by the considered ratio. A priori, this ratio should depend on the
implementation, but we observe that there is no clear trend, and implementations are very
close to each other. This means that all implementations behave in a similar way
concerning the amount of data accessed remotely per second.

That is why we only need, if the throughput is known, to run the benchmark for a given
implementation, a given size of \ps, with a given \nth, at a given frequency \freq, in
order to find the unique \coda, common to all implementations.

The comparison between the measured and the estimated power is plotted in
Figure~\ref{fig.mem-pow}. Two noticeable observations should be added: first, as in the
micro-benchmark experiments, we remark some steps in the measured power, but we prefer to
keep a continuous estimate. Second, we see that implementations {\bf a1} and {\bf a5}
sometimes consume memory power even for intra-socket execution. This could be due to the
fact that these versions implement reference counting in their memory management, which
could lead to the use of main memory due to overly long chains of unreclaimed nodes.


\paragraph{Uncore Power}

We predict the uncore power in the same way as the memory power, except that we have an
additive component which is linear with the number of threads. This linear component is
due to the RDTSCP utilization in the \ps, and the remaining part may be related to the
ring utilization when the threads access the shared data, both inter- and intra-socket.

Briefly, we take the uncore measurement, from which we subtract the static uncore power
and the linear component, then operate in the same way and find a new constant $\coda'$.

Results are pictured in Figure~\ref{fig.unc-pow}, where we notice that, even if the
behavior is similar, the amplitude of uncore power variations is relatively smaller than
the memory power, and almost negligible in front of CPU power.


\begin{figure}[h!]
\includegraphics[width=\textwidth]{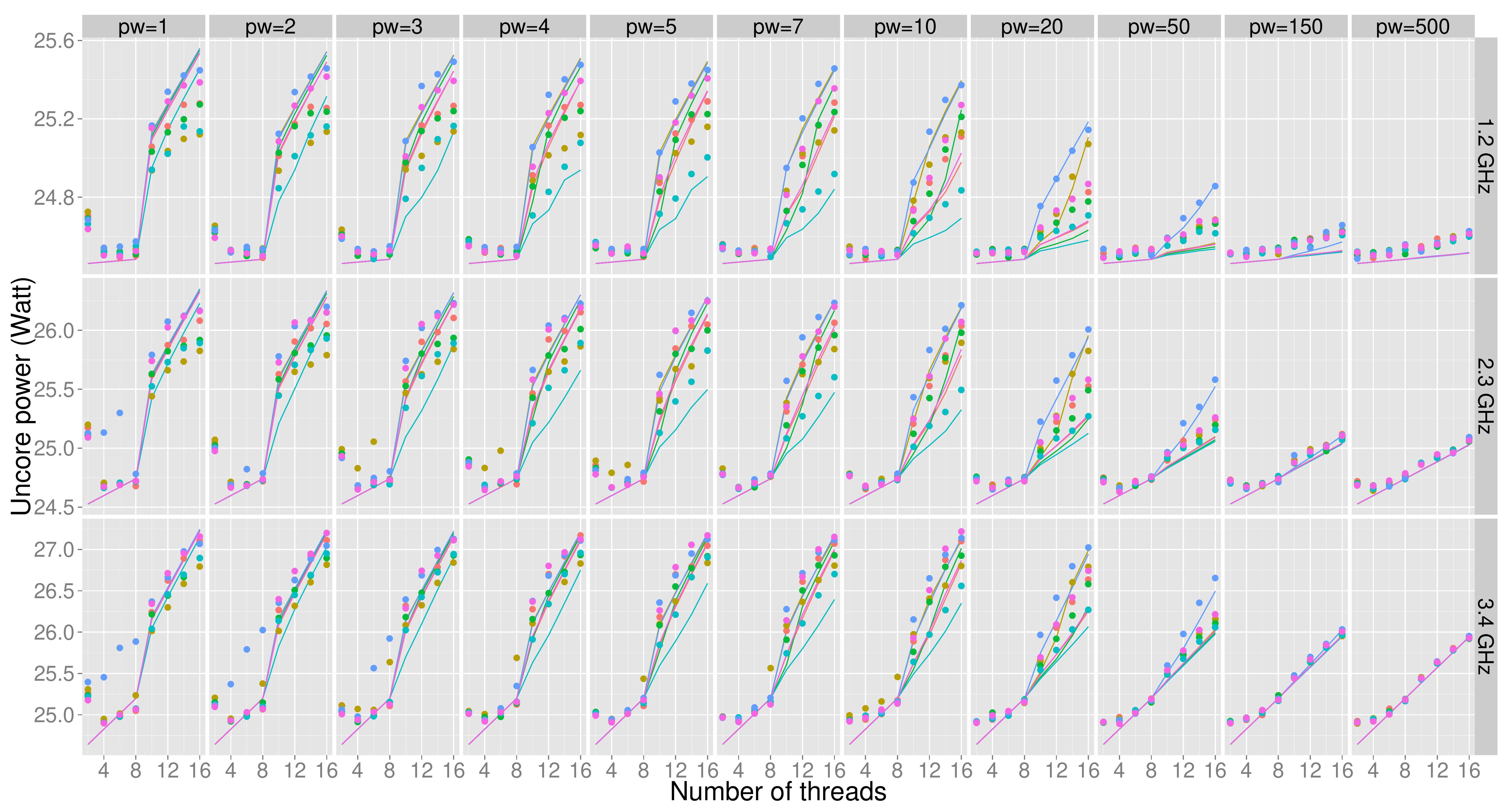}
\caption{Uncore power\label{fig.unc-pow}}
\end{figure}

\paragraph{Total Power}
\begin{figure}[h!]
\includegraphics[width=\textwidth]{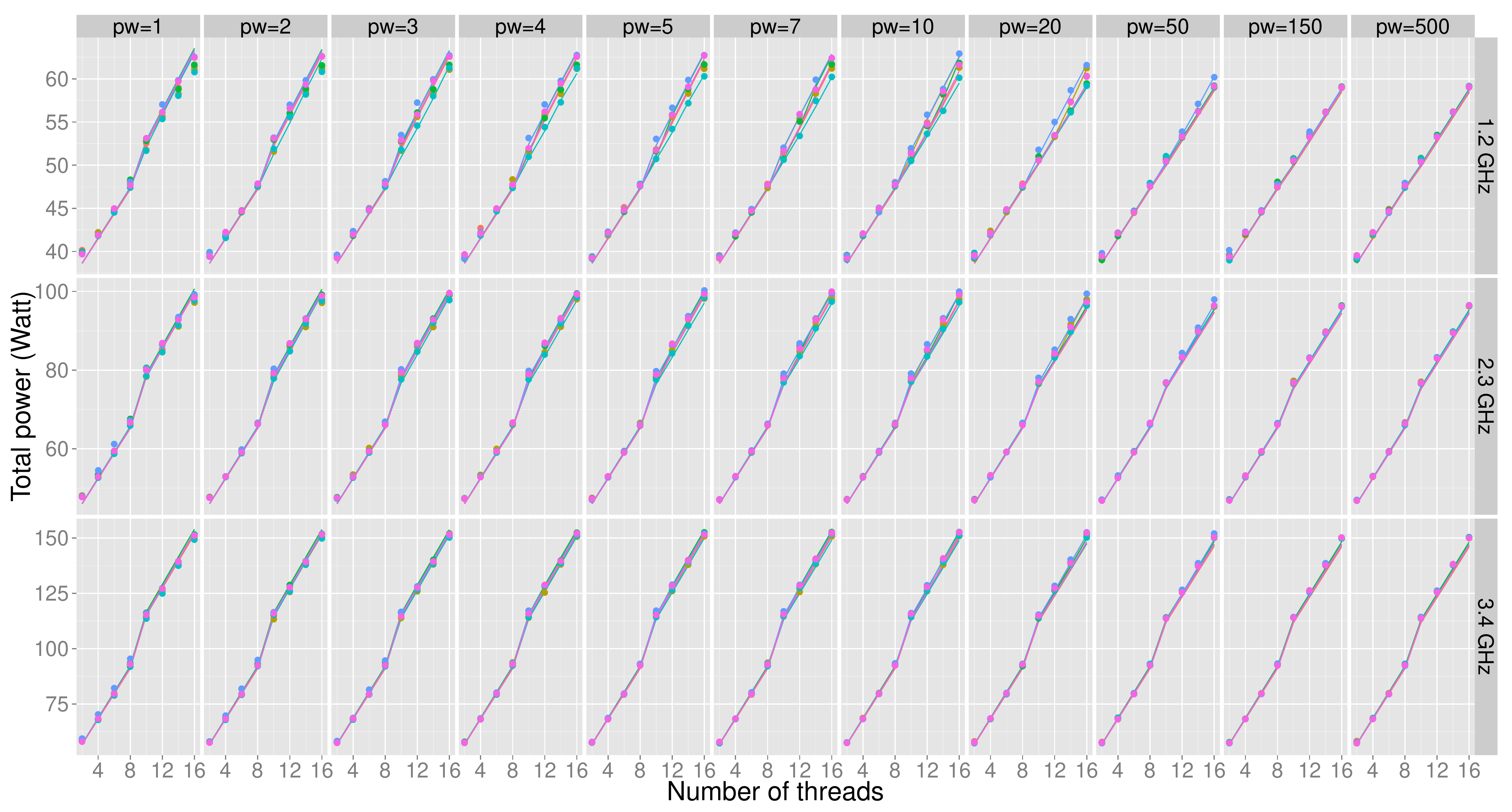}
\caption{Total power\label{fig.tot-pow}}
\end{figure}

In terms of number of measurements, estimating the power dissipation is not costly: we can
choose a \ps and \nth not less than 4, then run each implementation at two different
frequencies. This enables the prediction of CPU power, and we can use one of those
measurements to predict the memory and uncore powers. Altogether, we only need $\nm{} =
2\nalg$ measurements. We plot the comparison of total power in Figure~\ref{fig.tot-pow} to
appreciate the quality of the estimation.

\subsubsection{Complete Prediction}
\begin{figure}[h!]
\includegraphics[width=\textwidth]{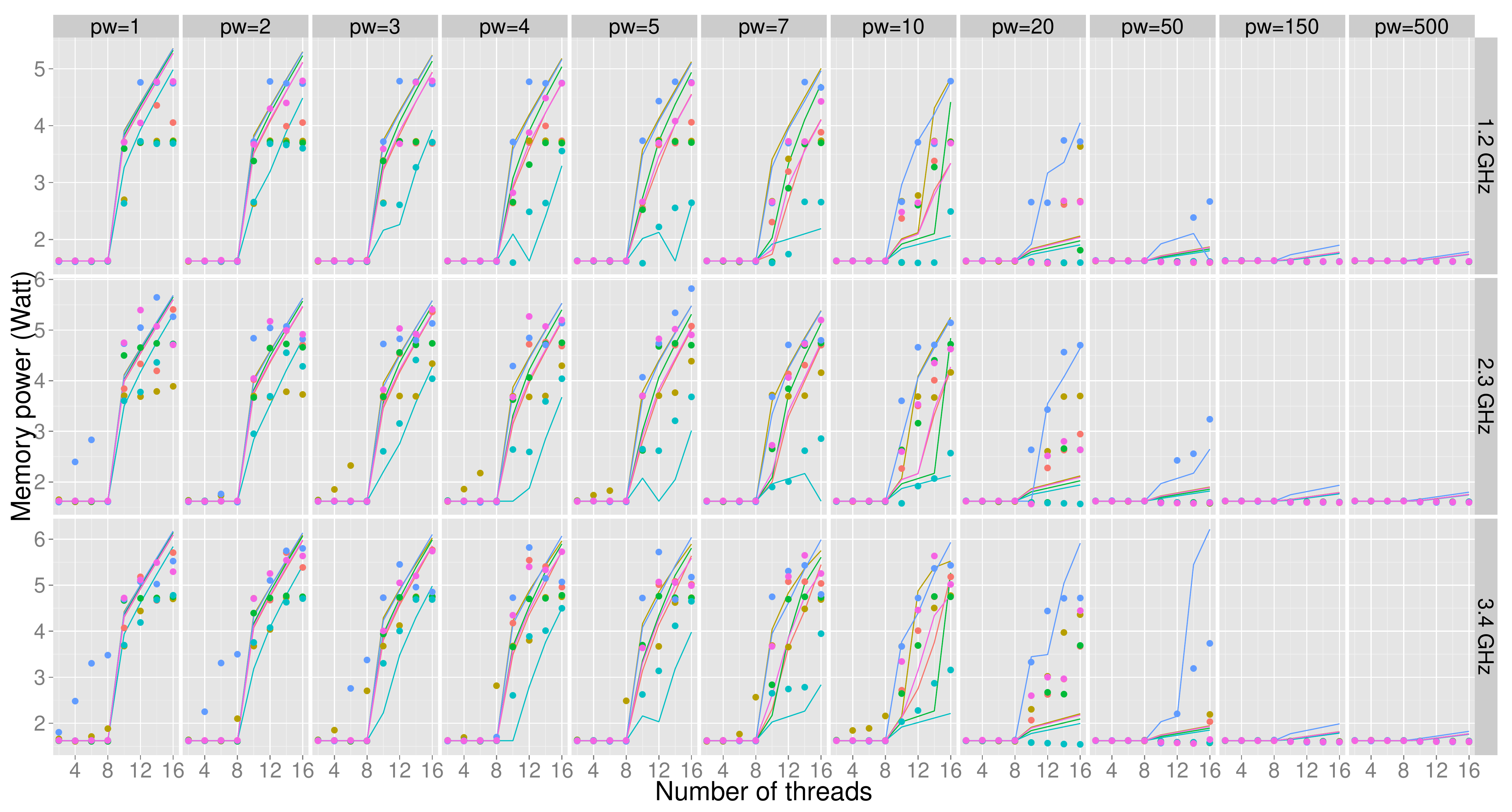}
\caption{Memory power based on estimated throughput\label{fig.mem-pow-thr}}
\end{figure}
\begin{figure}[h!]
\includegraphics[width=\textwidth]{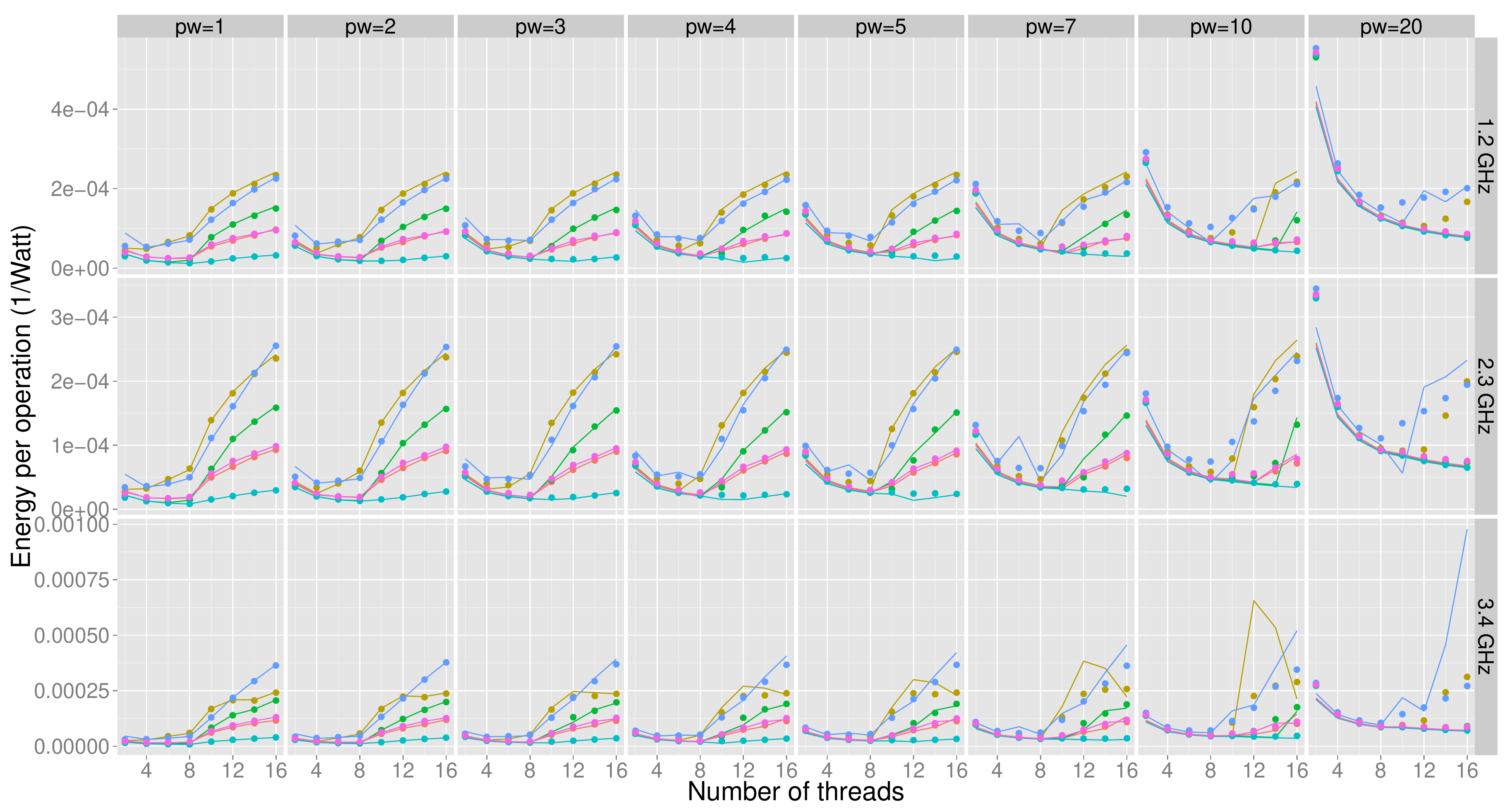}
\caption{Energy per operation\label{fig.nrg-pop1}}
\end{figure}
\begin{figure}[h!]
\includegraphics[width=\textwidth]{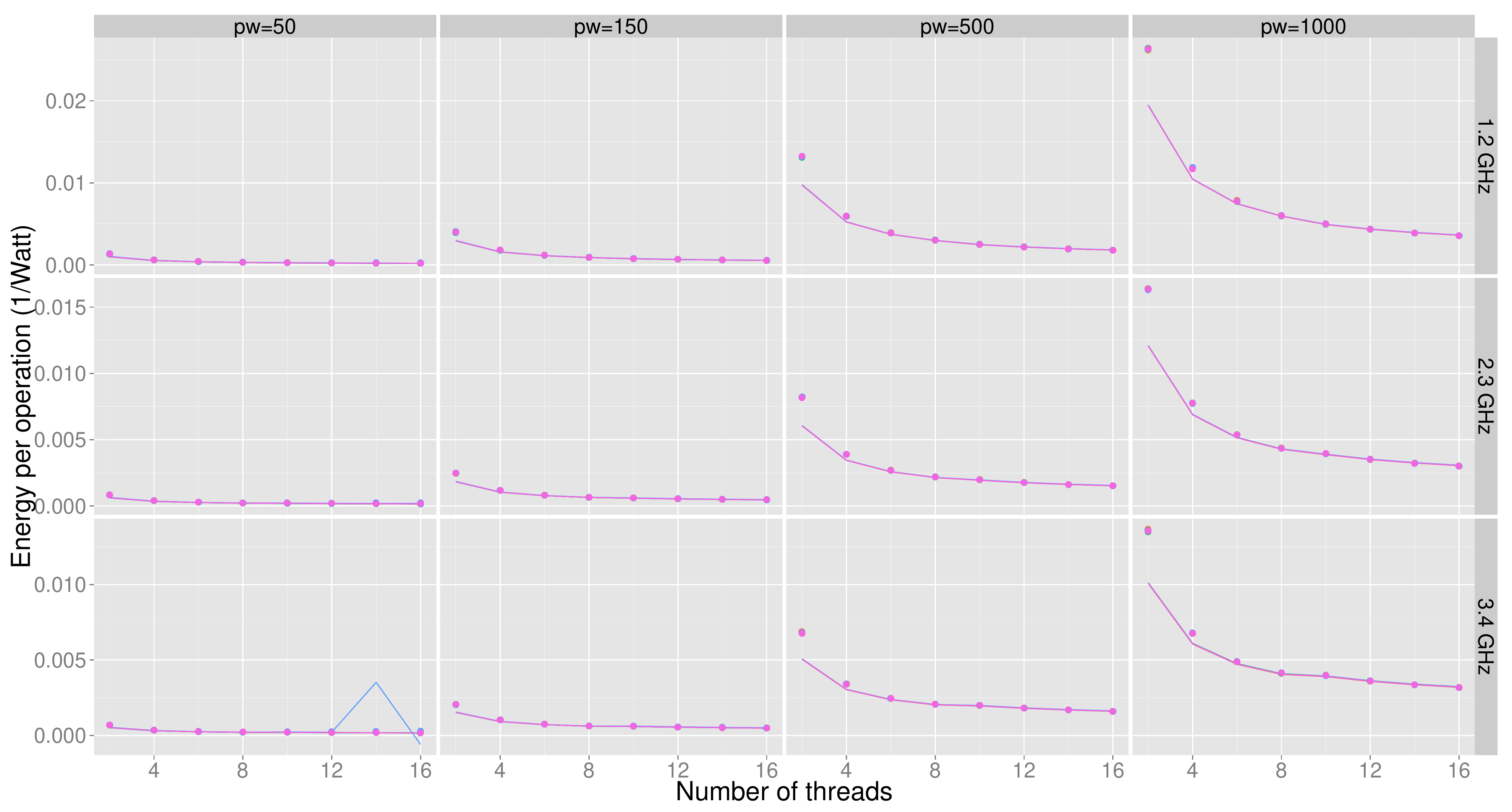}
\caption{Energy per operation\label{fig.nrg-pop2}}
\end{figure}

\remind{MB mixed case}

We plot in Figure~\ref{fig.mem-pow-thr} the estimate of the memory power dissipation when
we use the predicted throughput while computing the ratio in Equation~\ref{eq.rat},
instead of the measured throughput. It shows that the throughput prediction is good
enough, since there is no clear difference in the memory power, and we do not need to rely
on the throughput measurement. This is an important property since otherwise, we should
have run the benchmark for each value of the variables and measure the throughput, in
order to be able to compute the ratio, and then the memory power.

In Figures~\ref{fig.nrg-pop1} and~\ref{fig.nrg-pop2} is represented the energy per
operation. Except from the pathological case with $\pw=20$, the prediction is
accurate. This mistake in estimation occurs because the implementations {\bf a1} and {\bf
a5} have a long transient behavior between high and low contention cases, and the
throughput is harder to estimate in this range.


\subsubsection{Description of the Implementations}
\label{sec:shm-queue-algs}\label{sec:NOBLE}

\leaveout{ 
\remind{(PT) [Re-write a little]
A common approach to parallelizing applications is to divide the problem into separate threads that act as either producers or consumers. The problem of synchronizing these threads and streaming of data items between them, can be alleviated by utilizing a shared collection data structure.}

Concurrent FIFO queues and other producer/consumer
collections are fundamental data structures that are key components in
applications, algorithms, run-time and operating systems. 
\index{concurrent data structures!queue}
\index{queue}
The Queue abstract data type is a collection of items in which only the earliest added item may be accessed. Basic operations are \op{Enqueue} (add to the tail) and \op{Dequeue} (remove from the head). Dequeue returns the item removed. The data structure is also known as a ``first-in, first-out'' or FIFO buffer.
} 

\leaveout{ 
We have considered the following queue implementations which will be
described in some detail in Section~\ref{sec:shm-queue-algs} below:
\begin{itemize}
\item {\bf a0}. Lock-free and linearizable queue by Michael and
  Scott~\cite{Michael96}.
\item {\bf a1}. Lock-free and linearizable queue by Valois~\cite{Val94}.
\item {\bf a2}. Lock-free and linearizable queue by Tsigas and
  Zhang~\cite{TsiZ01b}.
\item {\bf a3}. Lock-free and linearizable queue by Gidenstam et
  al.~\cite{Gidenstam10:OPODIS}.
\item {\bf a5}. Lock-free and linearizable queue by Hoffman et
  al.~\cite{DBLP:conf/opodis/HoffmanSS07}.
\item {\bf a6}. Lock-free and linearizable queue by Moir et
  al.~\cite{MoirNSS:2005:elim-queue}.
\leaveout{
\item {\bf a0}. Lock-free and linearizable queue by Michael and
  Scott~\cite{Michael96}.
\item {\bf a1}. Lock-free and linearizable queue by Valois~\cite{Val94}.
\item {\bf a2}. Lock-free and linearizable queue by Tsigas and
  Zhang~\cite{TsiZ01b}.
\item {\bf a3}. Lock-free and linearizable queue by Gidenstam et
  al.~\cite{Gidenstam10:OPODIS}.
\item {\bf a4}. Lock-free and linearizable queue by Hoffman et
  al.~\cite{DBLP:conf/opodis/HoffmanSS07}.
\item {\bf a5}. Lock-free and linearizable queue by Moir et
  al.~\cite{MoirNSS:2005:elim-queue}.
\item {\bf a6}. Lock-based (and linearizable) queue.
\item {\bf a7}. Lock-free and linearizable stack by Michael~\cite{Mic04b}.
\item {\bf a8}. Lock-free and linearizable stack by Hendler et
  al.~\cite{HenSY10}.
\item {\bf a9}. Lock-free and linearizable bag by Sundell et al.~\cite{Sundell11}.
\item {\bf a10}. Lock-free EDTree (a.k.a. pool or bag) by Afek~et
  al.~\cite{AfeKNS10}.
}
\end{itemize}
} 


\remind{(PT) [Re-write a little]}

\paragraph{NOBLE~\cite{Sundell02,Sundell08}}
Most of the implementations that we use are part of the NOBLE library.
The NOBLE library offers support for non-blocking multi-process synchronization in shared memory systems. NOBLE has been designed in order to: i) provide a collection of shared data objects in a form which allows them to be used by non-experts, ii) offer an orthogonal support for synchronization where the developer can change synchronization implementations with minimal changes, iii) be easy to port to different multi-processor systems, iv) be adaptable for different programming languages and v) contain efficient known implementations of its shared data objects. The library provides a collection of the most commonly used data types.
The semantics of the components, which have been designed
to be the very same for all implementations of a
particular abstract data type, are based on the sequential
semantics of common abstract data types and adopted for
concurrent use. The set of operations has been limited
to those which can be practically implemented using both
non-blocking and lock-based techniques. Due to the concurrent
nature, also new operations have been added, e.g.
Update which cannot be replaced by Delete followed
by Insert. Some operations also have stronger semantics
than the corresponding sequential ones, e.g. traversal in a List is not invalidated due to concurrent deletes, compared
to the iterator invalidation in STL. As the published algorithms
for concurrent data structures often diverge from the
chosen semantics, a large part of the implementation work
in NOBLE, besides from adoption to the framework, also
consists of considerable changes and extensions to meet
the expected semantics.

The various lock-free concurrent queue algorithms that we include in
this study are briefly described below.

\paragraph{Tsigas-Zhang~\cite{TsiZ01b}}
 Tsigas and Zhang \cite{TsiZ01b} presented a lock-free extension of \cite{Lam83} for any number of threads where synchronization is done both on the array elements and the shared head and tail indices using \op{CAS}, and the ABA problem is avoided by exploiting two (or more) null values. In \cite{TsiZ01b} synchronization is done both directly on the array elements and the shared head and tail indices using \op{CAS}\footnote{The Compare-And-Swap (CAS) atomic primitive will update a given memory word, if and only if the word still matches a given value (e.g. the one previously read). CAS is generally available in contemporary systems with shared memory, supported mostly directly by hardware and in other cases in combination with system software.}, thus supporting multiple producers and consumers. In order to avoid the ABA problem when updating the array elements, the algorithm exploits using two (or more) null values; the ABA problem is due to the inability of \op{CAS} to detect concurrent changes of a memory word from a value (A) to something else (B) and then again back to the first value (A). A {\em CAS} operation can not detect if a variable was read to be A and then later changed to B and then back to A by some concurrent processes. The {\em CAS} primitive will perform the update even though this might not be intended by the algorithm's designer. Moreover, for lowering the memory contention the algorithm alternates every other operation between scanning and updating the shared head and tail indices.

\paragraph{Valois~\cite{Val94}}
Valois \cite{Val94,Val95phd} makes use of linked list in his lock-free implementation which is based on the \op{CAS} primitive. He was the first to present a lock-free implementation of a linked-list. The list uses auxiliary memory cells between adjacent pairs of ordinary memory cells. The auxiliary memory cells were introduced to provide an extra level of indirection so that normal memory cells can be removed by joining the auxiliary ones that are adjacent to them. His design also provides explicit cursors to access memory cells in the list directly and insert or delete nodes on the places the the cursors point to.

\paragraph{Michael-Scott~\cite{Michael96}}
 Michael and Scott \cite{Michael96} presented a lock-free queue that is more efficient, synchronizing via the shared head and tail pointers as well as via the next pointer of the last node.
Synchronization is done via shared pointers indicating the current head and tail node as well via the next pointer of the last node, all updated using \op{CAS}.
The tail pointer is then moved
to point to the new element, with the use of a  \op{CAS} operation. This second step
can be performed by the thread invoking the operation, or by another thread that
needs to help the original thread to finish before it can continue. This helping
behavior is an important part of what makes the queue lock-free, as a thread
never has to wait for another thread to finish.
The queue is fully dynamic as more nodes are allocated as needed when new items are added. The original presentation used unbounded version counters, and therefore required double-width \op{CAS} which is not supported on all contemporary platforms. The problem with the version counters can easily be avoided by using some memory management scheme as e.g. \cite{Mic04b}.

\paragraph{Moir-et-al.~\cite{MoirNSS:2005:elim-queue}}
 Moir et al. \cite{MoirNSS:2005:elim-queue} presented an extension of the Michael and Scott \cite{Michael96}  lock-free queue algorithm where elimination is used as a back-off strategy to increase scalability when contention on the queue's head or tail is noticed via failed \op{CAS} attempts. However, elimination is only possible when the queue is close to empty during the operation's invocation. 

\paragraph{Hoffman-Shalev-Shavit~\cite{DBLP:conf/opodis/HoffmanSS07}}
Hoffman et al.~\cite{DBLP:conf/opodis/HoffmanSS07} takes another approach in their design in order to increase scalability by allowing concurrent \op{Enqueue} operations to insert the new node at adjacent positions in the linked list if contention is noticed during the attempted insert at the very end of the linked list. To enable these "baskets" of concurrently inserted nodes, removed nodes are logically deleted before the actual removal from the linked list, and as the algorithm traverses through the linked list it requires stronger memory management than \cite{Mic04b}, such as~\cite{DBLP:journals/tpds/GidenstamPST09} or \cite{HerLMM:2005:NMM} and a strategy to avoid long chains of logically deleted nodes.

\paragraph{Gidenstam-Sundell-Tsigas~\cite{Gidenstam10:OPODIS}}
Gidenstam et al. \cite{Gidenstam10:OPODIS} combines the efficiency of using arrays and the dynamic capacity of using linked lists, by providing a lock-free queue based on linked lists of arrays, all updated using \op{CAS} in a cache-aware manner. In resemblance to \cite{Lam83}\cite{GiaMoVa:2008:ff-queue}\cite{TsiZ01b} this algorithm uses arrays to store (pointers to) the items, and in resemblance to \cite{TsiZ01b} it uses \op{CAS} and two null values. Moreover, shared indices \cite{GiaMoVa:2008:ff-queue} are avoided and scanning \cite{TsiZ01b} is preferred as much as possible. In contrast to \cite{Lam83}\cite{GiaMoVa:2008:ff-queue}\cite{TsiZ01b} the array is not static or cyclic, but instead more arrays are dynamically allocated as needed when new items are added, making the queue fully dynamic. 

\leaveout{ 
 The underlying data structure that  the algorithmic design uses is a linked list of arrays, and is depicted in Figure \ref{fig:lockfreequeue}. In the data structure every array element contains a pointer to some arbitrary value. Both the \op{Enqueue} and \op{Dequeue} operations are using increasing array indices as each array element gets occupied versus removed. To ensure consistency, items are inserted or removed into each array element by using the \op{CAS} atomic synchronization primitive. To ensure that a \op{Enqueue} operation will not succeed with a \op{CAS} at a lower array index than where the concurrent \op{Dequeue} operations are operating, we need to enable the \op{CAS} primitive to distinguish (i.e., avoid the ABA problem) between "used" and "unused" array indices. For this purpose two null pointer values \cite{TsiZ01b} are used; one (\code{NULL}) for the empty indices and another (\code{NULL2}) for the removed indices. As each array gets fully occupied (or removed), new array blocks are added to (or removed from) the linked list data structure. Two shared pointers, \var{globalHeadBlock} and \var{globalTailBlock}, are globally indicating the first and last active blocks respectively. These shared pointers are also concurrently updated using \op{CAS} operations as the linked list data structure changes. However, as these updates are done lazily (not atomically together with the addition of a new array block), the actually first or last active block might be found by following the next pointers of the linked list. As a successful update of a shared pointer will cause a cache miss to the other threads that concurrently access that pointer, the overall strategy for improving performance and scalability of the this algorithm is to avoid accessing pointers that can be concurrently updated \cite{DBLP:conf/opodis/HoffmanSS07}. Moreover, our algorithm achieves fewer updates by not having shared variables with explicit information regarding which array index currently being the next active for the \op{Enqueue} or \op{Dequeue}. Instead each thread is storing its own\footnote{Each thread have their own set of variables stored in separate memory using thread-local storage (TLS).} pointers indicating the last known (by this thread) first and active block as well as active indices for inserting and removing items. When a thread recognizes its own pointers to be inaccurate and stale, it performs a scan of the array elements and array blocks towards the right, and only resorts to reading the global pointers when it's beneficial compared to scanning. The \op{Dequeue} operation to be performed by thread T3 in Figure \ref{fig:lockfreequeue} illustrates a thread that has a stale view of the status of the data structure and thus needs to scan. As array elements are placed next to each other in memory, the scan can normally be done without any extra cache misses (besides the ones caused by concurrent successful \op{Enqueue} and \op{Dequeue} operations) and also without any constraint on in which order memory updates are propagated through the shared memory, thus allowing weak memory consistency models without the need for additional memory fence instructions.
For the implementation of the new lock-free queue algorithm, the lock-free memory management scheme proposed by Gidenstam et al. \cite{DBLP:journals/tpds/GidenstamPST09} which makes use of the \op{CAS} and \op{FAA} atomic synchronization primitives is used. The interface defined by the memory management scheme is listed in Program \ref{fig:memory_management} and are fully described in \cite{DBLP:journals/tpds/GidenstamPST09}. Using this scheme it can be assure that an array block can only be reclaimed when there is no next pointer in the linked list pointing to it and that there are no local references to it from pending concurrent operations or from pointers in thread-local storage. By supplying the scheme with appropriate callback functions, the scheme automatically reduces the length of possible chains of deleted nodes (held from reclamation by late threads holding a reference to an old array block), and thus enables an upper bound on the maximum memory usage for the data structure. The task of the callback function for breaking cycles, see the \op{CleanUpNode} procedure in Program \ref{fig:memory_callbacks}, is to update the next pointer of a deleted array block such that it points to an active array block, in a way that is consistent with the semantics of the \op{Enqueue} and \op{Dequeue} operations. The \op{TerminateNode} procedure is called by the memory management scheme when the memory of an array block is possible to reclaim.
} 

\remind{(AG) cite:\\
stack: \cite{Treiber86,Mic04b,Val95phd,HenSY10}\\
bag: \cite{Sundell11}\\
pool: \cite{AfeKNS10}\\
\\
}


\subsubsection{Towards Realistic Applications: Mandelbrot Set Computation} \label{sec:mandelbrot}
\remind{Is this the right place? Or further up to motivate why energy aspects ofdata structures are interesting.}


\paragraph{Mandelbrot Set Description}
\begin{figure}[p]
\includegraphics[width=\textwidth]{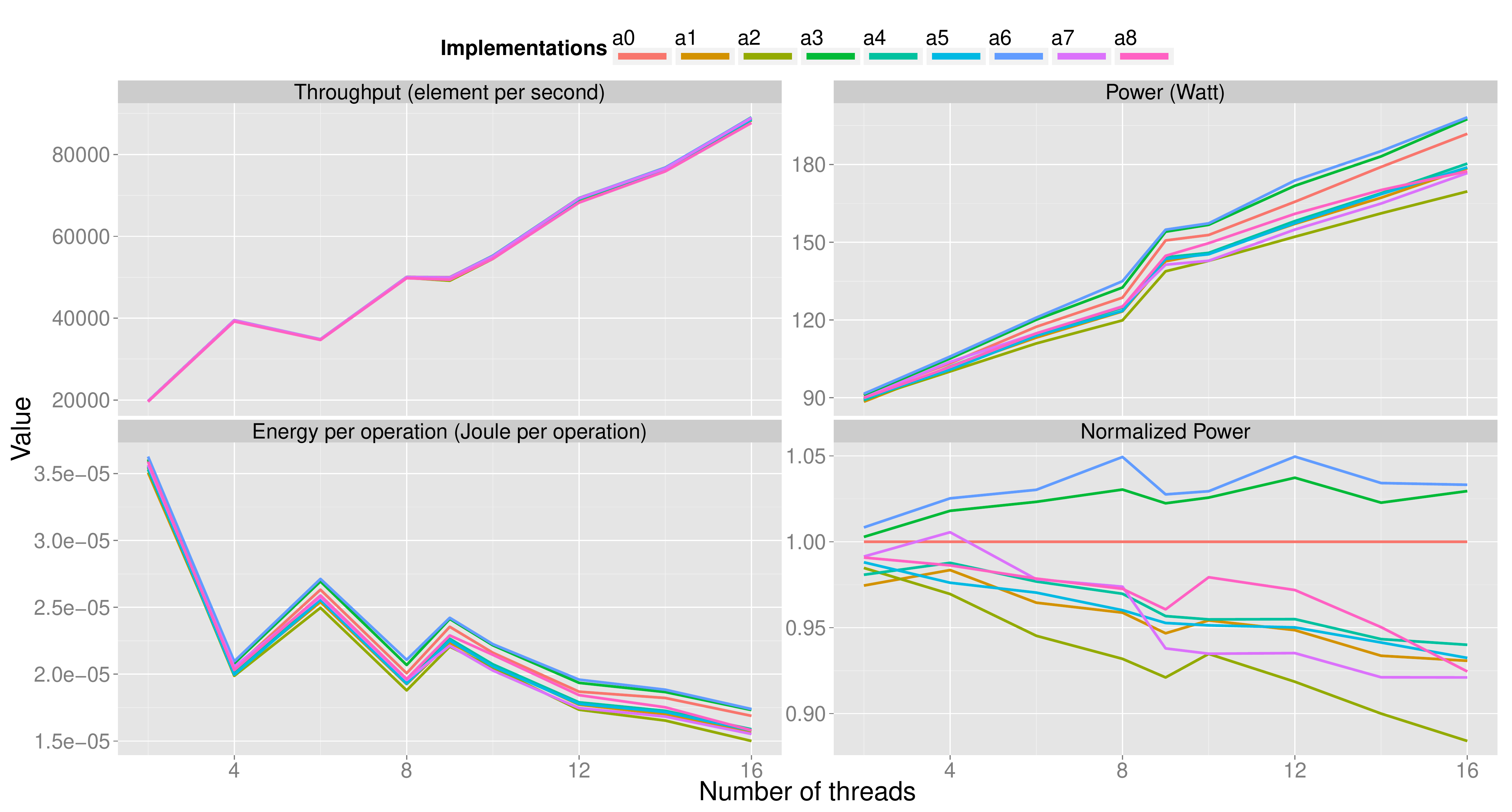}
\caption{Mandelbrot using $16\times 16$ pixel regions.}
\label{fig.mandel-p3}
\end{figure}
\begin{figure}[p]
\includegraphics[width=\textwidth]{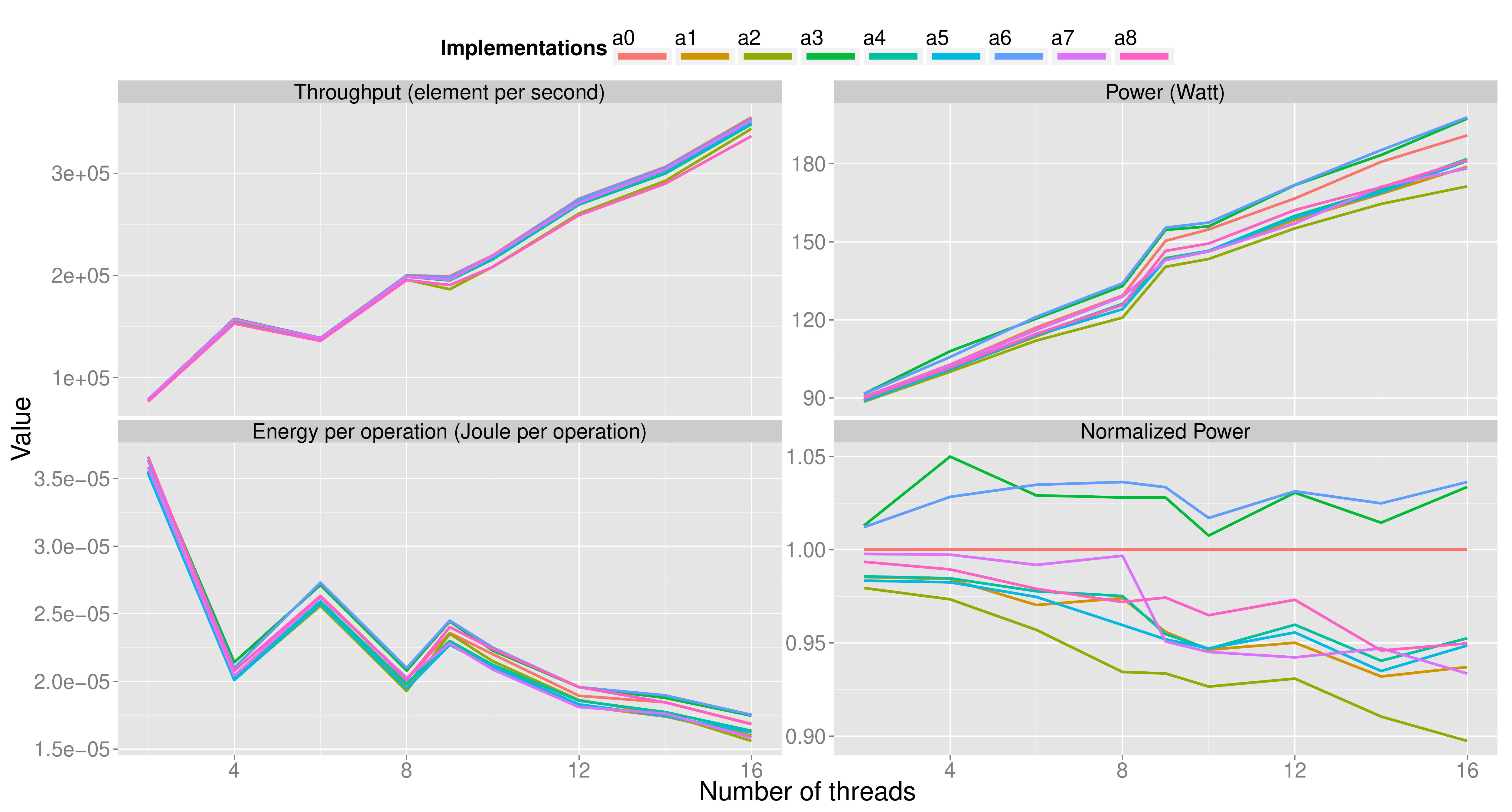}
\caption{Mandelbrot using $8\times 8$ pixel regions.}
\label{fig.mandel-p2}
\end{figure}
\begin{figure}[p]
\includegraphics[width=\textwidth]{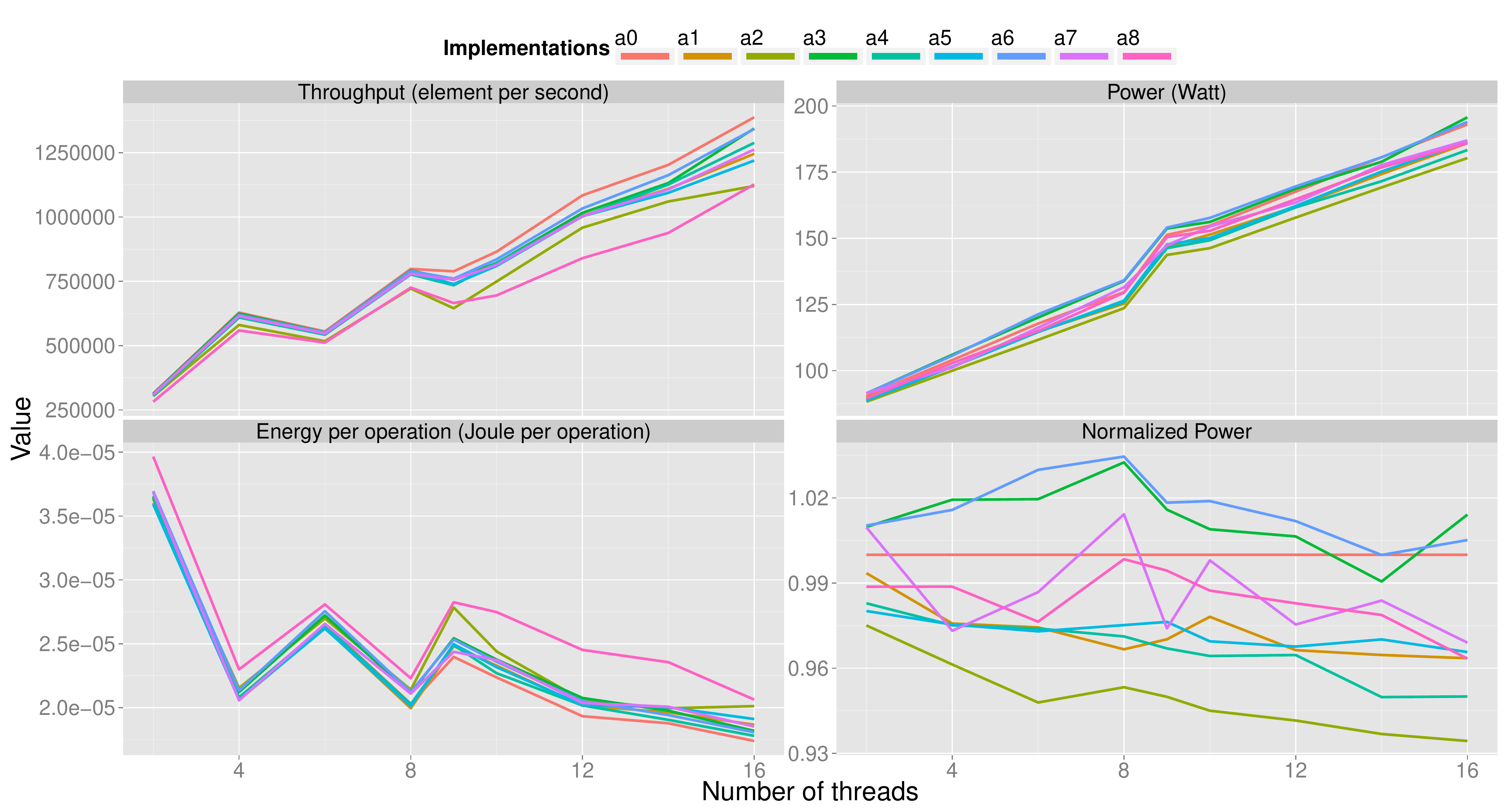}
\caption{Mandelbrot using $4\times 4$ pixel regions.}
\label{fig.mandel-p1}
\end{figure}
\begin{figure}[p]
\includegraphics[width=\textwidth]{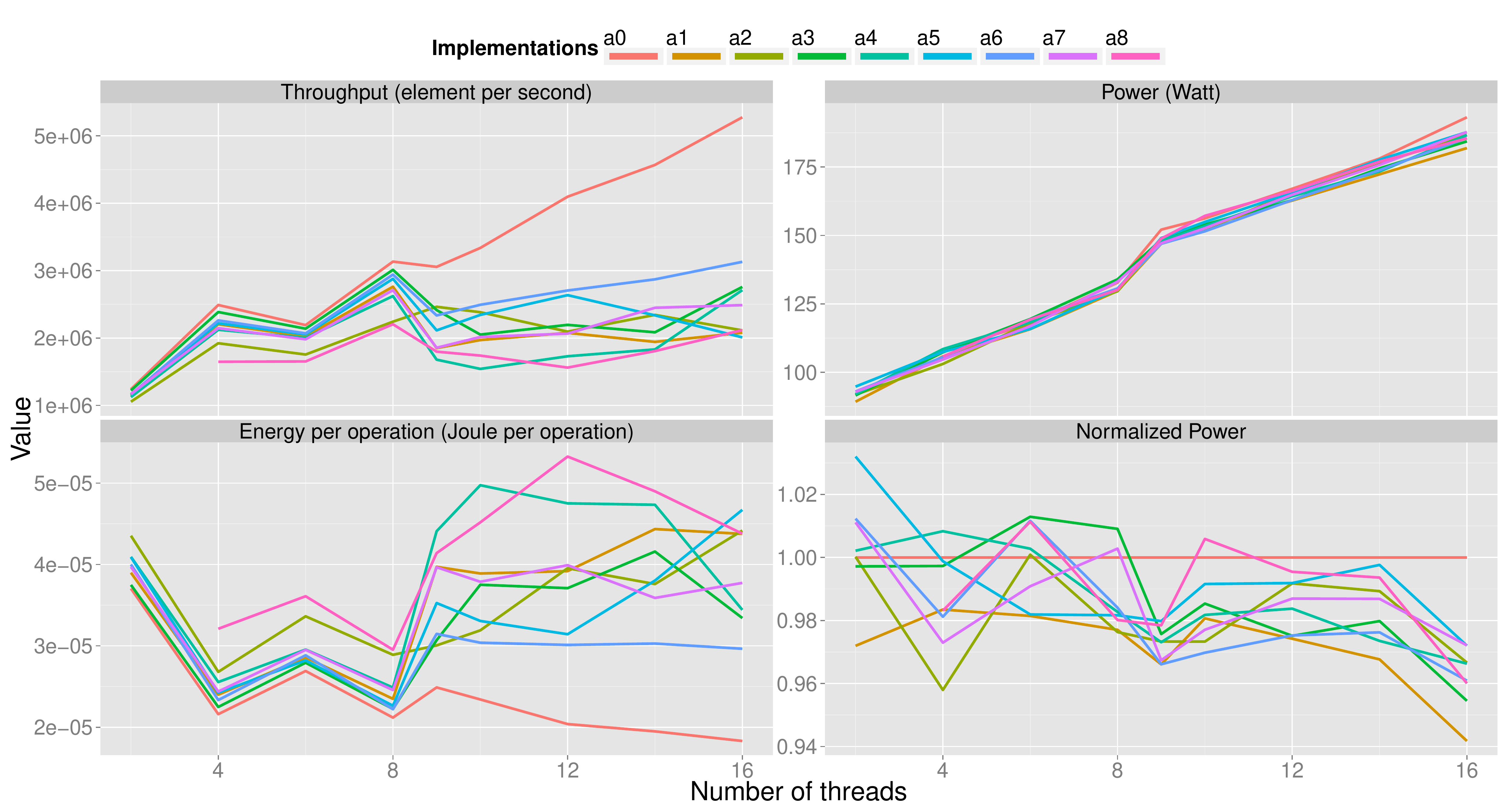}
\caption{Mandelbrot using $2\times2$ pixel regions.}
\label{fig.mandel-p0}
\end{figure}


As a simple case-study of parallel applications that use concurrent
data structures for communication we have used an application that
computes and renders an image of the Mandelbrot set~\cite{Man80} in
parallel using the producer/consumer pattern. The program, also used
as part of the evaluation in~\cite{Sundell11}, uses a shared
collection data structure for communication between the
program's two major phases:

\begin{itemize}
\item Phase 1 consists of computing the number (with a maximum of 255)
  of iterations for a given set of points within a chosen region of
  the image. The results for each region together with its coordinates
  are then put in the collection data structure.
\item Phase 2 consists of, for each computed region stored in the
  collection, computing the RGB values for each contained point and
  draw these pixels to the resulting image. The colors for the
  corresponding number of iterations are chosen according to a rainbow
  scheme, where low numbers are rendered within the red and high
  numbers are rendered within the violet spectrum.
\end{itemize}

Phase 1 is performed in parallel with phase 2, i.e., like a
pipeline. Half of the threads perform phase 1 and the rest perform
phase 2. The application is implemented in C, and renders a 32-bit
color image of 8192 times 8192 pixels of the Mandelbrot set. The size
of each square region is chosen to be one of $16 \times 16$ (i.e., 16 by
16), $8\times 8$, $4\times 4$, or $2\times 2$ pixels which also determines the number of work
units and, hence, the level of contention on the shared
collection. The whole image is divided into a number (equal to half
the number of threads) of larger row-oriented parts%
\footnote{Due to the nature of the Mandelbrot set, this way of
  deciding each part might not be fair in respect of workload per
  thread. As can be seen in the experimental results, this partition
  pattern causes 3 parts to take longer time than 2 parts in
  parallel, because the total execution time depends on the slowest
  part.}%
, where each producer thread (i.e., phase 1) work sequentially on the
regions contained within its own part. The consumer threads (i.e.,
phase 2) render the regions got from the collection in the order that
they were obtained, until the producer threads have finished and the
collection is empty.
The application uses a dense pinning strategy, pinning the producers
and then the consumers to consecutive cores, e.g. when 16 threads are
used the producers are pinned to cores on the first socket while the
consumers are pinned to cores on the second socket.
This is just one of many possible ways to divide the work and pin threads,
it remains as future work to explore other ways.

In this application the shared data structure used for communication
only need to offer one \op{Add} operation that adds an element to the
collection and one \op{TryRemove} operation that removes and returns
one element from the collection. The minimal semantic requirements are
that at most one \op{TryRemove} returns each \op{Add}ed element and
that a surplus of \op{TryRemove} operations eventually (e.g. after all
\op{Add}s have been issued) returns all \op{Add}ed elements.

Any linearizable concurrent queue, stack or bag data structure meets
these requirements and could be used as the shared collection. There are
even some non-linearizable data structures that could meet them.

The following concurrent shared collection data structures, most of
which are described in Section~\ref{sec:shm-queue-algs}, have been
considered:
\begin{itemize}
\item {\bf a0}. Lock-free and linearizable bag by Sundell et al.~\cite{Sundell11}.
\item {\bf a1}. Lock-free and linearizable queue by Michael and
  Scott~\cite{Michael96}.
\item {\bf a2}. Lock-free and linearizable queue by Valois~\cite{Val94}.
\item {\bf a3}. Lock-free and linearizable queue by Tsigas and
  Zhang~\cite{TsiZ01b}.
\item {\bf a4}. Lock-free and linearizable queue by Hoffman et
  al.~\cite{DBLP:conf/opodis/HoffmanSS07}.
\item {\bf a5}. Lock-free and linearizable queue by Moir et
  al.~\cite{MoirNSS:2005:elim-queue}.
\item {\bf a6}. Lock-free and linearizable stack by Michael~\cite{Mic04b}.
\item {\bf a7}. Lock-free and linearizable stack by Hendler et
  al.~\cite{HenSY10}.
\item {\bf a8}. Lock-free EDTree (a.k.a. pool or bag) by Afek~et
  al.~\cite{AfeKNS10}.
\leaveout{
\item {\bf a0}. Lock-free and linearizable queue by Michael and
  Scott~\cite{Michael96}.
\item {\bf a1}. Lock-free and linearizable queue by Valois~\cite{Val94}.
\item {\bf a2}. Lock-free and linearizable queue by Tsigas and
  Zhang~\cite{TsiZ01b}.
\item {\bf a3}. Lock-free and linearizable queue by Gidenstam et
  al.~\cite{Gidenstam10:OPODIS}.
\item {\bf a4}. Lock-free and linearizable queue by Hoffman et
  al.~\cite{DBLP:conf/opodis/HoffmanSS07}.
\item {\bf a5}. Lock-free and linearizable queue by Moir et
  al.~\cite{MoirNSS:2005:elim-queue}.
\item {\bf a6}. Lock-based (and linearizable) queue.
\item {\bf a7}. Lock-free and linearizable stack by Michael~\cite{Mic04b}.
\item {\bf a8}. Lock-free and linearizable stack by Hendler et
  al.~\cite{HenSY10}.
\item {\bf a9}. Lock-free and linearizable bag by Sundell et al.~\cite{Sundell11}.
\item {\bf a10}. Lock-free EDTree (a.k.a. pool or bag) by Afek~et
  al.~\cite{AfeKNS10}.
}
\end{itemize}

Each implementation has been run at each of the 4 work unit sizes
($2\times 2$, $4\times 4$, $8\times 8$ and $16\times 16$ pixels)
and with 2, 6, 8, 9, 10, 12, 14 and
16 threads on the EXCESS server at Chalmers. The results are
presented in Figures~\ref{fig.mandel-p3} to~\ref{fig.mandel-p0} in order
of decreasing work unit size, i.e. increasing contention.
For each case the following metrics are shown (clockwise starting from
the top left):
i)   throughput in pixels per second;
ii)  total system power in Watts;
iii) total system power normalized by {\bf a1} power; and
iv)  total energy in Joules consumed per pixel.

As mentioned above the method used to divide the Mandelbrot set into
regions does not share the work equally among the producer threads
which results in the decreases in throughput for 6 and 9 threads.

When the work units are large, such as in Figure~\ref{fig.mandel-p3},
the difference in throughput between the different collection
implementations is very small indeed for any number of threads. The
work load is dominated by independent parallel computation and
consequently the level of contention on the collection is low. There
is, however, a somewhat larger difference in energy per pixel. This
difference is interesting as it ought to be directly related to
properties of the collection implementation as all implementations
carry out the same total amount of parallel work and a very similar number of
successful collection operations per second. Moreover, the lowest energy per
pixel costs are achieved by the implementation, {\bf a2}, which is
among the worst at high contention (compare with
Figure~\ref{fig.mandel-p0}).
In this particular application the producers do a larger part of the
total work than the consumers which can lead to the shared collection
becoming empty at times. However, the cost is not distributed equally
across all work units -- some are cheaper for the producers than
others. Consumers finding the shared collection empty will retry the
\op{TryRemove} operation in a tight loop. This could could be one
reason for the difference in power as the effort needed to determine
that the collection is empty varies among the different
algorithms. E.g. for {\bf a1} and {\bf a2} this just requires reading
a small number of pointers (2 to 3), which however invokes memory barriers,
while for {\bf a0} it entails
scanning through (while invoking few memory barriers) at least one block of
pointers per thread using the
data structure.

When the work units are small, such as in Figure~\ref{fig.mandel-p0},
there are large differences in throughput from 4 threads and up. This
together with the fact that the total system power for the different
implementations (at the same number of threads) is even more close
together than when using larger work units the differences in energy
per pixel varies considerably. Here the contention level on the
collection is higher, above 8 threads where the throughput of the less
scalable implementations flatten or decrease it can be considered
high. In this case all but one of the implementations have their
energy per pixel sweetspot at less than or equal to 8 threads
(i.e. when using cores in only one socket). Implementation {\bf a0}
(the bag) is the only one that delivers the lowest energy per pixel
when using all cores of the machine. It is worth noting that the bag
data type has a potential to use less synchronization than a queue or
stack data type that must enforce an (illusion of) total order among
all their elements.

From this case-study some observations can be made about the
problem of making an informed selection of implementation for a
multi-variant shared data structure in a certain application and context:
\begin{itemize}
\item the semantic requirements of the application must be known (naturally)
  but should also not be overstated as that would limit the choices of
  implementation;
\item the required throughput of data structure operations (and their
  mix) needs to be predicted (bounded) from the parallel work-load to
  estimate the level of contention (which if too high would further
  bound the achievable throughput of data structure operations);
  and that, consequently,
\item a good prediction of achievable data structure operation
  throughput for each implementation and for a certain state will be
  needed to do that.
\end{itemize}


\paragraph{Simplified implementation}

As mentioned above, realistic applications introduce variety of additional parameters that
hardens the estimation of throughput and power. The Mandelbrot application has two main
differences from synthetic tests. In the first place, the parallel sections are composed
of a mixture of computations and memory accesses.  It is hard to estimate the memory
access delays and intensity which are important for our model. These metrics are used to
determine the parallel section size and the bandwidth requirements which are used to
obtain the memory power consumption.

Another complexity regarding the Mandelbrot application is the unequal load balance among producer
threads. Even though the problem domain is decomposed into equally sized chunks, some
radians require less work than others because they diverge rapidly and require less
iterations before determining that they do not belong to the Mandelbrot set. This fact creates
variability in the parallel section size which does not occur in the synthetic
tests. There are some ways to eliminate this load balance problem.  One very simple way is
to force each thread to iterate until maximum count even after determining that the point
does not belong to the Mandelbrot set. However, this is not an ideal approach since it leads
to waste of resources. Instead, one can decompose the domain in an interleaved manner to
obtain a better load balance.

For now, we apply the simple approach and leave the interleaved decomposition as future
work. Having obtained load balance with simple modifications, we make use of synthetic
applications to predict power and throughput values for the Mandelbrot application. We
determine the parallel section size and extrapolate the Mandelbrot power and throughput
metrics from the corresponding synthetic application.



\paragraph{Mandelbrot Prediction}

There are slight differences between the simplified Mandelbrot implementation that we
consider in this paragraph and the synthetic benchmarks that we have analyzed in the
previous subsection. Those differences have an impact on power dissipation through two
main components:
\begin{itemize}
\item CPU power. What is considered as \ps in Mandelbrot differs from the synthetic
      test since the operations that reside in this \ps are of a different nature: only
      floating point divisions for synthetic benchmark, and a complete mixture of
      arithmetic and memory operations for Mandelbrot. The dynamic CPU power that we have
      measured and extrapolated in synthetic test is then no longer valid for the new
      application.
\item Memory power. Again, as some memory operations take place inside the \ps, the
      amount of remote accesses per unit of time in the whole program changes; and we have
      seen that this metric impacts directly the memory power dissipation.
\end{itemize}

Because of those variations between the synthetic test and the new application, we need to
measure the power dissipation of memory and CPU for some more values of the
parameters. This requirement of new power measurements comes however naturally in the
process; we cannot expect to be able to predict the power dissipation of any application
that uses a queue without having any knowledge about the characteristics of the
application according to power. We are then able to extract from those power measurements
both power dissipation of the \rl (which is correlated to the queue implementation) and
power dissipation of the \ps (which depends on the application that actually uses one of
the queue implementations).

\bigskip

Concerning the CPU power dissipation, we do not reconsider the assumption that it mostly
depends on the CPU power dissipated in the \ps, \ie there is no clear difference of CPU
power in the different queue implementations that we have studied in this
deliverable. However, contrary to the floating point division case in synthetic benchmark,
we do not know what is the relation between frequency and CPU power. We then still rely on
the following equation:
\[ 
\pow{C} = \pstat{C} + \sock \times \pact{C} + \nth \times \pdyn{C}(\freq,\prog), \quad
\text{which is equivalent to}
\]
\[ \pdyn{C}(\freq,\prog) = \frac{1}{\nth} \times 
   \left(\pow{C} - \pstat{C} - \sock \times \pact{C} \right)  \]

Hence, we choose a pattern and a queue implementation, and for each value of the
frequency, we run the application and obtain the corresponding value of dynamic CPU power.

\medskip
\begin{figure}
\includegraphics[width=\textwidth]{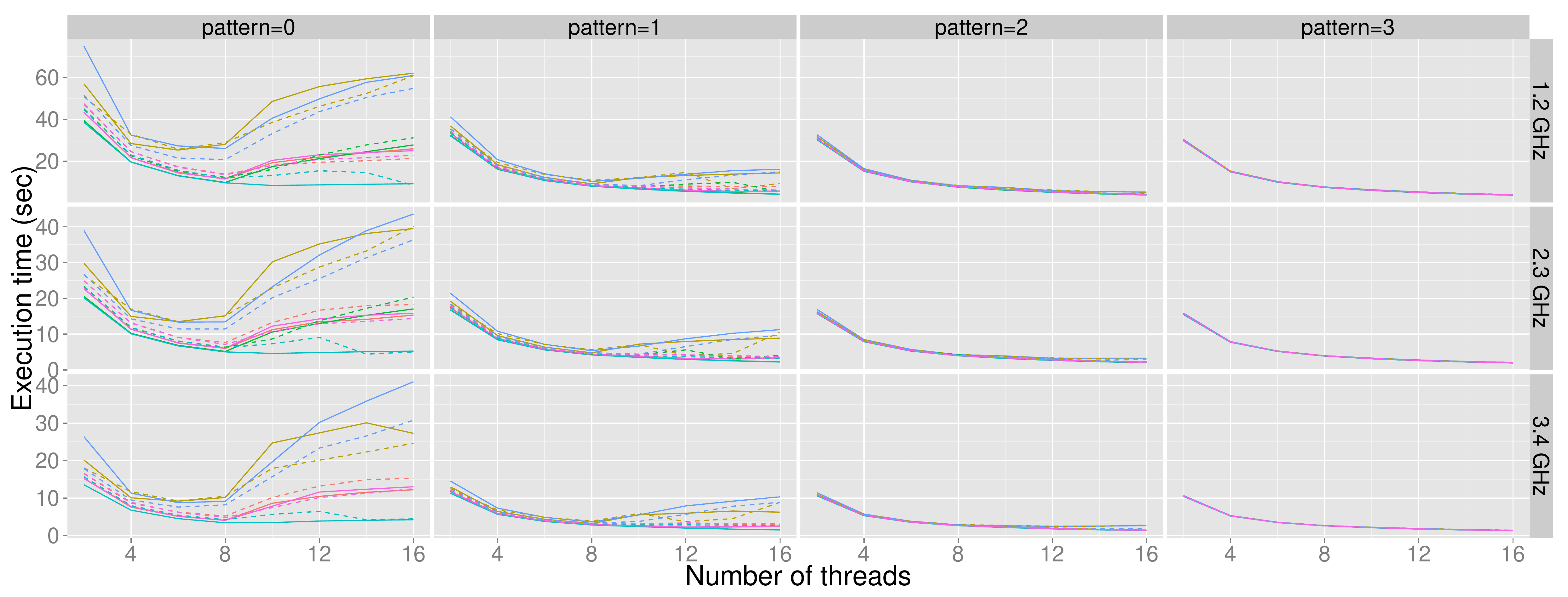}

\includegraphics[width=.49\textwidth]{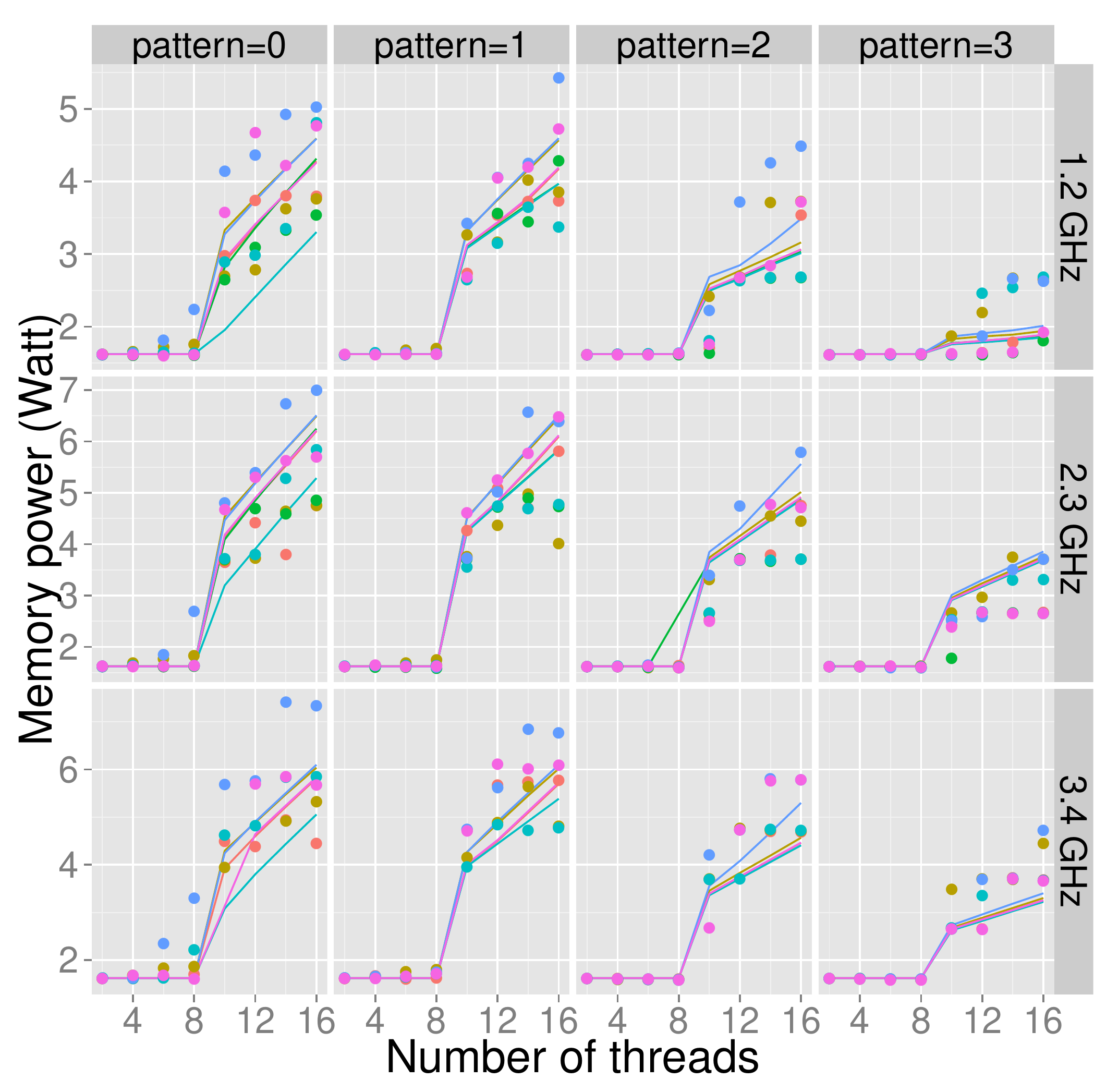}\hfill%
\includegraphics[width=.49\textwidth]{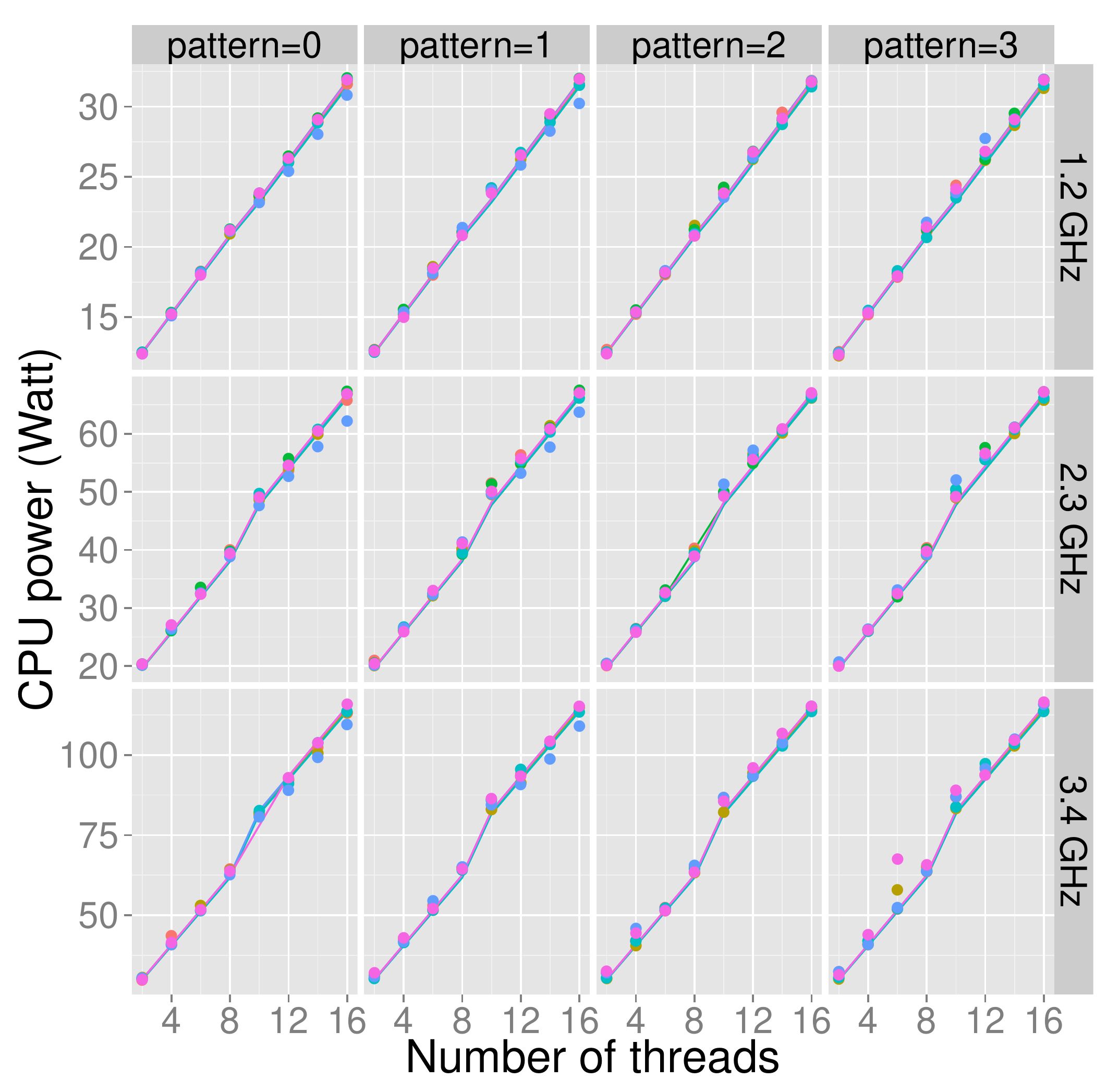}

\includegraphics[width=.49\textwidth]{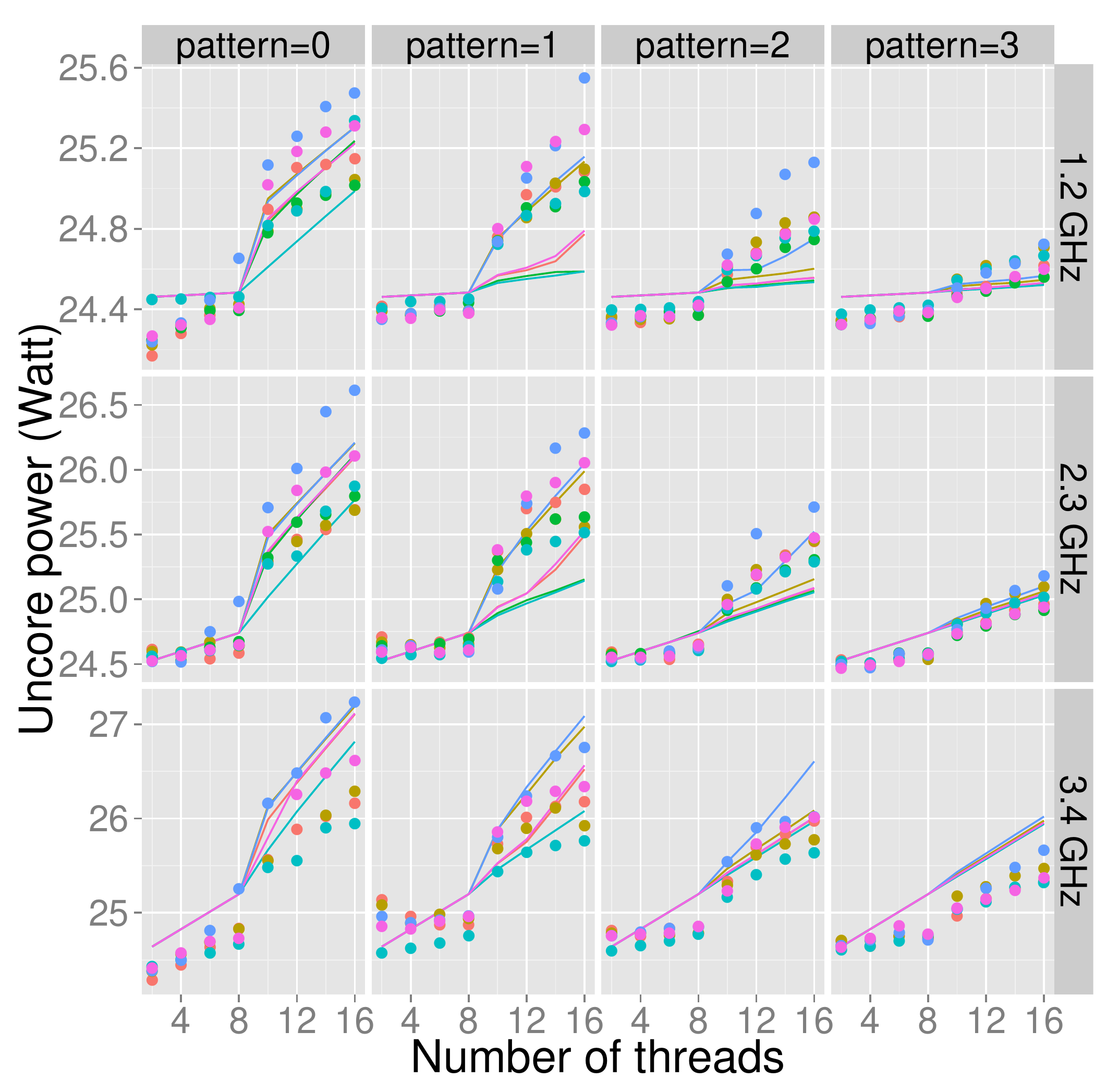}\hfill%
\includegraphics[width=.49\textwidth]{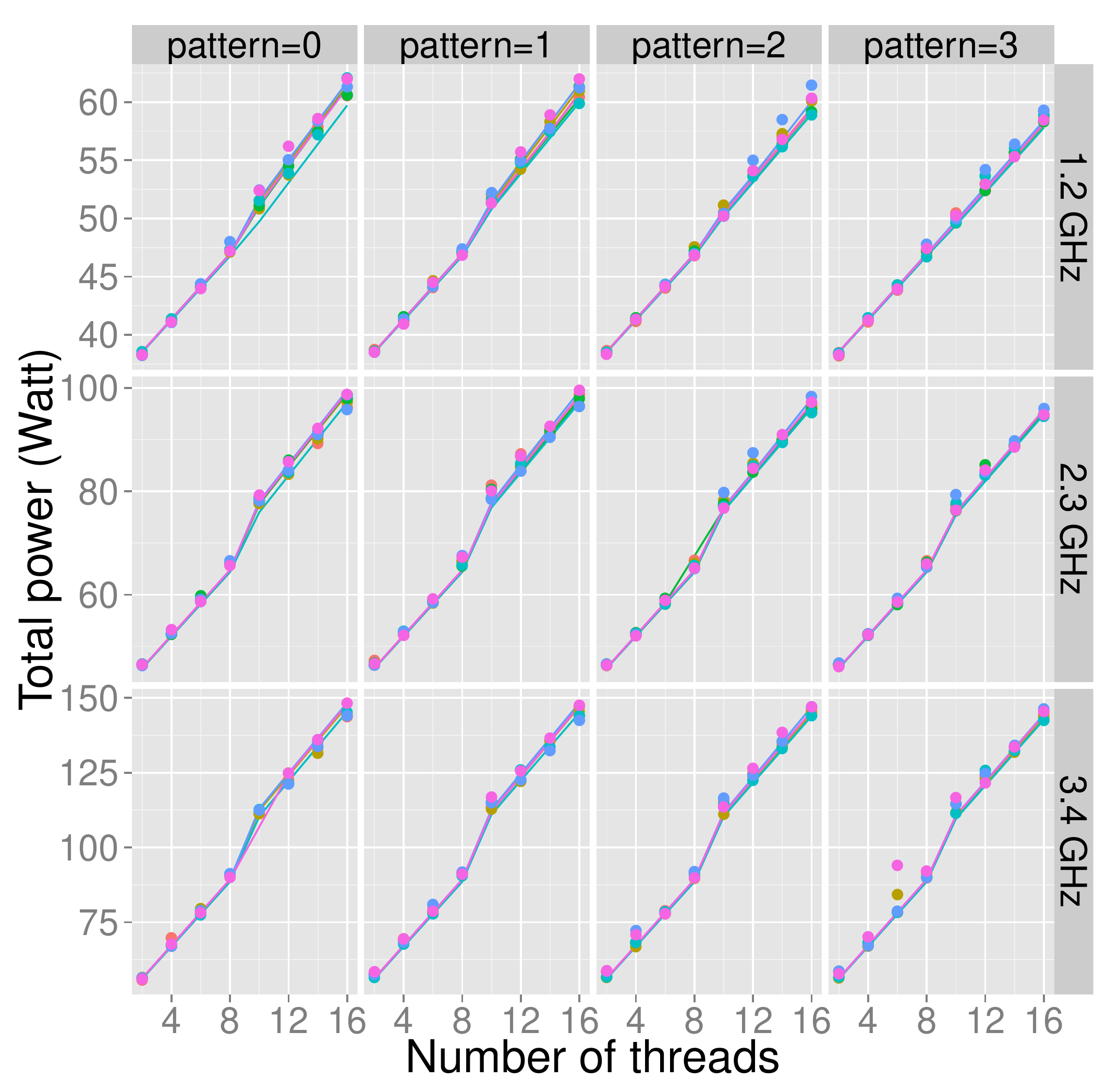}
\caption{Prediction on Mandelbrot simple implementation\label{fig.sim-man}}
\end{figure}

Regarding the memory consumption, we rely on a simple memory consumption model, in order
to estimate the intensity of remote accesses in the \ps. In the Mandelbrot application,
the number of cache misses in a given \ps is expected to depend on the pattern that is
used; more precisely, it should be roughly proportional to the size of the subregion
($2\times 2$, $4\times 4$ pixels, $\dots$) that is used in the application. We recall that
(i) the memory power is computed through $\pow{M} = \pstat{M} + \pdyn{M}$, where \pdyn{M}
is proportional to the number of accesses to remote or main memory per unit of time, (ii)
the ratio of time spent in the \rl is $\rat = 1 - (\thr \pw)/(\nth \facf \freq )$, (iii)
in the synthetic benchmark, \ie without memory accesses in the \ps, thanks to memory power
measurements, we have instantiated the value of \coda, such that the dynamic memory power
is $\pdyn{M} = \coda \nth \rat$.

Now, for the \ps of the Mandelbrot application, which contains memory accesses, we define
$\coda'$, such that $\coda' \nth$ would be the dynamic memory power if there were only
parallel sections and no retry-loop. As the application spends $\rat\%$ of the time in the
\ps, and $(1-\rat)\%$ in the \rl, the dynamic memory power can be
computed through:
\[ \pdyn{M} = \coda \times \nth \times (1-\rat) + \coda' \times \nth \times \rat. \]
Then, we run the application only once to obtain the value of $\coda'$.

\medskip

The results are presented in Figure~\ref{fig.sim-man}, where dashed lines and points are
the actual measurements, and plain lines are predictions. The key is again the one represented in
Figure~\ref{fig.key}, on page~\pageref{fig.key}.

\subsection{Concurrent Queues on Movidius Embedded Platform} \label{sec:concurrent-data-structures-Movidius}

As described in Section~\ref{sec:myriad1-sync} the Myriad platform
avails a number of different options for synchronizing SHAVEs and the
LEON processor.  Based on these we propose a number of different
concurrent queue implementations and evaluate them experimentally with
respect to performance and power consumption.

\subsubsection{Concurrent Queue Implementations}
\label{sec:4}
In this section we describe the concurrent queue implementations we evaluated on Myriad platform in the context of this work. Concurrent queues are used in a wide range of application domains, especially in the implementation of path-finding and work-stealing algorithms. The queue is implemented as a bounded cyclic array, accessed by all SHAVE cores. SHAVEs request concurrently access to the shared queue for inserting and removing elements.

Table \ref{tab:2} summarizes all different queue implementations we developed and evaluated on the platform. We used three different kinds of synchronization primitives: mutexes, message passing over shared variables and SHAVEs' FIFOs. Mutexes and SHAVEs' FIFOs were described in Section~\ref{sec:myriad1-sync}. With respect to the shared variables, we implemented communication buffers between the processors, used to exchange information for achieving synchronization. To reduce the cost of spinning on shared variables, we allocated these buffers in processor local memories, to avoid the congestion of the Myriad main buses.

The queue implementations can be divided in two basic categories: Lock based and Client-Server.

\paragraph{Lock-based Implementations}
\label{sec:4.1}
The lock-based implementations of the concurrent queue utilize the Mutex registers provided by the Myriad architecture. We implemented two different lock-based algorithms: In the first one, a single lock is used to protect the critical section of the \textit{enqueue()} function and a second one to protect the critical section of the \textit{dequeue()}. Therefore, simultaneous access to both ends of the queue can be achieved.   
The second implementation utilizes only one lock to protect the whole data structure.
%
\begin{figure}
\centering
\includegraphics[width=0.75\textwidth]{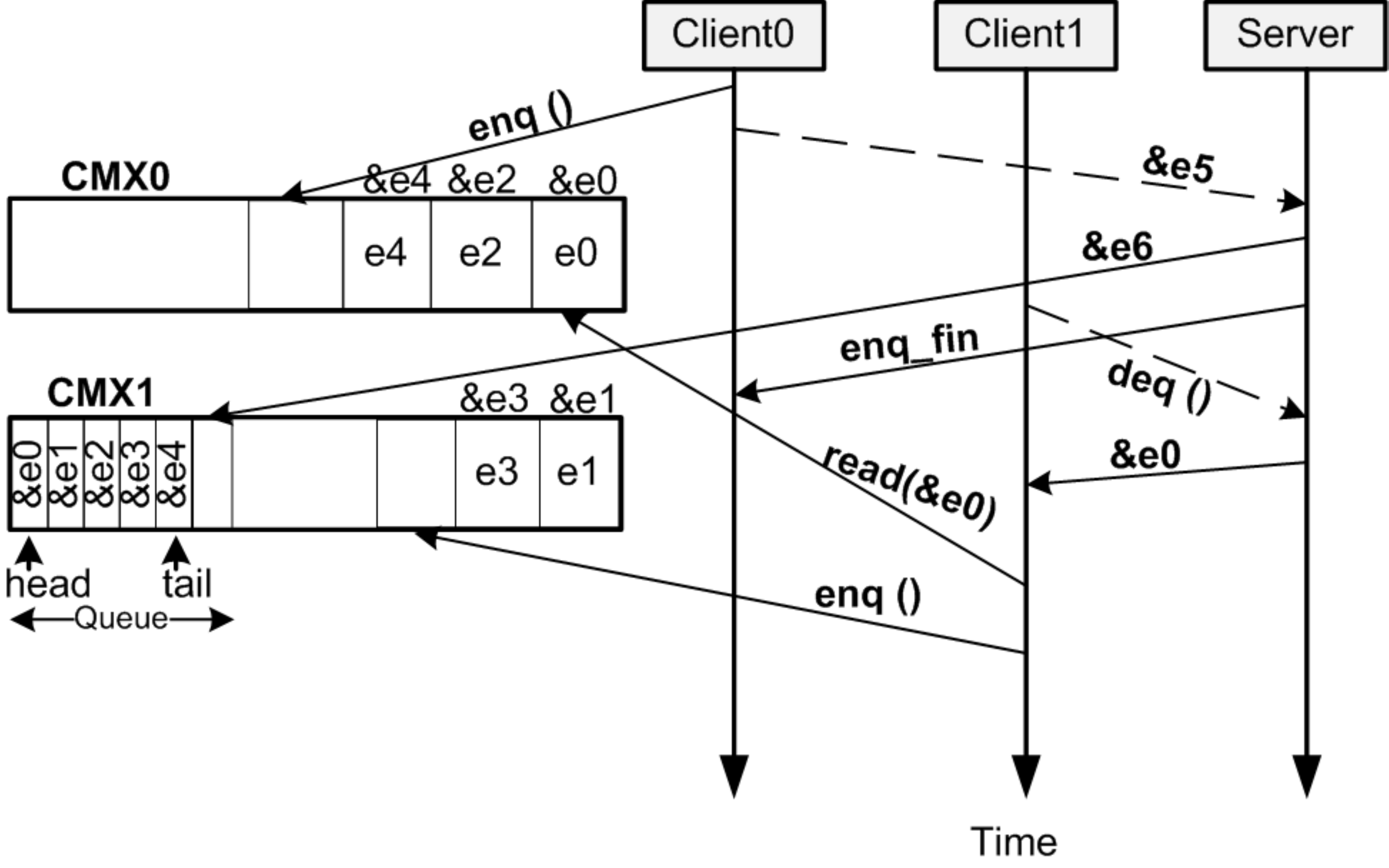}
\caption{Client-Server Implementation: The server maintains the order of the allocated elements by storing their addresses in a FIFO.}
\label{fig:2}       
\end{figure}
\paragraph{Client-Server Implementations}
\label{sec:4.2}
The Client-Server Implementations are based on the idea that a server arbitrates the access requests to the critical sections of the application and executes them. Therefore, the clients do not have direct access to the critical section. Instead, they provide the server with the required information for executing the critical section. This set of implementations is an adaptation of the Remote Core Locking algorithm (RCL)~\cite{PetrovicRS2014:LHM}\cite{LoziFGLM2012:RCL}. In Myriad platform the role of the server can be played either by LEON or a SHAVE core, as shown in Table \ref{tab:2}. The SHAVEs are the clients, requesting access to the shared data from the server.

To maximize the efficiency of the Client-Server implementations, each SHAVE allocates the elements to be enqueued in its local CMX slice. Although the CMX is much smaller in comparison with the DDR memory, it provides much smaller access time for the SHAVEs than, for example, with the DDR. The server allocates the queue in a CMX slice, since DRAM is much smaller.

We implemented two versions of the Client-Server synchronization algorithms. In the first one, the server maintains the FIFO order of the queue by storing the addresses of the allocated elements in a FIFO manner. In an enqueue, the client allocates the element in its local CMX slice and then sends the address of the element to the server, which pushes the address in the queue. In the dequeue case, the client sends a \textit{dequeue} message to the server and waits for the server to respond with the address of the dequeued element.

Figure~\ref{fig:2} illustrates this implementation with only two clients: \textit{Client0} enqueues element \textit{e5} in \textit{CMX0} and sends the address to the server. The server pushes the address in the queue and notifies the client that the enqueue has finished with an \textit{enq\textunderscore fin} message. \textit{Client1} requests a dequeue and the server responds with the address of the \textit{e0} element.

We evaluated this algorithm by designing several alternatives: In the first one, the server is the LEON processor and in the second is a SHAVE. In Table~\ref{tab:2} these are displayed as \textit{Leon--Srv--addr} and \textit{SHAVE--Srv--addr} respectively. Also, we experimented with both shared variables and SHAVEs' FIFOs synchronization primitives.

The main advantage of this algorithm is that it reduces the stalling of the clients, especially during the enqueue operations. The client sends the address to the server and then can proceed with other calculations, without waiting for the server to respond. This applies especially in the case where the SHAVE FIFO synchronization primitive is used. The client stalls only when the server's FIFO is full. Additionally, the fact that the element allocation takes place only in local memories reduces both the execution time and the power consumption. Another parameter that affects the efficiency of the algorithm is the synchronization primitive used. We expect SHAVEs' FIFOs primitive to be efficient both in terms of performance and power consumption, since it avoids memory accesses during the exchange of information between the clients and the server. However, the main disadvantage in this case is that it can be only implemented using a SHAVE as a server.
%
\begin{figure}
\centering
\includegraphics[width=0.75\textwidth]{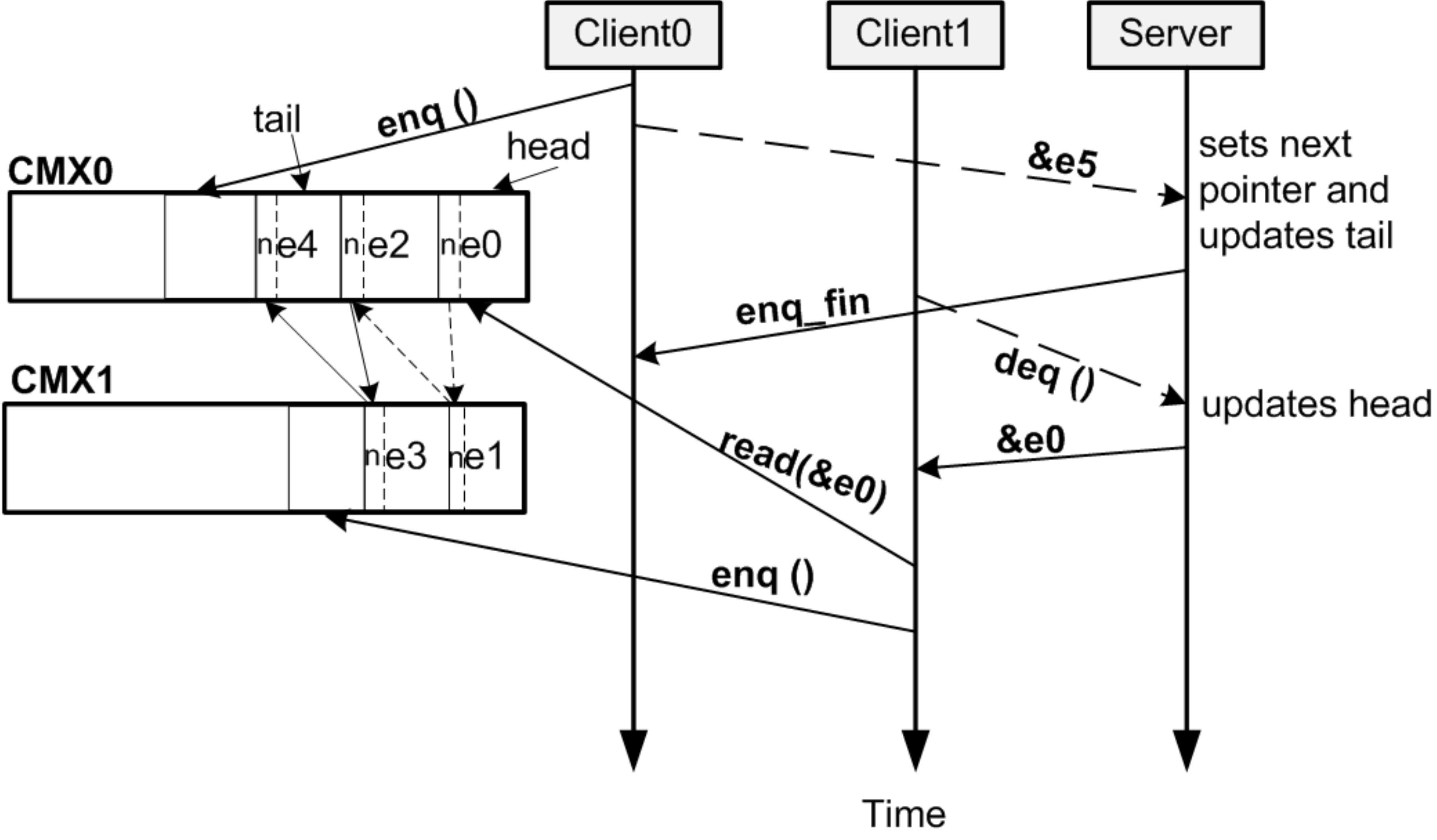}
\caption{Client-Server Implementation: The server maintains FIFO order of the allocated objects by pointing to the first and the last enqueued elements.}
\label{fig:3}       
\end{figure}

In the second Client-Server implementation we altered the queue structure as follows: The server, instead of managing a queue to store object addresses, utilizes two pointers: \textit{head} and \textit{tail} that point at the first and the last element allocated respectively. Additionally, each element has a \textit{next} pointer which points to the next element, keeping in this way the FIFO order. All these pointers are managed by the server, in order to improve the application parallelism by allowing the clients to perform tasks only outside the critical section.

When a client allocates an object in its local queue stored in CMX memory, it sends the address to the server, as in the first implementation. When the server receives the address, first it updates the \textit{next} pointer of the last element to point to the newly allocated element. Then, it updates the \textit{tail} pointer to point to the new element. (This is the same that happens in a singly linked list FIFO data structure). In the dequeue, as soon as the server receives the request, it sends to the client the address of the element pointed by the \textit{head} and then updates the \textit{head}, using the \textit{next} pointer of the dequeued element.

To illustrate this algorithm, Figure~\ref{fig:3} shows an example. Initially, five elements exist in the queue. \textit{e0} is the first allocated and \textit{e4} is the last one. Therefore, head points to \textit{e0} and \textit{tail} to \textit{e4}. \textit{Client0} allocates element \textit{e5} an element in \textit{CMX0} and sends the address to the server. The server sets the \textit{next} pointer of \textit{e4} to point to \textit{e5} and updates the \textit{tail} pointer to \textit{e5} as well. Then, it sends an \textit{enq\textunderscore fin} message to \textit{Client0}. \textit{Client1} requests a dequeue. The server receives the message, updates the \textit{head} pointer and sends the address of \textit{e0} to the client.

In comparison with the \textit{Leon--Srv--addr} or \textit{SHAVE--Srv--addr}, this implementation consumes less memory space. Therefore, the space available in each local CMX slice is affected only by the number of allocated elements of the corresponding client, unlike the previous implementation where the queue of stored addresses reduced the available memory of the slice where it was allocated. It is important to mention that each client accesses only its local CMX slice during the enqueue and the dequeue operations. Only the server makes inter-slice accesses. The disadvantage of this implementation is that it cannot be efficiently implemented using LEON as a server, since it accesses the \textit{next} pointers of each element with high cost. The implementation was designed using both shared variables and SHAVEs' FIFOs for communication between the server and the clients. In Table \ref{tab:2} is displayed as \textit{SHAVE-srv\textunderscore ht}.

\begin{table}
\begin{tabular}{llll}
\hline\noalign{\smallskip}
 & \multicolumn{3}{c}{Synchronization Primitive} \\
 & Mutex & Shared Var & SHAVE FIFO \\
 \noalign{\smallskip}\hline\noalign{\smallskip}
 no server & Y & - & - \\
 Leon-srv-addr & - & Y & - \\
 SHAVE-srv-addr & - & Y & Y \\
 Leon-srv-h/t & - & - & - \\
 SHAVE-srv-h/t & - & Y & Y \\
\noalign{\smallskip}\hline
\end{tabular}
\caption{Concurrent Queue Implementations: ("Y" indicates the ones that are evaluated in this work.\label{tab:2})}
\end{table}

\subsubsection{Experimental Evaluation}

The concurrent queue implementations were evaluated using a synthetic benchmark, which is composed by a fixed workload of 20,000 operations and it is equally divided between the running SHAVEs. In other words, in an experiment with 4 SHAVEs each one completes 5,000 operations, while in an experiment with 8 SHAVEs, each one completes 2,500 operations.  In the implementations where a SHAVE is utilized as a server, we run the experiments using up to 6 clients (to have the same number of enqueuers and dequeuers).

All algorithms were evaluated in terms of time performance, for the given fixed workload, which is expressed in number of execution cycles. More specifically, in Myriad platform the data flow is controlled by LEON. SHAVEs start their execution when instructed to do so by LEON and then LEON waits for them to finish. The number of cycles measured is actually LEON cycles from the point that SHAVEs start their execution until they all finish. This number represents accurately the execution time. Power consumption was measured using a shunt resistor connected at the 5V power supply cable. Using a voltmeter attached to the resistor's terminals we calculated the current feeding the board and therefore the power consumed by the Myriad platform. 
\remind{Must state that this different from the method used in the Myriad microbenchmark section.}
\begin{figure}[b!]
\centering
\includegraphics[width=0.80\textwidth]{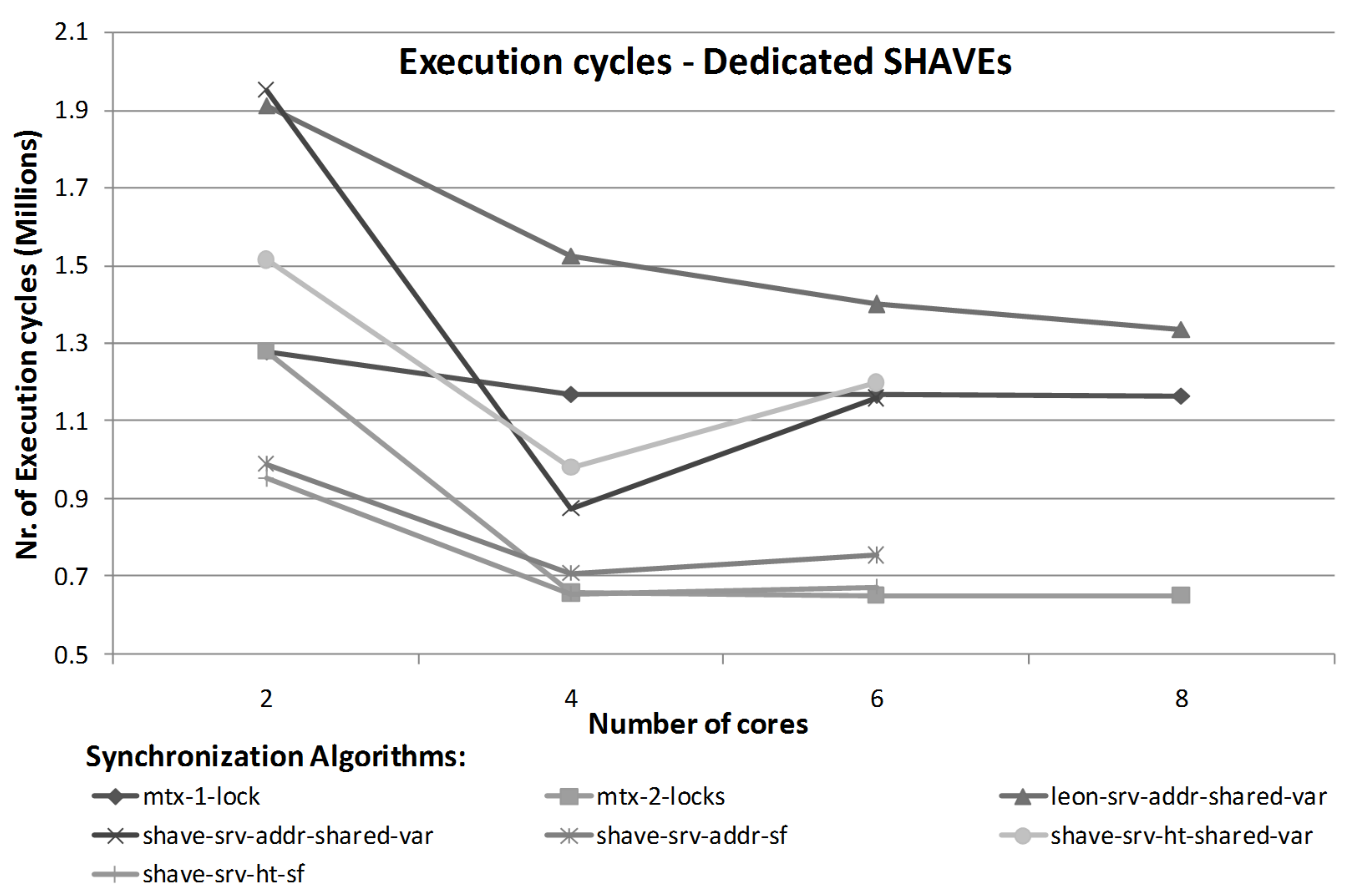}
\caption{Execution cycles for different synchronization algorithms, when half of the SHAVEs perform enqueue and half dequeue operations.}
\label{fig:4}       
\end{figure}

We performed two sets of experiments for evaluating the behavior of the designs: dedicated SHAVEs and random operations. In the ``dedicated SHAVEs'' experiment each SHAVE performs only one kind of operations. In other words, half of the SHAVEs enqueue and half dequeue elements to / from the data structure.  ``Random operations'' means that each SHAVE has equal probability to perform either an enqueue or a dequeue each time it prepares its next operation. 
%
\begin{figure}[t]
\centering
\includegraphics[width=0.80\textwidth]{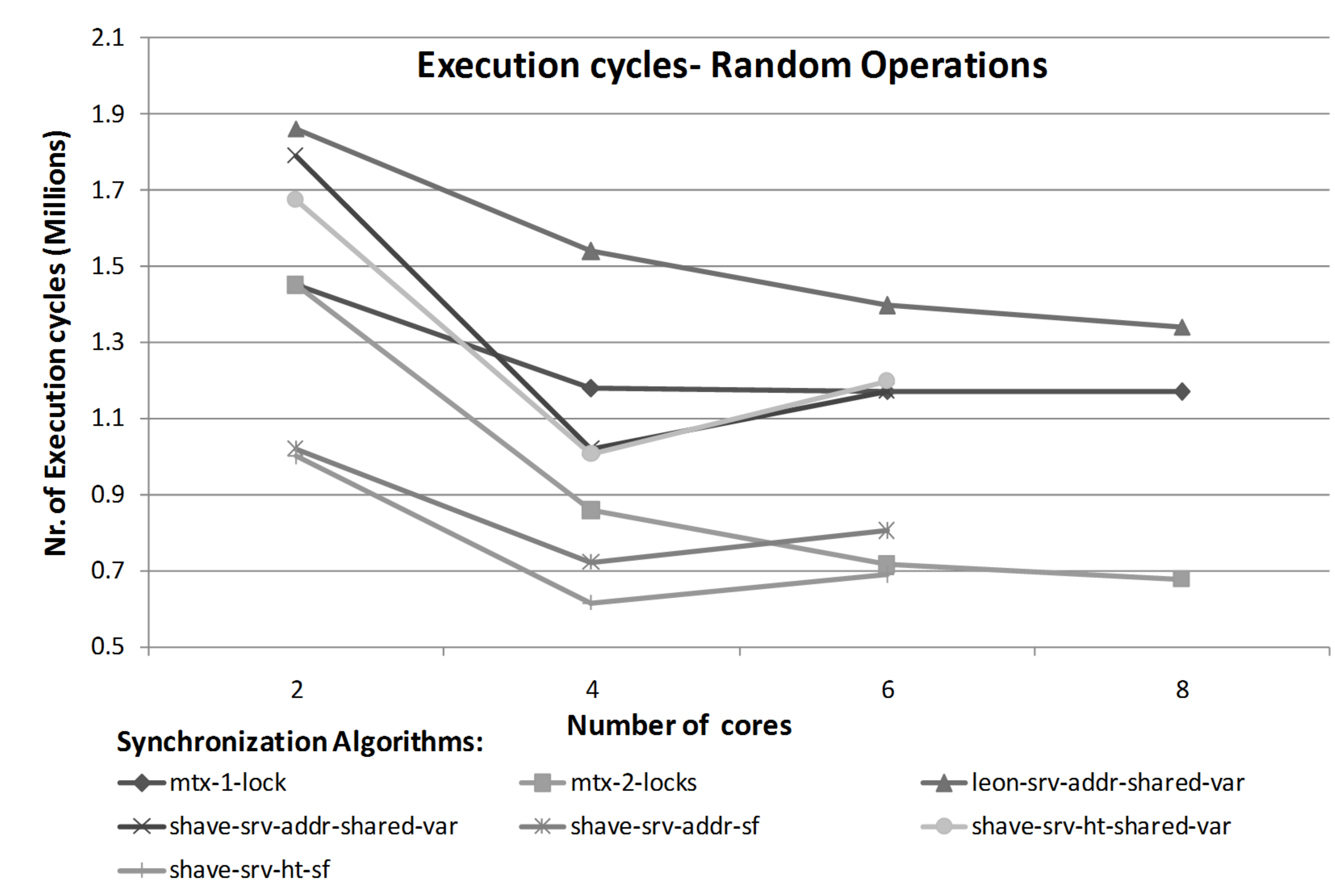}
\caption{Execution cycles for different synchronization algorithms, when the SHAVEs perform randomly enqueue and dequeue operations.}
\label{fig:5}       
\end{figure}
\paragraph{Execution Time Evaluation}
\label{sec:5.1}
In this subsection we present the experimental results for performance. \textit{mtx-2-locks} is the lock-based queue implementation with 2 locks, while \textit{mtx-1-lock} is the same implementation with a single lock. \textit{leon-srv-addr} refers to the Client-Server implementation, where the server is LEON and stores the addresses of the objects in a queue, while \textit{SHAVE-srv-addr} is the same implementation where a SHAVE is the server. \textit{leon-srv-ht} and \textit{shave-srv-ht} refers to the Client-Server implementation where the server (LEON and a SHAVE respectively) manages a head and a tail pointer to maintain the FIFO order. Finally, ``shared-var'' means that the communication is achieved using a shared buffer (i.e. shared variables) and ``sf'' means that the communication is made through the SHAVEs' FIFOs.

Figure~\ref{fig:4} and Figure~\ref{fig:5} show the execution time of dedicated SHAVEs and random operations respectively. We notice that the \textit{mtx-2-locks} implementation is the fastest one in the case of 8 SHAVEs and seems to scale well. This is expected, since it provides the maximum possible concurrency. It requires about half the number of execution cycles in comparison with the \textit{mtx-1-lock}. 

The SHAVEs' FIFOs implementations perform well, especially in the case of 4 SHAVEs in the random operations experiment (28.3\% in comparison to the mtx-2-locks). The utilization of SHAVEs' FIFOs for communication seems to be very efficient in terms of execution time. On the other hand, shared variables provide much lower execution time: For example, \textit{shave-srv-addr-shared-var} leads to 53.3\% more execution cycles in comparison with the \textit{shave-srv-addr-sf} in the dedicated SHAVEs experiment with 6 SHAVEs. Also, the implementations where the server maintains a head and tail pointer performs slightly better in most cases in comparison with the one where the server stores the addresses of the elements (up to 16.7\% in random operations for the 6 SHAVEs experiment). 

In most implementations, we notice a very large drop in the execution time from 2 to 4 SHAVEs, due to the increase of concurrency. In other words, in the 2 SHAVEs experiments there are time intervals where no client requests access to the shared data. However, in case of 4 clients or more, there is always a SHAVE accessing the critical section. Since the workload is fixed, there is a large drop in the execution time compared to the 2 client experiment. However, since access to the critical section is serialized, the execution time drop for more than 4 SHAVEs is much smaller (e.g. in case of \textit{mtx-2-locks}) or even non-existent (e.g. in \textit{mtx-1-lock}). 

Finally, in all experiments where a SHAVE is utilized as a server there is an increase in execution time from 4 to 6 SHAVEs. The reason for that is the overhead added by inter-slice accesses, which is larger than the decrease in execution time due to increased concurrency. The utilization of LEON as a server using shared variables for communication (i.e. \textit{leon-srv-addr-shared-var}) is inefficient since two overheads are accumulated: LEON is accessing variables in the CMX memory (which is more costly in comparison with SHAVEs) and the spinning on shared variables for achieving communication. 
%
\begin{figure}
\centering
\includegraphics[width=0.75\textwidth]{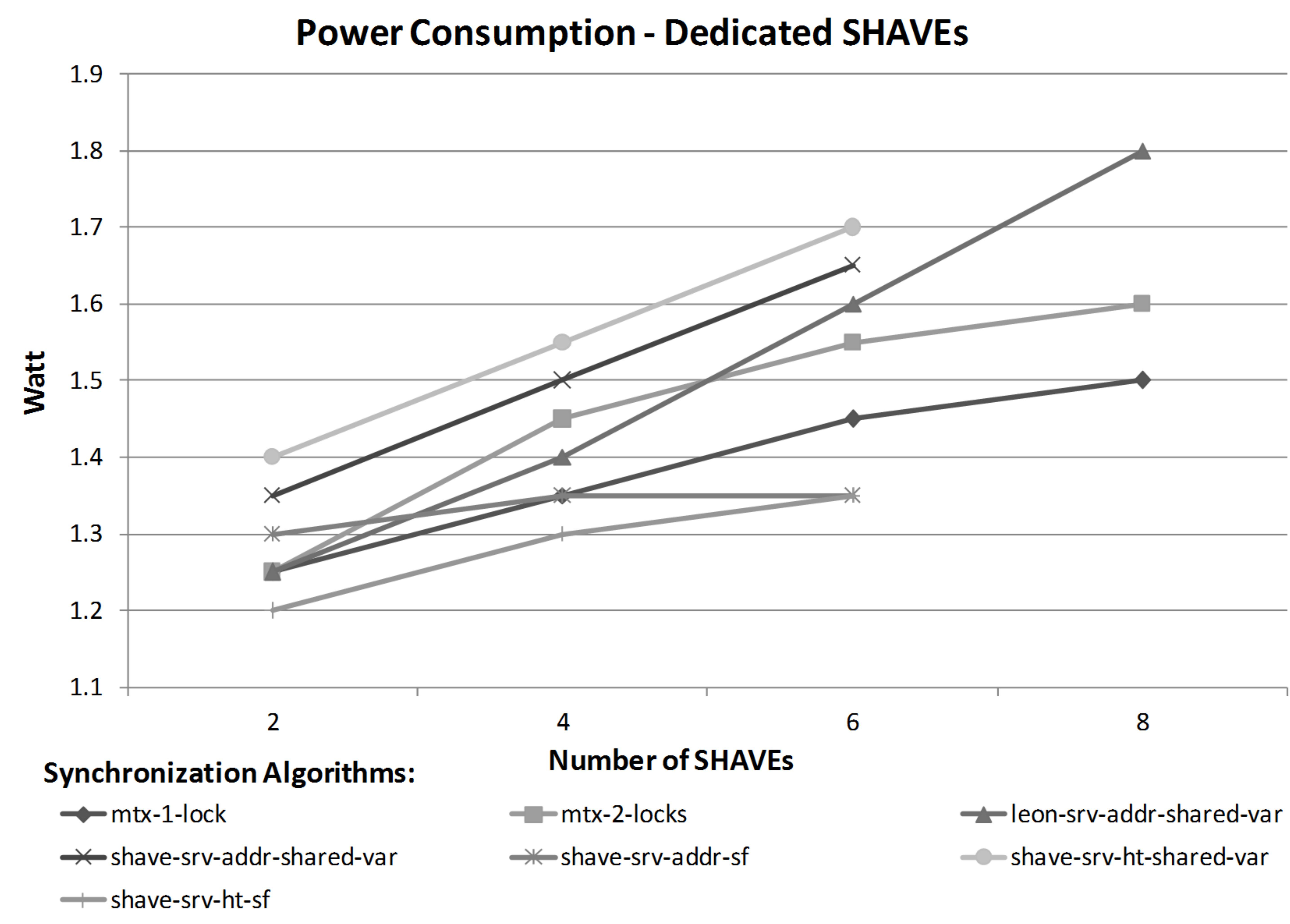}
\caption{Power consumption for different synchronization algorithms, when half of the SHAVEs perform enqueue and half dequeue operations.}
\label{fig:6}       
\end{figure}
%
\begin{figure}
\centering
\includegraphics[width=0.75\textwidth]{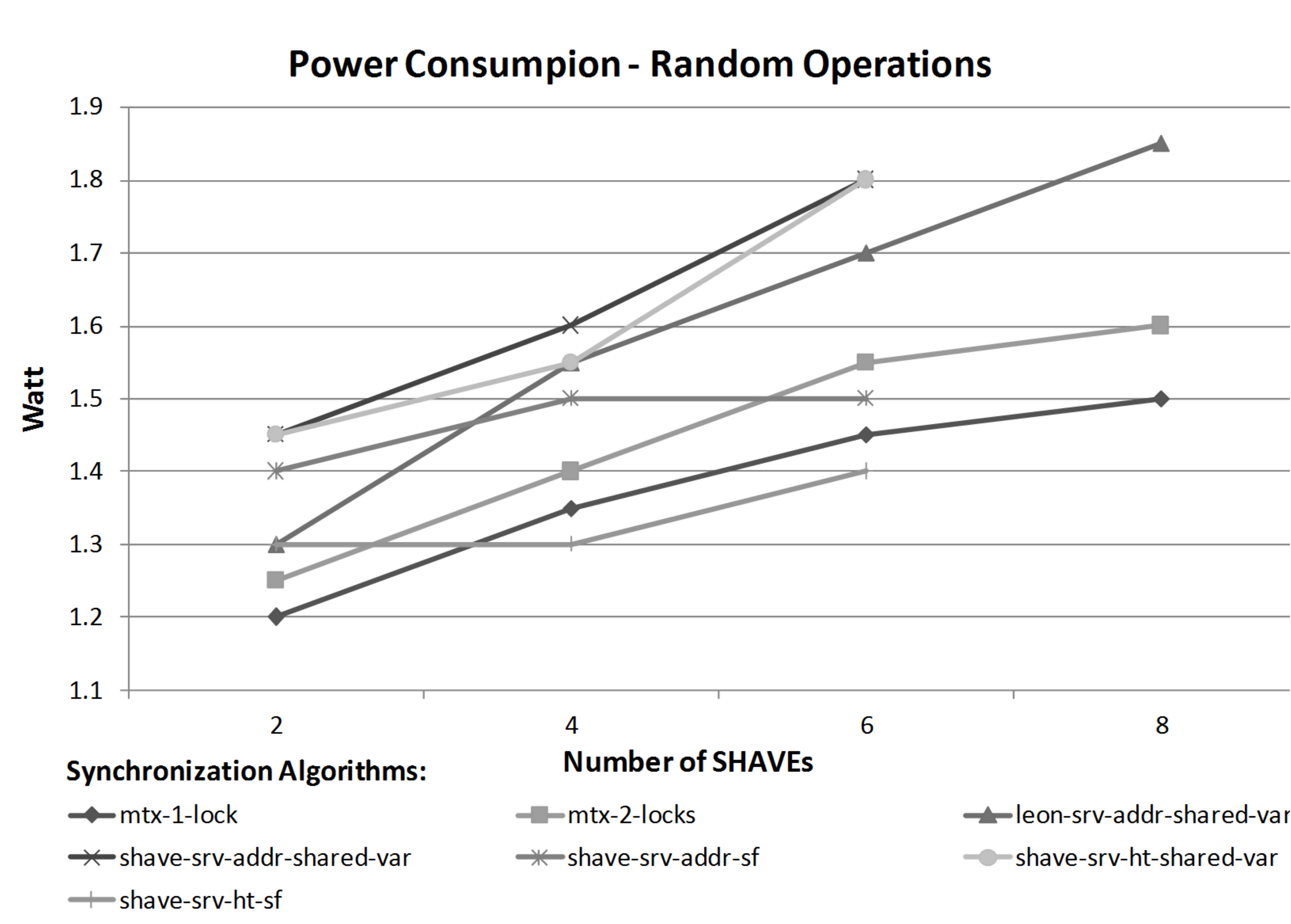}
\caption{Power consumption for different synchronization algorithms, when the SHAVEs perform randomly enqueue and dequeue operations.}
\label{fig:7}       
\end{figure}

\paragraph{Power Consumption Evaluation}
\label{sec:5.2}

%
\begin{figure}[b!]
\centering
\includegraphics[width=0.75\textwidth]{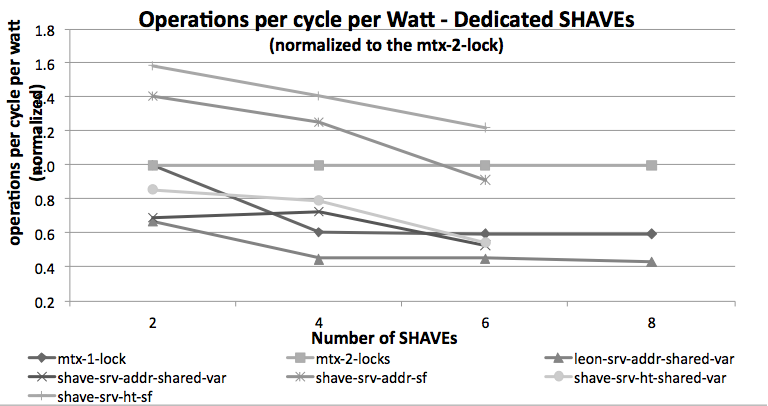}
\caption{Normalized energy per operation of the synchronization algorithms, when half of the SHAVEs perform enqueue and half dequeue operations.}
\label{fig:8}       
\end{figure}

As previously stated, power consumption was measured using a shunt resistor connected to the power supply of the platform. Figure~\ref{fig:6} and Figure~\ref{fig:7} show the power consumption for dedicated SHAVEs and random operations. For the 8 SHAVEs experiment the most power efficient implementation is the lock-based with a single lock (6.25\% in dedicated SHAVEs in comparison with the \textit{mtx-2-locks}). However, for a smaller number of clients, the SHAVEs' FIFOs implementations are the most power efficient. Indeed, power consumption drops up to 6.8\% for 6 SHAVEs in the dedicated SHAVEs experiment. 

We notice that the lock-based implementation with a single lock consumes less power than the 2-lock implementation. This is due to the fact that the power consumption is affected by the number of SHAVEs accessing the memory concurrently. In the single lock implementation only one SHAVE accesses the memory for performing operations. However, in the 2-locks implementation there are 2 SHAVEs which perform operations concurrently, while the rest are spinning on the locks. Therefore, the 2-lock algorithm consumes more power. 

Spinning on a lock consumes very low power, because mutexes are hardware implemented. Microbenchmarking experiments show that 8 SHAVEs spinning on a lock concurrently, consume about 20\% less power in comparison with the case where 8 SHAVEs access the memory concurrently. In fact, this is the case with the shared variables synchronization primitive. All shared variable implementations consume a lot of power, because spinning on a memory location is energy inefficient, even if the spinning takes place in a local CMX slice.  

SHAVEs' FIFOs communication method is the most energy efficient. When a SHAVE tries to write in a full FIFO or read from an empty FIFO stalls, until the FIFO gets non-full or non-empty respectively. Microbenchmarking experiments we performed show that 8 SHAVEs stalling in a FIFO consume about 28\% less power than spinning on a mutex. Indeed, stalling in FIFOs is common in our experiments, where the contention is high. The fact that this synchronization algorithm avoids spinning on memory locations and set SHAVEs to stall mode makes it very power efficient. 

\begin{figure}
\centering
\includegraphics[width=0.75\textwidth]{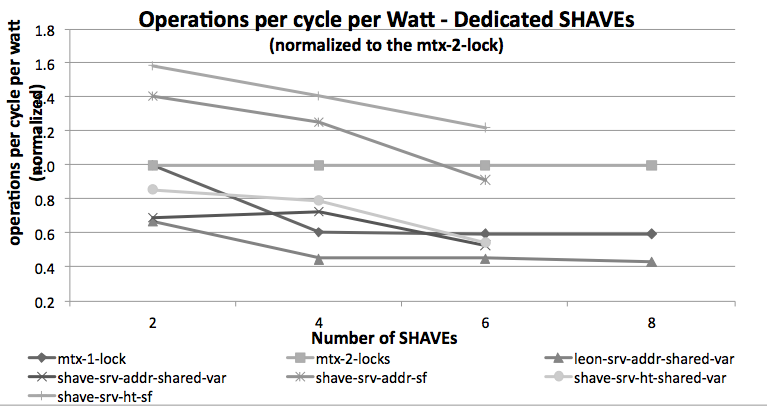}
\caption{Normalized energy per operation of the synchronization algorithms, when the SHAVEs perform randomly enqueue and dequeue operations.}
\label{fig:9}       
\end{figure}
\paragraph{Energy per operation Evaluation}
\label{sec:5.3}
To evaluate in more depth the synchronization algorithms, we present the energy per operation results in Figure~\ref{fig:8} and Figure~\ref{fig:9} for the dedicated and the random operations experiments respectively. The results are normalized to the \textit{mtx-2-locks} calculated values. We notice that the RCL implementations that utilize the SHAVE's FIFOs for communication between the clients and the server consume in almost all cases less energy per operation than the \textit{mtx-2-locks}. In the random operation experiment, \textit{shave-srv-ht-sf} consumes 33\% less energy per operation in comparison with the \textit{mtx-2-locks}. Indeed, when utilizing SHAVEs FIFOs instead of memory buffers for arbitration between the SHAVEs, the energy consumption is low. The shared buffer communication scheme is proven to be inefficient in terms of energy consumption. For instance, \textit{leon-srv-addr-shared-var} consumes more than two times energy per operation in comparison with the \textit{mtx-2-locks}.
\paragraph{Discussion}
\label{sec:5.4}
The mutex synchronization primitive is indeed efficient for the concurrent queue implementation in terms of both performance and power consumption (especially in the Myriad platform, where mutexes are hardware implemented and very optimized). However, our results show that RCL implementations provide very promising results for the queue implementations and in most cases perform similar to the lock-based ones. We expect that in future MPSoCs, where the number of cores will increase even further, client-server implementations will become even more efficient. 

The reason that the RCL implementations seem an attractive alternative to the lock-based ones is the fact that they transfer computational overhead of the critical sections from the cores, which are the queue workers (clients) to another dedicated core that plays the role of the server. The computational overhead of inserting and removing elements to / from the queue is transferred from the clients to the server. Therefore, while the server executes the critical section, the clients can proceed with other computations, thus increasing the parallelism and reducing the application execution time. Furthermore, by minimizing the communication overhead between the clients and the server (e.g. by utilizing the SHAVE's FIFOs), the results are very satisfactory. On the other hand, in the lock-based implementations, the computational overhead of accessing the queue elements is handled by the workers. However, in this case, simultaneous accesses to the data structure can be achieved, which is obviously not possible in the RCL algorithm. However, with these experiments we show that the RCL algorithm should be evaluated in embedded systems along with the lock-based solutions, especially in applications that use data structures which allow relatively low level of parallelism.

\clearpage

\section{Analysis of Energy Consumption of Concurrent Search Trees: $\Delta$Trees} \label{sec:DeltaTree_description}
\subsection{Introduction} \label{sec:DToverview}
Concurrent trees are fundamental data structures that are widely used in different contexts such as load-balancing \cite{DellaS00, HaPT07, ShavitA96} and searching \cite{Afek:2012:CPC:2427873.2427875, BronsonCCO10, Brown:2011:NKS:2183536.2183551, Crain:2012:SBS:2145816.2145837, DiceSS2006, EllenFRB10}.
Most of the existing highly-concurrent search trees are not considering the fine-grained
data locality. The non-blocking concurrent search trees 
\cite{Brown:2011:NKS:2183536.2183551, EllenFRB10} and Software Transactional
Memory (STM) search trees
\cite{Afek:2012:CPC:2427873.2427875, BronsonCCO10,
Crain:2012:SBS:2145816.2145837, DiceSS2006} have been regarded
as the state-of-the-art concurrent search trees. They have been proven
to be scalable and highly-concurrent. 
However these trees are not designed for fine-grained data locality. 
Prominent concurrent search trees which are often included in several benchmark 
distributions such as the concurrent red-black
tree \cite{DiceSS2006} by Oracle Labs and the concurrent AVL tree
developed by Stanford \cite{BronsonCCO10} are not designed for data locality either. It is challenging to devise search trees that are portable, highly concurrent and fine-grained locality-aware. A platform-customized locality-aware search trees \cite{KimCSSNKLBD10, Sewall:2011aa} are not portable while there are big interests of concurrent data structures for unconventional platforms~\cite{Ha:2010aa, Ha:2012aa}. Concurrency control techniques such as transactional memory~\cite{Herlihy:1993aa,Ha:2009aa} and 
multi-word synchronization~\cite{Ha:2005aa,Ha:2003aa,Larsson:2004aa} do not take into account fine-grained locality while fine-grained locality-aware techniques such as van Emde Boas layout \cite{Prokop99,vanEmdeBoas:1975:POF:1382429.1382477} poorly support concurrency.

\subsubsection{I/O model}
One of the most cited memory models is is the two-level I/O model
of Aggarwal and Vitter \cite{AggarwalV88}. In their seminal paper, Aggarwal and Vitter
postulated that the memory hierarchy consists of two levels, a main memory with size
$M$ (e.g. DRAM) and a secondary memory of infinite size (e.g. disks). Data is transferred 
in $B$-sized blocks between those two levels of memory and CPUs can only access data which are available in the main memory. In the I/O model, an algorithm time complexity is assumed to be dominated by how many block transfers are required.  

The simplicity and feasibility of this model has made this model popular. However 
to use this model, an algorithm has to know the $B$ and $M$ parameters in advance.
The problem is that these parameters are sometimes unknown 
and most importantly not portable between platforms.
For this I/O model, B-tree \cite{Bayer:1972aa} is an optimal search tree \cite{CormenSRL01}.

Concurrent B-trees \cite{BraginskyP12, Comer79, Graefe:2010:SBL:1806907.1806908,
Graefe:2011:MBT:2185841.2185842}
are optimised for a known memory block size $B$ (e.g. page size) to minimise the
number of memory blocks accessed during a search, thereby improving data
locality. 
In reality there are different block sizes at different levels of the memory hierarchy
that can be used in the design of locality-aware data layout for search trees. 
For example in \cite{KimCSSNKLBD10, Sewall:2011aa}, 
Intel engineers have come out with very fast search trees by crafting  
a platform-dependent data layout based on the register size, SIMD width, cache line size,
and page size. 
Namely, existing concurrent B-trees limits its spatial locality optimisation to the memory level
with block size $B$, leaving access to other memory levels with a different block size
unoptimised.

For example in this I/O model, a traditional B-tree that is optimised for searching data in disks (i.e.
$B$ is page size), where each node is an array of sorted keys, 
is optimal for transfers between a disk and RAM (cf. Figure \ref{fig:vEB-bfs}c).
However data transfers between RAM and last level cache (LLC) 
is no longer optimal.
For searching a key inside each $B$-sized block in RAM, the transfer complexity 
is $\Theta (\log (B/L))$ transfers between RAM and LLC, 
where $L$ is the cache line size.
Note that a search with optimal cache line transfers of $O(\log_L B)$ is achievable
by using the van Emde Boas layout \cite{Brodal:2004aa}.

\subsubsection{Ideal-cache model}
The \textit{ideal-cache} model was introduced by Frigo et. al. in \cite{Frigo:1999:CA:795665.796479},
which is similar to the I/O model except that the block size $B$ and memory size
$M$ are unknown.
This paper has coined the term \textit{cache-oblivious algorithms}. Using same
analysis of the Aggarwal and Vitter's two-level I/O model, an algorithm is categorised as 
\textit{cache-oblivious} if it has no variables that need to be tuned with respect to 
hardware parameters, 
such as cache size and cache-line length in order to achieve optimality, 
assuming that I/Os are performed by an optimal off-line cache replacement strategy.

Cache-oblivious algorithms by default have the optimal temporal locality, mainly
because of the unknown $M$. Thus, cache-oblivious algorithms mainly concentrate
on optimising spatial locality. Because it is optimal for an arbitrary size of the two levels of memory, a cache-oblivious
algorithm is also optimal for any adjacent pair of available levels of the memory hierarchy. Therefore without knowing anything about memory level hierarchy and the size of each level, a cache-oblivious
algorithm can automatically adapt to multiple levels of the memory hierarchy.

Empirical results have showed that a cache-oblivious algorithms are often outperform 
the basic RAM algorithms but 
not always as good as the carefully tuned (cache-aware) algorithms. However 
cache-oblivious algorithms perform better on multiple levels of memory hierarchy and are
more robust despite changes in memory size parameters compared to the cache-aware
algorithms \cite{Brodal:2004aa}. 

It is important to note that in the ideal-cache model, B-tree is no longer optimal
because of the unknown $B$. Instead, the trees that are described in  
the seminal paper \cite{vanEmdeBoas:1975:POF:1382429.1382477}  
by Peter van Emde Boas, are optimal.
The van Emde Boas (vEB) tree is an ordered dictionary data type which implements
the idea of a recursive structure of priority queues.
The efficient structure of the vEB tree arranges data
in a recursive manner so that related values are placed in contiguous memory
locations (cf. Figure \ref{fig:vEB}). This work has inspired many cache-oblivious 
data structures 
such as cache-oblivious B-trees \cite{BenderDF05, BenderFFFKN07, BenderFGK05} and cache-oblivious
binary trees \cite{BrodalFJ02}. These researches have demonstrated that 
vEB layout is suitable for cache oblivious algorithms as it lowers the upper bound on memory transfer complexity.

\begin{figure}[!t]
\centering \input{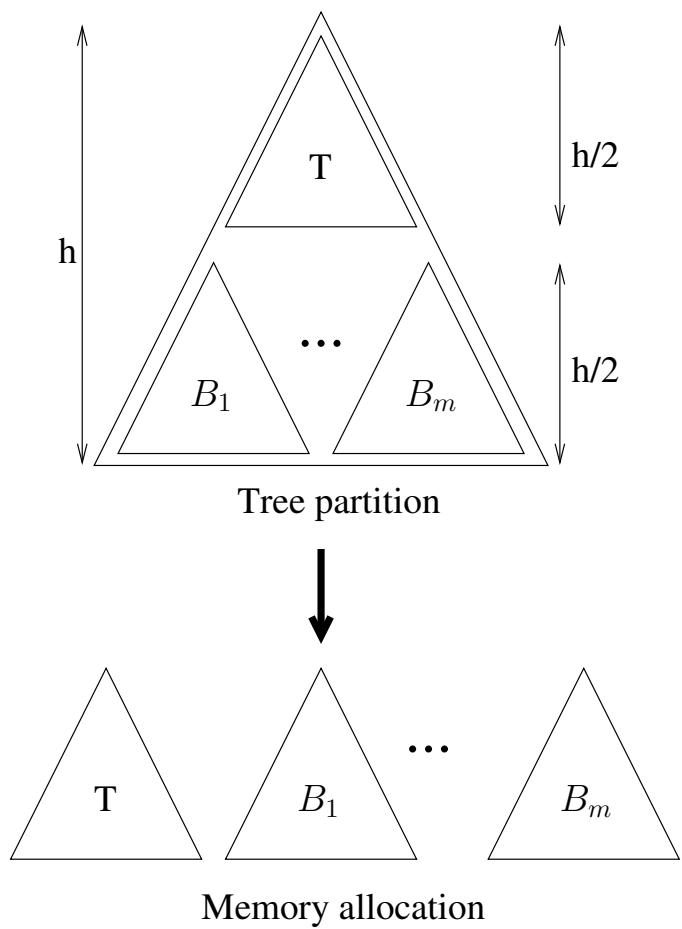}
\caption{Static van Emde Boas (vEB) layout: a tree of height $h$ is recursively split at height $h/2$. 
The top subtree $T$ of height $h/2$ and $m=2^{h/2}$ bottom subtrees $B_1;B_2; \ldots ;B_m$ 
of height $h/2$ are located in contiguous memory locations in the order of 
$T;B_1;B_2;\ldots;B_m$.}\label{fig:vEB}
\end{figure}

For example in a system where block size $B=3$, a search tree with Breadth First Search layout (or BFS tree for short) 
(cf. Figure \ref{fig:vEB-bfs}a) with height 4 will need to do three memory transfers to locate the key in leaf-node 13 in top-down traversing. The first two levels with
three nodes (1, 2, 3) will fit within a single block transfer, while the other two levels need to be loaded
in two separate memory transfers, where each of the transfer contains (6, 7, 8) and (13, 14, 15) nodes, 
respectively.
Therefore, required memory transfers is $(\log_2N-\log_2B) = \log_2 N/B \sim \log_2 N$
for $N \gg B$. 

However, for a vEB tree with the same height, the required memory transfers is only 
two. As seen in Figure \ref{fig:vEB-bfs}b, locating the key in leaf-node 12 requires only (1, 2, 3) nodes transfer 
followed by (10, 11, 12) nodes transfer.
This means the transfer complexity is now reduced to $\frac{\log_2N}{\log_2B} = \log_B N$, simply by crafting 
an efficient data structure so that nearby nodes are located in adjacent memory locations. 
If $B=1024$, traversing a BFS tree requires 10x more I/Os than a vEB tree. 

So far the vEB layout is shown to have $\log_2B$ less I/Os for two-level memory. 
On commodity machines where exists multiple-level memory, 
the vEB layout is outright efficient. In a typical machine having three 
levels of cache (with 64B cache line size), a RAM (with 4KB page size), and a disk;
vEB tree can deliver up to 640x less I/Os than BFS tree, assuming node size is 4 bytes (Figure \ref{fig:vEB-bfs}c).

\begin{figure}[!t]
\centering  \includegraphics[width=0.7\columnwidth]{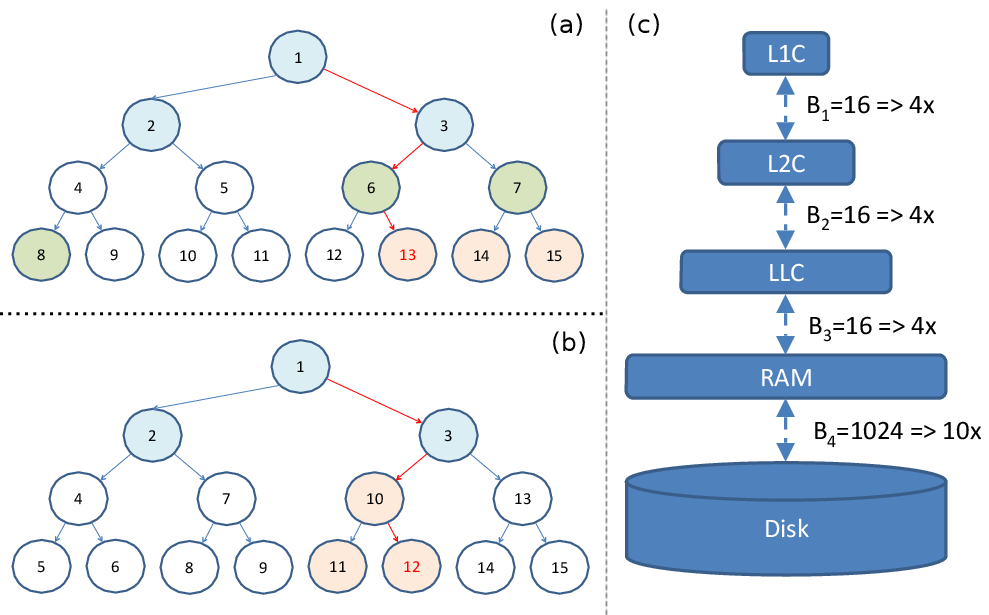}
\caption{Illustration of required memory transfers in searching for key 13 in (a) BFS tree layout and 
(b) vEB tree layout. An example of multiple levels of memory is shown in (c). $B_x$ is the block size $B$  between levels of memory.}\label{fig:vEB-bfs}
\end{figure}

However, while proven to perform well in searching, vEB trees poorly support concurrent update operations.
Inserting or deleting a node in a tree may result in relocating a large part of
the tree in order to maintain the vEB layout. The work in  \cite{BenderFGK05} has discussed this problem but a feasible implementation hasn't been reported yet \cite{BraginskyP12}. We would like to refer the readers to \cite{Brodal:2004aa, Frigo:1999:CA:795665.796479} 
for a more comprehensive overview of the I/O model and the ideal-cache model.

We introduce \textit{$\Delta$Tree} family, novel locality-aware concurrent search trees.
$\Delta$Tree is an unbalanced locality-aware concurrent search tree of $\Delta$Nodes 
whose Search, Insert, and Delete are non-blocking to each other, 
but Insert and Delete may be occasionally blocked by maintenance
operations within a $\Delta$Node. $\Delta$Node is a fixed size tree-container with the van Emde Boas layout (cf. Figure \ref{fig:dynamicVEB}(left)).

\textit{BalancedDT} is a balanced $\Delta$Tree with pointer-less $\Delta$Nodes, 
  enabling $\sim$200\% more keys to fit in a $\Delta$Node, resulting in 90\% improvement 
  in performance and 2x improvement in energy efficiency compared to $\Delta$Tree.

\textit{HeterogeneousBDT} is a "heterogeneous" BalancedDT where its inner $\Delta$Nodes
  are leaf-oriented but its leaf $\Delta$Nodes are not, enabling 100\% more keys fitting in the leaf $\Delta$Nodes, 
  resulting in 20\% improvement in performance and 50\% improvement in energy efficiency compared to BalancedDT.

\subsection{Overview on the van Emde Boas layout}

We propose a modification to the original (static) van Emde Boas (vEB) layout
to support high concurrency and fast update operations. This effort results in cache-oblivious concurrent search trees in the form of a {\em dynamic} vEB layout. 
We first define the following notations that will be use to elaborate more on the idea:
\begin{itemize}

\item $b_i$ (unknown): block size in terms of the number of nodes at level $i$ of the memory
hierarchy (like $B$ in the I/O model \cite{AggarwalV88}), which is unknown as in the cache-oblivious model \cite{Frigo:1999:CA:795665.796479}. When the specific level $i$ of the memory
hierarchy is irrelevant, we use notation $B$ instead of $b_i$ in order to be
consistent with the I/O model.

\item $UB$ (known): the upper bound (in terms of the number of nodes) on the
block size $b_i$ of all levels $i$ of the memory hierarchy.

\item {\em $\Delta$Node}: the largest recursive subtree of a van Emde Boas-based 
search tree that contains at most $UB$ nodes (cf. dashed triangles of height $2^L$ in
Figure \ref{fig:search_complexity}). $\Delta$Node is a fixed-size tree-container
with the vEB layout.

\item "level of detail" $k$ is a partition of the tree into recursive subtrees of 
height at most $2^k$. 

\item Let L be the level of detail of $\Delta$Node. Let $H$ be the height
of a $\Delta$Node, we have $H = 2^L$. For simplicity, we assume $H = \log_2
(UB+1)$.

\item $N, T$: size and height of the whole tree in terms of basic nodes (not in
terms of $\Delta$Nodes).

\end{itemize}  

\subsubsection{Static van Emde Boas (vEB) layout} \label{subsec:staticveb}

The conventional {\em static} van Emde Boas (vEB) layout has been introduced in
cache-oblivious data structures \cite{BenderDF05, BenderFFFKN07, BenderFGK05, BrodalFJ02,
Frigo:1999:CA:795665.796479}. Figure \ref{fig:vEB} illustrates the vEB layout.
Suppose we have a complete binary tree with height $h$. For simplicity, we
assume $h$ is a power of 2, i.e. $h=2^k$.
The tree is recursively laid out in the memory as follows. The tree is
conceptually split between nodes of height $h/2$ and $h/2+1$, resulting in a top
subtree $T$ and $m_1 = 2^{h/2}$ bottom subtrees $B_1, B_2, \cdots, B_m$ of
height $h/2$. The $(m_1 +1)$ top and bottom subtrees are then located in
consecutive memory locations in the order of subtrees $T, B_1, B_2, \cdots,
B_m$. Each of the subtrees of height $h/2$ is then laid out similarly to $(m_2 +
1)$ subtrees of height $h/4$, where $m_2 = 2^{h/4}$. The process continues until
each subtree contains only one node, i.e. the finest {\em level of detail}, 0.

The main feature of the vEB layout is that the cost of any search in this layout
is $O(\log_B N)$ memory transfers, where $N$ is the tree size and $B$ is the {\em
unknown} memory block size in the I/O model \cite{AggarwalV88} or 
ideal-cache \cite{Frigo:1999:CA:795665.796479} model. Namely, its search is cache-oblivious. 
The search cost is the
optimal and matches the search bound of B-trees that requires the memory block
size $B$ to be known in advance. Moreover, at any level of detail, each subtree
in the vEB layout is stored in a contiguous block of memory.

Although the static vEB layout is helpful for utilising data locality, it poorly
supports concurrent update operations. Inserting (or deleting) a node at
position $i$ in the contiguous block storing the tree may restructure a large
part of the tree. For example, inserting
new nodes in the full subtree $B_1$ (a leaf subtree) in Figure \ref{fig:vEB} will  affect the other
subtrees $B_2, B_3, \cdots, B_m$ by by rebalancing existing nodes between $B_1$ 
and the subtrees in order to have
space for new nodes. Even worse, we will need to allocate a new contiguous block of
memory for the whole tree if the previously allocated block of memory for
the tree runs out of space \cite{BrodalFJ02}. Note that we cannot use dynamic
node allocation via pointers since at {\em any} level of detail, each subtree in the vEB layout must be stored in a {\em
contiguous} block of memory.

\subsubsection{Relaxed cache-oblivious model and dynamic vEB layout} \label{sec:relaxed-veb}

\begin{figure}[!htbp]
\centering
\scalebox{0.5}{\input{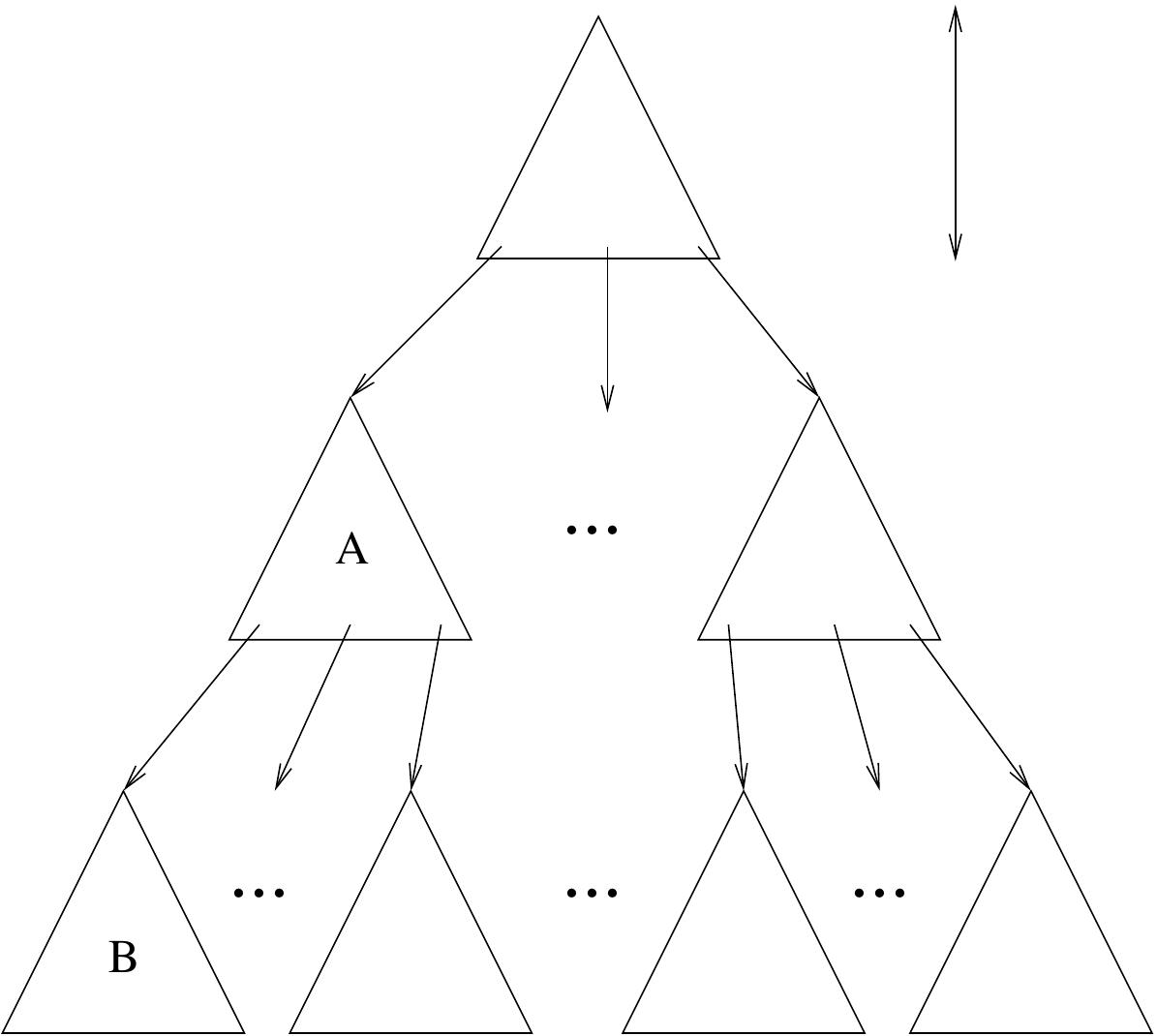_t}}
\scalebox{1}{}
\scalebox{0.5}{\input{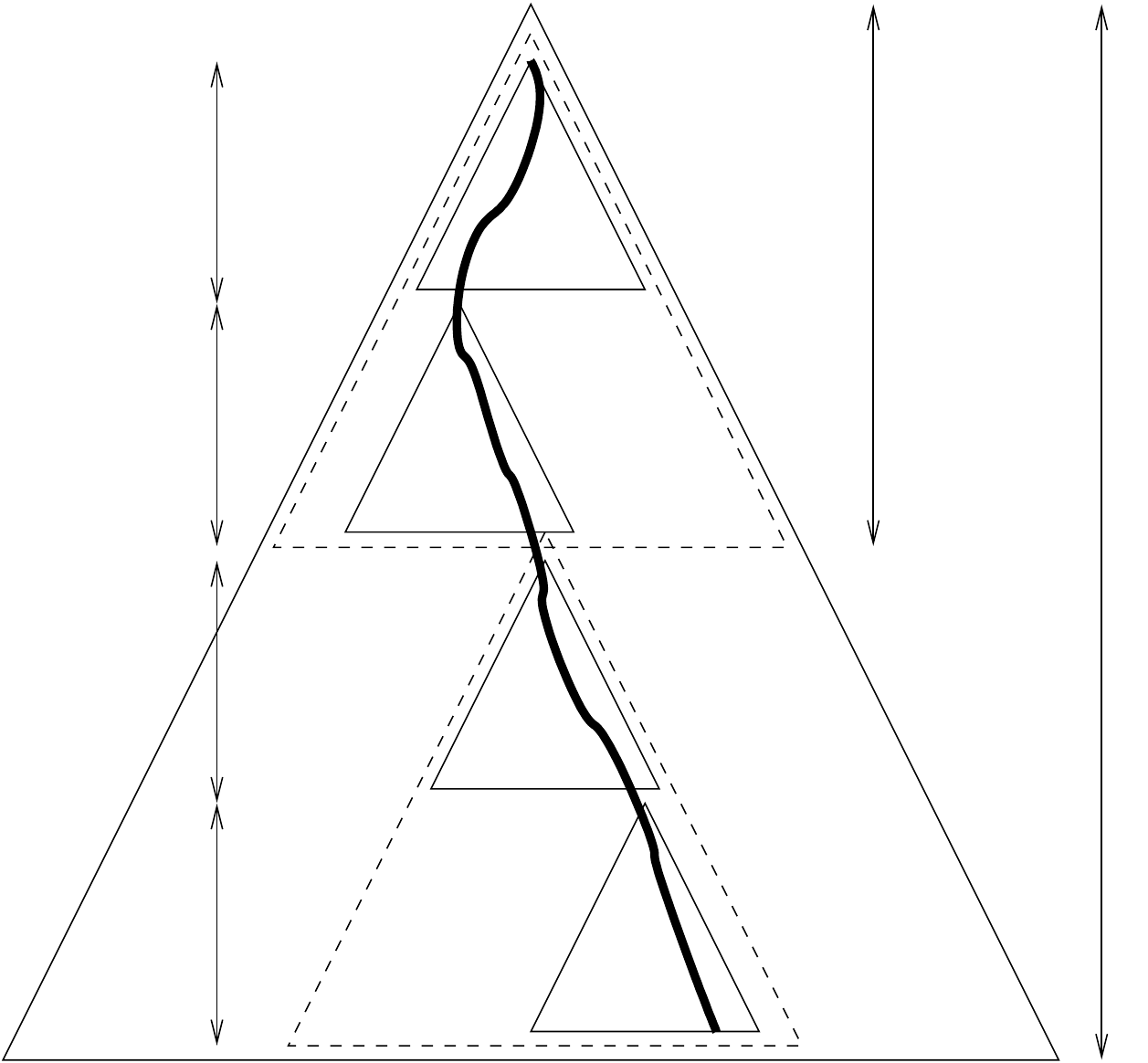_t}}
\caption{{\em(left)}: New dynamic vEB layout. {\em(right)}: 
Search using dynamic vEB layout.}\label{fig:search_complexity}\label{fig:dynamicVEB}
\end{figure}

In order to make the vEB layout suitable for highly concurrent data structures
with update operations, we introduce a novel {\em dynamic} vEB layout. Our key
idea is that if we know an upper bound $UB$ on the unknown memory block size
$B$, we can support dynamic node allocation via pointers while maintaining the
optimal search cost of $O(\log_B N)$ memory transfers without knowing $B$ (cf.
Lemma \ref{lem:dynamic_vEB_search}). 

We define {\em relaxed cache oblivious} algorithms to be cache-oblivious (CO)
algorithms with the restriction that an upper bound $UB$ on the unknown memory
block size $B$ is known in advance. 
As long as an upper bound on all the block
sizes of multilevel memory is known, the new relaxed CO model maintains the key
feature of the original CO model \cite{Frigo:1999:CA:795665.796479}: 
First, temporal locality is exploited perfectly as there is no constraints on cache size
$M$ in the model. With this an optimal offline cache replacement policy can be
assumed. In practice, the Least Recently Used (LRU) policy with memory of size 
$(1+\epsilon)M$, where $\epsilon>0$, is nearly as good as the optimal replacement policy
with memory of size $M$ \cite{Sleator:1985:AEL:2786.2793};
Second, analysis for a simple two-level memory
are applicable for an unknown multilevel memory (e.g. registers, L1/L2/L3 caches
and memory). Namely, an algorithm that is optimal in terms of data movement for a 
simple two-level memory is asymptotically optimal for an unknown multilevel memory. 
This feature enable designing algorithms that can utilise fine-grained data locality 
in deep memory hierarchy of modern architectures. In
practice, although the exact block size at each level of the memory hierarchy is
architecture-dependent (e.g. register size, cache line size), obtaining a single upper
bound for all the block sizes (e.g. register size, cache line size and page size)
is easy. For example, a page size obtained from the operating system is such
an upper bound.
        
Figure \ref{fig:dynamicVEB} illustrates the new dynamic vEB layout based on the
relaxed cache oblivious model. Let $L$ be the coarsest level of detail such that
every recursive subtree contains at most $UB$  nodes. Namely, let $H$ and $S$ 
be the height and size of such a balanced subtree then $H=2^L$ and $S=2^H < UB$.
The tree is recursively
partitioned into level of detail $L$ where each subtree represented by a
triangle in Figure \ref{fig:dynamicVEB},  is stored in a contiguous memory block
of size $UB$. Unlike the conventional vEB, the subtrees at level of detail $L$
are linked to each other using pointers, namely each subtree at level of detail
$k > L$ is not stored in a contiguous block of memory.  Intuitively, since $UB$
is an upper bound on the unknown memory block size $B$, storing a subtree at
level of detail $k > L$ in a contiguous memory block of size greater than $UB$,
does not reduce the number of memory transfers, provided there is perfect alignment. 
For example, in Figure
\ref{fig:dynamicVEB}, a travel from a subtree $A$ at level of detail $L$, which
is stored in a contiguous memory block of size $UB$, to its child subtree $B$ at
the same level of detail will result in at least two memory transfers: one for
$A$ and one for $B$. Therefore, it is unnecessary to store both $A$ and $B$ in a
contiguous memory block of size $2UB$. As a result, the memory transfer cost of any search in
the new dynamic vEB layout is intuitively the same as that of the conventional
static vEB layout (cf. Lemma \ref{lem:dynamic_vEB_search}) while the dynamic vEB supports
highly concurrent update operations.

\begin{figure}[!htbp]\centering  \includegraphics[width=0.7\columnwidth]{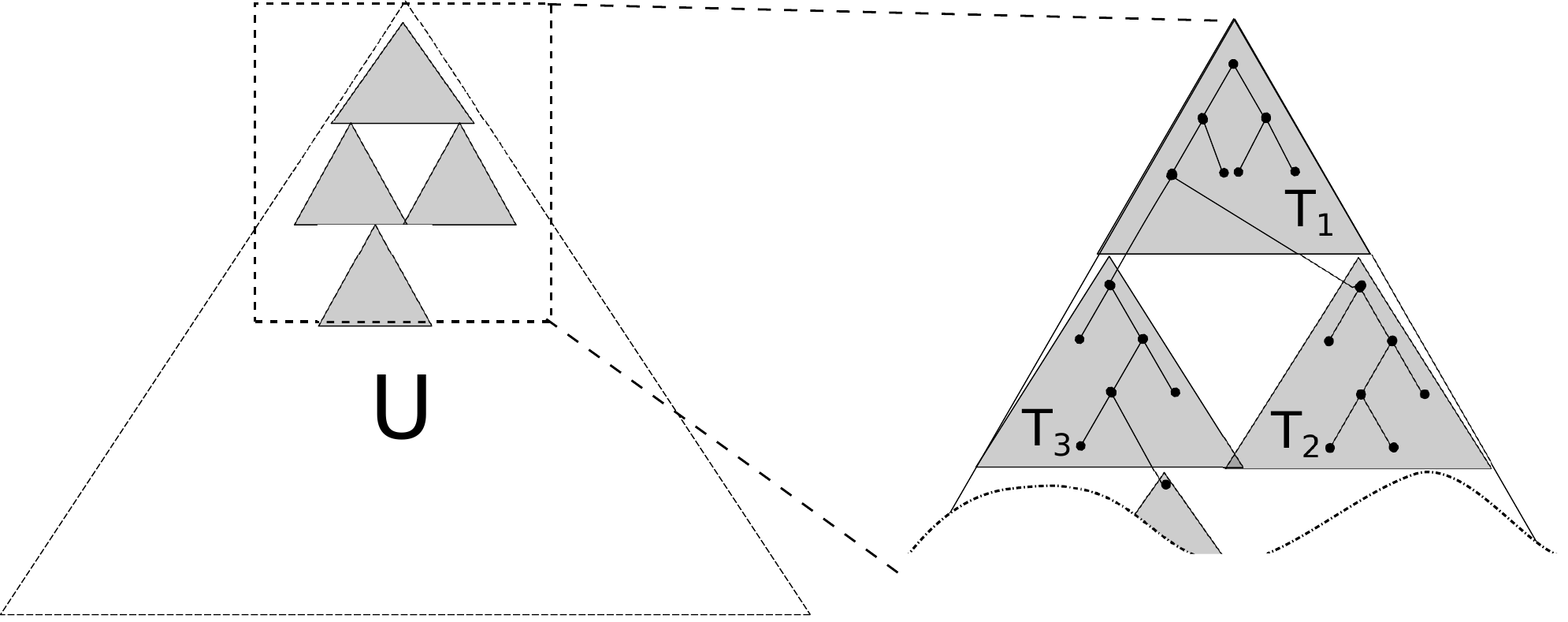}
\caption{Depiction of a DeltaNode $U$. Triangles $T_x$ represent the $\Delta$Nodes.}
\label{fig:treeuniverse}
\end{figure}

Let $\Delta$Node be a subtree at level of detail $L$, which is stored in a
contiguous memory block of size $UB$ and is represented by a triangle in Figure
\ref{fig:dynamicVEB}.

\begin{lemma}
A search in a complete binary tree with the new dynamic vEB layout needs 
$O(\log_B N)$ memory transfers, where $N$ and $B$ is the tree size and 
the {\em unknown} memory block size in the ideal cache model
\cite{Frigo:1999:CA:795665.796479}, respectively.
\label{lem:search_mem} \label{lem:dynamic_vEB_search}
\end{lemma}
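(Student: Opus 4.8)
The plan is to reduce the claim to the known memory-transfer bound for the \emph{static} vEB layout, and then to show that replacing contiguous storage by pointer links at levels of detail coarser than $L$ costs nothing asymptotically. First I would recall the standard recursive analysis of the static vEB layout (cf. \cite{BrodalFJ02, Frigo:1999:CA:795665.796479, Brodal:2004aa}): decompose the tree by levels of detail and consider the coarsest level $k^\star$ at which every recursive subtree (``triangle'') has height at most $\log_2 B$. Each such triangle then contains fewer than $B$ nodes and, being stored contiguously, is spanned by at most two physical blocks of size $B$; since its height exceeds $\tfrac12\log_2 B$, a root-to-leaf path of length $T = \Theta(\log N)$ meets only $O(T/\log B) = O(\log_B N)$ such triangles, for a total of $O(\log_B N)$ transfers.

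The key observation that carries this argument over to the dynamic layout is that $UB \ge B$ by definition, so the $\Delta$Node height satisfies $H = \log_2(UB+1) > \log_2 B$ and each $\Delta$Node holds at least $B$ nodes. Consequently the height-$(\le \log_2 B)$ triangles used in the static analysis all lie strictly inside individual $\Delta$Nodes, where the layout \emph{is} the contiguous static vEB layout. Hence every such triangle is still spanned by at most two physical blocks, exactly as before, and the within-$\Delta$Node part of the count is unchanged at $O(\log_B N)$.

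It then remains to account for the pointer hops incurred when the search crosses from one $\Delta$Node to the next. Since each $\Delta$Node has height $H$ and the whole search path has length $T = \Theta(\log N)$, the path visits at most $\lceil T/H \rceil$ $\Delta$Nodes; using $H > \log_2 B$ this is $O(\log N/\log B) = O(\log_B N)$. Each transition loads the child $\Delta$Node's block at a cost of $O(1)$ transfers, so these hops contribute only $O(\log_B N)$ in total. Adding the two contributions yields the claimed $O(\log_B N)$ bound.

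I expect the main obstacle to be making precise the assertion that the pointer-linked, non-contiguous storage at levels coarser than $L$ does not inflate the asymptotic count. The crux is exactly the inequality $B \le UB$: it guarantees that the recursive subtrees which actually govern the transfer count (those of size at most $B$) never straddle a $\Delta$Node boundary, so the only places where contiguity is lost are boundaries that, in the static layout, already force a fresh block transfer. I would also verify the boundary bookkeeping---the off-by-one in whether a height-$\log_2 B$ triangle spans one or two blocks, and the rounding in $\lceil T/H\rceil$---but these affect only constants, not the order $O(\log_B N)$.
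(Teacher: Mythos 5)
Your proposal is correct and follows essentially the same route as the paper's own proof: both rest on the key observation that, because $B \le UB$, the recursive subtrees at the coarsest level of detail containing at most $B$ nodes never straddle a $\Delta$Node boundary, hence each is stored contiguously in at most two memory blocks, and a root-to-leaf path of length $\Theta(\log N)$ meets only $O(\log_B N)$ of them. Your separate accounting of inter-$\Delta$Node pointer hops is redundant---those block loads are already included in the per-subtree count, since the first subtree inside the child $\Delta$Node is itself one of the counted triangles---but it is harmless for the upper bound.
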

\begin{proof} (Sketch)
Figure \ref{fig:search_complexity} illustrates the proof.  
Let $k$ be the coarsest level of detail such that every recursive subtree 
contains at most $B$ nodes. Since $B \leq UB$, $k \leq L$, where $L$ is 
the coarsest level of detail at which every recursive subtree 
contains at most $UB$ nodes. That means there are at most $2^{L-k}$ subtrees
along the search path in a $\Delta$Node and no subtree of depth $2^k$ is split 
due to the boundary of $\Delta$Nodes. Namely, triangles of height $2^k$ fit
within a dashed triangle of height $2^L$ in Figure \ref{fig:search_complexity}. 

Due to the property of the new dynamic vEB layout that at any level of detail 
$i \leq L$, a recursive subtree of depth $2^i$ is stored in a contiguous block 
of memory, each subtree of depth $2^k$ {\em within} a $\Delta$Node is stored in
at most 2 memory blocks of size $B$ (depending on the starting location of
the subtree in memory). Since every subtree of depth $2^k$ fits in a
$\Delta$Node (i.e.
no subtree is stored across two $\Delta$Nodes), every subtree of depth $2^k$ is 
stored in at most 2 memory blocks of size $B$.

Since the tree has height $T$, $\lceil T / 2^k \rceil$ subtrees of depth $2^k$ 
are traversed in a search and thereby at most  $2  \lceil T / 2^k \rceil$ memory 
blocks are transferred. 

Since a subtree of height $2^{k+1}$ contains more than $B$ nodes, 
$2^{k+1} \geq \log_2 (B + 1)$, or $2^{k} \geq \frac{1}{2} \log_2 (B+ 1)$. 

We have $2^{T-1} \leq N \leq 2^T$ since the tree is a {\em complete} binary tree. 
This implies $ \log_2 N \leq T \leq \log_2 N +1$.  

Therefore, the number of memory blocks transferred in a search is 
$2  \lceil T / 2^k \rceil \leq 4 \lceil \frac{\log_2 N + 1}{\log_2 (B + 1)} \rceil 
= 4 \lceil \log_{B+1} N + \log_{B+1} 2\rceil$ $= O(\log_B N)$, where $N \geq 2$.
\end{proof}

\subsection{$\Delta$Tree implementation}\label{sec:implementation}

Figure \ref{fig:treeuniverse} illustrates a $\Delta$Tree $U$. 
$U$ uses the dynamic vEB layout presented in Section \ref{sec:relaxed-veb}.
The $\Delta$Tree consists of $|U|$ $\Delta$Nodes of fixed size $UB$ each of
which contains a \textit{leaf-oriented} binary search tree (BST) $T_i, i=1,
\dots,|U|$. $\Delta$Node's internal nodes are put together in cache-oblivious
fashion using the vEB layout.

\begin{figure}[!htbp]
\centering 
\begin{algorithmic}[1] 
\Start{\textbf{node} $n$} \label{lst:line:nodestruct}
	\START{member fields:}
\State $tid \in \mathbb{N}$, if $> 0$ indicates the node is \textit{root} of a \par
	\hskip\algorithmicindent $\Delta$Node with an id of $tid$ ($T_{tid}$) 
\State $value \in
	\mathbb{N}$, the node value, default is \textbf{empty} 
\State $mark \in \{true,false\}$, a value of \textbf{true} indicates a logically \par 
	\hskip\algorithmicindent deleted node
\State $left, right \in \mathbb{N}$, left / right child pointers 
\State $isleaf \in {true,false}$, indicates whether the \par 
	\hskip\algorithmicindent node is a leaf of a $\Delta$Node, default is \textbf{true} 	\label{lst:line:leafdefault}
\END \End

\Statex
\Start{\textbf{triangle} $S$} \label{lst:line:trianglestruct}
\START{member fields:}
\State $\mathit{nodes}$, a group of pre-allocated node $n$ $\{n_1,n_2,\ldots,n_{UB}\}$
\State $\mathit{buffer}$, an array of value with a length \par
\hskip\algorithmicindent  of the current number of threads 
\END \End

\Statex
\Start{\textbf{$\Delta$Node} $T$} \label{lst:line:deltanodestruct}
\START{member fields:}
\State $\mathit{locked}$, indicates whether a $\Delta$Node is locked 
\State $\mathit{opcount}$, a counter for the active update operations 
\State $\mathit{root}$, pointer the root node of the $\Delta$Node ($S_x.n_1$)
\State $\mathit{rootbuffer}$, pointer the buffer of the $\Delta$Node ($S_x.\mathit{buffer}$)
\State $\mathit{mirror}$, pointer to root node of the $\Delta$Node's mirror \par 
	\hskip\algorithmicindent ($S_{x'}.n_1$)
\State $\mathit{mirrorbuffer}$, pointer to buffer of the $\Delta$Node's mirror \par 
	\hskip\algorithmicindent ($S_{x'}.\mathit{buffer}$)
\END \End

\Statex
\Start{\textbf{universe} $U$} 		\label{lst:line:universe}
	\START{member fields:}
\State $root$, pointer to the $root$ of the topmost $\Delta$Node ($T_1.root$) 
\END \End
\end{algorithmic}
\caption{$\Delta$Tree's data structures.} \label{lst:datastruct}
\end{figure}

The $\Delta$Tree $U$ acts as a dictionary of abstract data types. It maintains
a set of values which are members of an ordered universe \cite{EllenFRB10}. 
The $\Delta$Tree $U$ provides the following operations: \textsc{insertNode($v, U$)}, 
which adds value $v$ to the set $U$, \textsc{deleteNode($v, U$)} 
for removing a value $v$ from the set, and
\textsc{searchNode($v, U$)}, which determines whether value $v$ exists in the
set. We use the term \textit{update} operation for either insert or delete
operation. We assume that duplicate values are not allowed inside the set and a
special value, for example $0$, is reserved as an indicator of an \textsc{Empty}
value.

\subsubsection{Data structures}

The implementation of $\Delta$Tree utilises the data structures defined in
Figure \ref{lst:datastruct}. The topmost level of $\Delta$Tree is represented by
a struct \textsc{universe} (line \ref{lst:line:universe}) that only contains a pointer to 
the root of the topmost $\Delta$Node.

Each $\Delta$Node that forms the $\Delta$Tree is represented by the struct
\textsc{$\Delta$Node} (line \ref{lst:line:deltanodestruct}). Each $\Delta$Node has an associated mirror. 
This structure consists of a field \textbf{opcount}, which is a counter that indicates 
how many insert/delete threads that
are currently operating within that $\Delta$Node; field \textbf{locked} that indicates 
whether a $\Delta$Node is currently locked by maintenance operations: 
when it is set to \textit{true}, no insert/delete threads are allowed to get in. The \textbf{root} pointer serves as the root
of a $\Delta$Node, while the pointer \textbf{mirror} references root of the $\Delta$Node's mirror. 
Also there is \textbf{rootbuffer} and \textbf{mirrorbuffer} pointers
that reference the $\Delta$Node's buffer and the mirror's buffer, respectively. 

Each \textsc{node} structure (line \ref{lst:line:nodestruct}) contains field \textbf{value}, 
which holds a value that will guide the search or a data value at a leaf-node. 
Field \textbf{mark} is used to indicate whether a value
is logically deleted. A \textit{true} value of \textbf{isleaf} indicates a leaf
node (as in the leaf-oriented tree), and \textit{false} otherwise. Field \textbf{tid} is a unique
identifier of a corresponding $\Delta$Node and it is used to let a thread know whether it has moved
between $\Delta$Nodes. 

\subsubsection{Function specifications}

\begin{figure}[!htbp]
\centering \begin{algorithmic}[1] 
\Function{searchNode}{$v, U$}
\State $lastnode, p \gets U.root$
\While{$p \neq$ end of tree $\And p.isleaf \neq$ TRUE}	\label{lst:line:searchifleaf}
    \State $lastnode \gets p$	\label{lst:line:lasnode-p}
        \If{$p.value < v$}		\label{lst:line:searchless}
            \State $p \gets p.left$
        \Else					\label{lst:line:searchelse}
            \State $p \gets p.right$  \label{lst:line:search-end}
        \EndIf
\EndWhile
\If{$lastnode.value = v$} 		\label{lst:line:linsearch3}
	\If{$lastnode.mark =$ FALSE}	 \Comment{lastnode is not deleted}	\label{lst:line:linsearch1}
		\State\Return TRUE 
	\Else 
		\State\Return FALSE 
	\EndIf 
\Else \State Search (last visited $\Delta$Node's \textit{rootbuffer}) for $v$	\label{lst:line:searchbuffer}
	\If{$v$ is found}							\label{lst:line:linsearch2}
		\State\Return TRUE 
	\Else 
		\State\Return FALSE 
	\EndIf 
\EndIf 
\EndFunction
\end{algorithmic}
\caption{$\Delta$Tree's wait-free search algorithm.}\label{lst:nodeSearch}
\end{figure}

\begin{figure}[!th]
\centering
\scriptsize
\begin{minipage}[t]{0.47\linewidth}
\begin{algorithmic}[1]

\Function{insertNode}{$v,U$} 
\State $t \gets U.root$
\State \Return \textsc{insertHelper}($v,t$)
\EndFunction

\Function{deleteNode}{$v,U$} 
\State $t \gets U.root$
\State \Return \textsc{deleteHelper}($v,t$)
\EndFunction

\Function{deleteHelper}{$v, node$}	\label{lst:line:deletefunc}					
\State $success \gets$ TRUE

\If{Entering new $\Delta$Node $T_x$}                 			                                                 
   \State  $T'_x \gets$ \textsc{getParent$\Delta$Node}($T_x$)
   \State  \textsc{decrement}($T'_x.opcount$) 				
   \State  \textsc{waitandcheck}($T_x.locked$, $T_x.opcount$)
\EndIf

\If{($node.isleaf =$ TRUE) \textbf{OR} ($!node.left \And !node.right$)}
        \If{$node.value = v$}
            \If{\textsc{CAS}($node.mark$, FALSE, TRUE) != FALSE)}              	\label{lst:line:markdel}       
                	\State $success \gets$ FALSE                                           \Comment{already deleted!}
		\State \textsc{decrement}($T_x.opcount$)
            \Else	
             	\If{($node.left.value$\&$node.right.value$=\textbf{empty})} 	\label{lst:line:markdel-check}
			\State \textsc{decrement}($T_x.opcount$) 
                		\State \textsc{mergeNode}(\textsc{parentOf}($T_x)) \gets$ TRUE          \label{lst:line:merge}                 
		\Else		 
			\State $success \gets$\textsc{deleteHelper}($v, node$)	 	\Comment{re-try} \label{lst:line:retrydel}
		\EndIf
            \EndIf
        \Else
        		\State Search ($T_{x}.\mathit{rootbuffer}$) for $v$
	   	\If{$v$ is found in $T_{x}.\mathit{rootbuffer.idx}$}
                 	\State $T_{x}.\mathit{rootbuffer.idx} \gets \textbf{empty}$	\label{lst:line:bufdel} 	\Comment{buffered delete}
            	\Else	
        	   		\State $success \gets$ FALSE
	    	\EndIf
		\State \textsc{decrement}($T_x.opcount$)                                             
    	\EndIf
\Else \If{$v < node.value$}
        \State $success \gets$\textsc{deleteHelper}($v, node.left$)		\Comment{go left}
\Else
        \State $success \gets$\textsc{deleteHelper}($v, node.right$)		\Comment{go right}
\EndIf	
\EndIf
\State \Return $success$
\EndFunction

\Function{insertHelper}{$v, node$}	\label{lst:line:insertfunc}
\State $success \gets$ TRUE
\If{Entering new $\Delta$Node $T_x$}                                            	                      
   \State  $T'_x \gets$ \textsc{getParent$\Delta$Node}($T_x$)
   \State  \textsc{decrement}($T'_x.opcount$) 				
   \State  \textsc{waitandcheck}($T_x.locked$, $T_x.opcount$)
\EndIf

\algstore{bkbreak}
\end{algorithmic}
\end{minipage}
\qquad
\begin{minipage}[t]{0.47\linewidth}
\begin{algorithmic}[1]
\algrestore{bkbreak}
    
\If{$node.left \And node.right$}                                                    
        \If{$v < node.value$}                                                                              
            \If{($node.isleaf =$ TRUE)}  									\Comment{insert to the left:}                                
                \If{CAS($node.left.value$, \textbf{empty}, $v$) =
                \textbf{empty}}       \label{lst:line:ins-lp1} 	\label{lst:line:growins1} 
                	\State $node.right.value \gets node.value$   	                                                   
                    \State $node.right.mark \gets node.mark$
                    \State $node.isleaf \gets$ FALSE		\label{lst:line:growins1-end} 
                    \State  \textsc{decrement}($T_x.opcount$)
                \Else
                    \State $success \gets$\textsc{insertHelper}($v$, $node$)    					\Comment{re-try} \label{lst:line:retryinsleft}                    
                \EndIf
            \Else
                \State $success \gets$\textsc{insertHelper}($v$, $node.left$)                       		\Comment{go left}
            \EndIf
        \ElsIf {$v > node.value$}                                                                      
            \If{($node.isleaf =$ TRUE)}   \Comment{insert to the right:}                                
                \If{CAS($node.left.value$, $\textbf{empty}$, $node.value$) = \par 
			\hskip4em \textbf{empty}}  \label{lst:line:ins-lp2}  \label{lst:line:growins2} 
                	\State $node.right.value \gets v$                                                       
                    \State $node.left.mark \gets node.mark$
                    \State \textsc{atomic} $\{$				\label{lst:line:atomic-isleaf}
                    	$node.value \gets v$
                    	\State $node.isleaf \gets$ FALSE	\label{lst:line:growins2-end} 
                    $\}$	
                    \State  \textsc{decrement}($T_x.opcount$)
                \Else
                    \State $success \gets$\textsc{insertHelper}($v$, $node$)        		\Comment{re-try}      \label{lst:line:retryinsright}             
                \EndIf
            \Else
                \State $success \gets$\textsc{insertHelper}($v$, $node.right$)                	\Comment{go right}
            \EndIf
        \ElsIf {$v = node.value$}                                                                                            
            \If{($node.isleaf =$ TRUE)}                                   
                \If{$node.mark =$ FALSE}							\Comment{is deleted?}		
                    \State $success \gets$ FALSE            				\Comment{value exist!} \label{lst:line:valexist}
                    \State  \textsc{decrement}($T_x.opcount$)                                   
                \Else
                	   \State Goto \ref{lst:line:growins2} 					\Comment{goto insert right}
                \EndIf
            \Else
                \State $success \gets$\textsc{insertHelper}($v, node.right$)                  	\Comment{go right}     
            \EndIf
        \EndIf
\Else			
		\If{$v$ already in $T_x.\mathit{rootbuffer}$}
			\State $success \gets FALSE$   
			\State \textsc{decrement}($T_x.opcount$)
		\Else
			\qquad \textit{put $v$ inside $T_x.\mathit{rootbuffer}$} \label{lst:line:insertbuffer}		\Comment{buffered insert}	
			\If{\textsc{TAS}($T_x.locked$)} \Comment{Acquired maintenance lock} \label{lst:line:lockbuf}
			\State \textsc{decrement}($T_x.opcount$)	\label{lst:line:dec-opcount}
			\State \textsc{spinwait}($T_x.opcount$)  \Comment{Waits updates to finish} \label{lst:line:spinwait-main}
			\State \ldots \textit{do \textsc{rebalance}($T_x$) \textbf{or} \textsc{expand($node$)}} \ldots  \label{lst:line:expand}
			\EndIf 
		\EndIf
\EndIf
\State \Return $success$
\EndFunction
\end{algorithmic}
\end{minipage}
\caption{$\Delta$Tree's update algorithms and their helper functions.}
\label{lst:pseudo-ops}
\end{figure}

\clearpage

Operation \textsc{searchNode($v,U$)} (cf. Figure \ref{lst:nodeSearch}) is going to walk 
over the $\Delta$Tree (Figure \ref{lst:nodeSearch}, lines \ref{lst:line:searchifleaf}--\ref{lst:line:search-end}) to
find whether the value $v$ exists in $U$. 
This particular operation is
guaranteed to be wait-free (cf. Lemma \ref{lem:waitfreesearch}), 
and it returns \textbf{true} whenever $v$ has been
found, or \textbf{false} otherwise (Figure \ref{lst:nodeSearch}, line \ref{lst:line:linsearch1}). 
Operation \textsc{insertNode($v, U$)} (cf. Figure \ref{lst:pseudo-ops}, line \ref{lst:line:insertfunc}) 
inserts value $v$ at a leaf of $\Delta$Tree, 
provided $v$ does not yet exist in the
tree (Figure \ref{lst:pseudo-ops}, line \ref{lst:line:valexist}). 
Following the nature of a leaf-oriented tree, a successful insert operation 
will replace a leaf with a subtree of three nodes \cite{EllenFRB10} (cf. Figure
\ref{fig:treetransform}a and pseudocode in Figure \ref{lst:pseudo-ops}, 
line \ref{lst:line:growins1} \& \ref{lst:line:growins2}).
\textsc{deleteNode($v, U$)} (cf. Figure \ref{lst:pseudo-ops}, line \ref{lst:line:deletefunc}) 
simply {\em mark} the leaf that contains value $v$ as deleted (Figure \ref{lst:pseudo-ops}, 
line \ref{lst:line:markdel}), instead of physically removing the leaf or changing
its parent pointer (as in \cite{EllenFRB10}). 

\begin{lemma} \label{lem:waitfreesearch}
$\Delta$Tree search operation is wait-free.
\end{lemma}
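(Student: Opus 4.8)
The plan is to establish wait-freedom directly from the code of \textsc{searchNode} in Figure~\ref{lst:nodeSearch}, by showing that (i) the operation never blocks and never retries, and (ii) its total number of steps is bounded by a finite quantity determined by the tree's height. Wait-freedom requires the operation to finish in a finite number of \emph{its own} steps irrespective of the scheduling, speeds, or failures of concurrent threads, so both the ``no waiting'' and the ``bounded work'' parts must be argued against an arbitrary adversarial interleaving of concurrent \textsc{insertNode}, \textsc{deleteNode}, and maintenance operations.

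First I would observe that \textsc{searchNode} is effectively read-only: it reads only the fields \var{root}, \var{isleaf}, \var{value}, \var{mark}, the child pointers \var{left}/\var{right}, and finally the \var{rootbuffer}. It acquires no lock, issues no \textsc{CAS}, and in particular never inspects the \var{locked} or \var{opcount} fields that the update operations consult upon entering a new $\Delta$Node. Hence a searching thread can never be made to wait for, or spin on, any other thread; this is exactly the sense in which search is non-blocking to Insert and Delete in the statement preceding the lemma. Since there is also no retry construct (no \textsc{CAS}-failure loop and no recursive self-call, unlike the update helpers of Figure~\ref{lst:pseudo-ops}), the only potentially unbounded control structure is the descent \texttt{while} loop.

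Next I would bound the number of iterations of that loop. Each iteration replaces $p$ by one of its children, so $p$ strictly descends along a single root-to-leaf path; the child pointers of the $\Delta$Tree never form a cycle, so the loop can only terminate by reaching a leaf or the end of the tree. The key structural facts I would invoke are that every $\Delta$Node holds at most $UB$ nodes and therefore has height at most $H=2^L$, and that surplus keys are diverted to the bounded \var{rootbuffer} rather than deepening the in-$\Delta$Node tree; consequently the portion of any search path inside one $\Delta$Node is at most $H$, and a full root-to-leaf path visits only finitely many $\Delta$Nodes, the total length being the height $T$ of the tree, which governs the $O(\log_B N)$ transfer bound of Lemma~\ref{lem:search_mem}. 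The post-loop phase searches a single \var{rootbuffer}, whose length equals the current number of threads and is thus finite. Summing these finite contributions gives a bound on the number of steps.

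The main obstacle, which I would address carefully, is that the tree is not static during the traversal: concurrent \textsc{insertNode} calls replace leaves by three-node subtrees (Figure~\ref{lst:pseudo-ops}, lines~\ref{lst:line:growins1}--\ref{lst:line:growins2}) and concurrent maintenance operations restructure $\Delta$Nodes through the \var{mirror} swap, so a naive ``height is fixed'' argument is insufficient and one must rule out an adversary that perpetually grows the path just ahead of the searcher. I would argue that deletes only set \var{mark} and thus never lengthen a path; that within a $\Delta$Node no single insert can extend the path beyond the fixed capacity $UB$, since once full, inserts are buffered and trigger a maintenance operation that \emph{rebuilds} rather than indefinitely \emph{extends} the node; and that each maintenance step is an atomic replacement of a $\Delta$Node by its mirror, leaving the searcher on a consistent, finite-height structure. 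The crux is therefore to show that the number of $\Delta$Nodes newly created \emph{below} the searcher on its own path before it descends past them is bounded, i.e.\ that monotone downward progress dominates adversarial downward growth, so that the descent cannot be chased forever; establishing this comparison is the delicate part of the argument, after which wait-freedom follows immediately from the finiteness of every remaining quantity.
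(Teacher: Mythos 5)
The parts of your proposal that you actually complete coincide with the paper's own proof, which is a three-point sketch: (i) \textsc{searchNode} and the buffer search at line~\ref{lst:line:searchbuffer} never wait on any lock; (ii) the descent loop is bounded by the height of the tree, which the paper takes to be $\mathcal{O}(N)$; (iii) the buffer search is bounded by the (constant) buffer size. Your first three paragraphs establish exactly these facts, with the same justifications, and the paper concludes from them that the search finishes in $\mathcal{O}(N)$ steps.

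The difference is your fourth paragraph, and it cuts both ways. You correctly identify the point the paper's sketch silently assumes away: ``bounded by the height of the tree'' is only meaningful if the height seen by the searcher is a finite, well-defined quantity, whereas concurrent inserts and \textsc{expand} operations can deepen the path \emph{during} the traversal. However, you leave this self-declared crux unproven, so as a standalone argument your proposal is incomplete precisely where you set a higher bar than the paper. Worse, the form in which you propose to close it --- ``monotone downward progress dominates adversarial downward growth'' --- cannot be established as stated: under the standard wait-freedom adversary, the scheduler may interleave arbitrarily many update steps (including filling a leaf $\Delta$Node with $UB$ keys and expanding it) between two consecutive steps of the searcher, so no per-step rate comparison can hold. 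What actually saves the lemma in the strict model is a different, key-space argument: every extension of the search path for a fixed key $v$ requires inserting a fresh key into the current leaf interval containing $v$, and each such insertion strictly shrinks that interval's finite integer endpoints once they are established, so in any execution the path toward $v$ can be extended only finitely often; alternatively one simply assumes finitely many concurrent updates, which is in effect what the paper's $\mathcal{O}(N)$ bound does. Neither your proposal nor the paper supplies such an argument, but only your proposal advertises the need for one and then rests the proof on a comparison that would fail.
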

\begin{proof}(Sketch) The proof can be served based on these observations on Figure \ref{lst:nodeSearch}:
\begin{enumerate}
\item \textsc{SearchNode} and invoked \textsc{SearchBuffer} (line \ref{lst:line:searchbuffer}) don't wait for any locks.
\item The number of iterations in the \textit{while} loop (line 3) is bounded by the \textit{height} of the tree, $\mathcal{O}(N)$.
\item \textsc{SearchBuffer} time complexity is bounded by the buffer size, which is a constant.
\end{enumerate}
Therefore the  \textsc{SearchNode} time is bounded by $\mathcal{O}(N)$.
\end{proof}

There is a difference between inserting to the left
(Figure \ref{lst:pseudo-ops}, lines \ref{lst:line:growins1}--\ref{lst:line:growins1-end}) 
and inserting to the right (Figure \ref{lst:pseudo-ops}, lines \ref{lst:line:growins2}--\ref{lst:line:growins2-end}) 
because an insert to the right will need to change the value of the root of 
the new subtree in order to guide the tree search. And it's not necessary to 
do that when inserting a value to the left.

\paragraph{Maintenance functions}
Apart from the basic operations, three maintenance $\Delta$Tree operations are
invoked in certain cases of inserting and deleting a node from the tree.
Operation \textsc{rebalance($T_v.root$)} (cf.  Figure \ref{lst:pseudo-ops} 
line \ref{lst:line:expand}) is responsible for rebalancing a
$\Delta$Node after an insertion.
Figure \ref{fig:treetransform}a illustrates the rebalance operation. Whenever a
new node $v$ is to be inserted at the last level $H$ of $\Delta$Node $T_1$, the
$\Delta$Node is rebalanced to a complete BST by setting a new depth for all
leaves (e.g. $a,v,x,z$ in Figure \ref{fig:treetransform}a) to $\log N + 1$,
where $N$ is the number of leaves. In Figure \ref{fig:treetransform}a, right
after the rebalance operation, tree $T_1$ becomes balanced and its
height reduces from 4 to 3.

We also define \textsc{expand($l$)} operation (cf.  Figure \ref{lst:pseudo-ops} 
line \ref{lst:line:expand}), that will be responsible for
creating a new $\Delta$Node and connecting it to the parent of a leaf node $l$ 
(cf. Figure \ref{fig:treetransform}b).
Expanding will be triggered only if after \textsc{insertNode($v, U$)}, leaf $l$
will be at the last level of a $\Delta$Node and rebalancing will no longer
reduce the current height of the subtree $T_i$ stored in the $\Delta$Node. 
For example if expanding is taking place at a node $l$ located at the bottom
level of the $\Delta$Node (Figure \ref{fig:treetransform}b, node $l$ contains value $v$), or $depth(l) = H$,
then a new $\Delta$Node ($T_2$ for example) will be referred by the parent of
node $l$.
Namely, the parent of $l$ swaps one of its pointer that previously points to
$l$, into the root of the newly created $\Delta$Node, $T_2.root$.

\begin{figure}[!htbp]
\centering 
\includegraphics[width=0.7\columnwidth]{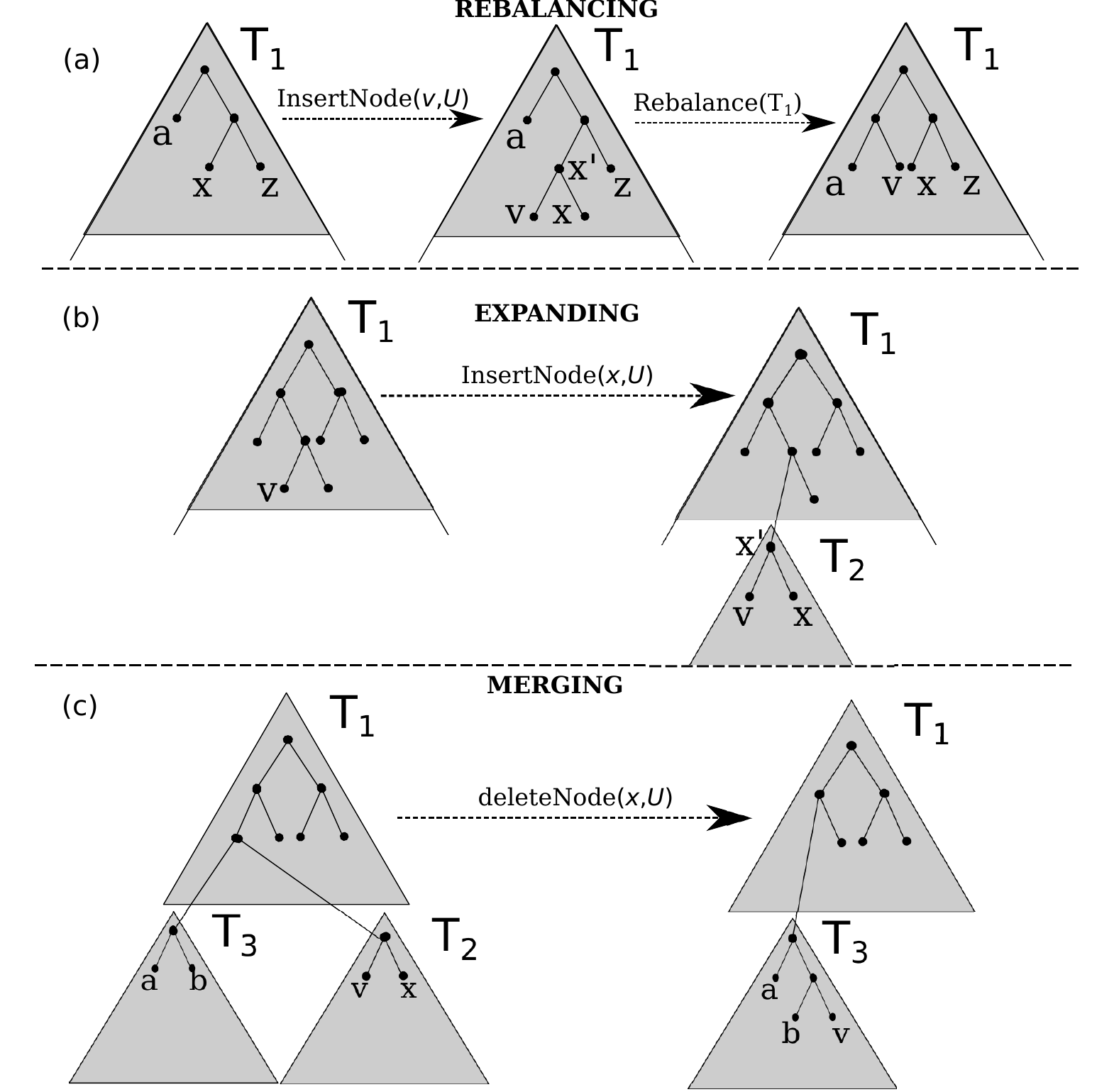}
\caption{(a)\textit{Rebalancing}, (b)\textit{Expanding}, and (c)\textit{Merging} operations on $\Delta$Tree.}
\label{fig:treetransform}
\end{figure}

Operation \textsc{merge($T_x.root$)} (cf.  Figure \ref{lst:pseudo-ops} 
line \ref{lst:line:merge}) is for merging $T_x$ with its sibling 
after a node deletion. For example in Figure \ref{fig:treetransform}c, $T_2$ is
merged into $T_3$. Then
the pointer of $T_3$'s grandparent that previously points
to the parent of both $T_3$ and $T_2$ is replaced by a pointer to $T_3$.
Merge operation is invoked provided that a particular $\Delta$Node, where 
a deletion has taken place, is filled less than $2^t$ of its capacity 
(where $t = 1/2H$) and all
values of that $\Delta$Node and its siblings could be fitted into a $\Delta$Node. 

To minimise block transfers required during tree traversal, the height of the
tree should be kept minimal. These auxiliary operations are the unique
feature of $\Delta$Tree in the effort of maintaining a small height.

These \textsc{insertNode} and \textsc{deleteNode} operations are linearisable 
to other  \textsc{searchNode, insertNode} and \textsc{deleteNode} operations
(cf. Lemma \ref{lem:linear-search} and \ref{lem:nonblock-update}).
Both of the operations are using single word CAS (Compare and Swap)
and "leaf-checking" (cf. Figure \ref{lst:pseudo-ops}, line \ref{lst:line:markdel} \& 
\ref{lst:line:markdel-check} for delete and \ref{lst:line:ins-lp1}
\& \ref{lst:line:ins-lp2} for insert) to achieve that. 

\begin{lemma} 
For a value that resides on the leaf node of a $\Delta$Node,
\textsc{searchNode} operation (Figure \ref{lst:nodeSearch}) has the linearisation 
point to \textsc{deleteNode} at line \ref{lst:line:linsearch1} and the
linearisation point to \textsc{insertNode} at line \ref{lst:line:linsearch3}. 
For a value that stays in the buffer of a $\Delta$Node, \textsc{searchNode} 
operation has the \textit{linearisation point} at line \ref{lst:line:linsearch2}.
\label{lem:linear-search}
\end{lemma}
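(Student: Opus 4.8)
The plan is to argue linearizability field by field, exploiting the facts (stated in the surrounding text and visible in Figure~\ref{lst:pseudo-ops}) that every value lives either at a leaf or in a $\Delta$Node's buffer, that \textsc{insertNode} publishes a value through a single-word CAS (lines~\ref{lst:line:growins1} and~\ref{lst:line:growins2}), that \textsc{deleteNode} removes a leaf value through a single-word CAS on the \textbf{mark} field (line~\ref{lst:line:markdel}), and that \textsc{searchNode} performs only reads. Since every read named in the statement is of a single machine word, each is atomic and can serve as an instantaneous linearisation point inside the search's execution interval, whose non-emptiness follows from the wait-freedom of Lemma~\ref{lem:waitfreesearch}.

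First I would treat a value $v$ residing on a leaf. The traversal of Figure~\ref{lst:nodeSearch} drives $lastnode$ to the node carrying $v$'s key, after which the search reads \textbf{value} at line~\ref{lst:line:linsearch3} and then \textbf{mark} at line~\ref{lst:line:linsearch1}. For a concurrent \textsc{insertNode}, the only way $v$ can become visible on the search path is the CAS at line~\ref{lst:line:growins1}/\ref{lst:line:growins2}; the read at line~\ref{lst:line:linsearch3} is therefore the exact instant at which the search does or does not observe that CAS, so I place the search/insert linearisation point there. For a concurrent \textsc{deleteNode}, the only change to an already-present leaf is the mark CAS at line~\ref{lst:line:markdel}; the read at line~\ref{lst:line:linsearch1} is the instant at which the search does or does not observe it, so I place the search/delete linearisation point there.

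The hard part will be justifying that these two distinct instants (line~\ref{lst:line:linsearch3} strictly before line~\ref{lst:line:linsearch1}) nevertheless yield a single consistent position in the linearisation order, i.e. that the pair of \textbf{value} and \textbf{mark} values the search reads corresponds to a genuine state of that leaf. For this I would invoke the structural invariant that a leaf's stored \textbf{value} is immutable once written: it can only be altered by a maintenance operation, which must first acquire the $\Delta$Node's \textbf{locked}/\textbf{opcount} protocol (lines~\ref{lst:line:lockbuf}--\ref{lst:line:spinwait-main}) and hence cannot reshape the node underneath a reader without the reader observing it. Thus between the two reads the identity and key of $lastnode$ are stable, the \textbf{value}-writes of inserts and the \textbf{mark}-writes of deletes never race on the same word, and the decisive read for the returned boolean is the later one: whenever \textbf{value}$=v$ holds, the answer equals the membership status encoded by \textbf{mark} read at line~\ref{lst:line:linsearch1}, an instant lying inside the operation interval and consistent with every concurrent insert (ordered at line~\ref{lst:line:linsearch3}) and delete (ordered at line~\ref{lst:line:linsearch1}).

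Finally, for a value that lives in the buffer rather than on a leaf, the traversal falls through to the buffer scan at line~\ref{lst:line:searchbuffer}. Since buffered inserts and deletes act atomically on a single buffer slot (line~\ref{lst:line:insertbuffer} for insert, line~\ref{lst:line:bufdel} for delete), the membership test at line~\ref{lst:line:linsearch2} is a single atomic observation of that slot; I would take it as the linearisation point, which completes the three cases of the statement. The only residual care needed is to confirm that an ABA-style re-insertion between the \textbf{value} and \textbf{mark} reads is excluded by the same maintenance-lock invariant, so that no spurious ``present'' answer can be produced — this is where I expect the bulk of the technical verification to lie.
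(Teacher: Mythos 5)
Your assignment of linearisation points matches the lemma, and your treatment of the delete case (the single CAS on \textbf{mark} at line~\ref{lst:line:markdel} observed by the read at line~\ref{lst:line:linsearch1}) and of the buffer case agrees in substance with the paper. The genuine gap is in what you yourself identify as the hard part: your consistency argument between the two read instants rests on the claimed invariant that a leaf's \textbf{value} field is immutable once written and can only be altered by a maintenance operation holding the $\Delta$Node's \textbf{locked}/\textbf{opcount} protocol. That invariant is false for this algorithm. An insert of $v$ greater than a leaf's value (the right-growing path, lines~\ref{lst:line:growins2}--\ref{lst:line:growins2-end}) overwrites \emph{node.value} with $v$ lock-free: it is an ordinary update operation synchronised only by CAS, not a maintenance operation, and it never executes the TAS at line~\ref{lst:line:lockbuf}. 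So leaf values do get mutated underneath concurrent readers, inserts do race with searches on the very word you claim is stable, and your exclusion of ABA-style effects collapses for the same reason.

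The paper's proof closes exactly this hole with two facts you do not use: (i) the overwrite of \emph{node.value} is deferred to the last step of the right-insert and is performed in one atomic step together with the \emph{node.isleaf} flag (line~\ref{lst:line:atomic-isleaf}), so a search either sees the old leaf intact or sees a completed internal node; and (ii) the \emph{isleaf} field of all pre-allocated nodes defaults to \textbf{true} (Figure~\ref{lst:datastruct}, line~\ref{lst:line:leafdefault}), which together with the snapshot $lastnode \gets p$ taken at the start of each search iteration (line~\ref{lst:line:lasnode-p}) guarantees that a value published by either growing CAS is found by any search that linearises after it. Without these observations (or equivalents) your argument does not establish that the pair of reads at lines~\ref{lst:line:linsearch3} and~\ref{lst:line:linsearch1} corresponds to a reachable state of the tree, which is precisely what the lemma requires. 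A secondary omission: for the buffer case the paper also relies on maintenance swapping \emph{rootbuffer} together with \emph{mirrorbuffer}, so that buffered values remain findable across a concurrent rebalance; your slot-atomicity argument silently assumes the buffer the search scans is the one the insert wrote to.
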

\begin{proof}(Sketch) It is trivial to demonstrate this in relation to
deletion algorithm in Figure \ref{lst:pseudo-ops} since only an atomic operation
is responsible for altering the $mark$ property of a node (line
\ref{lst:line:markdel}).
Therefore \textsc{deleteNode} has the linearisation point to \textsc{searchNode} 
at line \ref{lst:line:markdel}. 
 
For \textsc{searchNode} interaction with an insertion that grows new subtree, we
rely on the facts that: 

\begin{enumerate}

\item a snapshot of the current node $p$ is recorded on
$lastnode$ as a first step of searching iteration (Figure \ref{lst:nodeSearch},
line \ref{lst:line:lasnode-p});

\item $v.value$ change, if needed, is not done
until the last step of the insertion routine for insertion of $v > node.value$
and will be done in one atomic step with $node.isleaf$ change (Figure
\ref{lst:pseudo-ops}, line \ref{lst:line:atomic-isleaf}); and 

\item $isleaf$
property of all internal nodes are by default \textbf{true} (Figure
\ref{lst:datastruct}, line \ref{lst:line:leafdefault}) to guarantee that values
that are inserted are always found, even when the leaf-growing (both
left-and-right) are happening concurrently. 
\end{enumerate}

Therefore
\textsc{insertNode} has the linearisation point to \textsc{searchNode} at line 
\ref{lst:line:ins-lp1} when inserting a value $v$ smaller than the leaf node's 
value, or at line \ref{lst:line:ins-lp2} otherwise. 

A search procedure is also able to cope well with a "buffered" insert, that is
if an insert thread loses a competition in locking a $\Delta$Node for expanding
or rebalancing and had to dump its carried value inside a buffer (Figure
\ref{lst:pseudo-ops}, line \ref{lst:line:insertbuffer}). Any value inserted to
the buffer is guaranteed to be found. This is because after a leaf $lastnode$
has been located, the search is going to evaluate whether the $lastnode.value$
is equal to $v$. Failed comparison will cause the search to look further inside
a buffer ($T_x.\mathit{rootbuffer}$) located in a $\Delta$Node where the leaf resides
(Figure \ref{lst:nodeSearch}, line \ref{lst:line:searchbuffer}).
By assuring  that the switching of a root $\Delta$Node with its mirror includes
switching $T_x.rootbuffer$ with $T_x.\mathit{mirrorbuffer}$, we can show that any new
values that might be placed inside a buffer are guaranteed to be found
immediately after the completion of their respective insert procedures.
The "buffered" insert has the linearisation point to \textsc{searchNode} at
line \ref{lst:line:insertbuffer}.

Similarly, deleting a value from a buffer is as trivial as exchanging that value
inside a buffer with an \textbf{empty} value. The search operation will failed
to find that value when doing searching inside a buffer of $\Delta$Node. This
type of delete has the linearisation point to \textsc{searchNode} at the same
line it's emptying a value inside the buffer (line \ref{lst:line:bufdel}).
\end{proof}

\begin{lemma} \label{lem:nonblock-update}
In the absence of maintenance operations, the linearisation point of $\Delta$Tree's 
\textsc{insertNode} to \textsc{deleteNode} is in line \ref{lst:line:growins1} 
and \ref{lst:line:growins2} of Figure \ref{lst:pseudo-ops}. 
Linearisation points of \textsc{deleteNode} operations to \textsc{insertNode} are in line \ref{lst:line:markdel}
of Figure \ref{lst:pseudo-ops}.
\end{lemma}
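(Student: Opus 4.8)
The plan is to establish each of the two claimed points by isolating, for every update operation acting on a leaf value, the single atomic primitive that commits it, and then checking that this instant is consistent with every concurrent operation of the opposite type. Since maintenance operations are excluded, I would first observe that an \textsc{insertNode} commits through exactly one single-word CAS---the one that fills an empty child slot, at line~\ref{lst:line:growins1} for a left-insert and line~\ref{lst:line:growins2} for a right-insert---while a \textsc{deleteNode} commits through the CAS that flips the mark bit at line~\ref{lst:line:markdel}. Note that these two CAS operations always target \emph{different} memory words (a child's value versus a node's mark), so insert and delete never contend at their linearisation points; the whole content of the lemma is therefore the \emph{consistency} of the tree state they jointly produce. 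Throughout I would reuse the three structural facts already exploited in Lemma~\ref{lem:linear-search}: the \emph{lastnode} snapshot, the deferred (and, for a right-insert, atomic) update of value and \emph{isleaf} at line~\ref{lst:line:atomic-isleaf}, and the default \emph{isleaf}~=~\textbf{true} for internal nodes (line~\ref{lst:line:leafdefault}).

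For the \textsc{deleteNode}-to-\textsc{insertNode} direction I would argue that the mark field of any node is written only by the CAS at line~\ref{lst:line:markdel}, so a leaf value's logical removal happens atomically there: the value is present beforehand, logically absent after a successful flip, and a failing CAS certifies a prior deletion. The substantive step is to show this instant stays well defined when a concurrent \textsc{insertNode} transforms the same leaf into an internal node. Here I would use the fact that the insert copies the leaf's \emph{current} mark into the child inheriting the old value (the mark-propagation writes in the line~\ref{lst:line:growins1} and line~\ref{lst:line:growins2} blocks), so that the deleted/undeleted status of that value survives the structural change unchanged, and the deletion is reflected identically whether ordered before or after the insert.

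For the \textsc{insertNode}-to-\textsc{deleteNode} direction I would show that $v$ first becomes reachable exactly at the committing CAS, since the target child slot holds \textbf{empty} immediately before it and a leaf carrying $v$ immediately after; the remaining writes only complete the three-node subtree of Figure~\ref{fig:treetransform}(a) and promote the node to internal. A concurrent \textsc{deleteNode} then falls into one of two cases: either its line~\ref{lst:line:markdel} CAS fires while the node is still a leaf (it was guarded by \emph{isleaf}~=~\textbf{true} with value $v$), reducing to the previous paragraph; or \emph{isleaf} has already been set to \textbf{false}, in which case the delete traverses past the node to the correct child, and because internal nodes carry \emph{isleaf}~=~\textbf{true} by default the delete never mistakes the freshly promoted node for a stopping leaf.

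The hard part will be closing the race window between the insert's committing CAS and its subsequent writes of the child mark and of \emph{isleaf}. I would handle it with a short case analysis on the relative order of the delete's line~\ref{lst:line:markdel} CAS and the insert's mark-propagation write, showing that in every interleaving either the propagated mark carries the deletion onto the moved leaf, or the delete is re-routed to that leaf by the \emph{isleaf} guard, so that no deletion is lost and no freshly inserted value is hidden. I expect the asymmetry between left- and right-inserts---only the latter requiring the atomic value/\emph{isleaf} update at line~\ref{lst:line:atomic-isleaf}---to be precisely what makes this analysis go through, since it prevents a concurrent search or delete from observing a half-installed separator value on the promoted node.
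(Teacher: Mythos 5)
Your identification of the committing primitives (the child-filling CAS at lines~\ref{lst:line:growins1}/\ref{lst:line:growins2} for inserts, the mark CAS at line~\ref{lst:line:markdel} for deletes) matches the paper, but your mechanism for closing the insert/delete race window is not the one the algorithm actually uses, and on its own it does not work. The paper's proof rests on a step you never invoke: after a \emph{successful} mark CAS, the deleter re-checks whether the node is still a leaf (line~\ref{lst:line:markdel-check}, testing that both children's values are still \textbf{empty}) and, if a concurrent insert has already populated a child, it \emph{restarts} the delete from that node (line~\ref{lst:line:retrydel}). This post-CAS re-check-and-retry is what guarantees that a deletion whose mark lands on a node that is concurrently being promoted to an internal node is not lost, because the retry re-traverses to the child that inherited the value and marks that child instead.

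Your proposal tries to get the same guarantee from mark propagation (the insert copying $node.mark$ into the child that inherits the old value) together with the \emph{isleaf} traversal guard, and your own ``hard case'' exposes the gap. Consider the interleaving: insert's commit CAS succeeds; insert performs the mark-propagation write (copying \textbf{false}); only then does the delete's CAS at line~\ref{lst:line:markdel} fire. Now the propagated mark does \emph{not} carry the deletion (it was read before the delete committed), and the \emph{isleaf} guard cannot re-route the deleter either, because that guard is checked during traversal and the deleter is already past it --- it found the node as a leaf, matched the value, and has nothing left to execute except its CAS. Without the explicit re-check at line~\ref{lst:line:markdel-check} the deletion is stranded on a node that has become internal, whose mark no search will ever consult, so the delete appears to take effect at line~\ref{lst:line:markdel} yet is invisible to every later operation --- exactly what the lemma forbids. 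Mark propagation in the code serves the orthogonal purpose of preserving the status of values deleted \emph{before} the insert began; the concurrent race is resolved by the retry, and any correct proof of this lemma has to say so.
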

\begin{proof} (Sketch)
In a case of concurrent insert operations (Figure \ref{lst:pseudo-ops}) 
at the
same leaf node $x$, insert threads that need to "grow" the node 
(for illustration, cf. Figure \ref{fig:treetransform}) are
going to do \textsc{CAS}($x.left, \textbf{empty},\ldots$) 
(line  \ref{lst:line:growins1} and \ref{lst:line:growins2}) as their
first step. This CAS is the only thing needed since the whole $\Delta$Node
structure is pre-allocated and the CAS is an atomic operation. Therefore, only
one thread will succeed in changing $x.left$ and proceed populating the
$x.right$ node. Other threads will fail the CAS operation and they are going to
try restart the insert procedure all over again, starting from the node $x$.

To assure that the marking delete (line \ref{lst:line:markdel}) behaves nicely
with the "grow" insert operations, \textsc{deleteNode}($v, U$) that has found
the leaf node $x$ with a value equal to $v$, will need to check again whether
the node is still a leaf (line \ref{lst:line:markdel-check}) after completing
\textsc{CAS}($x.mark, FALSE, TRUE$).
The thread needs to restart the delete process from $x$ if it has found that $x$
is no longer a leaf node.
\end{proof}

\paragraph{Mirroring} \label{sec:mirroring}
Whenever a $\Delta$Node is undergoing a
maintenance operation (balancing, expanding, or merging), a mirroring operation
also takes place. Mirroring works by maintaining the original nodes but write 
the results into the
mirror nodes. After this is done, the pointer will be switched and 
now the mirror nodes become the original, vice-versa. 
The $\Delta$Node's lock will be released, 
and all the waiting update threads can continue with their respective operation.
The original nodes and helping buffer served as the latest snapshot, which
enables wait-free search on that $\Delta$Node. 

Despite \textsc{insertNode} and \textsc{deleteNode} are non-blocking, they'll still
need to wait at a tip of a $\Delta$Node whenever a maintenance
operation is currently operating within
that $\Delta$Node. We employ TAS (Test and Set) on $\Delta$Node's \textit{locked} 
field (cf. Figure \ref{lst:pseudo-ops}, line \ref{lst:line:lockbuf}) 
before any maintenance operation starts. Advanced locking techniques \cite{HaPT07_JSS, KarlinLMO91, LimA94} can also be used. This is to ensure that
no basic update operations will interfere with the maintenance
operation. This is also necessary to prevent the buffer overflow.

\paragraph{Performance concerns}
Note that the previous description has shown that \textsc{rebalance}
and \textsc{merge} execution are actually sequential within a $\Delta$Node.
Rebalancing
and merging only involve a maximum of two $\Delta$Node with size $UB$. 
Their operation consist
of traversing and re-inserting all members of one or two $\Delta$Nodes.
Because $UB \ll N$, \textsc{rebalance} and \textsc{merge} operations are not
affecting much on the $\Delta$Tree performance.   

\subsection{Balanced $\Delta$Tree}\label{sec:plDTv1}

$\Delta$Tree implementation served as an initial proof of concept
of a dynamic VEB-based search trees. This tree however has major weaknesses that 
would affect its performance. The first is the fact that the 
\textit{left} and \textit{right} pointers occupies too much space. In 
a $\Delta$Node with 127 nodes, the sets of pointers will occupy 2032 bytes of memory in a 64-bit operating
system, twice of the \textit{key} nodes (in integer) that only require 1016 bytes. 
A cache oblivious data structures will gain most of its benefits if only more data could occupy
a small amount of memory space. Thus, a single cache-line transfer will relocate more data 
between any levels of memory. Secondly, inserting a consecutive increasing or decreasing 
numbers into $\Delta$Tree will results a linked-list of $\Delta$Node.
It will be hard to guarantee the optimal search performance
of Lemma \ref{lem:dynamic_vEB_search} in this particular case.

\begin{figure}[!htbp]
\centering 
\begin{algorithmic}[1] 
\Start{\textbf{Map}} \label{lst:line:map}
	\START{member fields:}
\State $left \in
	\mathbb{N}$, interval of the \textit{left} child pointer address
\State $right \in
	\mathbb{N}$, interval of the \textit{right} child pointer address 
\END \End

\Statex 

\Function{right}{p, base}  \label{lst:line:right}
    \State $nodesize \gets  \textsc{sizeOf}(\textbf{single node})$
    \State $idx \gets (p - base)/nodesize$
    
    \If {($map[idx].right$ != 0)}
        \State \Return $base + map[idx].right$
    \Else
        \State \Return 0
    \EndIf
\EndFunction

\Statex 

\Function{left}{p, base} \label{lst:line:left}
    \State $nodesize \gets  \textsc{sizeOf}(\textbf{single node})$
    \State $idx \gets (p - base)/nodesize$
    
    \If {($map[idx].left$ != 0)}
        \State \Return $base + map[idx].left$
    \Else
        \State \Return 0
    \EndIf
\EndFunction
\end{algorithmic}
\caption{\textit{Mapping} functions.}
\label{lst:map-func}
\end{figure}

\subsubsection{\textit{Map} instead of pointers} \label{sec:mapdesc}

We have developed an improved $\Delta$Tree, namely the balanced DT 
by completely eliminating (\textit{left} and \textit{right}) 
pointers within a $\Delta$Node. We replaced them with
\textsc{left} and \textsc{right} functions instead (Figure \ref{lst:map-func}, lines \ref{lst:line:left} 
\& \ref{lst:line:right}). These two functions, given an arbitrary node and
its container $\Delta$Node \textit{root} memory address, will return the left and right child node address
of that arbitrary node, respectively. A $\Delta$Node is now slim-lined into just an array of keys.
Each $\Delta$Node is also coupled with a metadata that contains
an array of pointers for the inter-$\Delta$Node connection, and a structure that holds 
lock and counters.

With this mapping, we need only a single $\Delta$Tree's pointer-based $\Delta$Node to be created
in the initialisation phase.
This $\Delta$Node is used to populate the \textit{map}, by calculating the memory address differences between a node and
its left and right children, respectively. 
This \textit{map} is then used for every Balanced DT $\Delta$Node's \textsc{left} and \textsc{right} operations.
The memory for the pointer-based $\Delta$Node can be freed after a $map$ is created. 
And since we are re-using one $map$ array of size
$UB$ for traversing, memory footprint for the Balanced DT's 
$\Delta$Node operations can be kept minimum. We ended up having 200\% more
node counts in a $\Delta$Node given the same $UB$, compared to the $\Delta$Tree.

\begin{figure}[!htbp]
\centering 
\begin{algorithmic}[1] 

\Function{pointerLessSearch}{key, $\Delta$Node, maxDepth}
    
  \While{$\Delta$Node is not leaf}
   	\State $bits \gets 0$
    	\State $depth \gets 0$
   	\State $p \gets {\Delta}Node.root$;
   	\State $base \gets p$
    	\State $link \gets {\Delta}Node.link$
    
    	\While{$p \And p.value$ != EMPTY}  \Comment{continue until leaf node} 
        		\State $depth \gets depth + 1$ \Comment{increment depth}
        		\State $bits \gets bits << 1$ \Comment{either left or right, shift one bit to the left}
        		\If {$key < p.value$}
             		\State $p \gets$ \textsc{left}($p, base$)
        		\Else
            		\State $p \gets$ \textsc{right}($p, base$)
            		\State $bits \gets bits+1$ \Comment{right child colour is 1}
        		\EndIf
    	\EndWhile
    	\State $bits \gets bits >> 1$
	\Statex \Comment{pad the $bits$ to get the index of the child $\Delta$Node:}
    
    	\State $bits \gets bits << (maxDepth - depth)$ 
    	\Statex \Comment{follow nextRight if highKey is less than searched value:}
    	\If{${\Delta}Node.highKey <= key$}
     		\State ${\Delta}Node \gets {\Delta}Node.nextRight$
   	\Else
     		\State ${\Delta}Node \gets link[bits]$ \Comment{jump to child $\Delta$Node}
   	\EndIf
  \EndWhile
  \State \Return ${\Delta}Node$
\EndFunction

\end{algorithmic}
\caption{Search within pointer-less $\Delta$Node. This function will return the {\em leaf} $\Delta$Node containing
the searched key. From there, a simple binary search using \textsc{left} and \textsc{right} functions
is adequate to pinpoint the key location}
\label{lst:scannode-func}
\end{figure}

The inter-$\Delta$Node connection works by using the tree encoding. Here we gave colour
to each nodes using either $0$ or $1$ with a condition that adjacent nodes at the same level 
have different colours. 
With this the path traversed from the $root$ of a $\Delta$Node to reach 
any internal node will produce a bit-sequence of colours. This bit representation
will be translated into an array index that contains a pointer of another $\Delta$Node. 
We are using leaf-oriented tree and allocate a pointer array with
the length equal to the number of nodes in that $\Delta$Node. Figure \ref {lst:scannode-func}
illustrate how the inter-$\Delta$Node connection works in a pointer-less search function.

\subsubsection{Concurrent and balanced tree}

To solve the the worst case of consecutive numbers insertion, we adopt the structure and the 
algorithm of B-Link trees \cite{Lehman:1981:ELC:319628.319663} coined by Lehman and Yao. 
This tree is a highly concurrent variation of B-Tree which sometimes referred as Blink tree. 
We maintain the concept of  a dynamic-vEB $\Delta$Node and used these in place of 
the array nodes of B-Link trees.

To implement this, two new variables were added into the $\Delta$Nodes' metadata,
namely $nextRight$ or pointer to the right sibling $\Delta$Node and a $highKey$ value
that contains the upper-bound value of that specific $\Delta$Node. 
The insertion were done bottom-up and
searches were in top-down, left-to-right direction. 
With these additional variables and restrictions, 
\textsc{search} operations are guaranteed as \textit{wait-free} 
\cite{Lehman:1981:ELC:319628.319663}.
Bottom-up insertion also ensures 
that the tree is always in a balanced condition as mandated by Lemma \ref{lem:dynamic_vEB_search}. 
The same rebalancing procedures 
(Figure \ref{fig:treetransform}a) were also employed to ensure a $\Delta$Node
is full before it splits. The rebalancing also help
to clean-up the nodes marked for deletion, keeping $\Delta$Nodes always in good shape.

\subsection{Heterogenous balanced $\Delta$Tree}\label{sec:plDTv2}

The reason why we maintained the leaf-oriented (or external tree) layout for $\Delta$Node 
is to make sure the inter-$\Delta$Node mechanism works. Thus, 
it is not necessary for leaf $\Delta$Nodes or the last level
$\Delta$Nodes to have leaf-oriented layout since they don't have any child $\Delta$Nodes.
 
Based on this observation, we implement a special layout for the leaf $\Delta$Nodes,
making the balanced $\Delta$Tree is having heterogenous $\Delta$Nodes.
This special layout is using internal tree for the key nodes, therefore 
100\% more key nodes can fit into leaf $\Delta$Nodes
compare to the non-leaf $\Delta$Nodes given the same $UB$ limit.
To save space even more, we omit the array of pointers for intra-$\Delta$Node connection
in the leaf $\Delta$Nodes' metadata.

Stepping up to this improved version of $\Delta$Tree, we found that the efficiency of 
searches were greatly improved. Compared to original pointer-based $\Delta$Tree and balanced
DT, this heterogenous BDT delivered lower cache misses and more
efficient branching.


\subsection{Performance evaluation} \label{sec:perval_main}
To evaluate our conceptual ideas of dynamic-vEB implemented 
in the $\Delta$Trees (section \ref{sec:implementation}), 
balanced $\Delta$Tree (plDTv1) (section \ref{sec:plDTv1}), and heterogenous BDT
(plDTv2) (section \ref{sec:plDTv2}), we compare their 
performance with other prominent concurrent trees. 
The benchmark include the non-blocking binary search tree (NBBST) \cite{EllenFRB10}, 
concurrent AVL tree (AVLtree) \cite{BronsonCCO10}, concurrent red-black tree (RBtree) \cite{DiceSS2006}, and speculation friendly tree (SFtree) \cite{Crain:2012:SBS:2145816.2145837} from the Synchrobench benchmark \cite{synchrobench}. We also develop a concurrent version of the static vEB binary search tree in \cite{BrodalFJ02} using software transactional memory (STM). We utilise the 
GNU C Compiler's STM implementation from the version 4.9.1 for this tree and named it VTMtree. An optimised Lehman and Yao concurrent B-tree
\cite{Lehman:1981:ELC:319628.319663} implementation (CBTree) is also included in the benchmark.
The tree (sometimes known as B-link tree) is a highly-concurrent B-tree implementation 
and it is being used as the back-end in popular database systems such as PostgreSQL
\footnote{\url{https://github.com/postgres/postgres/blob/master/src/backend/access/nbtree/README}}. 
We use Pthread for concurrent threads and pin the threads to distinct available physical cores. We use GCC 4.9.1 with -O2 for all compilations. 

\begin{figure}[!htbp]\centering
\footnotesize
\input{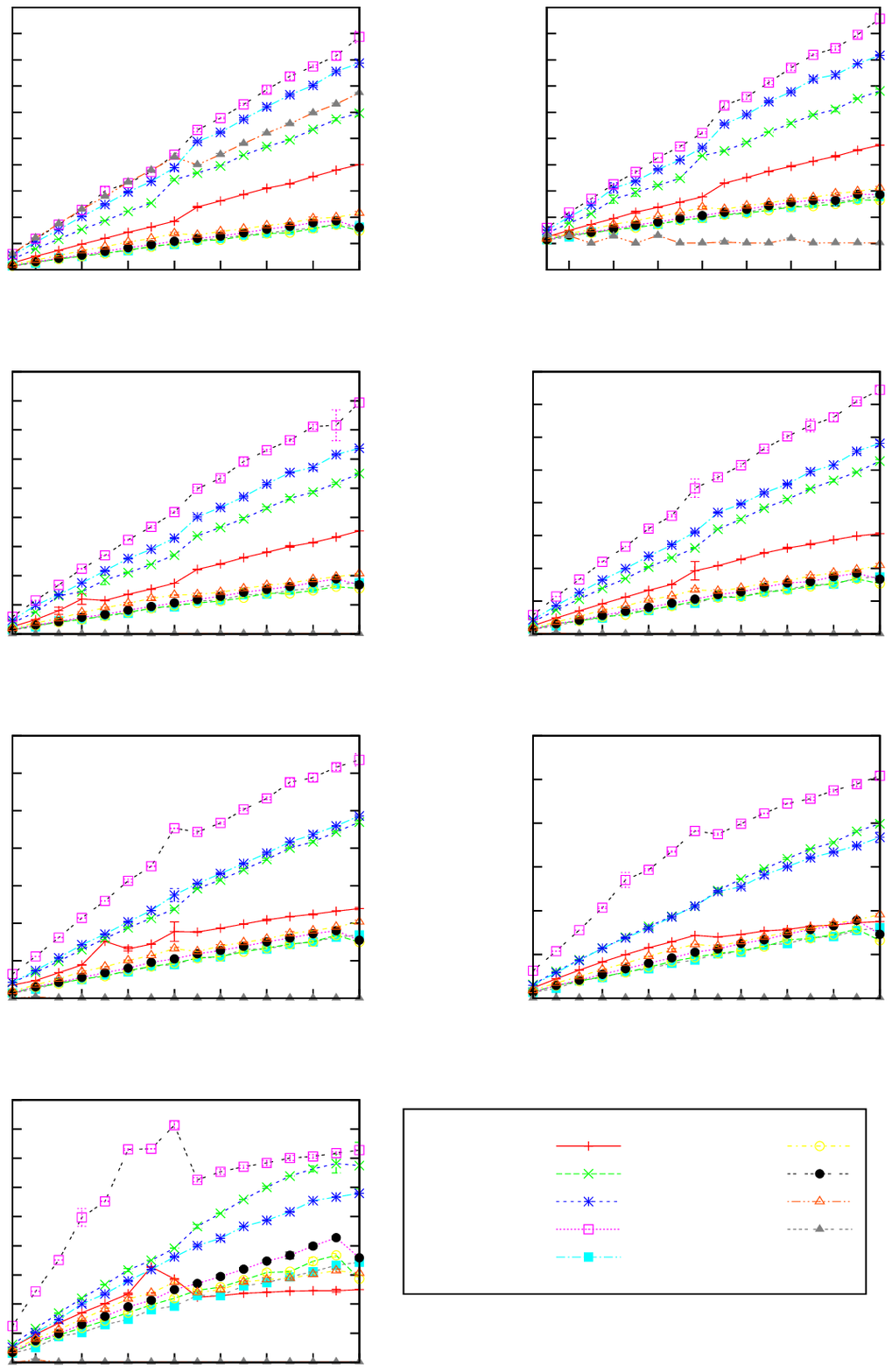}
\caption{Performance rate (operations/second) of the tested trees with 2,500,000 initial members. The y-axis indicates the rate of operations/second.}
\label{fig:perf-2500000}
\end{figure}

The base of the conducted experiment consists of running a series
of ($rep=5,000,000$) operations. Assuming we have $nr$ as the number of
threads, the maximum time for any of the threads to finish a 
sequence of $\frac{rep}{nr}$ operations is recorded. We
also define an update rate $u$ that translates to $upd = (u\%\times rep)$ number of
insert and delete operations and $src = (rep - upd)$ number of search operations. 
We conduct experiments based on the combinations of
update rate $u=\{0, 20, 50\}$ and the number of thread 
$nr=\{1, 2, \ldots, 16\}$.
Update rate of 0 means that only searching operations are conducted (100\% search), 
while update rate 50 indicates that 50\% insert and delete operations are being done
out of $rep$ operations. 
For each of the combination above, we pre-fill the tree with $(2^{22}-1)$ (or 4,194,303) random 
values before starting the benchmarks.

The initial size ($init$) of $(2^{22}-1)$ was chosen to simulate initial trees 
that partially fit into the last level cache (LLC).
All involved operations, namely search, insert, and delete invoked 
during the tests, use random
values $v \in (0, init \times 2], v \in \mathbb{N}$, as their parameter. 
Note that since VTMtree's static vEB layout is fixed, 
we set its layout size to $(2^{23}-1)$ for running the experiments. Namely, this setting is the best case for VTMtree since the memory allocated for its static vEB layout is large enough to accommodate all the values $v \in (0, init \times 2]$ and therefore its layout never needs to expand and rebuild during the experiments.
To make a fair
comparison, we set the $UB$ values of the $\Delta$Nodes and the CBTree's node-size to respective values so that each $\Delta$Node and each 
CBTree node will fit into the system page size of 4KB. 

\begin{figure}[!htbp] 
\centering
\footnotesize
\resizebox{0.8\columnwidth}{!}{ \includegraphics{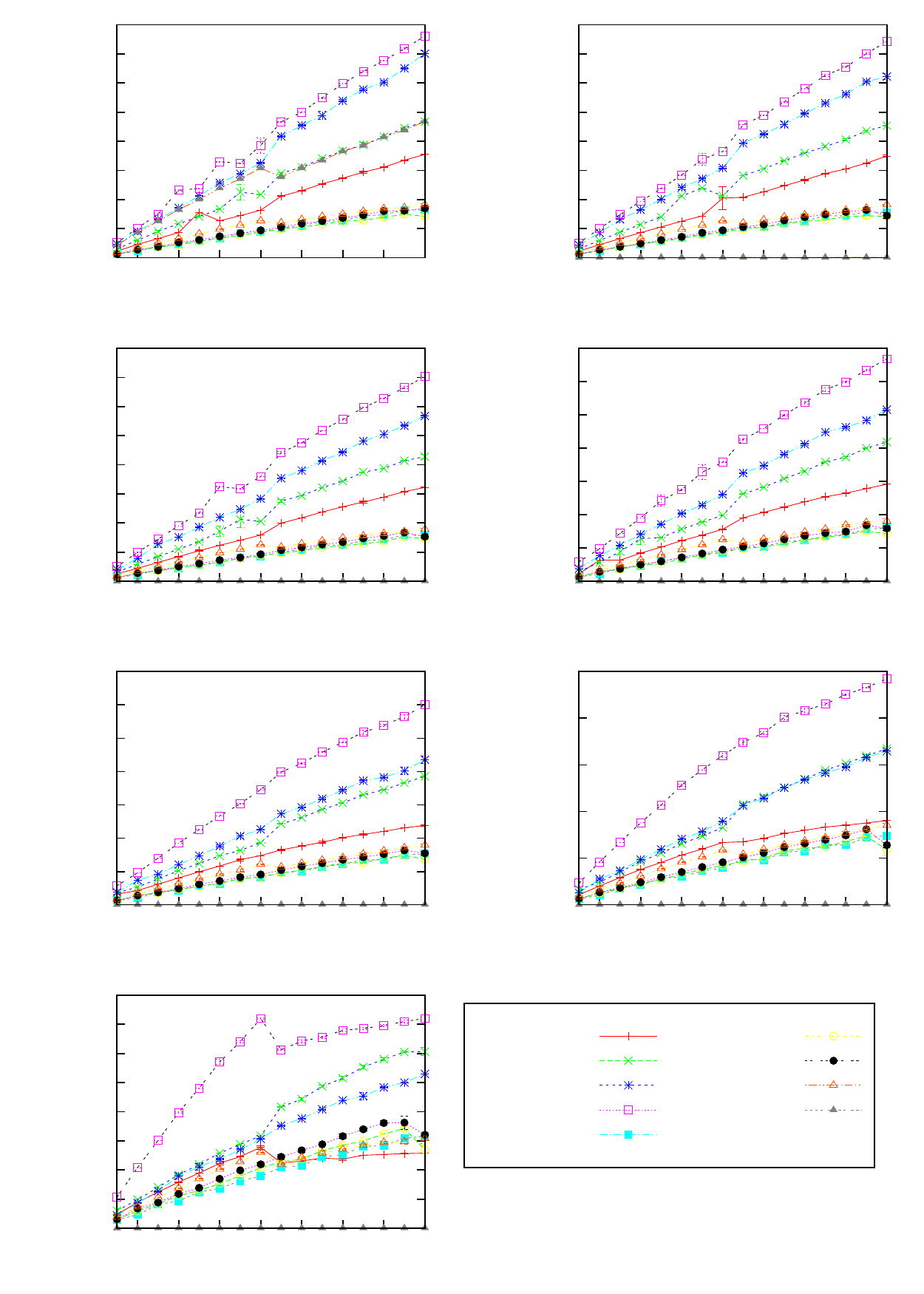}}
\caption{Performance comparison of the tested trees with 4,194,303 initial members using X86 based Intel Xeon system. 
There were 2 CPUs with 8 cores each and threads were pinned to the cores, therefore several "spikes" in performance could be 
observed for 9-thread tests.
The y-axis indicates operations/second.}
\label{fig:perf-4194303}
\end{figure}

The conducted experiments run on a dual Intel Xeon CPU E5-2670 machine, 
with total of 16 available cores. The machine has 32GB of memory, with a 2MB (8$\times$256KB) 
L2 cache and a shared 20MB L3 cache for each processor. 
The Hyperthread feature of the processor is turned off to 
make sure that the experiments only run on physical cores. Linux OS with Red Hat's 
kernel version 2.6.32-358 are installed in this system.
All performance result (in operations/second) for concurrent operations are calculated 
by dividing the number of iterations by the maximum time to finish the whole operations. 

\subsubsection{Experimental results} \label{sec:perfeval}

Before comparing the trees performance, there is 
one interesting thing to note from the benchmark results.
The x86 test system consists of 2 CPUs
with 8-cores each. Therefore we could see some "spikes" in performance
for couple of trees when it goes from 8-thread to 9-thread. This is expected though. For trees exploiting data-locality in cache (such as the VTMTree's 100\% search
and plDTv2's 50\% update in Figure \ref{fig:perf-4194303}), maintaining the cache-coherence between the two CPUs (when more than 8 threads are used) via shared memory reduces the benefit gained from cache-locality. Among the proposed trees, plDTv1 is up to 100\% faster 
than $\Delta$Tree for 100\% search and is up to 2.5x and 3.5x faster in 20\% and 50\% updates, respectively. Using map instead pointers and keeping the tree balanced
manage to lower the cache-misses of plDTv1 by 40\%. The plDTv2
is up to 5\% faster than plDTv1 in 100\%
searching. However in 20\% and 50\% updates,
plDTv2 is up to 40\% faster than the plDTv1 (cf.
Figure \ref{fig:perf-4194303}). It is because heterogeneous leaf $\Delta$Nodes that can hold
twice many keys, manage to lower the search time and $\Delta$Nodes' re-balancing overheads.
Unix {\it perf} utility shows that plDTv2 has up to 30\% less cache-misses 
and 20\% more efficient branching than plDTv1.  

\paragraph{$\Delta$Trees versus VTMtree} \label{sec:vtmversus}
It is expected that VTMtree is among the fastest in 100\% search 
along with the plDTv2 (Figure \ref{fig:perf-4194303}). 
As the cache-oblivious tree implementation, VTMtree is able to
exploit perfectly data-locality in all levels of memory. 
Our dynamic-vEB plDTv2 is the only contender
and even beats VTMtree by up to 20\% past the 8-thread mark. 
In $init=4,194,303$, the performance gap is even higher with 
plDTv2 leading by up to 30\%, after 9-thread.
The plDTv1 and $\Delta$Tree are both beaten by VTMtree
since it can only pack less data inside a memory page. CBTree and
other trees are not exploiting data-locality, which makes them slower
in 100\% search benchmark.

All other trees are demonstrating better performance compared to 
VTMtree whenever update operations are involved.
The bad performance of VTMtree's concurrent update is inevitable 
because of its static vEB tree layout. With this, the VTMtree needs to
always maintain a small height, which is done by {\em incrementally}
re-balancing different portions of its structure \cite{BrodalFJ02}. 
However in the worst case the whole tree must be blocked whenever 
a rebalance operation is being executed, blocking other operations as a result. 
While \cite{BrodalFJ02} explained that amortised cost for this is small, 
it will hold true only when implemented in the sequential fashion. 
In all variants of $\Delta$Tree, maintenance operations only block a $\Delta$Node, which is 
discernible in size compared to the whole tree.

\paragraph{$\Delta$Trees versus other trees}

In comparison with the other trees, the benchmark result in Figure 
\ref{fig:perf-2500000} and \ref{fig:perf-4194303} shows that the heterogenous BDT (plDTv2)
is the fastest among other trees. In 50\% update using a single CPU socket 
(8 threads), plDTv2 is up to 140\% faster than CBTree. 
plDTv1 and CBTree are 
trailing closely behind plDTv2. CBTree manages to outperform plDTv1 in higher 
update ratios because plDTv1 sometimes need to do rebalancing, which is
more expensive compared to array shifting in CBTree. However rebalancing
doesn't affect much the plDTv2 because its leaf $\Delta$Nodes are not leaf oriented.
Thus, the amortised cost of rebalancing is much lower compared to plDTv1. 

NBBST performs similarly with the $\Delta$Tree, 
mainly because the latter is modelled after NBBST for achieving concurrency. 
However being not a locality-aware structures makes NBBST unable 
to lead in the search-intensive benchmarks.
The balanced $\Delta$Tree (plDTv1) 
performs well only in the low-contention situations, as 
it is able to deliver good performance only up to 20\% update.

The good performance of $\Delta$Tree, plDTv1, and plDTv2 can 
be attributed to the dynamic vEB
layout that permits fast search. Also the fact that several 
$\Delta$Node can be concurrently updated and
restructured is also one of the leading factor over the static vEB layout. 
The CBTree layout, although fast and highly-concurrent, still suffers from
high branching operations, based on Unix perf profiling. CBTree
branching is up to 90\% more than plDTv1's and plDTv2's. Its cache references
is also 150\% higher than plDTv1's and plDTv2's, as expected in B-tree 
since block $B$ is not optimal for transfers between memory levels.

The software transactional memory (STM) based trees have a significant overhead in
maintaining transaction. Therefore their performances are the slowest.

\subsubsection{On the worst-case insertions}
One of the motivations in improving the $\Delta$Tree into balanced DT and subsequently 
heterogenous BDT, is to solve the poor performance of $\Delta$Tree in the worst-case insertions.
Inserting a sequence of increasing numbers to the $\Delta$Tree
will result in a linked-list of $\Delta$Nodes. 

Therefore we compare CBTree and plDTv2 for the worst-case insertions. 
GCC standard library {\it std::set} is included as the baseline. Starting from a blank 
tree, we insert a sequence of increasing number within $(0, 5,000,000]$ range using
single thread. 

\begin{figure}[!htbp] \centering
\resizebox{0.9\columnwidth}{!}{ 
\begin{tikzpicture}
  \begin{axis}
    [
    scaled x ticks=false,
    xbar, xmin=0,
    width=12cm, height=3.5cm, enlarge y limits=0.5,
    xlabel={operations/second},
    symbolic y coords={CBTree, plDTv2, std::set},
    ytick=data
    ]
    \addplot coordinates {(2113967,CBTree) (3011759,plDTv2) (2181831,std::set)};
  
    \end{axis}
\end{tikzpicture}
}
\caption{Worst case insertion of 5,000,000 increasing numbers}
\label{fig:worstcase}
\end{figure}
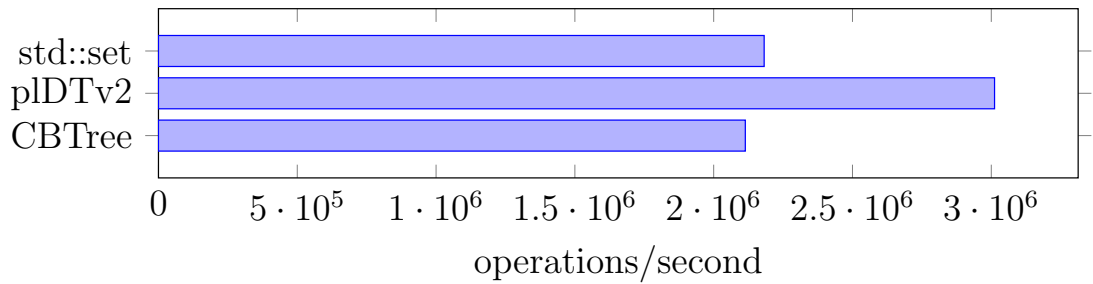

The result in Figure \ref{fig:worstcase} shows that plDTv2 is 50\% faster than 
CBTree and  {\it std::set} from GCC standard library. 
This test is done using the same x86 experimental system as in Section \ref{sec:perfeval}.

\subsubsection{Performance on different upper-bounds $UB$}

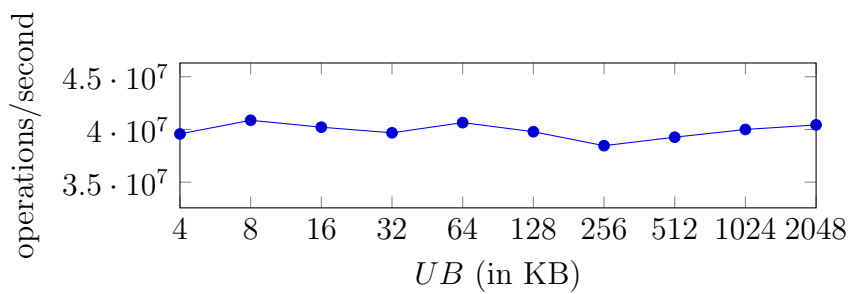
\begin{figure}[!hb] \centering
\resizebox{0.7\columnwidth}{!}{ 
\begin{tikzpicture}[baseline]
\begin{axis}
[
		ylabel={operations/second},
		scaled y ticks=false,
		width=10cm, height=3.5cm, enlarge y limits=0.5,
		xlabel={$UB$ (in KB)},
		ymin= 36000000,
		ymax= 42871934,
		xmin = 4,
		xmax = 2048,
		symbolic x coords={4, 8, 16, 32, 64, 128, 256, 512, 1024, 2048},
		xtick=data
]

\addplot+  plot [error bars/.cd, y dir=both, y explicit]
	coordinates{
	(4, 39577836.4116095) 	+- 	(2886, 2886)
	(8, 40871934.6049046) 	+- 	(1443, 1443)
	(16, 40214477.2117963) 	+- 	(1443, 1443)
	(32, 39682539.6825397) 	+- 	(0,0)
	(64, 40650406.504065) 	+- 	(0,0)
	(128, 39787798.4084881) +- 	(1443, 1443)
	(256, 38461538.4615385) +- 	(0,0)
	(512, 39267015.7068063) +- 	(1443, 1443)
	(1024, 40000000) 		+- 	(2500, 2500)
	(2048, 40431266.8463612)	+- 	(8036, 8036)
};
\end{axis}
\end{tikzpicture}}
\caption{Benchmarks of 100\% search using different sizes of $UB$. Tested on X86 platform.}
\label{fig:diff-UB}
\end{figure}

Dynamic vEB requires that an upper-bound $UB$ be specified or known in advance. 
One may argue that $UB$ is a fine-tuned value that will determine 
the performance of dynamic vEB trees. To get the conclusion of  
whether this is the case, we test the heterogeneous BDT using different upper-bounds $UB$, starting from 4KB (normal page size) up to 2MB (huge page size). 
This tree is filled with $(2^{22} -1)$ random values and time is recorded to conduct 5 million operations of 100\% search.

The result shows that the heterogeneous BDT is resilient to different upper-bounds $UB$ (cf. Figure \ref{fig:diff-UB}).
This is an expected result according to Lemma \ref{lem:dynamic_vEB_search}. In fact, $UB$ can
be as big as the whole tree and searching performance is still optimal, provided the meta-data 
(e.g. the tree map in Section \ref{sec:mapdesc}) occupies only a small fraction of the last level cache LLC.
 As mentioned in Section \ref{sec:relaxed-veb}, small $UB$ benefits concurrent tree updates.

\subsection{Energy consumption evaluation}\label{sec:energyeval}

\begin{figure}[!h] \centering
\includegraphics{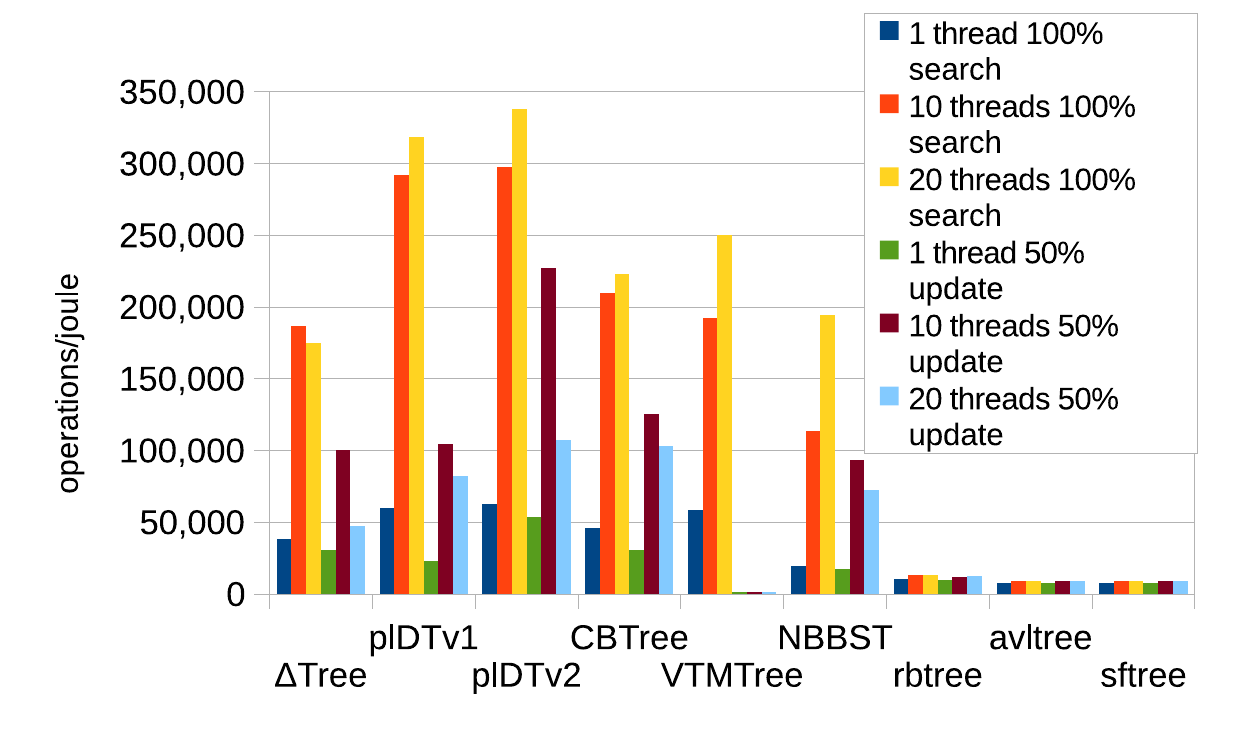}
\caption{Energy profile on X86 processor. Note that the energy efficiency goes down for some trees in 50\% update
on 20 threads (with 2 CPUs) because of the same reason discussed in Section \ref{sec:perfeval}}
\label{fig:X86-power}
\end{figure}

\begin{figure}[!hb] \centering
\resizebox{0.8\columnwidth}{!}{ \includegraphics{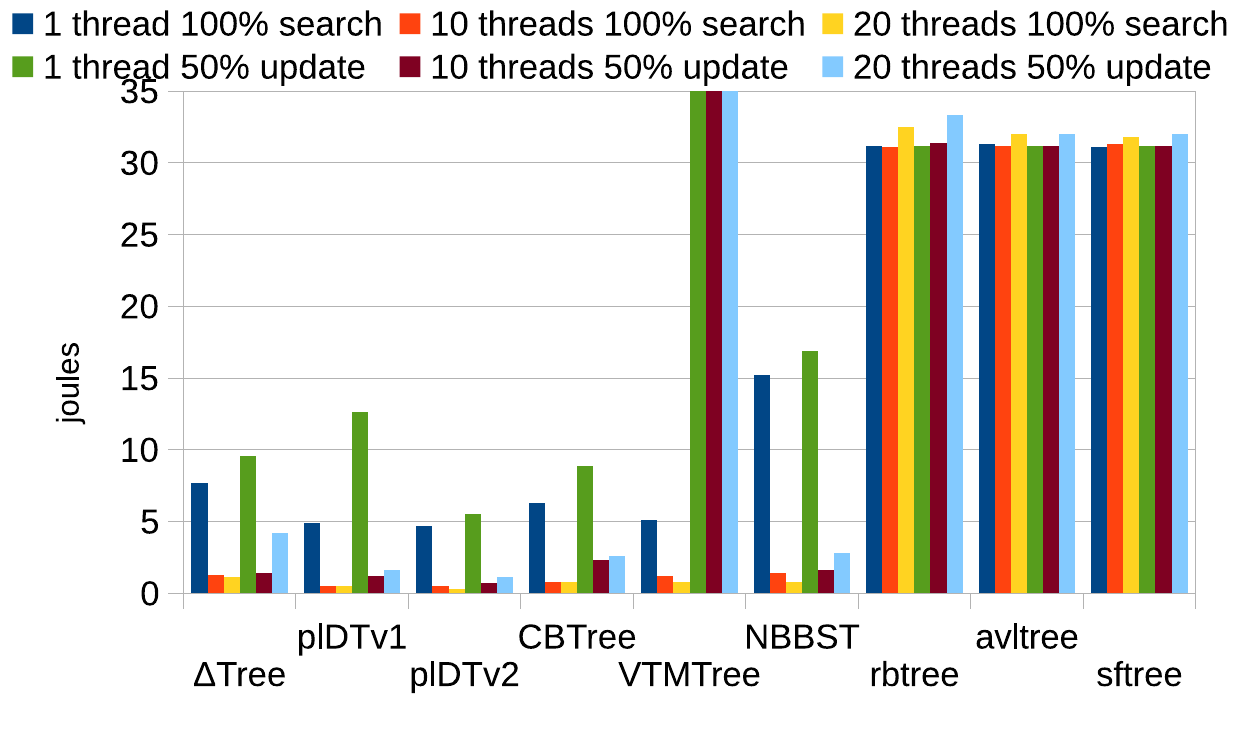}}
\caption{Memory (DRAM) energy profile on X86 platform. DRAM energy requirement goes up considerably for some trees in 50\% update
on 20 threads (with 2 CPUs) because of the cache-coherence mechanism (cf. Section \ref{sec:perfeval}). Also it is reflected in the 
memory write counters chart on Figure \ref{fig:mem-w}. Measured using Intel PCM.}
\label{fig:mem-energy}
\end{figure}

\begin{figure}[!h] \centering
\resizebox{0.8\columnwidth}{!}{ \includegraphics{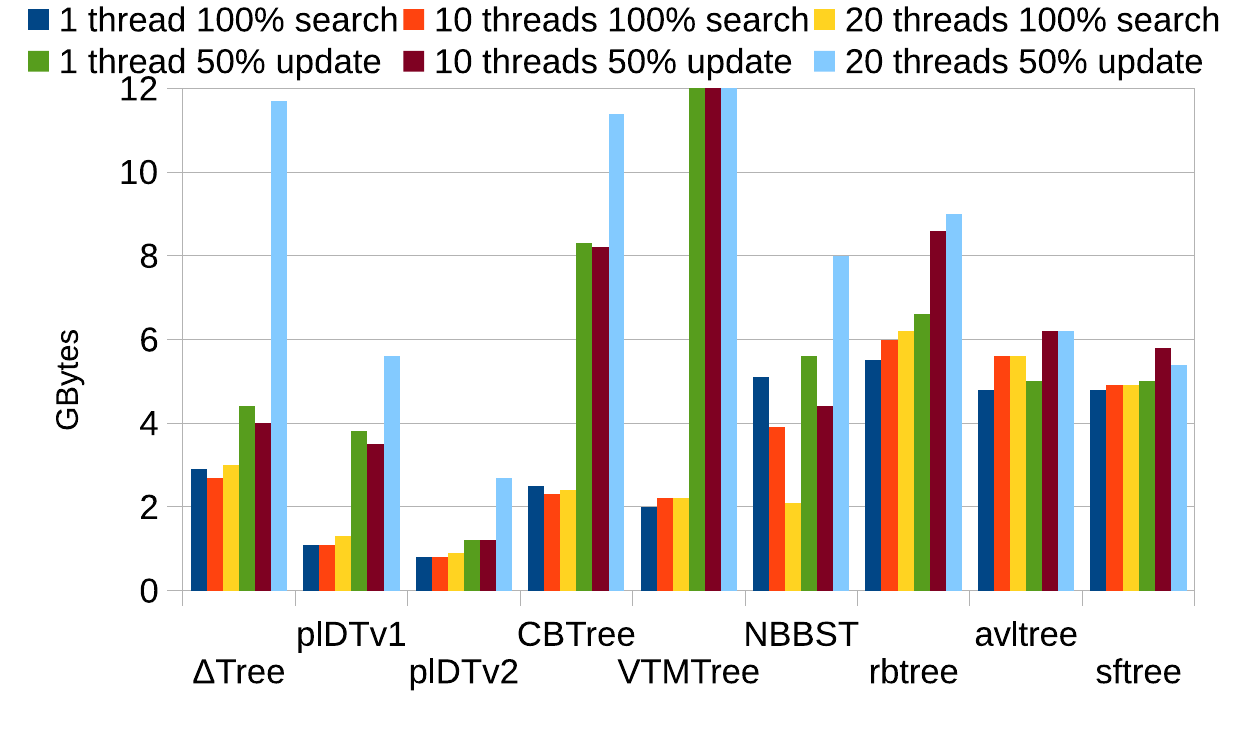}}
\caption{Amount data transferred between RAM and CPU for Read + Write operations. Measured using Intel PCM.}
\label{fig:mem-rw}
\end{figure}

\begin{figure}[!h] \centering
\resizebox{0.8\columnwidth}{!}{ \includegraphics{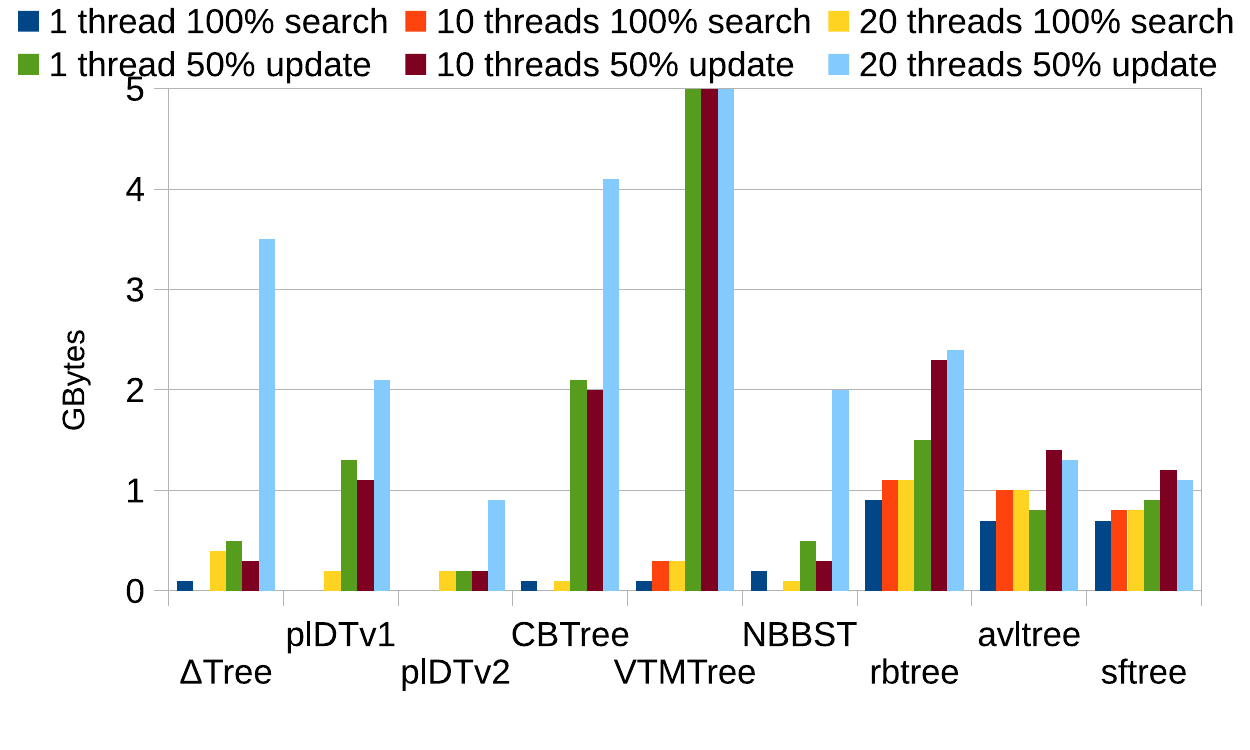}}
\caption{Amount data transferred between RAM and CPU for write-only operations as measured using Intel PCM. 
Data transfers goes up considerably for some trees in 50\% update
on 20 threads (with 2 CPUs) because of the cache-coherence mechanism. }
\label{fig:mem-w}
\end{figure}

To assess the energy consumption of the trees, the energy indicators 
are subsequently collected during specified benchmarks.
For these tests we use a specialised server with 2x Intel Xeon E5-2690 v2
for 20 total cores. We use Intel PCM library that can measure 
the energy for each CPU and DRAMs using the built-in CPU counters. The collected
energy measurements do not include the initialisations of trees.

In this experiment, we conduct 5 million operations 
of 100\% search and 50\% update on 
the trees. The trees are pre-filled with initial $(2^{22}-1)$ random 
values. A combination of minimum and maximum
available physical cores are used as one of the benchmarks parameters. The total energy
used for all CPUs and memory (in Joule) are divided by the number of
operations to produce operations/joule results.

The experimental result over the test system (Figure
\ref{fig:X86-power}) shows that using the dynamic-vEB layouts is able to reduce the CPU and memory
energy consumption. In the search-only benchmark, plDTv1, plDTv2, and $\Delta$Tree's actual energy
efficiency is comparable to that of the static vEB-based VTMtree and leads over
the other trees by up to 33\%. plDTv2 energy
efficiency is 80\% better than CBTree's and other trees in 50\% update using 10-threads. 
As expected, the VTMtree concurrent update results are very poor for the same reason
discussed on section \ref{sec:vtmversus}.

$\Delta$Trees are locality-aware trees, and their memory access pattern are more
efficient to the other trees. Profiling data showed that branching and cache-misses
are kept low. Assessments on memory transfers (Figure \ref{fig:mem-rw} and \ref{fig:mem-w})
suggests that all $\Delta$Tree
versions are transferring less data between RAM and CPU compared to other trees.
Thus the energy consumed by the $\Delta$Trees on DRAM operation
is also efficient  and comparable to the static-vEB's VTMTree in the searching only case (cf. Figure \ref{fig:mem-energy}). The reader is referred to \cite{UmarAH13} for more details about $\Delta$Trees.

\subsection{Conclusion}
We have introduced a new {\em relaxed cache oblivious} model that enables high
parallelism while maintaining the key feature of the original cache oblivious
(CO) model \cite{Frigo:1999:CA:795665.796479}: an algorithm that is optimal in terms of data movement 
for a simple two-level memory is asymptotically optimal for an unknown multilevel memory. 
This feature facilitates the development of fine-grained locality-aware algorithms for 
deep memory hierarchy in modern architectures as desired by energy efficient computing \cite{Dally11}. 
Unlike the original CO model, the
relaxed CO model assumes a known upper bound on unknown memory block sizes $B$
of multilevel memory systems.

Based on the relaxed CO model, we have developed a novel {\em dynamic} van Emde
Boas (dynamic vEB) layout that makes the vEB layout suitable for
highly-concurrent data structures with update operations. The dynamic vEB
supports dynamic node allocation via pointers while maintaining the optimal
search cost of  $O(\log_B N)$ memory transfers for vEB-based trees of size
$N$ without knowing memory block size $B$. 

Using the dynamic van Emde Boas layout, we have developed a pointer-based $\Delta$Tree, 
a balanced $\Delta$Tree, and a heterogenous version of the latter that
support both high concurrency and fine-grained data locality. Both pointer-based $\Delta$Tree
as well as balanced DT and heterogenous BDT {\em Search} operations are wait-free. 
Only $\Delta$Tree has non-blocking {\em Insert} and {\em Delete}
operations. All 3 versions of $\Delta$Tree are relaxed cache oblivious: the expected memory
transfer costs of its {\em Search, Delete} and {\em Insert} operations are  
$O(\log_B N)$, where $N$ is the tree size and $B$ is unknown memory block size in
the ideal cache model \cite{Frigo:1999:CA:795665.796479}. Our experimental
evaluation comparing the $\Delta$Trees with non-blocking binary search tree of \cite{EllenFRB10}, concurrent AVL tree \cite{BronsonCCO10}, concurrent red-black tree \cite{DiceSS2006}, and speculation-friendly trees \cite{Crain:2012:SBS:2145816.2145837} from the the Synchrobench benchmark \cite{synchrobench}, and the 
highly-concurrent B-tree of \cite{Lehman:1981:ELC:319628.319663} has shown that the best version of $\Delta$Tree achieves the best performance and 
the highest energy efficiency.




\leaveout{ 

\remind{Remove if empty.}
} 
\section{Conclusion} \label{sec:Conclusion}
In this work, we have presented our current results on the investigation and modeling of the trade-off between energy and performance as follows.
\begin{itemize}

\item A new power model for the Movidius Myriad platform has been
  proposed. The model can predict power consumption of our
  micro-benchmarks with $\pm 4\%$ accuracy compared to the measured
  values on the real platform. The new power model confirms that the
  dynamic power consumption is proportional to the number of SHAVE
  (Streaming Hybrid Architecture Vector Engine) processors used.

\item Inspired by EXCESS D1.1~\cite{D1.1}, a new version of the energy
  model for the CPU-based platform has been developed. This new model
  decomposes the power into static, active and dynamic power, and
  considers power as the sum of power from CPU, main memory and
  uncore. The model parameters have been derived and evaluated;
  dynamic powers roughly depend on nature of operation, amount
  of memory accessed per unit of time and remote accesses,
  respectively for CPU, memory and uncore.

\item We have made a first step towards realistic application by
  modeling the performance and power dissipation of synthetic
  applications, whose design is based on concurrent queues. We have
  exhibited a small set of parameters that rules both performance and
  power consumption of the application and discriminates the different
  queue implementations. It leads to a good prediction of those two
  key metrics on the whole space of study, while needing only a few
  measurements from this space.

\item Several concurrent queue designs have been transferred to
  Myriad1 platform. Mutex with two locks implementation is the fastest
  and most scalable since it provides maximum concurrency until 8
  SHAVEs.  With 4 SHAVEs, FIFO-based implementation performs well and
  is 28.3\% faster than the mutex with two locks. Shared variable
  based implementation has the worst performance.  In terms of power,
  SHAVE FIFO-based communication method is the most energy
  efficient. When contention is high and stall is common, it is power
  efficient to avoid spinning and set SHAVEs to stall. Communication
  via shared variables consumes more power because spinning on a
  memory location is energy inefficient, even when the spinning
  happens in a local CMX slice.

\item Another data structure that has been investigated and analyzed
  in this work is concurrent search tree. Our new concurrent search
  trees show the improvement on performance and energy
  consumption. Based on experimental evaluations, our new concurrent
  search trees called $\Delta$Trees that are up to 140\% faster and
  80\% more energy efficient than the traditional search trees.
\end{itemize}

In the next steps of this work, WP2 aims to identify other essential
concurrent data structures and algorithms for inter-process
communication in HPC and embedded computing and focus on customizing
them. We will exploit common data-flow patterns to create a
generalized communication abstraction with which application designers
can easily create and exploit the customization for the data-flow
patterns. The results will also constitute a white-box methodology for
tuning energy efficiency and performance of concurrent data structures
and algorithms, and programming abstractions with which application
designers can easily create and exploit the customization for common
data-flow patterns. The novel concurrent data structures and
algorithms will constitute libraries for inter-process communication
and data sharing on EXCESS platforms.


\newpage

\bibliographystyle{plain}
\bibliography{./D2.1_related_papers,../WP6-bibtex/excess,../WP6-bibtex/related-papers,../WP6-bibtex/peppher-related}

\section*{Glossary}

\begin{flushleft}
\begin{tabular}{lp{12cm}}
\textbf{BRU}    &  Branch Repeat Unit (on SHAVE processor) \\
\textbf{CAS}    &  Compare-and-Swap instruction \\
\textbf{CMX}    &  Connection MatriX on-chip (shared) memory unit, 128KB (Movidius Myriad) \\
\textbf{CMU}    &  Compare-Move Unit (on SHAVE processor) \\
\textbf{Component} & 1. [hardware component] part of a chip's or motherboard's 
  circuitry; \ 2. [software component] encapsulated and annotated reusable
  software entity with contractually specified interface and
  explicit context dependences only, subject to third-party (software) composition.\\
\textbf{Composition}    & 1. [software composition] Binding a call to a 
  specific callee (e.g., implementation variant of a component) and allocating
  resources for its execution; \ 2. [task composition] Defining a macrotask and
  its use of execution resources 
  by internally scheduling its constituent tasks in serial,
  in parallel or a combination thereof. \\
  
\textbf{CPU}    &  Central (general-purpose) Processing Unit\\

\textbf{uncore}    &  including the ring interconnect, shared cache, integrated memory controller, home agent, power control unit, integrated I/O module, config Agent, caching agent and Intel QPI link interface \\ 
\textbf{CTH}    &  Chalmers University of Technology \\
\textbf{DAQ}    &  Data Acquisition Unit \\
\textbf{DCU}    &  Debug Control Unit (on SHAVE processor) \\
\textbf{DDR}    &  Double Data Rate Random Access Memory \\
\textbf{DMA}    &  Direct (remote) Memory Access \\
\textbf{DRAM}   &  Dynamic Random Access Memory \\
\textbf{DSP}    &  Digital Signal Processor \\
\textbf{DVFS}   &  Dynamic Voltage and Frequency Scaling \\
\textbf{ECC}    &  Error-Correcting Coding \\
\textbf{EXCESS} &  Execution Models for Energy-Efficient Computing Systems\\
\textbf{GPU}    &  Graphics Processing Unit\\
\textbf{HPC}    &  High Performance Computing\\
\textbf{IAU}    &  Integer Arithmetic Unit (on SHAVE processor) \\
\textbf{IDC}    &  Instruction Decoding Unit (on SHAVE processor) \\
\textbf{IRF}    &  Integer Register File (on SHAVE processor) \\
\textbf{LEON}    &  SPARCv8 RISC processor in the Myriad1 chip\\
\textbf{LIU}    &  Link\"oping University \\
\textbf{LLC}    &  Last-level cache\\
\textbf{LSU}    &  Load-Store Unit (on SHAVE processor) \\
\textbf{Microbenchmark} & Simple loop or kernel developed to measure one or few properties of the underlying architecture or system software\\
\textbf{PAPI}   &  Performance Application Programming Interface\\
\end{tabular}
\end{flushleft}

\newpage 

\begin{flushleft}
\begin{tabular}{lp{12cm}}
\textbf{PEPPHER} &  Performance Portability and Programmability for Heterogeneous Many-core Architectures. FP7 ICT project, 2010-2012, www.peppher.eu \\
\textbf{PEU}    &  Predicated Execution Unit (on SHAVE processor) \\
\textbf{Pinning} &  [thread pinning] Restricting the operating system's CPU scheduler in order to map a thread to a fixed CPU core \\
\textbf{QPI}    &  Quick Path Interconnect\\
\textbf{RAPL}   &  Running Average Power Limit energy consumption counters (Intel)\\
\textbf{RCL}   &  Remote Core Locking (synchronization algorithm)\\
\textbf{SAU}    &  Scalar Arithmetic Unit (on SHAVE processor) \\
\textbf{SHAVE}  &  Streaming Hybrid Architecture Vector Engine (Movidius) \\
\textbf{SoC}    &  System on Chip \\
\textbf{SRF}    &  Scalar Register File (on SHAVE processor) \\
\textbf{SRAM}   &  Static Random Access Memory \\
\textbf{TAS}    &  Test-and-Set instruction\\
\textbf{TMU}    &  Texture Management Unit (on SHAVE processor) \\
\textbf{USB}    &  Universal Serial Bus \\
\textbf{VAU}    &  Vector Arithmetic Unit (on SHAVE processor) \\
\textbf{Vdram}  &  DRAM Supply Voltage \\
\textbf{Vin}    &  Input voltage level  \\
\textbf{Vio}    &  Input/Output voltage level  \\
\textbf{VLIW}   &  Very Long Instruction Word (processor) \\
\textbf{VLLIW}  &  Variable Length VLIW (processor) \\
\textbf{VRF}    &  Vector Register File (on SHAVE processor) \\
\textbf{Wattsup}&  Watts Up .NET power meter \\
\textbf{WP1}   &  Work Package 1 (here: of EXCESS) \\
\textbf{WP2}   &  Work Package 2 (here: of EXCESS) \\
\end{tabular}
\end{flushleft}

\end{document}